\documentclass{CSML}

\usepackage{lastpage}
\def\dOi{13(2:14)2017}
\lmcsheading%
{\dOi}
{1--\pageref{LastPage}}
{}
{}
{Jan.~30, 2016}
{Jun.~30, 2017}
{}

\usepackage{amsthm}
\usepackage[utf8]{inputenc}
\usepackage{hyperref}\hypersetup{hidelinks}

\usepackage[all]{xy}
\usepackage{amsmath}
\usepackage{amssymb}
\usepackage{amsfonts}

\usepackage[all]{xy}
\usepackage{tikz}
\usetikzlibrary{fit,shapes}
\usetikzlibrary{arrows,positioning,shapes,fit,calc}

\newtheorem{claim}[thm]{Claim}



\newenvironment{tbs}{%
   \small\tt
   \begin{itemize}}{\end{itemize}}
\newcommand{\btbs}{\begin{tbs}}                                                                      
\newcommand{\etbs}{\end{tbs}}

\newcommand{\hide}[1]{}


\renewcommand{\phi}{\varphi}





\newcommand{\isdef}{\mathrel{:=}}

\newcommand{\bbA}{\mathbb{A}}
\newcommand{\bbB}{\mathbb{B}}
\newcommand{\bbS}{\mathbb{S}}
\newcommand{\bbT}{\mathbb{T}}



\newcommand{\fun}{\mathsf{T}}
\newcommand{\psf}{\mathcal{P}}
\newcommand{\cvp}{\mathcal{Q}}
\newcommand{\mon}{\mathcal{M}}



\newcommand{\mathlang}[1]{\mathtt{#1}}
   
\newcommand{\FOE}{\mathlang{FOE}}
\newcommand{\FO}{\mathlang{FO}}
\newcommand{\MSO}{\mathlang{MSO}}
   \newcommand{\MSOT}{\MSO_{\fun}}
   \newcommand{\MSOLa}{\MSO_{\Lambda}}
\newcommand{\muML}{\mathlang{\mu ML}}
   \newcommand{\muMLT}{\muML_{\fun}}
   \newcommand{\muMLLa}{\muML_{\Lambda}}

   \newcommand{\MLLa}{\mathlang{1ML}_{\Lambda}}

   \newcommand{\SOT}{\mathlang{1SO}_{\fun}}
   \newcommand{\SOLa}{\mathlang{1SO}_{\La}}

\newcommand{\Var}{\mathit{Var}}

\newcommand{\isbnf}{\mathrel{::=}}
\newcommand{\divbnf}{\mathrel{|}}


\newcommand{\sr}{\mathsf{sr}}
\newcommand{\Em}{\mathsf{Em}}

\newcommand{\Sing}{\mathsf{Sing}}


\newcommand{\Aut}{\mathit{Aut}}


\newcommand{\smod}{\mathbb{S}}

\newcommand{\si}{\sigma}

\newcommand{\De}{\Delta}
\newcommand{\La}{\Lambda}
\newcommand{\Om}{\Omega}
\newcommand{\gbox}{\underline{\Box}}
\newcommand{\gdi}{\underline{\Diamond}}
\newcommand{\funco}{\underline{\mathsf{T}}}
\renewcommand{\tilde}[1]{\underline{#1}}

\newcommand{\gsim}{\;\underline{\sim}\;}

\hyphenation{op-tical net-works semi-conduc-tor}

\begin{document}

\title[Expressive completeness for coalgebraic $\mu$-calculi]{an expressive completeness theorem for coalgebraic modal $\mu$-calculi}

 \author[S.~Enqvist]{Sebastian Enqvist\rsuper a}
 \address{{\lsuper a}Institute for Logic, Language and Computation 
\\ University of Amsterdam
\\ PO Box 94242, 1090 GE Amsterdam,The Netherlands,
\\ and
\\ Department of Philosophy, Stockholm University, SE - 10691 Stockholm, Sweden}
\email{thesebastianenqvist@gmail.com}

 \author[F.~Seifan]{Fatemeh Seifan\rsuper b}
 \address{{\lsuper{b,c}}Institute for Logic, Language and Computation 
\\ University of Amsterdam
\\ PO Box 94242, 1090 GE Amsterdam,The Netherlands}
\email{fateme.sayfan@gmail.com, Y.Venema@uva.nl}
\thanks{Fatemeh Seifan was supported by \emph{Vrije Competitie} grant 
  612.001.115 of the Netherlands Organisation for Scientific Research (NWO)}

 \author[Y.~Venema]{Yde Venema\rsuper c}
 \address{\vspace{-18 pt}}

\maketitle

\begin{abstract}
Generalizing standard monadic second-order logic for Kripke models, we 
introduce monadic second-order logic interpreted
over coalgebras for an arbitrary set functor. 
We then consider invariance under behavioral equivalence of MSO-formulas.
More specifically, we investigate whether the coalgebraic mu-calculus is the 
bisimulation-invariant fragment of the monadic second-order language for a given functor. Using automata-theoretic techniques and building on recent results by the third author, we show that in order to provide
such a characterization result it suffices 
to find what we call an adequate uniform construction for the coalgebraic type functor. 
As direct applications of this result we obtain a partly new proof of the Janin-Walukiewicz 
Theorem for the modal mu-calculus, avoiding the use of syntactic normal forms, and bisimulation invariance results for the bag functor (graded modal
logic) and all exponential polynomial functors (including the ``game functor''). As a more involved application, involving additional non-trivial ideas, we also derive a characterization theorem for the monotone modal mu-calculus, with respect to a natural monadic second-order language for monotone neighborhood models.
\end{abstract}

\section{Introduction}

\subsection{Logic, automata and coalgebra}
The aim of this paper is to strengthen the link between the areas of logic, 
automata and coalgebra.
More precisely, we provide a coalgebraic generalization of the
automata-theoretic approach towards monadic second-order logic ($\MSO$), and
we address the question of whether the Janin-Walukiewicz Theorem can be 
generalized from Kripke structures 
to the setting of arbitrary coalgebras\footnote{The paper is a substantially extended version of a previous manuscript that was presented at the 2015 conference for Logic in Computer Science (LICS 2015) in Kyoto, Japan \cite{enqv:mona15}.}. 

The connection between \textit{monadic second-order logic} and \textit{automata}
is classic, going back to the seminal work of B\"uchi, Rabin, and others.
Rabin's decidability result for the monadic second-order theory
of binary trees, or $S2 S$, makes use of a translation of 
monadic second-order logic into a class of automata, thus reducing the 
satisfiability problem for $S 2 S$ to the non-emptiness problem for the
corresponding automata \cite{rabi:deci69}.
The link between $\MSO$ and automata over trees with arbitrary branching
was further explored by Walukiewicz~\cite{walu:mona96}.
Janin and Walukiewicz considered monadic second-order logic interpreted over 
Kripke structures, and used automata-theoretic techniques to obtain a van 
Benthem-like characterization theorem for monadic second-order logic, identifying
the modal $\mu$-calculus as the bisimulation invariant fragment of 
$\MSO$ \cite{jani:expr96}.
Given the fact that in many applications bisimilar models are considered
to represent the \emph{same} process, one has little interest in properties of 
models that are \emph{not} bisimulation invariant.
Thus the Janin-Walukiewicz Theorem can be seen as an expressive completeness 
result, stating that all \emph{relevant} properties in monadic second-order 
logic can be expressed in the modal $\mu$-calculus.

Coalgebra enters naturally into this picture. 
Recall that Universal Coalgebra~\cite{rutt:univ00} provides the notion of a 
\emph{coalgebra} as the natural mathematical generalization of state-based 
evolving systems such as streams, (infinite) trees, Kripke models, 
(probabilistic) transition systems, and many others.
This approach combines simplicity with generality and wide applicability: many
features, including input, output, nondeterminism, probability, and interaction, 
can easily be encoded in the coalgebra type $\fun$ (formally an endofunctor on 
the category $\mathbf{Set}$ of sets as objects with functions as arrows).
Starting with Moss' seminal paper~\cite{moss:coal99}, coalgebraic logics have 
been developed for the purpose of specifying and reasoning about \emph{behavior},
one of the most fundamental concepts that allows for a natural coalgebraic 
formalization.
And with Kripke structures constituting key examples of coalgebras, it should
come as no surprise that most coalgebraic logics are some kind of modification
or generalization of \emph{modal logic}~\cite{cirs:moda11}.

The coalgebraic modal logics that we consider here originate with
Pattinson~\cite{patt:coal03}; they are characterized by a completely standard
syntax, in which the semantics of each modality is determined by a so-called 
\emph{predicate lifting} (see Definition~\ref{d:pl} below).
Many well-known variations of modal logic in fact arise as the coalgebraic logic
 associated with a set $\La$ of such predicate liftings; examples
include both standard and (monotone) neighborhood modal logic, graded and 
probabilistic modal logic, coalition logic, and conditional logic.
Extensions of coalgebraic modal logics with fixpoint operators, needed 
for describing \emph{ongoing} behavior, were developed 
in~\cite{vene:auto06,cirs:expt09}.

The link between coalgebra and automata theory is by now well-established.
For instance, finite state automata operating on finite words have been 
recognized as key examples of coalgebra from the 
outset~\cite{rutt:univ00}.
More relevant for the purpose of this paper is the link with precisely the
kind of automata mentioned earlier, since the (potentially infinite) objects 
on which these devices operate, such as streams, trees and Kripke frames,
usually are coalgebras.
Thus, the automata-theoretic perspective on modal fixpoint logic may be
lifted to the abstraction level of coalgebra~\cite{vene:auto06,font:auto10}.
In fact, many key results in the theory of automata operating on infinite 
objects, such as Muller \& Schupp's Simulation Theorem~\cite{mull:simu95} can
in fact be seen as instances of more general theorems in Universal 
Coalgebra~\cite{kupk:coal08}.


\subsection{Coalgebraic monadic second-order logic}

Missing from this picture is, to start with, a coalgebraic version of 
\emph{(monadic) second-order logic}.
Filling this gap is the first aim of the current paper, which introduces a notion
of \emph{monadic second-order logic} $\MSOT$ for coalgebras of type $\fun$.
Our formalism combines two ideas from the literature.
First of all, we looked for inspiration to the coalgebraic versions of 
\emph{first-order logic} of Litak \& alii~\cite{lita:coal12}.
These authors introduced Coalgebraic Predicate Logic as a common generalisation 
of first-order logic and coalgebraic modal logic, combining first-order 
quantification with coalgebraic syntax based on predicate liftings.
Our formalism $\MSOT$ will combine a similar syntactic feature with second-order
quantification. In fact it can be shown to contain the Coalgebraic Predicate Logic of Litak \& alii as a fragment, though we will omit the verification of this fact. 

Second, following the tradition in automata-theoretic approaches towards monadic
second-order logic, our formalism will be \emph{one-sorted}.
That is, we \emph{only} allow second-order quantification in our language, 
relying on the fact that individual quantification, when called for, can be
encoded as second-order quantification relativized to singleton sets.
Since predicate liftings are defined as families of maps on powerset
algebras, these two ideas fit together very well, to the effect that our
second-order logic is in some sense simpler than the first-order 
formalism of~\cite{lita:coal12}.

In section~\ref{sec:mso} we will define, for any set $\Lambda$ of 
monotone\footnote{%
  In the most general case, restricting to monotone predicate liftings is not 
  needed, one could define $\MSOT$ as the logic obtained by taking for $\La$
  the set of \emph{all} predicate liftings. 
  However, in the context of this paper, where we take an automata-theoretic
  perspective on $\MSO$, this restriction makes sense.
  }
predicate liftings, a formalism $\MSOLa$, and we let $\MSOT$ denote the logic
obtained by taking for $\Lambda$ the set of \emph{all} monotone predicate 
liftings.
Clearly we will make sure that this definition generalizes the standard case,
in the sense that the standard version of $\MSO$ for Kripke structures 
instantiates the logic $\MSO_{\{\Diamond\}}$ and is equivalent to the coalgebraic
logic $\MSO_{\psf}$ (where $\psf$ denotes the power set functor). 

The introduction of a monadic second-order logic $\MSOT$ for $\fun$-coalgebras 
naturally raises the question, for which $\fun$ does the coalgebraic modal
$\mu$-calculus for $\fun$, denoted $\muMLT$, correspond to the bisimulation-invariant fragment of 
$\MSOT$.

\begin{qu}
\label{q:Q}
\text{Which functors $\fun$ satisfy $\muMLT \equiv \MSOT/{\sim}$?}
\end{qu}

\subsection{Automata for coalgebraic monadic second-order logic}

In order to address Question~\ref{q:Q}, we take an automata-theoretic 
perspective on the logics $\MSOT$ and $\muMLT$, and as the second contribution
of this paper we introduce a class of parity automata for $\MSOT$.

As usual, the operational semantics of our automata is given in terms
of a two-player acceptance game, which proceeds in \emph{rounds} moving from
one basic position to another, where a basic position is a pair consisting of
a state of the automaton and a point in the coalgebra structure under 
consideration.
In each round, the two players, $\exists$ and $\forall$, focus on a certain
local `window' on the coalgebra structure.
This `window' takes the shape of a \emph{one-step $\fun$-model}, that is, a 
triple $(X,\alpha,V)$ consisting of a set $X$, a \emph{chosen object} $\alpha
\in \fun X$ (representing partial information about the transition map of the coalgebra as seen from the perspective of some given state), and 
a valuation $V$ interpreting the states of the automaton as subsets of $X$.
More specifically, during each round of the game it is the task of $\exists$
to come up with a valuation $V$ that creates a one-step model in which a 
certain \emph{one-step formula} $\delta$ (determined by the current basic 
position in the game) is true.

Generally, our automata will have the shape $\bbA = (A,\De,\Om,a_{I})$
where $A$ is a finite carrier set with initial state $a_{I} \in A$, and $\Om$
and $\De$ are the parity and transition map of $\bbA$, respectively.
The flavour of such an automaton is largely determined by the co-domain of its
transition map $\De$, the so-called \emph{one-step language} which consists of 
the one-step formulas that feature in the acceptance game as 
described.

Each one-step language $\mathlang{L}$ induces its own class of automata 
$\Aut(\mathlang{L})$.
For instance, the class of automata corresponding to the coalgebraic fixpoint
logic $\muMLLa$ can be given as $\Aut(\mathtt{ML}_{\La})$, where $\mathtt{ML}_{\La}$ is the set
of positive modal formulas of depth one that use modalities from 
$\La$~\cite{font:auto10}.
Basically then, the problem of finding the right class of automata for the
coalgebraic monadic second-order logic $\MSOLa$ consists in the identification
of an appropriate one-step language.
Our proposal comprises a one-step \emph{second-order logic} $\mathtt{SO}_\La$ which uses predicate 
liftings to describe the chosen object of the one-step model.

Finally, note that similar to the case of standard $\MSO$, the equivalence 
between formulas in $\MSOT$ and automata in $\Aut(\mathtt{SO}_\La)$ is only guaranteed to 
hold for coalgebras that are `tree-like' in some sense (to be defined further 
on).

\begin{thm}[Automata for coalgebraic $\MSO$]
\label{t:automatachar}
For any set $\La$ of monotone predicate liftings for $\fun$ there is an
effective construction mapping any formula $\varphi\in \MSOLa$ into an
automaton $\mathbb{A}_\varphi \in \Aut(\mathtt{SO}_\La)$, 
which is
equivalent to $\varphi$ over $\fun$-tree models.
\end{thm}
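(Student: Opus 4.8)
The plan is to follow the classical automata-theoretic translation of monadic second-order logic, as in Walukiewicz's treatment of $\MSO$ over trees, and adapt each step to the coalgebraic setting, where the `local' reasoning is carried out in the one-step second-order logic $\mathtt{SO}_\La$. First I would fix a normal form for $\MSOLa$-formulas: since the logic is one-sorted, every formula can be rewritten so that it only quantifies over set variables, and atomic subformulas are of two kinds --- those expressing set-theoretic relations between variables (inclusion, emptiness, perhaps that a variable is a singleton), and those built from predicate liftings in $\La$ applied to the chosen object. The key observation is that the predicate-lifting atoms are precisely the `modal' atoms that talk about the coalgebraic transition structure, and these are exactly what the one-step language $\mathtt{SO}_\La$ is designed to describe. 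So the translation will be driven by induction on $\MSOLa$-formula structure, maintaining as invariant that each subformula, read with its free set variables as `colours', corresponds to an automaton in $\Aut(\mathtt{SO}_\La)$ running over $\fun$-tree models decorated with those colours.

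The inductive steps then split along the usual lines. Boolean combinations: closure of $\Aut(\mathtt{SO}_\La)$ under union and complementation (the latter via determinacy of the acceptance game and the fact that the one-step language $\mathtt{SO}_\La$, being a full one-step second-order logic, is closed under negation, so one can dualise the transition map and the parity condition). Projection / second-order existential quantification: given an automaton for $\varphi(p)$, one builds an automaton for $\exists p.\,\varphi$ by a powerset/simulation construction on the automaton side combined with `guessing' the value of $p$ at each node; here the crucial point is that existential quantification over a colour corresponds to an existential choice inside the one-step game, which $\mathtt{SO}_\La$ can internalise because it permits one-step second-order quantification --- this is the step that makes $\mathtt{SO}_\La$ the right one-step language rather than the weaker $\mathtt{ML}_\La$. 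For the atomic cases one exhibits small automata directly: a predicate-lifting atom $\lambda(\overline{p})$ becomes a one-state automaton whose transition formula is essentially that very lifting applied to the one-step model, while the set-theoretic atoms become automata that merely inspect the colouring locally. Finally one has to handle the translation of first-order-style quantification over individuals, which in the one-sorted setting is relativisation of a second-order quantifier to the class of singletons; this requires the one-step language (or the automaton) to be able to enforce that a guessed colour is realised at exactly one successor, which again is expressible in $\mathtt{SO}_\La$.

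I expect the main obstacle to be the projection step, and more precisely the bookkeeping needed to push quantification through the automaton correctly over \emph{tree-like} coalgebras. Over trees, second-order quantification distributes over the branching structure, so that choosing a subset of the whole tree amounts to independently choosing, at each node, which successors lie in the set; making this precise requires the `tree model' hypothesis (a point already flagged in the statement) and a careful argument that the simulation construction on automata soundly and completely implements the projection --- soundness being a relatively direct game simulation, but completeness requiring one to extract, from a winning strategy for $\exists$ in the projected automaton's game, an actual witnessing valuation on the tree, which uses positional determinacy of parity games and the tree structure to glue together the local one-step witnesses into a global set. A secondary technical point is verifying that the one-step second-order formulas produced at each stage remain within $\mathtt{SO}_\La$ and that the parity index stays finite; this is routine but needs the closure properties of $\mathtt{SO}_\La$ to be stated carefully. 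Throughout, the translation is manifestly effective, since every construction --- Booleans, complementation via dualisation, projection via the simulation/powerset construction, and the base automata for atoms --- is given by an explicit finite procedure, yielding the effectiveness claim in the theorem.
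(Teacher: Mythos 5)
Your overall route is the paper's: induction on $\MSOLa$-formulas, closure under union, complementation via one-step duals, and projection via a simulation theorem producing non-deterministic automata over $\fun$-tree models, with explicit automata for the atoms. But there is a genuine gap in your complementation step. You claim that since $\SOLa$ is closed under negation one can ``dualise the transition map and the parity condition''. That argument does not go through for arbitrary $\SOLa$-automata: the determinacy-based complementation of these alternating automata hinges on the transition conditions being \emph{monotone} in the automaton-state variables (the key one-step fact --- that a valuation satisfies the boolean dual of $\varphi$ iff it intersects, in some coordinate, every valuation satisfying $\varphi$ --- is exactly what lets $\forall$'s strategy be repackaged as an $\exists$-strategy for the dual automaton, and it fails for non-monotone $\varphi$). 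Since $\SOLa$ contains negation, its formulas are not monotone in general, so plain dualisation of the transition map is unsound. The paper resolves this with Walukiewicz's trick (Proposition \ref{p:mon-aut}): after every construction step one replaces each transition formula by a monotone equivalent obtained by existentially quantifying fresh variables constrained to be subsets of the states, and the dual is then formed \emph{inside the monotone fragment} (via the $\mathsf{Comp}$-formulas), so that Proposition \ref{closurecomplementation} applies to monotone automata only. Your proposal contains no analogue of this monotonisation, and without it the negation case of the induction breaks.

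Two smaller corrections. First, the atomic cases are not handled by one-state automata that check a lifting ``locally'': the atom $\lambda(p,q_1,\dots,q_n)$ is a global condition (the lifting must hold at \emph{every} $p$-state of the tree), and the colours $q_i$ of successors are not visible in the current one-step model, so the automaton must scan the whole tree with a looping state of even priority and use auxiliary states $b_1,\dots,b_n$ that verify the $q_i$ in the next round; the same global-scan shape is needed for $\sr(p)$ and $p\subseteq q$. Second, the reason existential projection works is not that $\SOLa$ ``internalises'' the choice of the colour $p$ by one-step quantification; it is that the simulation theorem (Theorem \ref{simulationtheorem}) yields automata whose one-step formulas are \emph{disjunctive} in the semantic sense, which makes $\exists$'s positional winning strategies separating over trees, and then the projection automaton with $\Delta^*(a,c)=\Delta(a,c)\vee\Delta(a,c\cup\{p\})$ admits the extraction of a witnessing set as in Proposition \ref{existentialclosure} (one-step quantification is used inside the simulation construction itself, e.g.\ for the $\mathsf{Disj}$- and $\mathsf{Rep}$-formulas, but the projection mechanism is non-determinism, exactly as for the modal one-step language). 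Also note there is no individual quantification in $\MSOLa$ to translate, so that part of your plan is vacuous.
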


The proof of Theorem~\ref{t:automatachar} proceeds by induction on the 
complexity of $\MSOT$-formulas, and thus involves various \emph{closure 
properties} of automata, such as closure under complementation, union and
projection.
In order to establish these results, it will be convenient to take an
\emph{abstract} perspective revealing how closure properties of a class of
automata are completely determined at the level of the one-step language, similar to the development in \cite{walu:mona96}. 

\subsection{Bisimulation Invariance}

With automata-theoretic characterizations in place for both coalgebraic $\MSO$ 
and the coalgebraic $\mu$-calculus $\muML$, we can address Question~\ref{q:Q}
by considering the following question:
\begin{qu}
\label{q:Qa}
\text{Which functors $\fun$ satisfy $\Aut(\mathtt{ML}_\fun)\equiv \Aut(\mathtt{SO}_\fun)/{\sim}$?}
\end{qu}
Continuing the program of the third author~\cite{vene:expr14}, we will approach
this question \emph{at the level of the one-step languages}, $\mathtt{SO}$ and $\mathtt{ML}$.
To start with, observe that any translation (from one-step formulas in) $\mathtt{SO}$ to
(one-step formulas in) $\mathtt{ML}$ naturally induces a translation from $\mathtt{SO}$-automata
to $\mathtt{ML}$-automata.
A new observation we make here is that any so-called \emph{uniform construction}
on the class of one-step models for the functor $\fun$ that satisfies certain
\emph{adequacy} conditions, provides
(1) a translation $(\cdot)^{*}: \mathtt{SO} \to \mathtt{ML}$, together with
(2) a construction $(\cdot)_{*}$ transforming a pointed $\fun$-model $(\bbS,s)$
into a tree model $(\bbS_{*},s_{*})$ which is a coalgebraic pre-image of 
$(\bbS,s)$ satisfying
\[
\bbA \text{ accepts } (\bbS_{*},s_{*}) \text{ iff }
\bbA^{*} \text{ accepts } (\bbS,s).
\]
where $\bbA^*$ is obtained by applying to $\bbA$ the lifting of the one-step translation $(\cdot)^{*}: \mathtt{SO} \to \mathtt{ML}$ to the level of automata.
From this it easily follows that an $\mathtt{SO}$-automaton $\bbA$ is bisimulation 
invariant iff it is equivalent to the $\mathtt{ML}$-automaton $\bbA^{*}$.

On the basis of these observations we can prove the following generalisation of 
the Janin-Walukiewicz Theorem.

\begin{thm}[Coalgebraic Bisimulation Invariance]
\label{mainone}
Let $\fun$ be any set functor. If $\fun$ admits an adequate uniform construction for every finite set of second-order one-step formulas $\Gamma$, then 
$$\mathtt{MSO}_\fun/{\sim} \equiv \mu \mathtt{ML}_\fun.$$
\end{thm}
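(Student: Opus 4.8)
The plan is to prove the two inclusions $\muMLT \subseteq \MSOT/{\sim}$ and $\MSOT/{\sim} \subseteq \muMLT$ separately, with essentially all of the work concentrated in the second. The first inclusion is the routine direction and I would dispatch it exactly as over Kripke models: (i) every formula of $\muMLT$ is invariant under behavioral equivalence, since predicate liftings are natural and hence commute with coalgebra morphisms, and the least and greatest fixpoints of the monotone operations they induce are preserved along such morphisms as well — one checks this by induction on formula complexity with an inner transfinite induction over the ordinal approximants of each fixpoint; and (ii) $\muMLT$ embeds into $\MSOT$, because the modalities of $\muMLT$ are built from (monotone) predicate liftings, which are precisely the primitives available in $\MSOT$, the Boolean connectives are shared, and a fixpoint formula $\mu p.\psi$ (resp.\ $\nu p.\psi$) is expressed in the one-sorted language $\MSOT$ by quantifying over all $\psi$-closed (resp.\ some $\psi$-consistent) sets, just as over Kripke structures. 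Together (i) and (ii) place every $\muMLT$-formula in $\MSOT/{\sim}$.

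For the converse inclusion, fix $\varphi \in \MSOT$ that is invariant under behavioral equivalence; as only finitely many predicate liftings occur in $\varphi$ we may view $\varphi$ as a formula of $\MSOLa$ for a finite set $\La$ of monotone predicate liftings. By Theorem~\ref{t:automatachar} there is an automaton $\bbA = \bbA_\varphi \in \Aut(\mathtt{SO}_\La)$ equivalent to $\varphi$ over $\fun$-tree models. Let $\Gamma$ be the finite set of one-step formulas occurring in the transition map of $\bbA$ and invoke the hypothesis to obtain an adequate uniform construction for $\Gamma$; as recorded in the discussion preceding the statement, this yields a one-step translation $(\cdot)^{*}\colon \mathtt{SO} \to \mathtt{ML}$, hence an automaton $\bbA^{*} \in \Aut(\mathtt{ML}_\La)$, together with, for every pointed $\fun$-model $(\bbS,s)$, a tree model $(\bbS_{*},s_{*})$ which is a coalgebraic pre-image of $(\bbS,s)$ — and therefore behaviorally equivalent to it — satisfying $\bbA$ accepts $(\bbS_{*},s_{*})$ iff $\bbA^{*}$ accepts $(\bbS,s)$. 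Chaining these facts, for an arbitrary pointed $\fun$-model $(\bbS,s)$ we obtain
\[
\bbA^{*}\text{ accepts }(\bbS,s)
 \iff \bbA\text{ accepts }(\bbS_{*},s_{*})
 \iff (\bbS_{*},s_{*})\models\varphi
 \iff (\bbS,s)\models\varphi,
\]
where the first equivalence is the adequacy property, the second is Theorem~\ref{t:automatachar} applied to the tree model $(\bbS_{*},s_{*})$, and the third uses $\sim$-invariance of $\varphi$ together with $(\bbS_{*},s_{*})\sim(\bbS,s)$. Hence $\bbA^{*}$ is equivalent to $\varphi$ over \emph{all} pointed $\fun$-models; translating the $\mathtt{ML}_\La$-automaton $\bbA^{*}$ back into a formula $\psi \in \muMLLa \subseteq \muMLT$ (using that $\Aut(\mathtt{ML}_\La)$ corresponds to $\muMLLa$) yields $\psi \equiv \varphi$, so $\varphi \in \muMLT$ up to equivalence.

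The substantive mathematics is of course packaged into the hypothesis — what it takes for $\fun$ to admit an adequate uniform construction, and the verification that the three ingredients $(\cdot)^{*}$, the lifted automaton map $\bbA\mapsto\bbA^{*}$, and the model transformation $(\cdot)_{*}$ cooperate so as to validate the displayed adequacy equivalence — and this is where the main obstacle lies, with each concrete instance (power set, bag, exponential polynomial, and monotone neighborhood functors) demanding its own construction. Within the proof of the theorem itself, the point requiring care is the bookkeeping of the parameter $\Gamma$: the translation $(\cdot)^{*}$ and the pre-image construction $(\cdot)_{*}$ are fixed only once $\bbA$ (hence $\Gamma$) is chosen, so $(\bbS_{*},s_{*})$ may genuinely depend on $\bbA$, and it is precisely the adequacy equivalence — not mere bisimulation-invariance of $\varphi$ — that licenses passing from ``$\bbA$ accepts over tree models'' to ``$\bbA^{*}$ accepts over arbitrary models''. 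The only remaining ingredients are the standard facts that every pointed coalgebra is behaviorally equivalent to a tree model and that behavioral equivalence coincides with the relation $\sim$ of the statement.
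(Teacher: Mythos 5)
Your outer architecture is the same as the paper's, but there is a genuine gap: you invoke, as though it were already available, precisely the content that the proof of Theorem \ref{mainone} has to supply. The hypothesis gives you, for each finite $\Gamma$, only a family of covers $h_\alpha : (X_*,\alpha_*) \to (X,\alpha)$ satisfying the semantic condition $(\star)$; it does \emph{not} come equipped with a one-step translation $(\cdot)^{*}$, nor with the tree pre-image $(\bbS_*,s_*)$ and the acceptance equivalence. The ``discussion preceding the statement'' that you cite is the introduction's announcement of these facts, not a citable result. The paper's proof consists exactly of the two missing pieces: (i) adequacy yields a uniform translation --- for a monotone $\varphi \in \Gamma$ one defines a generalized lifting $\lambda$ by $\alpha \in \lambda_X(V)$ iff $(X_*,\alpha_*,V_{[h_\alpha]}) \Vdash_1 \varphi$, checks monotonicity, derives \emph{naturality} from $(\star)$, and then uses the fact that $\muMLT$ has \emph{all} monotone predicate liftings as modalities, so that $\lambda$ is itself a formula of the modal one-step language (this is the reason the theorem is stated for $\MSOT$ and $\muMLT$ rather than an arbitrary $\La$); and (ii) Proposition \ref{p:unr}, an unravelling-like construction of a $\fun$-tree model $(\bbT,R,t)$ with a model morphism onto $(\bbS,s)$, together with a two-way strategy translation in the acceptance games showing that $\bbA$ accepts $(\bbT,R,t)$ iff $\bbA^{*}$ accepts $(\bbS,s)$. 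Neither of these is instance-specific: they are generic and belong to the proof of this theorem, whereas your closing paragraph files them under what is ``packaged into the hypothesis'' or into the concrete applications.

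A second, related omission is the monotonicity bookkeeping. The translation in (i) only produces a legitimate modality when the source one-step formula is monotone, and the strategy argument in (ii) explicitly uses monotonicity of the formulas $\Delta(a,c)$; this is why the paper first replaces the automaton $\bbA_\varphi$ by an equivalent \emph{monotone} one (Proposition \ref{p:mon-aut}, Theorem \ref{p:automatachar}) before applying the uniform construction, and why Theorem \ref{generalcharacterization} is formulated for the monotone fragment of $\SOLa$. Your proposal never addresses this. The easy direction, the chain of equivalences, and the final appeal to the equivalence between $\Aut(\MLLa)$ and $\muMLLa$ (Fact \ref{coalgebraicmu}) do match the paper's Theorem \ref{generalcharacterization}, but without (i), (ii) and the monotonicity step the argument does not go through as written.
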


In our view, the proof of this theorem separates the `clean', abstract part of 
bisimulation-invariance results from the more functor-specific parts.
As a consequence, Theorem~\ref{mainone} can be used to obtain immediate results
in particular cases.
Examples include the power set functor (standard Kripke structures),
where the adequate uniform construction roughly consists of taking $\omega$-fold 
products (see Example~\ref{ex:psf}), the bag functor (Example~\ref{ex:bag}),
and all exponential polynomial functors (Corollary~\ref{c:epf}).

However, it turns out that there are cases of functors that provably do \emph{not} admit an adequate uniform
construction. A concrete example is the \emph{monotone neighborhood functor} 
$\mon$ (see the next section for its definition), which provides the semantics for \textit{monotone modal logic} ~\cite{hans:coal04} and its fixpoint extension, the \textit{monotone $\mu$-calculus}. The monotone $\mu$-calculus is interesting for at least two reasons: first, it has a purely theoretical interest as a natural candidate for a \emph{minimal} modal $\mu$-calculus, since the only constraint on the box operator is monotonicity rather than the stronger constraint of complete multiplicativity of the box operator in normal modal logic. In other words, the base modal logic has precisely the logical strength it needs so that Knaster-Tarski-style extremal fixpoints may be smoothly introduced, and no more. Second, a more application driven reason to study the monotone $\mu$-calculus is that it is the natural fixpoint logic in which Parikh's dynamic logic of games \cite{pari:game85} sits as a fragment  in the same manner that PDL is a fragment of the modal $\mu$-calculus. (More precisely, game logic is a fragment of a multi-modal variant of the monotone modal $\mu$-calculus, but we will stick to the monomodal case for the sake of notational convenience.)   
As the final contribution of this paper we show how, with some extra work, a characterization result for the monotone $\mu$-calculus can be derived using our main result \ref{mainone}, with respect to a natural monadic second-order language for monotone neighborhood structures that we call ``monotone MSO''.  
\begin{thm}
\label{t:mon-char}
The monotone $\mu$-calculus is the fragment of monotone MSO that is invariant for neighborhood bisimulations.  
\end{thm}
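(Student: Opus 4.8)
The plan is to establish the right-to-left inclusion by a routine argument and to reduce the nontrivial left-to-right inclusion to Theorem~\ref{mainone}, applied not to $\mon$ itself --- which, as noted above, admits no adequate uniform construction --- but to a suitable \emph{supporting functor}. For the easy direction: every formula of the monotone $\mu$-calculus translates into monotone MSO, and, being built from the monotone box together with Boolean and fixpoint operations, all preserved by neighborhood bisimulations, it is invariant for neighborhood bisimulations. So the real work is the converse.

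For that, I would introduce a supporting functor $\mathcal{K}$ built from the covariant powerset functor --- morally $\psf\circ\psf$, possibly dressed with polynomial constructs --- whose coalgebras encode monotone neighborhood structures by splitting off the neighborhood layer: a state retains a \emph{set of neighborhood objects}, each of which carries a \emph{set of successor states}, so that the monotone box $\Box\varphi$ is matched by the two-step pattern ``$\Diamond_1\Box_2\varphi$''. The first block of work is the bookkeeping around this encoding: (i) every pointed monotone neighborhood model $(\bbS,s)$ expands to a supporting $\mathcal{K}$-model $\bbS^{\sharp}$, and conversely every $\mathcal{K}$-model \emph{flattens} to a monotone neighborhood model; (ii) two monotone neighborhood models are related by a neighborhood bisimulation iff their expansions are behaviourally equivalent \emph{within the class of supporting models}, one direction by choosing matching neighborhood objects and the other by flattening; and (iii) there are truth-preserving translations $t(\cdot)$ from monotone MSO into $\MSO_{\mathcal{K}}$ and $t_{\mu}(\cdot)$ from the monotone $\mu$-calculus into $\muML_{\mathcal{K}}$, both landing in an identifiable \emph{safe} fragment, namely the formulas whose meaning is insensitive to the difference between a supporting model and the re-expansion of its flattening. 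As usual one works throughout with tree-like models, using that every coalgebra is behaviourally equivalent to a tree one (Theorem~\ref{t:automatachar}).

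With the encoding in place one checks that $\mathcal{K}$ admits an adequate uniform construction for every finite set $\Gamma$ of second-order one-step formulas --- either via a composition lemma reducing this to the case of $\psf$ treated in Example~\ref{ex:psf} (together with Corollary~\ref{c:epf}), or, failing a clean composition lemma, by a direct construction in the spirit of Example~\ref{ex:psf}, taking $\omega$-fold products at both layers. Theorem~\ref{mainone} then gives $\MSO_{\mathcal{K}}/{\sim}\equiv\muML_{\mathcal{K}}$. Now take a sentence $\varphi$ of monotone MSO that is invariant for neighborhood bisimulations. By (ii) and (iii), $t(\varphi)$ is invariant under behavioural equivalence on supporting $\mathcal{K}$-models; since every $\mathcal{K}$-model is behaviourally equivalent to the expansion of its flattening, this upgrades to $\sim$-invariance of $t(\varphi)$ over all $\mathcal{K}$-coalgebras, so $t(\varphi)\equiv\psi$ for some $\psi\in\muML_{\mathcal{K}}$. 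Finally one normalizes $\psi$ into the safe fragment and pulls it back along $t_{\mu}$ to a monotone $\mu$-calculus formula equivalent to $\varphi$ on all monotone neighborhood models, which closes the argument.

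The hard part is the interface between arbitrary $\mathcal{K}$-coalgebras and those that genuinely arise from monotone neighborhood structures, which intervenes twice. First, in (ii): a supporting model carries junk --- duplicated or non-minimal neighborhood objects --- that must be invisible to behavioural equivalence, so that neighborhood bisimulation and behavioural equivalence on supporting models really do coincide after translation. Second, and more seriously, the formula $\psi$ delivered by Theorem~\ref{mainone} a priori probes arbitrary $\mathcal{K}$-coalgebras, and one must show it can be normalized within the safe fragment without changing its meaning on supporting models, so as to be retranslatable into the monotone $\mu$-calculus; this ``safety'' argument --- controlling how the bisimulation-invariant fragment behaves under restriction to the image of the encoding --- is exactly where the additional non-trivial ideas are needed.
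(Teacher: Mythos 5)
Your high-level architecture — pass to an auxiliary functor that does admit adequate uniform constructions, apply Theorem~\ref{mainone} there, then transfer back to the monotone setting — is indeed the right shape, and it parallels the paper, which works with the \emph{supported companion} $\underline{\mon}\subseteq\psf\times\mon$ (each neighborhood object paired with a support set) rather than a two-level encoding of flavour $\psf\circ\psf$. But as it stands your proposal has two genuine gaps, and they are exactly the two places where the paper has to work hard. First, the adequate uniform construction for your $\mathcal{K}$ is asserted, not established. There is no composition lemma for adequate uniform constructions available, and ``$\omega$-fold products at both layers'' does not obviously neutralize the second-order one-step language over a neighborhood-type functor: that language can express overlap and containment patterns among neighborhoods (``any two neighborhoods intersect'', ``there is a least neighborhood'', etc.), which is precisely why the paper's construction for $\underline{\mon}$ must manufacture infinitely many \emph{pairwise disjoint} copies of each neighborhood together with $2^{k}$ copies of each state (Definition~\ref{d:F-monstar}), and why adequacy is only obtained per quantifier depth $k$, via the signature/Ehrenfeucht--Fra\"iss\'e argument of Lemmas~\ref{kstep}--\ref{mainlemmamonstar}. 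Your proposal assumes adequacy outright, in the ``isomorphic covers'' style that works for $\psf$ and exponential polynomial functors but fails here. Moreover the choice of $\mathcal{K}$ itself is delicate: behavioural equivalence for a $\psf\circ\psf$-style functor enforces full Egli--Milner back-and-forth at the neighborhood level, which is strictly stronger than the two one-directional clauses of neighborhood bisimulation, and states lying outside all neighborhoods (the phenomenon behind the paper's proof that $\mon$ admits \emph{no} adequate uniform construction) resurface in your step (ii); so (ii) is not bookkeeping, and it is not clear it is even true for your encoding without something playing the role of the support component.

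Second, and more seriously, the transfer back is missing. The formula $\psi\in\muML_{\mathcal{K}}$ delivered by Theorem~\ref{mainone} is built from \emph{arbitrary} monotone predicate liftings for $\mathcal{K}$, and your ``normalize $\psi$ into the safe fragment and pull it back along $t_{\mu}$'' is precisely the non-trivial content you would need to prove; you name it as the hard part but give no argument, so the proof does not close. In the paper this step is Theorem~\ref{removeboxes}: using expressive completeness of $\{\Box,\gbox\}$ for $\underline{\mon}$, the fixpoint formula produced over the companion functor uses only the monotone box and the support modality $\gbox$, and $\gbox$ is then eliminated from $\sim$-invariant formulas via the finite model property (Fact~\ref{fmp}) and the universal-model construction $\mathbb{S}\oplus\mathbb{U}$, replacing $\gbox\psi$ and $\gdi\psi$ by $\top$ or $\bot$ according to $\bbA$-validity, resp.\ $\bbA$-satisfiability. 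Some analogue of both ingredients (an expressively complete, ``safe-by-construction'' set of liftings for $\mathcal{K}$, and a semantic elimination argument for the extra expressive power) would have to be supplied before your route yields Theorem~\ref{t:mon-char}.
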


\section{Some technical background}

In this section we provide some of the necessary technical background.
We shall assume familiarity with coalgebras, and at least some basic acquaintance with modal logic and the modal $\mu$-calculus. We will provide insofar as possible a self-contained presentation of the required material on automata and infinite games.

\subsection{Automata}
\label{s:automata}

Throughout most of its history, the mathematical theory surrounding the modal $\mu$-calculus has relied heavily on the theory of \emph{automata} operating on infinite objects and,  what is closely related, \emph{infinite games}. It turns out that these frameworks fit nicely into the coalgebraic world view, which is to be expected as infinite games and automata operating on infinite objects both have a coinductive flavour. This connection is in fact a big part of what makes the coalgebraic modal $\mu$-calculus tick, and will be central to this paper. In this section we give a brief overview of this field. The material that we cover is entirely standard, so readers who are already familiar with these topics will want to merely skim the section to fix notation and terminology.

We start with something that probably most readers have already come across at some point: the theory of word automata recognizing regular languages over some alphabet.  (For an introduction to the basic concepts of automata theory, see \cite{hopc:intro01}.)
\begin{defi}
 A (non-deterministic) \emph{word automaton} is a tuple:
$$(A,\Delta,a_I,F)$$
where $A$ is a finite set, $\Delta : A \times \Sigma \to \psf(A)$, $a_I \in A$ and $F \subseteq A$.
\end{defi}
The general mold into which many types of automata fit is of this shape: they consist of a set of \emph{states} ($A$), a \emph{transition structure } ($\Delta$) telling us how to move between states of the automaton, an \emph{initial state} ($a_I$) and, finally, some data structure over $A$ that lets us know whether or not a given computation carried out by an automaton accepts or rejects its input. In the case of word automata, the input $\vec{w}$ of an automaton is a finite word $c_0....c_n$ over the alphabet $\Sigma$ (i.e. an element of $\Sigma^n$ for some $n$). 
\begin{defi}
A \emph{run} of the automaton $\mathbb{A}$ on a word $\vec{w}$ is a tuple $a_0....a_{n+1}$ of states in $A$, such that $a_0 = a_I$ and $a_{i + 1} \in \Delta(a_i,c_i)$ for $i \leq n$. The run is said to be \emph{accepting} if $a_{n+1} \in F$, and we say that $\mathbb{A}$ \emph{accepts} the word $\vec{w}$ if there is some accepting run of $\mathbb{A}$ on the input $\vec{w}$. The \emph{language} $L(\mathbb{A})$ of an automaton $\mathbb{A}$ is simply the collection of all the words that it accepts. 
\end{defi}
A central technique in automata theory is to show that one can simplify the transition structure of some class of automata, without losing any expressive power. In the case of word automata, it is well known that the \emph{non-deterministic} automata that we have defined here have the same expressive power as \emph{deterministic} word automata, which are defined as ones that satisfy the condition that $\Delta(a,c)$ is a \emph{singleton} for each $a \in A$. This means that there is a \emph{unique} run of the automaton on every input. 
\begin{fact}[Simulation of word automata]
For every word automaton $\mathbb{A}$, there exists a deterministic word automaton $\mathbb{A}'$ with $L(\mathbb{A}) = L(\mathbb{A}')$.
\end{fact}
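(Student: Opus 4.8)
The plan is to apply the classical \emph{subset construction} (also called the \emph{powerset construction}), which determinizes $\mathbb{A}$ by keeping track, at each stage, of the set of \emph{all} states the nondeterministic automaton could possibly occupy after having read a given prefix of the input. Concretely, given $\mathbb{A} = (A,\Delta,a_I,F)$ I would define $\mathbb{A}' = (A',\Delta',a_I',F')$ by setting $A' \isdef \psf(A)$, $a_I' \isdef \{a_I\}$, $\Delta'(S,c) \isdef \{\bigcup_{a \in S}\Delta(a,c)\}$ for every $S \in A'$ and $c \in \Sigma$, and $F' \isdef \{S \subseteq A : S \cap F \neq \emptyset\}$. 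By construction $\Delta'(S,c)$ is a singleton for every $S$ and $c$, so $\mathbb{A}'$ is deterministic in exactly the sense required, and since $\card{A} < \omega$ we have $\card{A'} = 2^{\card{A}} < \omega$, so $\mathbb{A}'$ is a legitimate word automaton.

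The heart of the argument is a reachability invariant, which I would establish by induction on word length. Fix an input $\vec{w} = c_0\ldots c_n$ and let $S_0 S_1 \ldots S_{n+1}$ be the run of $\mathbb{A}'$ on $\vec{w}$, which is unique because $\mathbb{A}'$ is deterministic. The claim to prove is that for each $i$ with $0 \leq i \leq n+1$,
$$S_i = \{a \in A : \text{some run } a_0\ldots a_i \text{ of } \mathbb{A} \text{ on } c_0\ldots c_{i-1} \text{ has } a_i = a\}.$$
The base case $i=0$ holds since $S_0 = \{a_I\}$ and the only length-zero run is $a_0 = a_I$. For the inductive step, $S_{i+1}$ is the unique element of $\Delta'(S_i,c_i) = \{\bigcup_{a \in S_i}\Delta(a,c_i)\}$; unfolding this shows that $a' \in S_{i+1}$ exactly when $a' \in \Delta(a,c_i)$ for some $a \in S_i$, which by the induction hypothesis on $S_i$ is precisely the condition that some run of $\mathbb{A}$ on $c_0\ldots c_i$ terminates in $a'$.

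With the invariant in hand the conclusion is immediate: $S_{n+1}$ is exactly the set of states in which some run of $\mathbb{A}$ on $\vec{w}$ can end, so $\mathbb{A}'$ accepts $\vec{w}$ iff $S_{n+1} \in F'$ iff $S_{n+1} \cap F \neq \emptyset$ iff $\mathbb{A}$ has an accepting run on $\vec{w}$. As $\vec{w}$ was arbitrary, $L(\mathbb{A}) = L(\mathbb{A}')$. I do not expect any genuine obstacle here, as the construction is entirely standard; the only points demanding a little care are formulating the reachability invariant correctly (so that the $\exists$-quantification over runs in the nondeterministic automaton is faithfully collapsed into the single deterministic trajectory through $\psf(A)$) and observing that finiteness of $A$ is preserved by passing to its powerset, so that $\mathbb{A}'$ remains a well-formed word automaton.
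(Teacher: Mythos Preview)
Your proposal is correct and uses exactly the same approach as the paper, namely the standard powerset construction with $A' = \psf(A)$, $a_I' = \{a_I\}$, $F' = \{B \subseteq A \mid B \cap F \neq \emptyset\}$, and the union-based transition map. In fact you supply more detail than the paper does (the reachability invariant and its inductive proof), and your wrapping of the transition output in a singleton is a minor notational improvement over the paper's sketch.
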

The proof consists of an elegant construction known as the \emph{powerset construction}: given an automaton $\mathbb{A} = (A,\Delta,a_I,F)$, we construct a deterministic automaton $\mathbb{A}' = (A',\Delta',a_I',F')$ recognizing the same language by setting $A' = \psf(A)$, $a_I' = \{a_I\}$, the accepting states are defined by
 $$F' = \{B \subseteq A \mid B \cap F \neq \emptyset\}$$ and the transition structure is defined
$$\Delta'(B,c) = \bigcup \{\Delta(b,c) \mid b \in B\}.$$
Intuitively, the states of $\mathbb{A}'$ consist of ``macro-states'' over $\mathbb{A}$, consisting of several states of $\mathbb{A}$ being visited at once, and a run of $\mathbb{A}'$ on a word can be viewed as a ``bundle'' of runs of $\mathbb{A}$ on the same word. 
These sorts of simulation theorems turn up in various 
places, for various types of automata. In particular, one of our main technical results of this paper is a simulation theorem, allowing us to simplify the transition structure of the automata that we introduce for monadic second-order logic interpreted on coalgebras.

In this paper we are mainly interested in automata that, unlike word automata, classify \emph{infinite} objects. The simplest non-trivial example of this is the theory of automata operating on infinite words, or \emph{streams} as we shall call them. For a given alphabet $\Sigma$, a $\Sigma$-\emph{stream} is an $\omega$-sequence of elements of $\Sigma$. Automata that are used to classify streams are called \emph{stream automata}. The simplest kind of stream automata, known as \emph{B\"{u}chi automata}, are simply the same as word automata: they consist of a set of states $A$, a transition relation $\Delta : A \times \Sigma \to \psf(A)$, a start state $a_I$ and a set $F$ of accepting states. The difference is in the definition of acceptance.
\begin{defi}
 Given a stream $c_0c_1c_2c_3...$ over $\Sigma$, a \emph{run} of the automaton $\mathbb{A}$ is a stream $a_0a_1a_2a_3...$ over $A$, such that $a_0 = a_I$ and $a_{i + 1} \in \Delta(a_i,c_i)$ for all $i < \omega$. The run is said to be \emph{accepting} if the following acceptance condition is met: for infinitely many $i \in I$, we have $a_i \in F$.
\end{defi}
In other words, a run is accepting if from each given point in the computation, the automaton will eventually end up in an accepting state. The language $L(\mathbb{A})$ consists of the set of $\Sigma$-streams accepted by $\mathbb{A}$. 

Again, one of the central results about stream automata is a simulation theorem. However, here the B\"{u}chi condition is not flexible enough - it is a standard fact that deterministic B\"{u}chi automata have strictly less expressive power than non-deterministic ones. To obtain a simulation theorem, we need to switch to a more sophisticated acceptance condition, and several are available (see \cite{vene:lect12} for an overview). For the purposes of the present paper, the most natural condition is the \emph{parity} condition.

Given a finite set $A$, a \emph{priority map} is simply a map $\Omega : A \to \omega$, assigning a natural number to each element of $A$. Such a map encodes, in a compact way, two pieces of information that we want to keep track of: first, it partitions the states in $A$ into two sets, the states with even priority and the ones with odd priority. Intuitively, think of the states with even priority as the accepting states. Second, the priority map encodes an \emph{ordering} of the states, where states of higher priority are regarded as more important when determining whether or not a run is accepting. A \emph{parity stream automaton} is then a tuple $(A,\Delta,a_I,\Omega)$, where $\Omega$ is a priority map. A run $a_0a_1a_2...$ on some stream is now said to be \emph{accepting} if the highest number $k$ such that $\Omega(a_i) = k$ for infinitely many $i$ is an even number. In other words: the run visits some accepting state $a$ infinitely many times, and any non-accepting state that is also visited infinitely many times has lower priority than $a$.

Switching from B\"{u}chi automata to parity automata does not give any more expressive power: the two automaton models are expressively equivalent. The point of the parity condition is that we now have a simulation theorem available:
\begin{fact}[Simulation of stream automata]
 For every parity stream automaton $\mathbb{A}$, there is a deterministic parity stream automaton $\mathbb{A}'$ (i.e. such that $\Delta'(a',c)$ is a singleton for all $a' \in A'$) with $L(\mathbb{A}) = L(\mathbb{A}')$.
\end{fact}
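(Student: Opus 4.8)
The plan is to prove the Simulation Theorem for parity stream automata by generalizing the powerset construction used for word automata, but enriching each macro-state with enough combinatorial data to track not just which states of $\mathbb{A}$ are reachable, but also the \emph{parity information} along the bundle of runs that reaches them. The naive powerset $\psf(A)$ is insufficient precisely because it forgets the priorities seen en route, and acceptance for a parity condition depends on the $\limsup$ of priorities along a single run. So the first step is to identify the right notion of deterministic state. A standard and robust choice is to take states of $\mathbb{A}'$ to be \emph{ordered} structures over $A$ — essentially sequences (or latest-appearance records / index appearance records) listing the currently reachable states of $\mathbb{A}$ together with an ordering that records how recently each state's run last witnessed a high priority. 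This is the Safra/IAR-style bookkeeping, adapted to the parity rather than the Rabin/Streett setting.

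Concretely, I would first fix $\mathbb{A} = (A,\Delta,a_I,\Omega)$ and define $\mathbb{A}' = (A',\Delta',a_I',\Omega')$ where $A'$ is a finite set of records over $A$ (each record being a reachable macro-state $B \subseteq A$ together with an ordering of its elements by recency of high-priority visits, plus a marker indicating which priority was just ``hit''). The initial record $a_I'$ encodes the singleton $\{a_I\}$. The deterministic transition $\Delta'(r,c)$ is then a \emph{single} record computed from $r$ and the letter $c$: update the underlying macro-state by the usual $\bigcup\{\Delta(b,c) \mid b \in B\}$, then reshuffle the ordering so that states whose incoming runs just passed through a high priority are moved to reflect their freshly-witnessed priority, and compute the marker recording the maximal priority that was ``collectively confirmed'' this round. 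The priority map $\Omega'$ reads off an even or odd number from this marker in such a way that the $\limsup$ of $\Omega'$ along the unique run of $\mathbb{A}'$ tracks whether \emph{some} run of $\mathbb{A}$ in the bundle satisfies the parity condition.

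The core of the argument is then the correctness proof, split into the two inclusions of $L(\mathbb{A}) = L(\mathbb{A}')$. For the easier direction, if $\mathbb{A}$ accepts a stream via some accepting run $a_0 a_1 a_2 \ldots$, one shows that this run stays ``tracked'' inside the macro-states of the unique run of $\mathbb{A}'$, and that its even $\limsup$ priority forces the record markers to stabilize at an even value infinitely often, so $\mathbb{A}'$ accepts. For the converse, if the unique run of $\mathbb{A}'$ is accepting, the ordering/marker data must have confirmed some even priority $2k$ as the dominant one infinitely often while never being overtaken infinitely often by a higher odd priority; a König's-Lemma / pigeonhole argument then extracts an actual infinite run of $\mathbb{A}$ threading through the macro-states whose genuine $\limsup$ priority is exactly this even value, witnessing acceptance by $\mathbb{A}$. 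Throughout one must check that $A'$ is finite (the number of records over the finite set $A$ is bounded) so that $\mathbb{A}'$ is a legitimate parity stream automaton.

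The main obstacle is designing the record-update rule and the derived priority map $\Omega'$ so that the $\limsup$ condition on the \emph{single} run of $\mathbb{A}'$ faithfully reflects the existence of \emph{some} accepting run of $\mathbb{A}$ among the exponentially many runs collapsed into each macro-state. This is exactly the point where the plain powerset construction breaks down for infinite objects, and where the subtlety of parity acceptance lives: we must simultaneously remember, for each reachable state, the best (highest even, or worst odd) priority its history can guarantee, and update this coherently as runs merge and split. Getting the reshuffling of the ordering right — so that a run which keeps hitting a high even priority is rewarded while one whose only high priorities are odd is penalized, \emph{and} so that merging runs never spuriously inherit a good priority they didn't earn — is the delicate heart of the construction, and is what the Safra-style ``latest appearance record'' machinery is engineered to handle.
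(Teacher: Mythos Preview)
Your proposal is essentially in line with what the paper does, though you should note that the paper does not actually give a proof of this fact: it is stated as background and the paper simply says ``A standard proof of this simulation theorem for stream automata uses a construction due to Safra, which is quite involved compared to the powerset construction for word automata,'' referring the reader to \cite{vene:lect12} for details. Your Safra/IAR-style sketch is exactly the kind of construction being alluded to, so there is no divergence in approach to report.
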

A standard proof of this simulation theorem for stream automata uses a construction due to Safra, which is quite involved compared to the powerset construction for word automata. We refer to \cite{vene:lect12} for the details.  

Simulation theorems are particularly useful for proving \emph{closure} properties of automata; for example, given a deterministic parity stream automaton $\mathbb{A}$, it is trivial to define its \emph{complement}, i.e. an automaton that accepts all and only those streams that are \emph{not} accepted by $\mathbb{A}$. The construction just adjusts the priority map: if $\Omega$ was the priority map of $\mathbb{A}$, then define the complement of $\mathbb{A}$
 to be just like $\mathbb{A}$ but with the priority map $\Omega'$ defined by $\Omega'(a) = \Omega(a) + 1$.

\subsection{Infinite games}

The automata we considered in the previous section are all of a fairly simple character: when processing a word or a stream,  these automata simply change their state depending on the input that they read, in a deterministic or non-deterministic manner, and the sequence of states visited in this manner is what we have called a run. For more complex automaton models, this notion of a ``run'' of the automaton on some input will no longer be adequate - more elaborate models of computation will be required. Here, the theory of infinite games has turned out be a powerful mathematical framework, providing simple and perspicious models of how automata process their input. The idea is to think of the computation carried out by an automaton as a game between two players with opposite goals: one player tries to defend the view that the input should be accepted, the other tries to show that it should be rejected. 
\begin{defi}
A \emph{board game} $\mathbb{G}$ is a tuple $(B_\exists,B_\forall,E,W)$ where $B_\exists$ and $B_\forall$ are disjoint sets, $E \subseteq B \times B$ where $B = B_\exists \cup B_\forall$ (this union is called the \emph{board}) and $W$ is a map from the set $B^\omega$ of $B$-streams into the set $\{\exists,\forall\}$. Elements of $B$ are called \emph{positions}.
\end{defi}
Intuitively, $\exists$ (or Eloise) and $\forall$ (or Abelard) are the \emph{players}, $B_\exists$ consists of the positions on the board at which it is $\exists$'s turn to move, and similarly for $B_\forall$. Finally, the relation $E$ tells us what the legal moves are at a given position of the game. It can happen that there are no legal moves at all at a position of the game.

There is nothing to prevent that a match of a board game could go on indefinitely. In fact this is crucial to the connection between infinite games and automata operating on non-wellfounded structures, in which the computation carried out by the automaton processing some input can be infinitely long. 
In these cases, the map $W$ will be used to decide a winner. Formally: 
\begin{defi}
Let $\mathbb{G} = (B_\exists,B_\forall,E,W)$ be any board game. A \emph{match}  in $\mathbb{G}$ is either a finite tuple of positions $\pi_0...\pi_n$ or an infinite sequence $\pi_0\pi_1\pi_2...$ of positions in $B$, such that $a_i E a_{i+ 1}$ for all $i \in \omega$ (or for all $i < n$, if the match is finite.) A match $\rho$ is called \emph{complete} if it is infinite, or if $E[\pi] = \{\pi' \mid \pi E \pi'\}= \emptyset$ where $\pi$ is the last position on $\rho$. If $\rho$ is not complete it is called a \emph{partial} match. The set of partial matches $\rho$ of $\mathbb{G}$ for which the last position $\pi$ of $\rho$ belongs to $\Pi$, and such that $E[\pi]\neq \emptyset$,  is denoted by $\mathsf{PM}(\mathbb{G},\Pi)$. The \emph{winner} of an infinite match $\rho$ is defined to be $W(\rho)$, and the winner of a complete but finite match is $\exists$ if the last position is in $B_\forall$, and $\forall$ if the last position of the match is in $B_\exists$. In these cases, we say that the losing player got stuck. 
\end{defi}

\begin{defi}
A \emph{strategy} for a player $\Pi$ in a board game $\mathbb{G}$ is a map $\chi : \mathsf{PM}(\mathbb{G},\Pi) \to B$. A match $\rho$  (complete or partial) is said to be $\chi$-\emph{guided} if, for every proper initial segment $\rho'$ of $\rho$ that belongs to $ \mathsf{PM}(\mathbb{G},\Pi)$, the tuple $\rho'  \chi(\rho')$ is also an initial segment of $\rho$. 

Finally, a strategy $\chi$ for $\Pi$ is said to be a \emph{winning strategy} at a position $\pi$ in $\mathbb{G}$ if every $\chi$-guided match beginning with the position $\pi$ is won by $\Pi$. If there is a winning strategy for $\Pi$ at the position $\pi$ then we say that $\pi$ is a \emph{winning position} for $\Pi$.
\end{defi}
\begin{rem}
The reader may be more familiar with a different terminology here, where the word ``play'' is used for what we call a ``match'', and a play is said to ``respect the strategy $\chi$'' if it is $\chi$-guided in our sense. 
\end{rem}

In practice, when we define a winning strategy $\chi$ for either player in a game $\mathbb{G}$ for some given position $\pi$, we will usually only take care to define the strategy on those partial matches that are actually reachable from the  position $\pi$ by a $\chi$-guided match.

A special role in connection with automata is played by \emph{parity games}, where the winning condition $W$ is defined via a map $\Omega : B \to \omega$, such that the range $\Omega[B]$ of this map is finite. We then set $W(\rho) = \exists$ if the highest natural number $k$ such that $\Omega(\pi_i) = k$ for infinitely many $i \in \omega$ is an even number, and we set $W(\rho) = \forall$ otherwise. It follows straightforwardly from a classic result in the theory of infinite games, the \emph{Borel determinacy theorem} \cite{mart:bore75}, that all parity games are \emph{determined}, meaning that the winning positions for $\exists$ and $\forall$ partition the board: every position is winning for one of the two players. However, a much stronger and quite useful property holds for parity games:
\begin{defi}
A strategy $\chi$ for $\Pi$ is said to be \emph{positional} if, for any two partial matches $\rho_1,\rho_2 \in \mathsf{PM}(\mathbb{G},\Pi)$ such that the last positions of $\rho_1$ and $\rho_2$ are identical, we have $\chi(\rho_1) = \chi(\rho_2)$.
\end{defi}   
A positional strategy for $\Pi$ can thus be represented equivalently as a map $\chi : B_\Pi^+ \to B$, where $$B_\Pi^+ = \{\pi \in B_\Pi \mid E[\pi] \neq \emptyset\}.$$
\begin{fact}[Positional Determinacy]
Let $\mathbb{G}$ be a parity game. Then for every position $\pi \in B$, there exists a unique player $\Pi \in \{\exists,\forall\}$ and a positional winning strategy for $\Pi$ at the position $\pi$.
\end{fact}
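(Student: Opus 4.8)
The plan is to prove the two assertions bundled in the statement—uniqueness of the winner at each position, and the existence of a \emph{positional} winning strategy for that winner—by induction on the number of distinct priorities occurring in the game, i.e.\ on the cardinality of the finite set $\Omega[B]$. Since the Borel determinacy theorem quoted just above already partitions $B$ into winning regions, I may take the uniqueness of the winner as given and use the induction purely to upgrade an arbitrary winning strategy into a positional one; the inductive skeleton is the same either way. The one indispensable tool is the \emph{attractor}. For a player $\Pi$ and a target set $U \subseteq B$, I would define $\mathrm{Attr}_\Pi(U)$ as the least fixpoint of the monotone operator that adds to a set $S$ every $\Pi$-position having at least one $E$-successor in $S$, together with every opponent position all of whose successors lie in $S$; on an infinite board this is obtained by transfinite iteration $S_0 = U$, $\mathrm{Attr}_\Pi(U)=\bigcup_\alpha S_\alpha$. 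The first lemma to establish is that from every position of $\mathrm{Attr}_\Pi(U)$ player $\Pi$ has a positional strategy forcing the match into $U$ in finitely many moves—at each position simply move to a successor of strictly smaller attractor-rank—and that the restriction of $\mathbb{G}$ to $B \setminus \mathrm{Attr}_\Pi(U)$ is again a legitimate board game, with no position outside the attractor losing a move it needs.

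For the induction, the base case $|\Omega[B]| \le 1$ is immediate: every infinite match has the same winner and finite matches are decided by the stuck-player convention, so any ``pick an available successor'' map is a positional winning strategy. For the inductive step I would isolate the highest priority $p = \max \Omega[B]$ and let $\sigma$ be the player favoured by its parity ($\exists$ if $p$ is even, $\forall$ if odd). Setting $U = \Omega^{-1}(p)$ and $A = \mathrm{Attr}_\sigma(U)$, the subgame $\mathbb{G} \setminus A$ uses strictly fewer priorities, so the induction hypothesis supplies winning regions on $B \setminus A$ together with positional winning strategies there. I then glue: inside $A$ player $\sigma$ follows the attractor strategy toward $U$, and on $B \setminus A$ each player follows the strategy given by the induction hypothesis. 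The winning analysis for a $\sigma$-guided match splits on whether the match eventually remains inside $B \setminus A$—in which case it is won via the induction hypothesis—or returns to $A$, and hence to $U$, infinitely often—in which case $p$ is the highest priority seen infinitely often and $\sigma$ wins by parity.

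The main obstacle I anticipate is closing the induction when the opponent $\overline{\sigma}$ wins on a nonempty region $W'$ of the reduced game $\mathbb{G} \setminus A$, and, more seriously, doing so for the infinite boards arising from coalgebraic automata. On a finite board one finishes by peeling off $\mathrm{Attr}_{\overline{\sigma}}(W')$, recursing on its complement, and invoking a secondary induction on the now strictly smaller board; on an infinite board that secondary induction is unavailable, so the recursion must be made to descend on $|\Omega[B]|$ alone. The two standard ways to achieve this are (i) to keep attractors as transfinite least fixpoints and run the Emerson--Jutla/Mostowski induction purely on the number of priorities, carefully tracking how the reduced game's winning regions reassemble into the original ones, or (ii) to bypass the outer recursion altogether by exhibiting explicit positional strategies through \emph{signatures}, i.e.\ ordinal annotations certifying that every relevant highest odd priority is eventually dominated by a higher even one. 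Either route is where the genuine work lies; once the partition and the two positional strategies are in hand, the remaining checks—positionality of the glued map (its value depends only on the current position), $E$-closure of each region under the respective strategy, and uniqueness of the winner (no position can be winning for both players, since a pair of opposing winning strategies would generate a single match won by both)—are routine.
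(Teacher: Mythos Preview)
The paper does not prove this statement at all: it is recorded as a background \emph{Fact} and simply cites \cite{emersonjutla,mostowski,ziel:infi98}. So there is no ``paper's own proof'' to compare against; your sketch is essentially an outline of the standard argument found in those very references (the Zielonka attractor induction, with the signature/progress-measure approach mentioned as an alternative).

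Your outline is broadly correct, but one remark on the point you flag as the obstacle. You write that on an infinite board ``the recursion must be made to descend on $|\Omega[B]|$ alone''. That is not quite how the Zielonka proof handles it: the outer induction is indeed on the number of priorities, but the inductive step itself contains a \emph{transfinite} peeling process---you repeatedly remove $\mathrm{Attr}_{\overline{\sigma}}(W')$ and re-apply the induction hypothesis to the shrinking remainder, taking intersections at limit ordinals, until the opponent's subgame winning region is empty. This transfinite iteration terminates because the sequence of remaining boards is a decreasing chain of subsets of $B$. So option~(i) as you phrase it still hides a transfinite subinduction inside each priority step; it is not a pure descent on $|\Omega[B]|$. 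Option~(ii), the signature approach, does avoid this and is a genuine alternative. Either way your identification of where the real work lies is accurate.
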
 
Several proofs of the positional determinacy theorem are known, see \cite{emersonjutla,mostowski,ziel:infi98}. We will briefly consider games with non-parity winning condition later; however, all the games we consider will be determined (although not necessarily by positional strategies), as an easy consequence of the Borel determinacy theorem. 
\subsection{Logics for Kripke models and their automata}

With the required machinery in place, we are ready to explain the connection between infinite games, automata, monadic second-order logic and the modal $\mu$-calculus. 

Let us start by briefly reviewing monadic second-order logic and the modal $\mu$-calculus, viewed as languages for reasoning about Kripke models. For simplicity we restrict our attention to Kripke models with just one accessibility relation, i.e. models for modal logic in the basic similarity type with just one diamond and one box.
Let $\Var$ be  a fixed infinite supply of variables. 
\begin{defi}
A \textit{Kripke model} is a structure $\bbS = (S,R,V)$ where $S$ is a set,
$R \subseteq S \times S$ and $V : \Var \rightarrow \psf(S)$ is a 
$\Var$-valuation. A \textit{pointed} Kripke model is a structure $(\bbS,s)$ where $\bbS$ is a
Kripke model and $s$ is a point in $\bbS$. 
\end{defi}
Associated with a valuation $V$, we define the \emph{conjugate coloring} 
$V^{\dagger}: S \to \psf(\Var)$ by $V^{\dagger}(s) \isdef \{ p \in \Var \mid
s \in V(p)\}$.
Given a subset $Z \subseteq S$, the valuation $V[p \mapsto Z]$ is as $V$ except 
that it maps the variable $p$ to $Z$.

Turning to syntax, we define the formulas of monadic second-order logic $\MSO$ 
through the following grammar:
$$
\varphi \isbnf \sr(p) \divbnf p \subseteq q \divbnf R(p,q)   
  \divbnf \neg \varphi \divbnf\varphi \vee\varphi \divbnf \exists p. \varphi
$$
with $p,q \in \Var$. (Here, $\sr$ stands for ``source''.) We define $\top := \forall p. p \subseteq p $ and $\bot := \neg \top$. Formulas are evaluated over pointed Kripke models by the following induction:
\begin{itemize}
\item $(S,R,V,s) \Vdash  \sr(p) $ iff $V(p) = \{s\}$
\item $(S,R,V,s)\Vdash p \subseteq q $ iff $V(p) \subseteq V(q)$
\item $(S,R,V,s) \Vdash R(p,q)$ iff for all $v \in V(p)$ there is $w \in V(q)$ 
  with $v R w$ 
\item standard clauses for the boolean connectives
\item $(S,R,V,s)\Vdash \exists p. \varphi$ iff $(S,R,V[p \mapsto Z],s)\Vdash 
   \varphi$ for some $Z \subseteq S$.
\end{itemize}
We present the language of the \emph{modal $\mu$-calculus} $\muML$ in
negation normal form, by the following grammar:
$$ \varphi \isbnf p \divbnf \neg p \divbnf \phi \lor \phi \divbnf \phi \land \phi
   \divbnf \Box \varphi \divbnf \Diamond \varphi 
   \divbnf \eta p. \varphi 
$$
where $p \in \Var$, $\eta \in \{ \mu, \nu \}$,  and in the formula $\eta p. \varphi$
no free occurrence of the variable $p$ in $\varphi$ may be in the scope of a negation. We define $\top := \nu p. p$ and $\bot := \mu p. p$.

The satisfaction relation between pointed Kripke models and formulas in 
$\muML$ is defined by the usual induction, with, e.g.
\begin{itemize}
\item 
$(S,R,V,s)\Vdash \mu p. \varphi$ iff $s \in \bigcap \{Z \subseteq S 
\mid \varphi_{p}(Z) \subseteq Z\}$
where $\varphi_{p}(Z)$ denotes the truth set of the formula $\varphi$ in the 
model $(S,R,V[p \mapsto Z])$.
\end{itemize}

\noindent We assume familiarity with the notion of bisimilarity between two (pointed)
Kripke models, and say that a formula of $\MSO$ is \textit{bisimulation
invariant} if  it has the same truth value in any pair of bisimilar pointed
Kripke models. 

\begin{fact}[Janin-Walukiewicz Theorem \cite{jani:expr96}]
A formula $\varphi$ of $\MSO$ is equivalent to a formula of $\muML$ iff
$\varphi$ is invariant for bisimulations.
\end{fact}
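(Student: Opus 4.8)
The plan is to derive this classical result as the instance $\fun = \psf$ of the Coalgebraic Bisimulation Invariance Theorem (Theorem~\ref{mainone}), where $\psf$ is the covariant power set functor. The first step is to observe that the standard Kripke setting is literally a special case of the coalgebraic one: a Kripke model $(S,R,V)$ is precisely a $\psf$-coalgebra, with transition map $s \mapsto \{t \mid sRt\}$, equipped with a $\Var$-valuation; the standard notion of bisimulation coincides with $\psf$-behavioural equivalence; and, as already remarked in the introduction, the languages $\MSO$ and $\muML$ of this section are equivalent to the coalgebraic logics $\MSO_{\psf}$ and $\muML_{\psf}$, the single diamond $\Diamond$ being the predicate lifting determining $\psf$-coalgebraic modal logic. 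Under these identifications the assertion ``$\varphi \in \MSO$ is equivalent to a $\muML$-formula iff $\varphi$ is bisimulation invariant'' is exactly the statement $\MSO_{\psf}/{\sim} \equiv \muML_{\psf}$.

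The ``only if'' direction, that every $\muML$-formula is bisimulation invariant and $\MSO$-expressible, is the routine inclusion $\muML_{\psf} \subseteq \MSO_{\psf}/{\sim}$: bisimulation invariance of the $\mu$-calculus is a standard formula induction, and the translation of each $\muML$-connective into $\MSO$ is direct, with the fixpoint operators rendered by the second-order quantifier exactly as in the Knaster--Tarski clause for $\mu p.\varphi$ given above. Both inclusions are in any case packaged into Theorem~\ref{mainone}, so it suffices to verify its single hypothesis: that $\psf$ admits an adequate uniform construction for every finite set $\Gamma$ of second-order one-step formulas.

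The substance of the argument is therefore the construction of this adequate uniform construction for $\psf$, which is the content of Example~\ref{ex:psf}. For the power set functor a one-step model is a triple $(X,\alpha,V)$ with $\alpha \in \psf X$ simply a subset of $X$; the construction is the $\omega$-fold product (or $\omega$-expansion), which replaces the carrier $X$ by $X \times \omega$, the chosen object $\alpha$ by $\{(x,n) \mid x \in \alpha,\ n < \omega\}$, and lifts $V$ along the first projection. Globalising this one-step operation yields the model transformation $(\cdot)_{*}$ sending a pointed $\psf$-model $(\bbS,s)$ to an $\omega$-branching tree unravelling $(\bbS_{*},s_{*})$ in which every successor occurs in infinitely many bisimilar copies; by construction the first projection $\bbS_{*} \to \bbS$ is a coalgebra morphism, so $(\bbS_{*},s_{*})$ is a tree-shaped bisimilar pre-image of $(\bbS,s)$, as required. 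The accompanying one-step translation $(\cdot)^{*}: \mathtt{SO}_{\psf} \to \mathtt{ML}_{\psf}$ sends a second-order one-step formula to a modal one: intuitively, over an $\omega$-wide successor set player $\exists$ can realise every subset ``type-distribution'' that a monadic second-order witness could describe, so that second-order quantification collapses to a Boolean combination of the basic modalities $\Diamond$ and $\Box$.

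The main obstacle is verifying the \emph{adequacy} conditions for this construction, i.e.\ that $(\cdot)^{*}$ and $(\cdot)_{*}$ interact correctly so that $\bbA$ accepts $(\bbS_{*},s_{*})$ iff $\bbA^{*}$ accepts $(\bbS,s)$; concretely, the heart of the matter is the one-step correctness claim that a second-order one-step formula $\varphi$ holds in the $\omega$-expanded one-step model exactly when its translate $\varphi^{*}$ holds in the original one-step model. This is where the combinatorics of the power set functor genuinely enter, and it is precisely the step that in the original proof is handled via disjunctive syntactic normal forms. The point of routing the argument through Theorem~\ref{mainone} is to isolate it as a purely one-step, functor-specific computation about how monadic second-order quantification over an $\omega$-fold product reduces to the modal one-step language, thereby sidestepping global normal-form manipulations. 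Once this one-step equivalence and the remaining, more bookkeeping, adequacy conditions are checked, Theorem~\ref{mainone} applies directly and yields the Janin--Walukiewicz Theorem.
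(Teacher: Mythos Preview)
Your proposal is correct and matches the paper's own route: the Fact is cited rather than proved where it is stated, and the paper's derivation of it (Example~\ref{ex:psf}) is exactly what you outline, namely instantiating Theorem~\ref{mainone} with $\fun = \psf$ and the $\omega$-expansion as the adequate uniform construction. The only cosmetic discrepancy is that the paper sets $X_* = \alpha_* = \alpha \times \omega$ (restricting the carrier to the support $\alpha$) whereas you take $X_* = X \times \omega$; both choices give a cover with $\psf h_\alpha(\alpha_*) = \alpha$ and both make the adequacy argument go through, so this is immaterial.
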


Generally, given two languages $\mathtt{L}$ and $\mathtt{L}'$ we shall write $\mathtt{L} \equiv \mathtt{L}'$ to say that every formula of $\mathtt{L}$ is semantically equivalent to a formula of $\mathtt{L}'$, and vice versa. We will denote the bisimulation invariant fragment of $\mathtt{L}$ by $\mathtt{L}/{\sim}$. With this notation we can state the previous fact more succinctly as:
$$\muML \equiv \MSO/{\sim}$$ 
The proof of the Janin-Walukiewicz theorem proceeds by comparing suitable automaton models for $\MSO$ and the modal $\mu$-calculus. 
To get an understanding of how $\MSO$-automata on Kripke models work, the main conceptual step one needs to take from the perspective of word and stream automata is to take a \emph{logical} view on the transition structure of an automaton, i.e. the transition map of an automaton will be constructed from formulas in a logical system.
Let $A$ be a set (of variables). The language of \emph{first-order logic with equality} over $A$ is given by the grammar:
$$a(x) \mid x = y \mid \varphi \wedge \varphi \mid \neg \varphi \mid \exists x .\varphi$$
where $a \in A$ and $x,y$ belong to some fixed, countably infinite set of individual variables. The set of \emph{sentences} in this language is denoted $\FOE(A)$, and the set of sentences in which each $a \in A$ appears positively is denoted by $\FOE^+(A)$. The fragment of $\FOE(A)$ in which the equality symbol does not occur is denoted by $\FO(A)$, and we write $\FO^+(A)$ for $\FOE^+(A) \cap \FO(A)$.

 A model for this language, in the usual sense, is simply a set $X$ together with an assignment of a subset of $X$ to each $a \in A$, viewed as a predicate - in other words, a valuation $V : A \to \psf(X)$. 
\begin{defi}
Let $P$ be a set of propositional variables. A \emph{$P$-chromatic $\MSO$-automaton} $\mathbb{A}$ is a tuple $(A,\Delta,a_I,\Omega)$, where $A$ is a finite set, $a_I \in A$, $\Omega : A \to \omega$ is a priority map and 
$$\Delta : A \times \psf(P) \to \FOE^+(A)$$
is the transition map, assigning a sentence in $\FOE^+(A)$ to each pair $(a,Q)$ with $a \in A$ and $Q \subseteq P$.
\end{defi}
Note that, in this automaton model, the powerset of $P$ takes on the role of an alphabet of the automaton. In order to decide whether or not an automaton $\mathbb{A}$ accepts or rejects a given pointed Kripke model $\mathbb{S},s$, we define a parity game called the \emph{acceptance game} for $\mathbb{A}$ with respect to the model $\mathbb{S}$. The role of $\exists $ is to act as a ``defense attorney'', in favor of acceptance of the model, and the goal of $\forall$ is to act as a ``prosecutor'' trying to show that the model should be rejected. A ``round'' of this game will have the following shape: first, the automaton is in a state $a$ and is visiting some point $s$ in the model $\mathbb{S}$. At such a \emph{basic position} of the acceptance game, it will be $\exists$'s turn to move, and her moves are constrained by the transition map. Formally, she needs to come up with a valuation $U : A \to \psf(R[s])$ over the immediate successors of $s$ with respect to the accessibility relation, such that the one-step Kripke model $(R[s],U)$ satisfies the  $\FOE^+(A)$-sentence $\Delta(a,V^\dagger(s)\cap P)$. Here, we think of $V^\dagger(s) \cap P$ - the collection of propositional variables in $P$ true at the point $s$ - as the input from the alphabet currently being read by the automaton. In practice, we will assume that we are dealing with a model in which the valuation $V$ assigns an empty value to each variable not in $P$, so that we may just write $V^\dagger(s)$ rather than $V^\dagger(s) \cap P$.

For a more precise description of the game, we provide the positions, player assignments, admissible moves and priorities of the acceptance game in a table:
\begin{table}[h]
    \centering
\begin{tabular}{|l|c|l|l|}
\hline
Position  & Player  &  Admissible moves & Priority 
\\ \hline
     $(a,s)\in A\times S$  
   & $\exists$  
   & $\{U : A \to \psf(R[s]) \mid 
        (R[s],U) \Vdash \Delta(a, V^{\dagger}(s)) \}$ & $\Omega(a)$
\\
     $U:A\rightarrow{\mathcal{P}}S$ 
   & $\forall$ 
   & $\{(b,t)\mid t\in U(b)\}$      & $0$                                              \\
\hline
    \end{tabular}
\caption{\label{accgame} Acceptance game for $\MSO$-automata}
\end{table}

Given $s \in S$, we say that $\mathbb{A}$ accepts the pointed model $(\mathbb{S},s)$ if $(a,s)$ is a winning position for $\exists$ in the acceptance game. In practice, we will often identify a partial match of the form:
$$(a_0,s_0)U_0(a_1,s_1)U_1...U_{n-1}(a_n,s_n)$$
of the acceptance game belonging  with the sequence of basic positions   
$$(a_0,s_0)....(a_n,s_n)$$
appearing in the match. 

It may be helpful at this point to compare the acceptance game for these automata to the notion of acceptance we presented for stream automata in Section \ref{s:automata}. In particular, the reader can may check that an accepting run of a non-deterministic parity stream automaton can be viewed as a winning strategy for $\exists$ in a parity game, but one in which the role of the opposing player $\forall$ becomes trivialized: \emph{all} choices in this game will belong to $\exists$. 

$\MSO$-formulas are not equivalent to $\MSO$-automata in general. To obtain the desired connection between formulas of $\MSO$ and $\MSO$-automata, it is necessary to restrict attention to pointed Kripke models $(S,R,V,u)$ that are \emph{tree-like}, i.e. such that the underlying frame $(S,R)$ is a tree rooted at $u$. 
The following result is proved in \cite{walu:mona96}:
\begin{fact}
\label{f:msoautomata}
For every formula $\varphi$ of $\MSO$ with free second-order variables $P$, there exists a $P$-chromatic $\MSO$-automaton accepting all and only those tree-like pointed Kripke models that satisfy $\varphi$.
\end{fact}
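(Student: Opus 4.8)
The plan is to prove this by induction on the structure of the $\MSO$-formula $\varphi$, maintaining as invariant that for every $\varphi$ whose free second-order variables lie in $P$ there is a $P$-chromatic $\MSO$-automaton $\bbA_{\varphi}$ which, over tree-like pointed Kripke models (with the valuation assigning $\emptyset$ to variables outside $P$), accepts exactly the models satisfying $\varphi$. For the three atomic cases I would write down small automata by hand. For $p \subseteq q$ a single state suffices: the transition formula on a letter $Q$ is $\bot$ when $p \in Q$ and $q \notin Q$, and otherwise ``send every successor back to the same state'', with priority $0$, so that a run survives precisely when no violating node is met. For $\sr(p)$ one uses a ``root'' state and a ``non-root'' state, checking that the letter at the root contains $p$ and that every other letter does not. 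For $R(p,q)$ one adds to a default state an auxiliary ``$q$-obligation'' state: at a node whose letter contains $p$, $\exists$ must mark some successor with the obligation state, which in turn rejects unless its own letter contains $q$; again all priorities are $0$.

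For the Boolean connectives I would appeal to the standard closure properties of this automaton class, which are governed entirely by the one-step language $\FOE^{+}$. Closure under union is immediate: take a disjoint copy of $\bbA_{\varphi_{1}}$ and $\bbA_{\varphi_{2}}$, add a fresh initial state whose transition formula on a letter $Q$ is the disjunction of the two old initial transition formulas on $Q$, and keep both priority maps. For $\neg\varphi$ I would use closure under complementation: replace each one-step formula in $\bbA_{\varphi}$ by its Boolean dual in $\FOE^{+}$ (so $\wedge$ becomes $\vee$, $\exists$ becomes $\forall$, and the two players swap roles within a round) and raise every priority by one; that this yields the complement follows by a routine analysis of the acceptance game using positional determinacy and the fact that $\FOE^{+}$ is closed under Boolean duals. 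The one genuinely substantial case is $\exists p.\varphi$: from the $(P \cup \{p\})$-chromatic automaton $\bbA_{\varphi}$ I would build a $P$-chromatic automaton whose transition formula on a letter $Q$ is $\Delta(a,Q) \vee \Delta(a, Q \cup \{p\})$, so that the new automaton ``guesses'', node by node, whether that node lies in the set interpreting $p$.

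The hard part is making this last construction correct, and for this one first has to prove a Simulation Theorem putting $\bbA_{\varphi}$ into a \emph{non-deterministic} normal form, in which an accepting strategy for $\exists$ can be presented as a labelling of the model's tree by automaton states rather than as a sequence of valuations. In such an automaton the guess ``is this node in $V(p)$?'' is well-defined per node, so from an accepting play of the projected automaton one can read off a witness set $Z$ with $\bbA_{\varphi}$ accepting $(\bbS, u, V[p \mapsto Z])$, which by the inductive hypothesis is exactly what is needed; without the normal form, the disjuncts chosen at a node reached in several automaton states could be mutually inconsistent and the argument would collapse. The Simulation Theorem itself is the tree-automata counterpart of the Safra determinisation used for stream automata, together with a Muller--Schupp-style argument to handle branching, and is the main obstacle of the whole proof. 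I would also stress that tree-likeness is used throughout: it is what makes the automaton's local ``window'' at a node coincide with the actual set of successors, so that runs really are tree labellings and $\MSO$-quantification over arbitrary subsets matches the moves available to $\exists$ in the acceptance game — on arbitrary Kripke models the equivalence fails.
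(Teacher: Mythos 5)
Your proposal is correct and follows essentially the same route the paper indicates for this fact (which it cites from Walukiewicz and later mirrors coalgebraically in Theorem~\ref{p:automatachar}): hand-built automata for the atomic formulas, closure under union and under complementation via one-step Boolean duals of the positive formulas, and existential projection handled by first invoking the Simulation Theorem to obtain a non-deterministic automaton whose separating (one-state-per-node) strategies let you read off the witness set. Your identification of where disjunctivity and tree-likeness are genuinely needed matches the paper's own discussion, so there is nothing substantive to add.
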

A converse of this result also holds, so that any $\MSO$-automaton can be translated to a sentence of $\MSO$ that is equivalent to it over tree-like models. 
Furthermore, a key insight behind the Janin-Walukiewicz theorem is that  $\MSO$ automata are very closely related to an automaton model that is suitable to model $\mu$-calculus formulas, on arbitrary Kripke models.
\begin{fact}
A class of pointed Kripke models is definable by a formula in $\muML$ with free variables $P$ if, and only if, it is recognized by some $P$-chromatic $\MSO$-automaton such that, for all $Q \subseteq P$ and $a \in A$, we have $\Delta(a,Q) \in \FO^+(A)$.
\end{fact}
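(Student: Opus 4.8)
The plan is to prove both implications by descending to the \emph{one-step level}. Introduce, alongside the $\MSO$-automata of the excerpt, the class $\Aut(\mathtt{ML}^{+})$ of \emph{modal automata}: tuples $(A,\Delta,a_{I},\Omega)$ of exactly the same shape, except that $\Delta\colon A\times\psf(P)\to\mathtt{ML}^{+}(A)$, where $\mathtt{ML}^{+}(A)$ is the set of modal formulas of depth one over the variable set $A$, generated from $\top$, $\bot$, $\wedge$, $\vee$ and the modalities $\Box a$, $\Diamond a$ ($a\in A$), with each $a\in A$ occurring only positively; their acceptance game is that of Table~\ref{accgame}, with one-step satisfaction read off modally (so $(R[s],U)\Vdash\Diamond a$ iff $U(a)\neq\emptyset$, $(R[s],U)\Vdash\Box a$ iff $U(a)=R[s]$, and so on). The proof then has two independent halves: (i) $\Aut(\mathtt{ML}^{+})\equiv\muML$ over \emph{arbitrary} pointed Kripke models, and (ii) $\MSO$-automata whose transition map lands in $\FOE^{+}(A)\cap\FO(A)=\FO^{+}(A)$ are inter-translatable with automata in $\Aut(\mathtt{ML}^{+})$ by a translation acting purely on one-step formulas.

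For (i), the passage from $\muML$ to $\Aut(\mathtt{ML}^{+})$ is the familiar construction of a modal automaton from a formula $\varphi$: take $A$ to be the Fischer--Ladner closure of $\varphi$ (a suitable variant of its set of subformulas), put $\Delta(a,Q)$ equal to the outermost Boolean-modal shape of the subformula $a$ (so $\psi_{1}\vee\psi_{2}$ gives $b_{1}\vee b_{2}$, $\Diamond\psi$ gives $\Diamond b$, a bound variable gives the state of the fixpoint formula binding it, a literal $p$ gives $\top$ or $\bot$ according to whether $p\in Q$, and dually for $\neg p$), and let $\Omega$ encode the alternation depth of the least/greatest fixpoints binding $a$; since $\varphi$ is in negation normal form, every occurrence of a state in $\Delta$ is positive, so $\Delta$ indeed lands in $\mathtt{ML}^{+}(A)$. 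Correctness --- the automaton accepts $(\bbS,s)$ iff $(\bbS,s)\Vdash\varphi$ --- is the standard transfer of positional winning strategies between the acceptance game and the unfolding (evaluation) game of $\varphi$. Conversely, from $\bbA\in\Aut(\mathtt{ML}^{+})$ one reads off a system of equations $x_{a}=\delta_{a}(\vec{x})$, one per state, whose bodies $\delta_{a}$ come from $\Delta(a,\cdot)$ with the literals over $P$ reinserted; solving this system by nesting extremal fixpoints in the order prescribed by $\Omega$ (highest priority innermost, a $\nu$ for even priorities and a $\mu$ for odd ones) yields $\varphi_{\bbA}\in\muML$ with free variables among $P$, and again the evaluation game of $\varphi_{\bbA}$ and the acceptance game of $\bbA$ have the same winner on every pointed Kripke model. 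Note that both constructions act on all Kripke models, so no tree-likeness is needed --- this is exactly where the absence of equality pays off, since modal one-step satisfaction is insensitive to the multiplicities that the $\MSO$-automata of Fact~\ref{f:msoautomata} can only handle on trees.

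Part (ii) rests on a one-step \emph{normal form lemma}: over one-step Kripke models, i.e.\ pairs $(X,V)$ with $V\colon A\to\psf(X)$, every sentence of $\FO^{+}(A)$ is equivalent to a finite disjunction of formulas $\nabla B:=\bigwedge_{\tau\in B}\Diamond\tau\wedge\Box\bigvee_{\tau\in B}\tau$, where $B$ ranges over finite sets of \emph{one-step types} (conjunctions of positive $A$-literals) and $\nabla\emptyset$ is read as $\Box\bot$. One direction of the resulting equivalence $\FO^{+}(A)\equiv\mathtt{ML}^{+}(A)$ is immediate: send $\Diamond a\mapsto\exists x.\,a(x)$, $\Box a\mapsto\forall x.\,a(x)$, commute with $\wedge,\vee$, and keep $\top,\bot$; no equality is introduced and positivity is preserved. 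For the other direction one eliminates quantifiers: a sentence of monadic first-order logic \emph{without equality} over $(X,V)$ depends only on which $A$-colours are realised and whether they exhaust $X$, so counting collapses to the zero/one distinction and the sentence becomes a Boolean combination of formulas $\exists x.\,\alpha(x)$ with $\alpha$ a conjunction of $A$-literals; restricting to the monotone fragment $\FO^{+}(A)$ and being careful about the empty-domain case (where $\Box$-formulas hold vacuously) brings this Boolean combination into the disjunction-of-$\nabla$'s shape, which is precisely the form of an $\mathtt{ML}^{+}(A)$-formula after distribution. A one-step translation between $\FO^{+}(A)$ and $\mathtt{ML}^{+}(A)$ (in either direction) then lifts to automata by composing it with $\Delta$ and leaving $A,a_{I},\Omega$ untouched: since it preserves one-step truth both ways, the acceptance games of an automaton and of its translate differ only by replacing the admissibility constraint on $\exists$'s move by an equivalent one-step sentence, hence have the same winner on every pointed Kripke model. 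Composing (i) and (ii) gives the theorem.

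The main obstacle is the normal form lemma in part (ii): carrying out the quantifier elimination for equality-free monadic logic while respecting positivity, and in particular getting the interaction between universal quantifiers and the empty successor set right. The remaining work --- matching the priority map of an automaton to the $\mu/\nu$-alternation of its fixpoint formula, keeping the free variables $P$ outside the fixpoint recursion, and the strategy transfers between acceptance games and evaluation games in part (i) --- is standard bookkeeping, and both halves can also be extracted from \cite{walu:mona96,vene:lect12}; indeed, they are the Kripke-model shadow of the abstract one-step machinery developed later in this paper.
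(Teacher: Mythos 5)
The paper does not actually prove this Fact---it quotes it as background from the Janin--Walukiewicz/Walukiewicz line of work---so the relevant comparison is with the standard argument, which is essentially the route you take: (i) $\muML$ corresponds to parity automata over a positive modal one-step language (a special case of Fact~\ref{coalgebraicmu}), and (ii) a truth-preserving one-step translation between $\FO^{+}(A)$ and that modal one-step language, lifted to automata by composing with $\Delta$. Your architecture is sound, and your normal form lemma is in fact true: an equality-free monadic sentence is determined by the set of realized types, and positivity (monotonicity) lets one pass from complete types to positive types, giving a finite disjunction of $\nabla B$'s; this is indeed where both the absence of equality and positivity are used.

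However, one step fails as written. You define $\mathtt{ML}^{+}(A)$ with modalities applied to \emph{atoms only} ($\Diamond a$, $\Box a$ for $a\in A$), and then assert that a disjunction of formulas $\nabla B=\bigwedge_{\tau\in B}\Diamond\tau\wedge\Box\bigvee_{\tau\in B}\tau$ is ``precisely the form of an $\mathtt{ML}^{+}(A)$-formula after distribution''. But $\Diamond$ does not distribute over $\wedge$, nor $\Box$ over $\vee$; concretely, $\exists x\,(a(x)\wedge b(x))\in\FO^{+}(A)$ is not equivalent to any Boolean combination of $\Diamond a,\Diamond b,\Box a,\Box b,\top,\bot$, since the one-step models $X=\{x,y\}$ with $V(a)=\{x\},V(b)=\{y\}$ and $X=\{x,y\}$ with $V(a)=V(b)=\{x\}$ agree on all these generators but disagree on that sentence. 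So with your literal grammar the one-step equivalence in part (ii) is false, and your own $\nabla B$'s do not even belong to the language. The repair is simply to take $\mathtt{ML}^{+}(A)$ to be the standard depth-one language in which modalities apply to lattice formulas over $A$, i.e.\ $\mathtt{1ML}_{\{\Diamond\}}(A)$ of Definition~\ref{onestepdef}: part (i) is unaffected (the Fischer--Ladner construction only produces modalities over single states, hence lands in the sublanguage, while the converse equation-system construction handles the full language), and part (ii) then is exactly your normal form lemma. A further small point you should make explicit rather than bury in ``bookkeeping'': in the acceptance game for $\MSO$-automata, $\exists$'s valuations range over $R[s]$ and the one-step sentence is evaluated on the domain $R[s]$, whereas for modal automata valuations range over all of $S$; by positivity this mismatch is harmless, but it does need a line of argument.
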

The proof of Fact \ref{f:msoautomata} involves establishing a number of closure properties of $\MSO$-automata, corresponding to the logical connectives. For example, negation corresponds to \emph{complementation} on the automata side, where the complement of an automaton $\bbA$ accepts precisely those inputs that are rejected by $\bbA$. Similarly, the \emph{union} of $\mathbb{A}$ and $\mathbb{B}$ accepts precisely those inputs accepted by either $\bbA$ or $\bbB$, and corresponds to disjunction on the logical side. The most difficult property to establish is projection, corresponding to second-order quantification. The solution lies in a simulation theorem for $\MSO$-automata, simplifying the formulas appearing in the range of the transition map.
\begin{defi}
A formula in $\FOE^+(A)$ is said to be a \emph{disjunctive formula} if it is a disjunction of formulas of the form:
$$\exists x_1...\exists x_n (\mathsf{diff}(x_1,...,x_n) \wedge a_1(x_1) \wedge...\wedge a_n(x_n) \wedge \forall y(\mathsf{diff}(x_1,...,x_n,y) \rightarrow b_1(y) \vee ... \vee b_k(y)))$$
where $a_1,...,a_n,b_1,...,b_k \in A$ and $\mathsf{diff}(z_1,...,z_m)$ is an abbreviation for the formula: $$\bigwedge_{1 \leq i < j \leq m} \neg(z_i = z_j)$$
A $P$-chromatic $\MSO$-automaton $\mathbb{A}$ is said to be \emph{non-deterministic} if, for all $a \in A$ and all $Q \subseteq P$, the formula $\Delta(a,Q)$ is disjunctive.
\end{defi}
\begin{fact}[Simulation of $\MSO$-automata]
For every $\MSO$-automaton $\mathbb{A}$, there exists a non-deterministic $\MSO$-automaton $\mathbb{A}'$ that accepts the same tree-like pointed Kripke models as $\mathbb{A}$.
\end{fact}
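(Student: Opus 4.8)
The plan is to argue entirely at the level of the acceptance game and to build $\mathbb{A}'$ by a macro-state construction, a tree-automaton analogue of the powerset construction for word automata, the genuinely new difficulty being the bookkeeping required to transfer the parity condition correctly.

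\textbf{Step 1: macro-states.} Fix $\mathbb{A} = (A,\Delta,a_I,\Omega)$. In any $\exists$-optimal play of the acceptance game on a tree-like pointed model $(\mathbb{S},u)$, at each node $s$ of the tree there is a finite set $B_s \subseteq A$ of automaton states that are \emph{active} at $s$, i.e. such that $(b,s)$ occurs in some relevant $\exists$-guided partial match. I would take the carrier of $\mathbb{A}'$ to be (a refinement of) $\psf(A)$, with initial state $\{a_I\}$, so that a run of $\mathbb{A}'$ bundles a whole family of $\mathbb{A}$-threads into a single thread of macro-states. At a macro-state $B$ reading input $Q \subseteq P$, $\exists$ must simultaneously satisfy $\Delta(b,Q)$ for all $b \in B$, and a successor node $t$ of $s$ should receive the macro-state collecting all demands placed on it by the chosen valuations. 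The natural candidate transition formula at $(B,Q)$ is therefore a disjunctive rewriting of the conjunction $\bigwedge_{b\in B}\Delta(b,Q)$, now read over one-step models whose colours are elements of $\psf(A)$ (a point $x$ being recoloured by the set of $A$-states it realises).

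\textbf{Step 2: one-step normal form.} Here I would establish the purely syntactic lemma that, over one-step models $(X,V)$ with $V : A \to \psf(X)$, every sentence of $\FOE^+(A)$ — in particular each $\bigwedge_{b\in B}\Delta(b,Q)$ — is equivalent, after the recolouring above, to a disjunctive formula over the colour set $\psf(A)$. The point is that the $\FOE^+$-type of $(X,V)$ depends only on the multiset of realised colours counted up to a finite threshold determined by the quantifier rank; enumerating the finitely many possibilities and retaining those satisfying the sentence produces a disjunction of statements of the form ``there are distinct witnesses realising colours $\tau_1,\dots,\tau_n$ and every remaining point realises one of $\tau_1,\dots,\tau_n,\sigma_1,\dots,\sigma_k$'', which is precisely the disjunctive shape once each colour is treated as a single state-symbol. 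This step is routine Ehrenfeucht--Fra\"iss\'e / Scott-sentence bookkeeping.

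\textbf{Step 3: transferring the parity condition — the main obstacle.} Collapsing parallel $\mathbb{A}$-threads into one macro-thread loses exactly the priority information needed to decide acceptance. Along an infinite branch the macro-run visits macro-states $B_0,B_1,B_2,\dots$, and $\forall$ in the new game can trace a genuine $\mathbb{A}$-thread $b_0,b_1,b_2,\dots$ through this sequence; $\exists$ should win along the branch iff \emph{every} such traceable thread satisfies $\Omega$. To capture this with a single priority map on the states of $\mathbb{A}'$ I would mimic the determinization of stream automata: form an auxiliary non-deterministic parity stream automaton over the alphabet of macro-state transitions that accepts a stream iff it carries a thread violating $\Omega$, apply the Simulation-of-stream-automata fact to replace it by an equivalent deterministic parity stream automaton, complement it by shifting its priority map, and let $\mathbb{A}'$ be the product of the macro-state automaton of Steps 1--2 with this deterministic stream automaton. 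States of $\mathbb{A}'$ are thus pairs (macro-state, stream-automaton state), priorities are inherited from the stream automaton, and since we multiplied by a \emph{deterministic} automaton the transition formulas remain the disjunctive formulas of Step 2 with their atoms merely refined by the unique stream move. Tree-likeness is used precisely here: on a tree the branches are well defined and never merge, so ``tracing a thread along a branch'' is meaningful and the product behaves correctly.

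\textbf{Step 4: equivalence.} Finally I would check $L(\mathbb{A}') = L(\mathbb{A})$ over tree-like pointed models by transferring winning strategies. From an $\exists$ winning strategy for $\mathbb{A}$ one reads off the active sets $B_s$ and the combined valuations, yielding a strategy for $\mathbb{A}'$ whose every branch, by construction of the stream-automaton factor, meets the new parity condition; conversely, a winning strategy for $\mathbb{A}'$ produces macro-states containing witnesses for all $\mathbb{A}$-threads $\forall$ could follow, with the deterministic factor certifying each is $\Omega$-good, and unravelling gives a winning strategy for $\mathbb{A}$. The genuine work, and the step I expect to be the main obstacle, is Step 3: packaging thread-tracing into a deterministic stream automaton and verifying that the single resulting parity condition on $\mathbb{A}'$ exactly expresses ``all traced threads are accepting''. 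Steps 1, 2 and 4 are then essentially routine.
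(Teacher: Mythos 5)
Your overall architecture---macro-states, a one-step rewriting of $\bigwedge_{b\in B}\Delta(b,Q)$ over macro-colours, and a product with a deterministic parity stream automaton policing bad threads along branches---is exactly the shape of the argument the paper uses (for its coalgebraic generalization, Theorem~\ref{simulationtheorem}; the present Fact itself is only quoted from Walukiewicz). But there is a genuine gap precisely where you predicted the main obstacle. With macro-states taken to be subsets of $A$, and the stream automaton reading only the succession of macro-states along a branch, a run of $\mathbb{A}'$ simply does not contain the information needed to determine which threads are genuine: the set of states active at a child does not record which state at the parent spawned which demand, so ``$\forall$ traces a genuine $\mathbb{A}$-thread through $B_0B_1B_2\dots$'' is not a condition on the data available to the stream-automaton factor. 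If you instead let that factor check \emph{all} sequences $b_i\in B_i$, the condition is strictly stronger and the direction ``$\mathbb{A}$ accepts implies $\mathbb{A}'$ accepts'' fails: take states $a,b$ with $\Omega(a)=1$, $\Omega(b)=2$, $\Delta(a,Q)=\exists x\, b(x)$, $\Delta(b,Q)=\exists x\, a(x)$, start in a state whose transition demands a successor satisfying $a(x)\wedge b(x)$, and run on an infinite non-branching path. Every genuine trace alternates $a,b,a,b,\dots$ and is accepting, so $\mathbb{A}$ accepts; but the forced macro-run is constantly $\{a,b\}$, through which the spurious thread $a,a,a,\dots$ is rejecting, and nothing in the macro-run distinguishes it from the genuine ones.

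The missing idea---the content your parenthetical ``(a refinement of) $\psf(A)$'' has to carry---is to take macro-states to be binary relations $R\subseteq A\times A$: the range of $R$ is the set of currently active states, and $R$ records which parent-state spawned which demand. This also forces the rewritten one-step formula to tie colours to spawning states: a child may serve as a witness for $a$ on behalf of $b$ only if its colour $R'$ contains the pair $(b,a)$ (this is the role of the paper's formulas $\mathsf{Sim}(\Delta(b,c),b)$, whose auxiliary variables $z_a$ are constrained relative to $b$). With relations as letters, the stream of macro-states along a branch does determine the traces, the no-bad-trace language over the alphabet $\psf(A\times A)$ is $\omega$-regular, and your determinize-and-product Step 3 and the strategy transfer of Step 4 go through as you describe. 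A secondary difference of route: your Step 2 derives a syntactic disjunctive normal form for $\FOE^+$ by Ehrenfeucht--Fra\"{\i}ss\'e bookkeeping (the classical Janin--Walukiewicz approach), whereas the paper deliberately avoids normal forms by adding a conjunct forcing the macro-colours to take pairwise disjoint values and working with a purely semantic notion of disjunctiveness; either works, but only after the macro-states have been upgraded to relations.
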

One of our contributions in this paper is a purely \emph{semantic} account of what makes disjunctive formulas useful, that is independent of their rather specific syntax and thus suitable for our more abstract setting. 

\section{Coalgebras and generic modal fixpoint logics}
\subsection{Coalgebras and models}
\label{coalgebras-and-models}
In the general case our basic semantic structures will consist of coalgebras together with valuations.
We only consider coalgebras over the base category $\mathbf{Set}$ with sets as
objects and functions as arrows.
The co- and contravariant power set functors will be denoted by $\psf : \mathbf{Set} \to \mathbf{Set}$ and 
$\cvp : \mathbf{Set} \to \mathbf{Set}^{op}$, respectively. We recall the actions of these functors on morphisms: given $f : X \to Y$, we have $\psf f (Z) = f[Z]$ for $Z \subseteq X$, and $\cvp f (Z) = f^{-1}(Z)$ for $Z \subseteq Y$.
Covariant endofunctors on $\mathbf{Set}$ will be called \textit{set 
functors}.
\begin{conv}
We will make an additional assumption on all set functors here: we require that they preserve all monics in $\mathbf{Set}$, i.e. that $\fun f$ is an injective map whenever $f$ is. This is a very mild constraint, since it very nearly holds for all set functors already, the only possible exception being maps with empty domain \cite{trnk:some69}. So throughout this paper, ``set functor'' will be taken to mean: ``set functor that preserves all monics''. All the functors we consider here will satisfy the constraint.
\end{conv}

\begin{defi}
Let $\fun$ be a set functor. 
A \textit{$\fun$-coalgebra} is a pair $(S,\si)$ consisting of a set $S$, 
together with a map $\sigma : S \to \fun S$.
A $\fun$-\textit{model} is a structure $\mathbb{S} = (S,\sigma,V)$ where 
$(S,\sigma)$ is a $\fun$-coalgebra and $V : \Var \rightarrow \psf S$.
A \textit{pointed} $\fun$-model is a structure $(\mathbb{S},s)$ where 
$\mathbb{S}$ is a $\fun$-model and $s \in S$.
\end{defi}
The usual notion of a $p$-morphism between Kripke models can be generalized as
follows:
\begin{defi}
Let $\mathbb{S}_1 = (S_1,\sigma_1,V_1)$ and $\mathbb{S}_2 = (S_2,\sigma_2,V_2)$
be two $\fun$-models and let $f : S_1 \rightarrow S_2$ be any map. 
Then $f$ is said to be a $\fun$-\textit{model morphism} if:
\begin{enumerate}
\item 
for each variable $p$ and each $u \in S_1$, we have $u \in V_1(p)$ iff 
$f(u) \in V_2(p)$;
\item 
the map $f$ is a \textit{coalgebra morphism}, i.e. we have
$$\sigma_2 \circ f = \fun f \circ \sigma_1.$$
\end{enumerate}
\end{defi}
\begin{defi}
Two pointed $\fun$-models $(\bbS,s)$ and $(\bbS',s')$ are said to be  \textit{behaviourally equivalent}, notation: $(\bbS,s) \sim (\bbS',s')$, if $s$ and $s'$ can be 
identified by morphisms $f: 
\bbS \to \bbT$ and $f': \bbS' \to \bbT$ such that $f(s) = f'(s')$ (for some $\fun$-model $\mathbb{T}$). 
\end{defi}
It is well known that coalgebras and coalgebra morphisms for a set functor form a co-complete category, since its forgetful functor creates colimits \cite{barr:term93}. In particular this means that coproducts of arbitrary families of $\fun$-models exist, and they correspond to disjoint unions of models in the case of Kripke semantics. Concretely, we define disjoint unions of $\fun$-models as follows:
\begin{defi}
Let $\{\mathbb{S}_i\}_{i \in I}$ be a family of $\fun$-models, where $\mathbb{S}_i = (S_i,\sigma_i,V_i)$. Then we define the \emph{disjoint union} $\coprod_{i \in I} \mathbb{S}_i = (S',\sigma',V')$ by first setting $S' = \coprod_{i\in I} S_i$ to be the disjoint union of the  sets $S_i$. Let $f_i$ denote the insertion map of $S_i $ into $S'$. Then we define $\sigma'$ to be the unique map with the property that  $\sigma' \circ f_i = \fun f_i \circ \sigma_i$ for all $i \in I$. For $p \in \Var $ we define $V'(p) = \bigcup_{i \in I}f_i[V_i(p)]$.
\end{defi}
\begin{fact}
For each $j \in I$, the insertion map $f_j : \mathbb{S}_j \to \coprod_{i \in I}\mathbb{S}_i$ is a $\fun$-model morphism.
\end{fact}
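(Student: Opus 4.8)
The plan is to verify directly the two defining conditions of a $\fun$-model morphism for the insertion map $f_j \colon \mathbb{S}_j \to \coprod_{i \in I}\mathbb{S}_i$, namely the coalgebra-morphism condition and the valuation-compatibility condition.

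I would dispose of the coalgebra-morphism condition first, as it requires no work at all. We must check that $\sigma' \circ f_j = \fun f_j \circ \sigma_j$. But this is precisely the defining property of $\sigma'$: recall that in the definition of the disjoint union, $\sigma'$ was introduced as the \emph{unique} map satisfying $\sigma' \circ f_i = \fun f_i \circ \sigma_i$ for all $i \in I$. Hence condition (2) holds immediately by construction.

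For the valuation condition, I fix a variable $p$ and a point $u \in S_j$ and show $u \in V_j(p)$ iff $f_j(u) \in V'(p)$, recalling that $V'(p) = \bigcup_{i \in I} f_i[V_i(p)]$. The forward implication is trivial: if $u \in V_j(p)$, then $f_j(u) \in f_j[V_j(p)] \subseteq V'(p)$. For the converse I would invoke the standard properties of the set-theoretic coproduct, namely that the insertion maps are injective and have pairwise disjoint images covering $S'$. So if $f_j(u) \in V'(p)$, then $f_j(u) = f_i(v)$ for some $i \in I$ and some $v \in V_i(p)$; since $f_j(u)$ lies in the image of $f_j$ and the images of distinct insertions are disjoint, we must have $i = j$, and then injectivity of $f_j$ forces $u = v \in V_j(p)$.

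There is no genuine obstacle here; the statement is essentially a bookkeeping check. The only point meriting a moment's care is the backward direction of the valuation condition, where one must use injectivity and disjointness of the coproduct insertions to argue that membership of $f_j(u)$ in the \emph{union} $\bigcup_{i} f_i[V_i(p)]$ already forces membership in the $j$-th summand specifically — which is exactly what lets us recover $u \in V_j(p)$ rather than merely membership in some other $V_i(p)$.
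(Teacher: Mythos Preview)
Your proof is correct. The paper states this as a \emph{Fact} without supplying any proof, treating it as routine; your direct verification of the two defining conditions---with the coalgebra-morphism clause holding by construction of $\sigma'$ and the valuation clause handled via injectivity and disjointness of the coproduct insertions---is exactly the natural argument one would give.
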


Kripke models are $\fun$-models for $\fun$ equal to the (covariant) power set functor $\psf$, and two pointed Kripke models are behaviourally equivalent iff they are bisimilar in the usual sense. Comprehensive lists of examples of coalgebraic modal logics for various type functors and their applications can be found in several research papers by various authors, see for example \cite{cirs:moda11}. We shall review some of these examples later, when we consider applications of our main characterization results.
One functor that will be of particular interest in this paper is the \textit{monotone 
neighborhood} functor $\mon$, usually defined as the subfunctor of $\cvp \circ
\cvp$ given by setting $\mon X \subseteq \cvp \cvp X$ to be:
$$
 \{N \in \cvp \cvp X \mid \forall Z,Z': Z \in N 
\mathrel{\&} 
   Z \subseteq Z' \Rightarrow Z' \in N\}.$$
Following the usual terminology we shall often refer to $\mon$-models as \emph{neighborhood models}. There is a standard definition of bisimilarity for neighborhood models:
\begin{defi}
A \emph{neighborhood bisimulation} between neighborhood models $\mathbb{S}_1$ and 
$\mathbb{S}_2$ is a relation $R \subseteq S_{1} \times S_{2}$ such that, if 
$s_1 R s_2$ 
then:
\begin{itemize}
\item $V_1^\dagger(s_1) = V_2^\dagger(s_2)$;
\item for all $Z_1$ in $\sigma_1(s_1)$ there is $Z_2 $ in $\sigma_2(s_2)$ such 
   that for all $t_2 \in Z_2$ there is $t_1 \in Z_1$ with $t_1 R t_2$;
\item for all $Z_2$ in $\sigma_2(s_2)$ there is $Z_1 $ in $\sigma_1(s_1)$ such
   that for all $t_1 \in Z_1$ there is $t_2 \in Z_2$ with $t_1 R t_2$.
\end{itemize}
\end{defi}
Neighborhood bisimilarity coincides with behavioural equivalence for $\mon$, and in the next section we shall see how neighborhood bisimulations arise from a relation lifting for $\mon$. 

In many cases, behavioural equivalence of coalgebras can be described more concretely by bisimulations defined via certain relation liftings. For some work specificially on this topic, see the following papers by Marti and Venema: \cite{Martithesis,MartiVenema12,MartiVenema15}.  As an example, bisimilarity of Kripke models can be described using what is known as the ``Egli-Milner lifting'' for the powerset functor. This lifting assigns to a relation $R \subseteq X \times Y$ the ``lifted'' relation $\overline{\psf}(R) \subseteq \psf X \times \psf Y$ defined by:
$$(\alpha, \beta) \in \overline{\psf}(R) \Leftrightarrow (\forall x \in \alpha \exists y \in \beta: x R y) \;\&\; (\forall y \in \beta \exists x \in \alpha: x Ry)$$
When the functor satisfies a certain condition, \emph{preservation of weak pullbacks}, the appropriate relation lifting can be defined directly in terms of the functor. In the special case of endofunctors on $\mathbf{Set}$, weak pullback preservation can be described rather concretely as the following condition \cite{gumm:tcoa05}: consider any pair of maps $f_1 : X_1 \to Y$ and $f_2 : X_2 \to Y$, and any $\alpha_1 \in \fun X_1$, $\alpha_2 \in \fun X_2$ with $\fun f_1 (\alpha_1) = \fun f_2(\alpha_2)$. Let $$R = \{(u,v) \in X_1 \times X_2 \mid f_1(u) = f_2(v)\}$$ That is, $R$ together with its projection maps $\pi_1 : R \to X_1 $ and $\pi_2 : R \to X_2$ is the pullback of the co-span $f_1,f_2$ in $\mathbf{Set}$. Then $\fun$ preserves weak pullbacks if, in this situation, we can always find some $\gamma \in \fun R$ with $\fun \pi_1 (\gamma) = \alpha_1$ and $\fun \pi_2 (\gamma) = \alpha_2$.

Provided that $\fun$ preserves weak pullbacks, we can define a relation lifting $\overline{\fun}$ known as the \emph{Barr extension} of $\fun$ \cite{CKW}. This is a functor on the category $\mathbf{Rel}$ of sets with binary relations as arrows, and formally its action on a relation $R \subseteq X_1 \times X_2$ is defined by setting:  
$$\overline{\fun} R := \{(\alpha_1,\alpha_2) \in \fun X_1 \times \fun X_2 \mid \exists \gamma \in \fun R: \; \fun \pi_1(\gamma) = \alpha \; \& \; \fun \pi_2 (\gamma) = \alpha_2\}$$
The Egli-Milner lifting is the Barr extension of the covariant powerset functor, and bisimulations that are sound and complete for behavioural equivalence can be defined from the Barr extension of a functor in precisely the same manner as standard bisimulations are defined from the Egli-Milner lifting. This will not play a big role in the present paper, but the Barr extension will make a minor appearance when we discuss some basic facts about predicate liftings. We remark that the monotone neighborhood functor is known \emph{not} to preserve weak pullbacks, so neighborhood bisimulations can not be recovered via the Barr extension. This was in fact part of the motivation behind the work on relation liftings by Marti and Venema. 

\subsection{The coalgebraic $\mu$-calculus}

The modal $\mu$-calculus is just one in a family of logical systems that may 
collectively be referred to as the \textit{coalgebraic 
$\mu$-calculus}~\cite{cirs:expt09}.  Generally, the key notion that connects coalgebra with logic is that of a \textit{predicate lifting}. Here, we recall that $\cvp$ denotes the contravariant powerset functor, as defined in Section \ref{coalgebras-and-models}. 
\begin{defi}
\label{d:pl}
Given a set functor $\fun$, an \textit{$n$-place predicate lifting} for $\fun$ 
is a natural transformation
$$\lambda : \cvp^n \rightarrow \cvp \circ \fun,
$$
where $\cvp^n$ denotes the $n$-fold product of $\cvp$ with itself.
A predicate lifting $\lambda$ is said to be \textit{monotone} if
$$
\lambda_X(Y_1,...,Y_n) \subseteq \lambda_X(Z_1,...,Z_n),
$$
whenever $Y_i \subseteq Z_i$ for each $i$. 
The \textit{Boolean dual}  $\lambda^d$ of $\lambda$ is defined by
$$(Z_1,...,Z_n) \mapsto \fun X \setminus 
   (\lambda_X(X\setminus Z_1,...,X \setminus Z_n)).
$$
\end{defi}

We shall often think of an arbitrary finite set $A$ of size $n$ as being identified with the set $\{1,...,n\}$ via some fixed one-to-one correspondence, so that we may view any natural transformation $\lambda: \cvp^A \to \cvp \circ \fun$ as an $n$-place predicate lifting. In these cases we sometimes speak of ``predicate liftings over $A$'' rather than ``$n$-place predicate liftings'', but note that this is merely a notational variation rather than a substantial generalization of the notion of predicate liftings. A predicate lifting $\lambda$ over $A$ can be represented equivalently as a subset of $\fun \psf  A$, via an application of the Yoneda lemma (this observation is due to Schr\"{o}der, \cite{schr:expr08}). Roughly speaking, predicate liftings are used to express ``local'' properties about the next unfolding of the transition map of a coalgebra, viewed from a given state of the coalgebra and using properties of states in the coalgebra as parameters.

Given a set functor $\fun$ and a set of monotone predicate liftings $\Lambda$ for $\fun$, the language $\mu \mathtt{ML}_\Lambda$ of the coalgebraic 
$\mu$-calculus based on $\Lambda$ is defined as follows. Again we present the language in negation normal form; note that we make sure that the modal operators of the language are closed under Boolean duals.
$$
\varphi \isbnf p  \divbnf \neg p  
   \divbnf \lambda(\varphi_1,...,\varphi_n) 
   \divbnf \lambda^d(\varphi_1,...,\varphi_n) \divbnf \varphi \vee \varphi \divbnf  \varphi \wedge \varphi 
   \divbnf \eta p. \varphi
$$
where $p \in \Var$, $\lambda \in \Lambda$, $\eta \in \{\mu,\nu\}$, and, in $\eta p. \varphi$, no free occurrence of 
the variable $p$ in $\varphi$ is in the scope of a negation. If $\Lambda$ consists of \textit{all} monotone predicate liftings for $\fun$, then we denote the language $\mu \mathtt{ML}_\Lambda$ also by $\mu \mathtt{ML}_\fun$.

The semantics of formulas in a pointed $\fun$-model $(\mathbb{S},s)$, where $\mathbb{S} = (S,\sigma,V)$, is defined as follows:
\begin{itemize}
\item 
$(\mathbb{S},s)\Vdash p $ iff $s \in V(p)$ and $(\mathbb{S},s)\Vdash \neg p$ iff 
   $s \notin V(p)$.
\item If $\lambda \in \Lambda$ or is the dual of some lifting in $\Lambda$, we set $(\mathbb{S},s)\Vdash \lambda(\varphi_1,...,\varphi_n)$ iff $\sigma(s)\in 
   \lambda_S(\Vert \varphi_1 \Vert,...,\Vert \varphi_n\Vert)$, where
  $\Vert \varphi_i \Vert = \{t \in S \mid (\mathbb{S},t)\Vdash \varphi_i\}$.
\item Standard clauses for the boolean connectives
\item $(\mathbb{S},s)\Vdash \mu p. \varphi$ iff
  $s \in \bigcap \{X \subseteq S \mid \varphi_{p}(X) \subseteq X\}$,
  where $\varphi_{p}(Z)$ denotes the truth set of the formula $\varphi$ in the 
  $\fun$-model $(S,\si,V[p \mapsto Z])$.
\end{itemize}
It is routine to prove that all formulas in $\muMLT$ are bisimulation 
invariant, i.e. they have the same truth value in every pair of behaviourally equivalent pointed $\fun$-models. (We shall continue to use the term ``bisimulation invariant'' rather than ``invariant for behavioural equivalence'', although strictly speaking bisimulation and behavioural equivalence are distinct concepts.)

\begin{exa}
The powerset functor $\psf$ has a unary predicate lifting $\Diamond$ defined by setting, for $\alpha \in \psf X$:
$$\alpha \in \Diamond_X (Z) \Leftrightarrow \alpha \cap Z \neq \emptyset$$
The language $\mu \mathtt{ML}_{\{\Diamond\}}$ is precisely the standard modal $\mu$-calculus $\mu \mathtt{ML}$. The dual of $\Diamond$ is denoted $\Box$, as usual. We recall its standard definition in modal logic as $\Box \psi := \neg \Diamond \neg \psi$, which is in accordance with our definition of the dual of a predicate lifting.
\end{exa}
\begin{exa}
The monotone neighborhood functor comes equipped with a unary predicate lifting, which we shall also denote by the symbol $\Box$. It is defined by setting, for $\alpha \in \mon X$:
$$ \alpha \in \Box_X(Z) \Leftrightarrow Z \in \alpha$$
In fact, we can embed $\psf$ as a subfunctor of $\mathcal{M}$ by mapping $\alpha \in \psf X$ to the neighborhood structure $\{Z \subseteq X \mid \alpha \subseteq Z\}$.  Clearly the two predicate liftings denoted by $\Box$ both agree on this subfunctor, so it makes sense to use the same symbol to denote both. This lifting for $\mon$ corresponds to the usual box modality for monotone modal logic, and so the language $\mu \mathtt{ML}_{\{\Box\}}$ corresponds to a fixpoint extension of monotone modal logic that we will here refer to as the \textit{monotone $\mu$-calculus}, denoted $\mu \mathtt{MML}$.  The logic $\mu \mathtt{MML}$ is a very natural system to study: in a sense, it is the \emph{minimal} modal $\mu$-calculus since the \textit{only} constraint we put on the box modality is monotonicity, which is precisely what is needed for the semantics of the fixpoint operators to be well defined. It is a well-behaved system: it was recently shown to have uniform interpolation \cite{MartiSeifanVenema15}, and we shall show here that it enjoys a Janin-Walukiewicz style characterization theorem.  
\end{exa}

\subsection{Coalgebraic automata}
\label{coal-aut}

Turning to the parity automata corresponding to the language $\muMLLa$, we
first define the \textit{modal one-step language} $\MLLa$.
\begin{defi} 
\label{onestepdef}
The set $\MLLa(A)$ of \emph{modal one-step formulas} over a set $A$ of
variables is given by the following grammar:
$$ 
\varphi \isbnf
   \bot \divbnf \top \divbnf \lambda(\psi_1,...,\psi_n) \divbnf \lambda^d(\psi_1,...,\psi_n) 
   \divbnf \varphi \vee \varphi \divbnf \varphi \wedge \varphi
$$
where $\psi_1,...,\psi_n$ are formulas built up from variables in $A$ using 
disjunctions and conjunctions. More formally, we require that $\psi_1,...,\psi_n \in \mathtt{Latt}(A)$ where $\mathtt{Latt}(A)$ denotes the set of \emph{lattice formulas} over $A$ given by the grammar:
$$ \bot \divbnf \top \divbnf a \divbnf  \varphi \vee \varphi \divbnf \varphi \wedge \varphi$$
where $a$ ranges over $A$. 
\end{defi}
\begin{defi}
Given a functor $\fun$ and a set of variables $A$, a \textit{one-step model} 
over $A$ is a triple $(X,\alpha,V)$ where $X$ is any set, $\alpha \in \fun X$
and $V : A \rightarrow \cvp(X)$ is a valuation.
\end{defi}

Note that we have written the valuation $V$ as having the type $A \to \cvp (X)$ rather than $A \to \psf (X)$ in this definition. This notation is equally correct since the covariant and contravariant powerset functors differ only in their action on morphisms, and it is sometimes a more convenient notation since the naturality condition of predicate liftings is formulated in terms of $\cvp$ and not $\psf$.

The semantics of formulas in the modal one-step language in a one-step model is
given as follows:
\begin{itemize}
\item standard clauses for the boolean connectives,
\item $(X,\alpha,V) \Vdash_1 \lambda(\psi_1,...,\psi_n)$ iff 
   $\alpha \in \lambda_X(\Vert \psi_1\Vert_V,...,\Vert \psi_n \Vert_V)$
\end{itemize}
where
   $\Vert \psi_i \Vert_V \subseteq X$ is the (classical) truth set of the formula 
   $\psi_i$ under the valuation $V$. Formally, this is defined by $\Vert \bot \Vert_V = \emptyset$, $\Vert \top \Vert_V = X$, $\Vert a \Vert_V = V(a)$ for $a \in A$, $\Vert \varphi \wedge \psi \Vert_V = \Vert \varphi \Vert_V \cap \Vert \psi \Vert_V$ and $\Vert \varphi \vee \psi \Vert_V = \Vert \varphi \Vert_V \cup \Vert \psi \Vert_V$. For $u \in X$ and a lattice formula $\psi$ we also write $u \Vdash^0_V \psi$ to say that $u \in \Vert \psi \Vert_V$. 

Recalling that any predicate lifting over a set of $n$ variables $A$ corresponds to an $n$-place predicate lifting, we can view any predicate lifting $\lambda$ over $A$ as a one-step formula in $\mathtt{1ML}_{\{\lambda\}}(A)$\footnote{Or rather, if we want to be completely precise, $\lambda$ corresponds to a one-step formula in $\mathtt{1ML}_{\{\lambda'\}}(A)$ where the lifting $\lambda' : \cvp^n \to \cvp \circ \mathsf{T}$ is obtained by composing $\lambda$ with the natural isomorphism between functors $\cvp^n$ and $\cvp^A$ induced by some fixed bijection $i : n \to A$. If we write $A = \{a_1,...,a_n\}$ with $a_k = f(k -1)$, this means that we have $\lambda_X(V) = \lambda'_X(V(a_1),...,V(a_n))$. We shall permit some abuse of notation and simply identify $\lambda$ with the associated lifting $\lambda'$.
}. With this mind we can write $(X,\alpha,V) \Vdash_1 \lambda$ instead of $\alpha \in \lambda_X(V)$. In this notation the naturality constraint for a predicate lifting $\lambda$ over $A$ becomes, for every  $\alpha \in X$, every map $f : X \to Y$ and every $V : A\to \cvp (Y)$:
$$(X,\alpha,\cvp f \circ V) \Vdash_1 \lambda \; \Leftrightarrow \; (Y,\fun f(\alpha),V) \Vdash_1 \lambda. $$
 We can now define the class of automata used to characterize the coalgebraic 
$\mu$-calculus.

\begin{defi}
Let $P$ be a finite set of variables and $\Lambda$ a set of predicate liftings. 
Then a \textit{($P$-chromatic) modal $\Lambda$-automaton} is a tuple 
$(A,\Delta,\Omega,a_I)$ where $A$ is a finite set of states with $a_I \in A$,
\[
\Delta : A \times \psf(P) \rightarrow \MLLa(A)
\]
is the transition map of the automaton, and $\Omega : A \rightarrow \omega$ is 
the parity map. 
The class of these automata is denoted as $\Aut(\MLLa)$.
\end{defi}
Note that there are two distinct sets of ``variables'' involved in the automaton $\mathbb{A}$, and it is important to keep these apart since they have different roles: the variables $P$ are used to provide the alphabet of the automaton (and correspond to free variables of corresponding fixpoint formulas), while the variables $A$ are the states of the automaton (and correspond to bound variables of a corresponding fixpoint formula.)

The acceptance game for an automaton $\mathbb{A} = (A,\Delta,\Omega,a_I)$ and 
a $\fun$-model $(S,\sigma,V)$ is the parity game given by the following table:

\begin{table}[h]
    \centering
\begin{tabular}{|l|c|l|l|}
\hline
Position  & Player  &  Admissible moves & Priority
\\ \hline
     $(a,s)\in A\times S$  
   & $\exists$  
   & $\{U \in ({\mathcal{P}}S)^A \mid 
        (S,\sigma(s),U) \Vdash_{1}\Delta(a, V^{\dagger}(s)) \}$ & $\Omega(a)$
\\
     $U:A\rightarrow{\mathcal{P}}S$ 
   & $\forall$ 
   & $\{(b,t)\mid t\in U(b)\}$   & $0$                                                 \\
\hline
    \end{tabular}
\caption{\label{accgame} Acceptance game for modal automata}
\end{table}
The loser of a finite match is the player who got stuck, and the 
winner of an infinite match is $\exists$ if the greatest parity that 
appears infinitely often in the match is even, and the winner is 
$\forall$ if this parity is odd.  Note that the valuations $U$ and $V$ in Table \ref{accgame} play fundamentally different roles: $V$ is a fixed valuation given by the model on which the automaton is run, assigning values to the open variables of the automaton, while $U$ is a ``local'' valuation assigning values to the states of the automaton, which correspond roughly to bound variables of a fixpoint formula.

Given a strategy $\chi$ for either player, a match is said to be $\chi$-guided if it is consistent 
with every choice made by that player according to $\chi$. Winning strategies are defined as usual.
\begin{defi}
The automaton $\mathbb{A}$ \emph{accepts} the pointed model $(\smod,s)$, written $(\smod,s)\Vdash \mathbb{A}$,
if $\exists$ has a winning strategy in the acceptance game from the starting
position $(a_I,s)$. 
We say that an automaton $\mathbb{A}$ is \emph{equivalent} to a formula 
$\varphi \in \muMLLa$ if, for every pointed $\fun$-model 
$(\mathbb{S},s)$, we have that $\mathbb{A}$ accepts $(\mathbb{S},s)$ iff 
$(\mathbb{S},s)\Vdash \varphi$.
\end{defi}


\begin{fact}\cite{font:auto10}
\label{coalgebraicmu}
Let $\fun$ be a set functor, and $\Lambda$ a set of monotone predicate 
liftings for $\fun$.
Then
\[
\muMLLa \equiv \Aut(\MLLa).
\]
That is, there are effective transformations of formulas in $\muMLLa$ into
equivalent automata in $\Aut(\MLLa)$, and vice versa.
\end{fact}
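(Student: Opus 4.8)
The plan is to prove the two directions of the equivalence separately, in each case reducing the statement to a comparison between the acceptance game of a $\MLLa$-automaton and the evaluation game (game-theoretic model-checking semantics) of a $\muMLLa$-formula. These are parity games whose rounds have the same shape --- at a basic position the relevant player picks a valuation over the ``next'' states and the opponent selects one such state --- so once the positions, admissible moves and priorities are matched up, winning strategies transfer directly and the equivalence follows. Throughout I would assume, by a standard effective preprocessing, that the $\muMLLa$-formulas under consideration are \emph{guarded} (every bound-variable occurrence lies in the scope of a modality); this is what makes the transition formulas below well defined.

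\emph{From formulas to automata.} Given a guarded $\varphi \in \muMLLa$ with free variables among $P$, I would take the set of states to be the Fischer--Ladner closure $A := \mathrm{FL}(\varphi)$, with $a_I := \varphi$ and each bound variable identified with the fixpoint formula binding it. For a state $\psi \in A$ and a letter $Q \subseteq P$, define a modal one-step formula $t_{\psi,Q} \in \MLLa(A)$ by recursion on $\psi$: $t_{p,Q}$ and $t_{\neg p,Q}$ are $\top$ or $\bot$ according to whether $p \in Q$; $t_{(\cdot),Q}$ commutes with $\vee$ and $\wedge$; $t_{\eta x.\chi,\,Q} := t_{\chi[\eta x.\chi/x],\,Q}$; and $t_{\lambda(\psi_1,\ldots,\psi_n),\,Q} := \lambda(\psi_1,\ldots,\psi_n)$, now read with the FL-subformulas $\psi_i \in A$ playing the role of (lattice) variables, and dually for $\lambda^d$. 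Guardedness ensures this recursion terminates: unfolding fixpoints and decomposing booleans one always reaches either a literal or a modality, and at a modality one stops and passes to the argument states. Setting $\Delta(\psi,Q) := t_{\psi,Q}$, and letting $\Omega$ reflect the $\mu$/$\nu$-alternation hierarchy of $\varphi$ (even on $\nu$-states, odd on $\mu$-states, respecting the subordination order of the fixpoint subformulas), with priority $0$ for the non-fixpoint states, a round-by-round unravelling shows that $\exists$ wins the acceptance game of $\bbA_\varphi := (A,\Delta,\Omega,\varphi)$ from $(\varphi,s)$ iff she wins the evaluation game of $\varphi$ from $(\varphi,s)$, i.e.\ iff $(\bbS,s)\Vdash\varphi$.

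\emph{From automata to formulas.} Given $\bbA = (A,\Delta,\Omega,a_I)$, I would introduce a fresh variable $x_a$ for each state $a \in A$. For each $Q \subseteq P$, turn $\Delta(a,Q) \in \MLLa(A)$ into a fixpoint-free $\muMLLa$-formula $\theta_{a,Q}$ by reading each state $b$ occurring in it as $x_b$; since states occur only inside the monotone liftings $\lambda$ or their (also monotone) duals $\lambda^d$, never under a negation, every $x_b$ occurs positively. Put $\theta_a := \bigvee_{Q \subseteq P}\big(\gamma_Q \wedge \theta_{a,Q}\big)$, where $\gamma_Q := \bigwedge_{p \in Q} p \wedge \bigwedge_{p \in P\setminus Q}\neg p$ forces the current letter to be exactly $Q$. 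This gives a system of positive equations $\{\, x_a = \theta_a \,\}_{a \in A}$, which I would solve by Beki\v{c}-style elimination (Gaussian elimination for fixpoint equations): eliminate the variables in order of \emph{decreasing} $\Omega$-priority, binding each removed $x_a$ with $\mu$ when $\Omega(a)$ is odd and with $\nu$ when it is even. The resulting $\varphi_\bbA \in \muMLLa$ has its free variables among $P$, and unravelling its nested fixpoints shows that its evaluation game has the same positions, moves and parity condition as the acceptance game of $\bbA$; hence $(\bbS,s)\Vdash\varphi_\bbA$ iff $\bbA$ accepts $(\bbS,s)$. Both constructions are manifestly effective.

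\emph{Main obstacle.} The crux in both directions is the bookkeeping of the parity/priority information. In the formula-to-automaton direction one must verify that the priority map read off the fixpoint nesting of $\varphi$ reproduces, round for round, the winning condition of the evaluation game --- this is the coalgebraic counterpart of the classical correctness argument for translating $\mu$-calculus formulas into parity automata (Kozen, Emerson--Jutla, Wilke). In the automaton-to-formula direction the delicate point is that the nested-fixpoint solution produced by Beki\v{c} elimination must reproduce exactly the parity condition of $\bbA$: one has to check that eliminating in decreasing priority order, with the $\mu/\nu$ choice dictated by parity, yields a formula whose evaluation game $\exists$ wins precisely when she wins the acceptance game. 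The remaining ingredients --- decomposing booleans, keeping modal arguments as separate states, encoding alphabet letters by the guards $\gamma_Q$, and the guardedness preprocessing --- are routine, and the whole argument is a coalgebraic rendering of the by-now-standard correspondence between the modal $\mu$-calculus and parity automata (cf.\ \cite{vene:auto06,font:auto10}).
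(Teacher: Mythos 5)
First, a remark on the comparison itself: the paper does not prove this statement — it imports it as a Fact from \cite{font:auto10} — so the only benchmark is the standard argument, and your route (Fischer--Ladner closure one way, a Beki\v{c}-style equation solution the other) is the classical one in spirit. However, your formula-to-automaton construction, as written, is genuinely broken rather than merely unverified. Because you compile \emph{all} boolean and fixpoint unfoldings into the one-step formula $t_{\psi,Q}$ and take the \emph{arguments} of the modalities as the next states, a fixpoint variable can be regenerated strictly inside a round without its fixpoint state ever being visited, and your priority assignment (alternation priorities on fixpoint states, $0$ elsewhere) never registers it. Concretely, take $\fun=\psf$, $\La=\{\Diamond\}$ and the guarded formula $\varphi=\mu x.\Diamond(x\wedge p)$. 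The states include $\varphi$ (odd priority) and the modal argument $x\wedge p$ (a non-fixpoint state, priority $0$), and your recursion gives $\Delta(\varphi,Q)=\Diamond(x\wedge p)$ and $\Delta(x\wedge p,Q)=\Diamond(x\wedge p)\wedge\top$ for $Q\ni p$. On the one-point reflexive model with $p$ true, $\exists$ can keep choosing $U(x\wedge p)=\{s\}$, so every infinite match has state trace $\varphi,(x\wedge p),(x\wedge p),\dots$, whose maximal priority occurring infinitely often is $0$; your automaton accepts, yet $\mu x.\Diamond(x\wedge p)$ is false there. The defect is structural: per-state priorities on closure states cannot see regenerations hidden inside compiled one-step formulas (choosing an odd default for non-fixpoint states would break the dual $\nu$-example instead).

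To repair this you must ensure every regeneration is registered: either decompose one step at a time (an alternating-style automaton whose traces do pass through the fixpoint states, followed by a separate compression argument that records the most significant priority traversed within each round), or avoid the closure construction altogether and argue compositionally, proving $\Aut(\MLLa)$ closed under $\vee$, $\wedge$, the modalities and $\mu/\nu$, one fixpoint at a time — which is how results of this kind are established in the coalgebraic literature that this Fact cites. In the converse direction your equation system $\{x_a=\theta_a\}_{a\in A}$ is fine, but the elimination order needs care: under the usual Beki\v{c} convention the first variable solved ends up \emph{innermost}, while the parity condition requires the \emph{highest} priority to be the \emph{outermost} binder, so you should eliminate in order of increasing priority; as stated (decreasing order) the nesting appears reversed, and this is exactly the point you would have to pin down and verify rather than leave as a flagged "delicate point''.
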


\subsection{One-step expressive completeness}

Our main result in this paper is formulated in terms of the set of \emph{all} monotone predicate liftings for a given functor, and we would argue that this is a rather natural choice. However, in some cases it is possible to choose some smaller set of liftings, in order to have a more concrete and manageable presentation of the syntax of the corresponding $\mu$-calculus. It will then be important to choose an \emph{expressively complete} set of predicate liftings, meaning that \emph{any} monotone lifting for the functor can be defined by a formula in the modal one-step language. In particular, this will be required to recover the Janin-Walukiewicz theorem in its original form as a special case of our main results. 

\begin{defi}
Let $\Lambda$ be a set of monotone predicate liftings and $\lambda$ any predicate lifting over the finite set $A$. We say that $\lambda$ is $\Lambda$-\emph{definable} if there is a formula $\varphi \in \mathtt{1ML}_\Lambda(A)$ such that, for any one-step model $(X,\alpha,V)$ we have:
$$\alpha \in \lambda_X(V)  \text{ iff }  (X,\alpha,V) \Vdash_1 \varphi. $$  
We say that a set of monotone predicate liftings $\Lambda$ is \emph{expressively complete} if every monotone predicate lifting over any given finite set $A$ is $\Lambda$-definable.
\end{defi}
Our main observation in this section is that any weak pullback preserving functor that also preserves finite sets has a natural choice of expressively complete predicate liftings. 
\begin{defi}
Given a functor $\fun$ that preserves weak pullbacks, the \emph{$n$-ary Moss lifting} corresponding to $\alpha \in \fun \{1,...,n\}$ is the predicate lifting $\langle \alpha \rangle : \cvp^n \to \cvp \circ \fun$ such that for all sets $X$ and $Z_1,...,Z_n \subseteq X$, we have: 
$$\langle \alpha \rangle_X(Z_1,...,Z_n) = \{\beta\in \fun X \mid (\beta,\alpha) \in \overline{\fun}R\}$$
where $R\subseteq X \times \{1,...,n\}$ is defined by $u R k$ iff $u \in Z_k$. 
\end{defi}
In particular, for any given lattice formulas $\psi_1,...,\psi_n \in A$, any $\beta \in \fun \{1,...,n\}$ and any one-step model $(X,\alpha,V)$ over $A$, we have:
$$X,\alpha,V \Vdash_1 \langle \beta \rangle (\psi_1,...,\psi_n) \text{ iff } (\alpha,\fun f(\beta)) \in \overline{\fun} (\Vdash^0_V)  $$
where $f : \{1,...,n\} \to \{\psi_1,...,\psi_n\}$ is the map defined by $k \mapsto \psi_k$. 

Note that the Moss lifting $\langle \alpha \rangle $ is always a natural transformation, as required: it is obtained by composing the natural transformation $h^\alpha : \cvp^n \to \fun \circ \cvp$ defined by $$ h^\alpha_X(f) := \fun f(\alpha)\text{, for } f : n \to \cvp X\text{,}$$ with the distributive law $\delta : \fun \circ \cvp \to \cvp \circ \fun$ defined for $\beta \in \fun \cvp X $ by:
$$\delta_X(\beta) := \{\gamma \in \fun X \mid (\gamma,\beta) \in \overline{\fun}{\in_X}\}$$
where $\in_X$ is the membership relation from $X$ to $\cvp X$. That this distributive law is a natural transformation is a standard fact, see for example \cite{leal:pred08} where this view of Moss formulas as special predicate liftings is investigated thoroughly.
\begin{prop}
\label{p:expr-comp-nablas}
If $\fun$ preserves weak pullbacks and finite sets, then the set of all Moss liftings for $\fun$ (of all arities $n \in \omega$) is expressively complete. 
\end{prop}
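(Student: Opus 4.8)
The plan is to show that an arbitrary monotone predicate lifting $\lambda$ over a finite set $A$ can be written as a finite disjunction of Moss liftings applied to lattice formulas. The starting point is Schröder's representation (cited in the excerpt): a predicate lifting $\lambda$ over $A$ corresponds, via Yoneda, to a subset $\widehat{\lambda} \subseteq \fun\psf A$. Concretely, unravelling the Yoneda correspondence, one has for any one-step model $(X,\alpha,V)$ that $\alpha \in \lambda_X(V)$ iff $\fun(V^{\sharp})(\alpha) \in \widehat{\lambda}$, where $V^{\sharp} : X \to \psf A$ sends $x$ to $\{a \in A \mid x \in V(a)\}$. Since $\fun$ preserves finite sets and $A$ is finite, $\psf A$ is finite and hence $\fun \psf A$ is finite, so $\widehat{\lambda} = \{\beta_1,\dots,\beta_m\}$ is a finite set. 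The natural candidate is then
\[
\varphi \;=\; \bigvee_{i=1}^{m} \langle \beta_i \rangle\bigl( (\chi_S)_{S \in \psf A} \bigr),
\]
where for each subset $S \subseteq A$ the lattice formula $\chi_S := \bigwedge_{a \in S} a$ is an element of $\mathtt{Latt}(A)$, so that $\langle \beta_i \rangle$ is applied to the tuple of lattice formulas indexed by $\psf A$ (identifying $\psf A$ with $\{1,\dots,n\}$ where $n = \card{\psf A}$).

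The next step is to compute the truth set $\Vert \chi_S \Vert_V$ in a one-step model $(X,\alpha,V)$: clearly $u \Vdash^0_V \chi_S$ iff $S \subseteq V^{\sharp}(u)$. Using the characterization of Moss liftings from the excerpt, $(X,\alpha,V) \Vdash_1 \langle \beta_i \rangle\bigl((\chi_S)_S\bigr)$ iff $(\alpha, \fun f(\beta_i)) \in \overline{\fun}(\Vdash^0_V)$, where $f : \psf A \to \{\chi_S \mid S \in \psf A\}$ sends $S$ to $\chi_S$. The key observation is that the relation $(\Vdash^0_V) \subseteq X \times \{\chi_S\}$, pulled back along $f$, is essentially the "superset" relation on $\psf A$ composed with $V^{\sharp}$; and since $\fun$ preserves weak pullbacks, its Barr extension is a genuine functor on $\mathbf{Rel}$ that respects composition. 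Decomposing the relation $\Vdash^0_V$ (after identifying indices via $f$) as the relational composite of the graph of $V^{\sharp} : X \to \psf A$ with the reverse inclusion order $\supseteq$ on $\psf A$, functoriality of $\overline{\fun}$ gives
\[
\overline{\fun}(\Vdash^0_V) \;=\; \overline{\fun}(\supseteq) \circ \overline{\fun}(\mathrm{Gr}(V^{\sharp})) \;=\; \overline{\fun}(\supseteq) \circ \mathrm{Gr}(\fun V^{\sharp}),
\]
using that the Barr extension of a function's graph is the graph of its $\fun$-image. Chasing this through, $(X,\alpha,V) \Vdash_1 \langle \beta_i \rangle((\chi_S)_S)$ becomes equivalent to $\bigl(\fun V^{\sharp}(\alpha), \beta_i\bigr) \in \overline{\fun}(\supseteq)$. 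Here I would invoke monotonicity of $\lambda$: it translates (via the Yoneda correspondence) into $\widehat{\lambda}$ being \emph{upward closed} in $\fun \psf A$ under $\overline{\fun}(\supseteq)$, i.e. if $\beta \in \widehat{\lambda}$ and $(\gamma,\beta) \in \overline{\fun}(\supseteq)$ then $\gamma \in \widehat{\lambda}$. This is exactly a lemma I expect to need to state and prove separately (it is the one-step analogue of "monotone liftings are determined by an upward-closed set of $\fun$-colorings"). Granting it, $(X,\alpha,V) \Vdash_1 \varphi$ iff $(\fun V^{\sharp}(\alpha), \beta_i) \in \overline{\fun}(\supseteq)$ for some $\beta_i \in \widehat{\lambda}$ iff $\fun V^{\sharp}(\alpha) \in \widehat{\lambda}$ (the last step by upward closure in one direction, and since $\supseteq$ is reflexive in the other), which by Schröder's representation is precisely $\alpha \in \lambda_X(V)$, as desired.

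The main obstacle I anticipate is the bookkeeping around the Barr extension: verifying that $\overline{\fun}$ is functorial on $\mathbf{Rel}$ (true exactly because $\fun$ preserves weak pullbacks, by the cited Barr / Carboni–Kelly–Wood results), that $\overline{\fun}(\mathrm{Gr}(g)) = \mathrm{Gr}(\fun g)$, and that the relation $\Vdash^0_V$ restricted to the lattice formulas $\chi_S$ decomposes cleanly as claimed — in particular handling the special formulas $\bot$ and $\top$ or simply avoiding them by only using the conjunctions $\chi_S$ and noting $\chi_\emptyset = \top$. A subsidiary but genuinely necessary point is the precise translation between monotonicity of $\lambda$ as a predicate lifting and upward-closure of $\widehat{\lambda}$ under $\overline{\fun}(\supseteq)$; this requires unwinding naturality of $\lambda$ together with the behaviour of $\fun$ on the inclusion-ordered finite set $\psf A$, and is where preservation of weak pullbacks is used a second time (to get that $\overline{\fun}(\supseteq)$ is transitive, so that "upward closed" is well-behaved). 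Everything else — finiteness of $\widehat{\lambda}$, membership in $\mathtt{Latt}(A)$, assembling the disjunction — is routine given the machinery already set up in the excerpt.
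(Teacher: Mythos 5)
Your proposal is correct, and the formula you build is, up to bookkeeping, the one the paper uses: the paper's proof also disjoins Moss liftings applied to the conjunctions $\bigwedge\{a \in A \mid u \in V(a)\}$ determined by propositional types, but it indexes the disjunction by ``characteristic formulas'' $\chi(X,\alpha,V) = \langle \fun(f\circ\theta)(\alpha)\rangle(\psi_1,\dots,\psi_n)$ of the one-step models with $\alpha \in \lambda_X(V)$ rather than by the Yoneda transpose of $\lambda$; since the elements $\fun(f\circ\theta)(\alpha)$ arising this way are exactly the images (under a fixed injection into $\fun\,\mathtt{Latt}(A)$) of the elements $\fun V^{\sharp}(\alpha)$ of your set $\{\beta_1,\dots,\beta_m\}$, the two disjunctions coincide. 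Where you genuinely differ is in the organization of the verification. The paper proves the characteristic-formula lemma $(X',\alpha',V') \Vdash_1 \chi(X,\alpha,V)$ iff $(\alpha,\alpha') \in \overline{\fun}(\preceq)$, where $u \preceq u'$ means the propositional type of $u$ is contained in that of $u'$, and then settles the hard direction by a naturality--monotonicity--naturality argument at arbitrary one-step models: lift the related pair to some $\gamma \in \fun R$ over $R = \mathord{\preceq}$, transfer along one projection by naturality, enlarge the valuation by monotonicity of $\lambda$, and transfer back along the other projection. Your route replaces this by the transpose representation plus compositionality of the Barr extension (which is exactly where weak pullback preservation enters, together with the graph law $\overline{\fun}(\mathrm{Gr}(g)) = \mathrm{Gr}(\fun g)$) and the upward-closure lemma for the transpose under $\overline{\fun}(\supseteq)$; that lemma is indeed needed and is provable exactly as you anticipate, and unwinding it at the universal one-step model $(\psf A,\iota)$ with $\iota(a) = \{S \subseteq A \mid a \in S\}$ reproduces the paper's naturality--monotonicity--naturality sandwich specialized to the relation $\supseteq$ on $\psf A$. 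So the mathematical content is the same; what your packaging buys is that the roles of the two hypotheses become more visible (preservation of finite sets gives finiteness of $\fun\psf A$ and hence of the disjunction, preservation of weak pullbacks gives functoriality of $\overline{\fun}$ on relations) and the hard direction is localized to a single finite object, at the cost of the $\mathbf{Rel}$-bookkeeping you already flagged; the paper's version avoids explicit Yoneda manipulation but argues at arbitrary one-step models. The easy direction is reflexivity in both versions: a model satisfies its own characteristic formula in the paper, and $\overline{\fun}(\supseteq)$ contains the diagonal in yours.
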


\begin{proof}
Fix a monotone predicate lifting $\lambda$ over $A$.  
Let $(X,\alpha,V)$ be any one-step model over $A$. Let $n$ be the number of lattice formulas over $A$, and let $f :  \mathtt{Latt}(A) \to \{1,...,n\}$ be some fixed bijective enumeration of $\mathtt{Latt}(A)$, so that we can write 
$$\mathtt{Latt}(A) = f^{-1}(\{1,...,n\}) = \{\psi_1,...,\psi_n\}$$ 
Let $\theta : X \to \mathtt{Latt}(A)$ be the map sending each $u \in X$ to $\bigwedge \{a \in A \mid u \in V(a)\}$. The \emph{characteristic formula} of $(X,\alpha,V)$, denoted $\chi(X,\alpha,V)$, is defined as:
$$\langle \fun (f \circ \theta)(\alpha) \rangle (\psi_1,...,\psi_n)$$
It is an exercise in coalgebra to show that, for any one-step model $(X',\alpha',V')$, we have 
$$X',\alpha',V' \Vdash_1 \chi(X,\alpha,V) \text{ iff } (\alpha,\alpha') \in \overline{\fun}(\preceq)$$
where we write $u \preceq u'$, for $u \in X$ and $u' \in X'$, if $u \in V(a)$ implies $u' \in V'(a)$ for all $a\in A$. Furthemore, there are only finitely many characteristic formulas relative to any given set of variables $A$, since $\fun$ preserves finite sets (and hence $\fun \{1,...,n\} $ is finite). We claim that the formula:
$$\varphi := \bigvee \{\chi(X,\alpha,V) \mid \alpha \in \lambda_X(V) \}$$
defines the monotone lifting $\lambda$. One direction is clear: if $\alpha \in \lambda_X(V)$  then $X,\alpha,V \Vdash_1 \varphi$ since $(X,\alpha,V) \Vdash_1 \chi(X,\alpha,V)$. Conversely, suppose that $X,\alpha,V \Vdash_1 \varphi$. Then there exists some one-step model $(X',\alpha',V')$ such that $\alpha' \in \lambda_{X'}(V')$ and $(\alpha',\alpha) \in \overline{\fun}(\preceq)$. Write $R = \{(u,v) \mid u \preceq v\}$, let $\pi' : R \to X'$ and $\pi : R \to X$ be the projection maps, and let $\gamma \in \fun R$ be such that $\fun \pi'(\gamma) = \alpha'$ and $\fun \pi(\gamma) = \alpha$. By naturality we get $\gamma \in \lambda_{R}(\cvp \pi' \circ V')$, by monotonicity we get $\gamma \in \lambda_{R}(\cvp \pi \circ V)$ (since $\pi'(r) \in V'(a)$ implies $\pi(r) \in V(a)$ for all $a \in A$), and finally by naturality again we get $\gamma \in \lambda_X(V)$ as required. 
\end{proof}
In the special case of the powerset functor,  given a set of formulas $\{\psi_1,...,\psi_n\}$ (viewed as a member of $\psf(\mathtt{Latt}(A))$), the corresponding substitution instance of the $n$-ary Moss lifting corresponds to the usual ``cover'' formula $\nabla\{\psi_1,...,\psi_n\}$ which is defined in terms of $\Diamond$ and its dual by:
$$\nabla\{\psi_1,...,\psi_n\} := \Diamond \psi_1 \wedge ... \wedge \Diamond \psi_n \wedge \Box (\psi_1 \vee ... \vee \psi_n)$$
Hence, it follows from Proposition \ref{p:expr-comp-nablas} that the single lifting $\Diamond$ for $\psf$ is expressively complete.

\section{Coalgebraic $\MSO$}
\label{sec:mso}
\subsection{Introducing coalgebraic MSO}
We now introduce coalgebraic monadic second-order logic for a set functor $\fun$
and a set of liftings $\La$, and show how $\MSO$ can be recovered as a special 
case. Given a set $\Lambda$ of monotone predicate liftings,
we define the syntax of the monadic second-order logic $\MSO_\Lambda$ by the 
following grammar:
\[
\varphi \isbnf
  \sr(p) \divbnf p \subseteq q  
  \divbnf \lambda(p,q_1,..,q_n) \divbnf
  \varphi \vee \varphi \divbnf \neg \varphi \divbnf \exists p. \varphi
\]
where $\lambda$ is any $n$-place monotone predicate lifting in $\Lambda$ and $p,q,q_1,...,q_n \in Var$. 
For $\Lambda$ equal to the set of all monotone predicate liftings for $\fun$, we write $\mathtt{MSO}_\Lambda = \mathtt{MSO}_\fun$.

For the semantics, let $(\mathbb{S},s)$ be a pointed  $\fun$-model. 
We define the satisfaction relation ${\Vdash} \subseteq S \times 
\MSO_\fun$ as follows:
\begin{itemize}
\item $(\mathbb{S},s) \Vdash \sr(p)$ iff $V(p) = \{s\}$,
\item $(\mathbb{S},s)\Vdash p \subseteq q$ iff  $V(p) \subseteq V(q)$,
\item $(\mathbb{S},s) \Vdash \lambda(p, q_1,...,q_n)$ iff $\sigma (v)\in 
   \lambda_S(V(q_1),..,V(q_n))$ for all $ v \in V(p)$,
\item standard clauses for the Boolean connectives,
\item $(\mathbb{S},s) \Vdash \exists p .\varphi$ iff
   $(S,\sigma,V[p \mapsto Z],s)\Vdash  \varphi$, some $Z \subseteq S$.
\end{itemize}

\noindent
We introduce the following abbreviations:
\begin{itemize}
\item $p = q$ for $p \subseteq q \wedge q \subseteq p$,
\item $\Em(p)$ for $\forall q. (p \subseteq q)$,
\item $\Sing(p)$ for $\neg \Em (p) \wedge 
   \forall q (q \subseteq p \rightarrow (\Em(q) \vee q = p)),$
\end{itemize}
expressing, respectively, that $p$ and $q$ are equal, that $p$ denotes the
empty set, and that $p$ denotes a singleton.

Clearly, standard $\MSO$ is a notational variant of the logic $\MSO_{\{\Diamond\}}$, since the formula $R(p,q)$ has precisely the same satisfaction clause as $\Diamond(p,q)$ - to see this, one just has to unfold the definitions. We can state the Janin-Walukiewicz theorem in a formula as:
$$ \mu \mathtt{ML}_{\{\Diamond\}} \equiv \mathtt{MSO}_{\{\Diamond\}}/ {\sim} $$
The following useful fact is fairly easy to check, so we leave its proof to the reader:
\begin{fact}
If $\Lambda$ is a one-step expressively complete set of predicate liftings, then $\mathtt{MSO}_\Lambda \equiv \mathtt{MSO}_\fun$ and $\mu \mathtt{ML}_\Lambda \equiv \mu \mathtt{ML}_\fun$. 
\end{fact}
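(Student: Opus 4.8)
The plan is to prove each of the two equivalences by establishing the nontrivial inclusion; the reverse inclusions are immediate, because $\Lambda$ is contained in the set of all monotone predicate liftings for $\fun$, so every formula of $\mathtt{MSO}_\Lambda$ is literally a formula of $\mathtt{MSO}_\fun$ with the same semantics, and likewise $\mu\mathtt{ML}_\Lambda$ is syntactically contained in $\mu\mathtt{ML}_\fun$ (here one uses that the Boolean dual of a monotone lifting is again monotone). For the two substantial inclusions I would induct on the structure of a given formula, the only case that is not routine being an application of some monotone predicate lifting $\lambda$ (and, in the $\mu$-calculus case, of its Boolean dual $\lambda^{d}$, which is again a monotone lifting and so may be treated in the same way). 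In that case one-step expressive completeness of $\Lambda$ supplies, after identifying the finite parameter set $A$ with $\{a_1,\dots,a_n\}$, a defining one-step formula $\psi_\lambda\in\mathtt{1ML}_\Lambda(A)$ with $\alpha\in\lambda_X(V)$ iff $(X,\alpha,V)\Vdash_1\psi_\lambda$ for every one-step model over $A$; the point is to transfer this one-step replacement to the object languages.

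For $\mu\mathtt{ML}_\fun\equiv\mu\mathtt{ML}_\Lambda$ the transfer is by plain substitution: I claim that $\lambda(\varphi_1,\dots,\varphi_n)$ is equivalent to the formula $\psi_\lambda[a_i\mapsto\varphi_i]$ obtained by substituting $\varphi_i$ for $a_i$ throughout $\psi_\lambda$. This result lies in $\mu\mathtt{ML}_\Lambda$: one-step formulas are negation-free, combine the variables $a_i$ only positively (through $\vee$, $\wedge$, liftings in $\Lambda$ and their duals) and contain no fixpoint operators, so the substitution introduces no negations and does not violate the positivity requirement on bound fixpoint variables. The equivalence itself is a straightforward induction on $\psi_\lambda$: evaluating at a state $s$ of $\bbS=(S,\sigma,V)$, the left-hand side says $\sigma(s)\in\lambda_S(\Vert\varphi_1\Vert,\dots,\Vert\varphi_n\Vert)$, which by $\Lambda$-definability of $\lambda$ is equivalent to $(S,\sigma(s),W)\Vdash_1\psi_\lambda$ with $W(a_i)=\Vert\varphi_i\Vert$, and one then unwinds the one-step semantics, using the auxiliary observation that the truth set of a lattice term under $W$ coincides with the truth set of its image under the substitution.

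For $\mathtt{MSO}_\fun\equiv\mathtt{MSO}_\Lambda$ one must be slightly more careful, since the atomic formula $\lambda(p,q_1,\dots,q_n)$ has a universal quantifier over $V(p)$ built into its satisfaction clause, and hence does not interact well with the Boolean structure of $\psi_\lambda$. I would handle this by relativizing to singletons: express $\lambda(p,q_1,\dots,q_n)$ as $\forall p'\,\bigl((\Sing(p')\wedge p'\subseteq p)\to\mathrm{tr}_{p'}(\psi_\lambda)\bigr)$, where, for a variable $p'$ destined to denote a singleton $\{v\}$, the map $\mathrm{tr}_{p'}$ turns a one-step formula over $\{q_1,\dots,q_n\}$ into an $\mathtt{MSO}_\Lambda$ formula asserting one-step satisfaction of that formula in $(S,\sigma(v),V)$. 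Since $p'$ denotes a singleton, $\mathrm{tr}_{p'}$ may commute with $\bot$, $\top$, $\vee$ and $\wedge$; it sends $\lambda'(\chi_1,\dots,\chi_m)$ to $\exists r_1\cdots\exists r_m\,\bigl(\bigwedge_i\mathsf{def}(r_i,\chi_i)\wedge\lambda'(p',r_1,\dots,r_m)\bigr)$ and the dual $\lambda'^{d}(\chi_1,\dots,\chi_m)$ to $\exists r_1\cdots\exists r_m\,\bigl(\bigwedge_i\mathsf{def}^{c}(r_i,\chi_i)\wedge\neg\lambda'(p',r_1,\dots,r_m)\bigr)$, where the $r_i$ are fresh second-order variables and $\mathsf{def}(r,\chi)$ (respectively $\mathsf{def}^{c}(r,\chi)$) is an $\mathtt{MSO}_\Lambda$ formula, defined by recursion on the lattice term $\chi$, forcing $r$ to denote $\Vert\chi\Vert_V$ (respectively its complement $S\setminus\Vert\chi\Vert_V$); such formulas exist because $\mathtt{MSO}$-definable subsets are closed under finite unions, intersections and complements. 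An induction on $\psi_\lambda$ then shows that this reproduces exactly the satisfaction clause of $\lambda(p,q_1,\dots,q_n)$, and replacing every lifting subformula of a given $\mathtt{MSO}_\fun$ formula in this manner produces the desired $\mathtt{MSO}_\Lambda$ equivalent.

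As the statement is advertised as being easy to check, I expect no genuine obstacle, only bookkeeping: the points deserving attention are the verification that the auxiliary predicates $\mathsf{def}$ and $\mathsf{def}^{c}$ are indeed $\mathtt{MSO}_\Lambda$-definable for all lattice terms, and that singleton relativization correctly linearizes the hidden universal quantifier so that it commutes with the Boolean combination inside $\psi_\lambda$. Beyond one-step expressive completeness no new idea is required.
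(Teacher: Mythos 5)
Your proposal is correct, and since the paper explicitly leaves this fact to the reader there is no official proof to diverge from: your argument (substituting the $\Lambda$-defining one-step formula of an arbitrary monotone lifting into $\mu\mathtt{ML}_\Lambda$-contexts, and for $\mathtt{MSO}_\fun$ linearizing the hidden universal quantifier of $\lambda(p,q_1,\dots,q_n)$ by relativizing to singleton subsets of $p$ and using MSO-definable set terms and negation to simulate lattice terms and dual liftings) is exactly the routine verification the authors intend. The two points you flag as needing care — definability of the auxiliary set predicates and the singleton relativization making the one-step Booleans commute with the translation — are indeed the only non-trivial bookkeeping, and you handle both correctly.
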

Since we know that $\Diamond$ is expressively complete we can now state the Janin-Walukiewicz theorem in yet another form, where we recall that $\psf$ is the covariant powerset functor:
$$\mu \mathtt{ML}_\psf \equiv \mathtt{MSO}_\psf / {\sim}$$
Generally, we let $\mathtt{MSO}_\fun / {\sim}$ denote the fragment of $\mathtt{MSO}_\fun$ that is invariant for behavioural equivalence, and we refer to it as the \emph{bisimulation invariant fragment } of $\mathtt{MSO}_\fun$.

As mentioned in the introduction, the key question in this paper will 
be to compare the expressive power of coalgebraic monadic second-order logic
to that of the coalgebraic $\mu$-calculus.
The following observation, of which the (routine) proof is omitted, provides
the easy part of the link.

\begin{prop}
\label{p:mu-to-mso}
Let $\La$ be a set of monotone predicate liftings for the set functor $\fun$.
There is an inductively defined translation $(\cdot)^{\diamond}$ mapping any
formula $\phi \in \muMLLa$ to a semantically equivalent formula $\phi^{\diamond}\in \MSOLa$.
\end{prop}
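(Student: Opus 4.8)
The plan is to define the translation $(\cdot)^{\diamond}: \muMLLa \to \MSOLa$ by induction on the structure of $\muMLLa$-formulas, working with a slightly generalized inductive statement: for every $\muMLLa$-formula $\phi$ we produce a formula $\phi^{\diamond}(p)$ of $\MSOLa$ with a distinguished free variable $p$ together with whatever free variables $\phi$ already has, such that for every pointed $\fun$-model $(\bbS,s)$ and every valuation $V$, $(\bbS,s) \Vdash \phi$ iff $(S,\sigma,V[p\mapsto\{s\}],s) \Vdash \phi^{\diamond}(p)$. The extra parameter $p$ is needed because $\MSOLa$ has no first-order variables: the ``current point'' of the $\mu$-calculus semantics must be carried around as a singleton set, and this is exactly what $\sr(p)$ together with $\Sing(p)$ lets us express.

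The inductive clauses I would give are the expected ones. For a propositional atom $q$, set $q^{\diamond}(p) \isdef p \subseteq q$; for $\neg q$, set $\neg(p \subseteq q)$ — here correctness uses that $p$ is interpreted as $\{s\}$, so $p\subseteq q$ says precisely $s \in V(q)$. For $\phi_1 \vee \phi_2$ and $\phi_1 \wedge \phi_2$, translate componentwise under the same $p$. For the modalities, $\lambda(\phi_1,\dots,\phi_n)^{\diamond}(p) \isdef \exists q_1 \cdots \exists q_n\, \big(\lambda(p,q_1,\dots,q_n) \wedge \bigwedge_{i} \forall r.(\Sing(r) \wedge r \subseteq q_i \to \phi_i^{\diamond}(r)) \wedge \bigwedge_i \forall r.(\Sing(r)\wedge r\subseteq q_i \to r \subseteq q_i)\big)$ — more cleanly, one takes $q_i$ to \emph{name} the truth set $\Vert\phi_i\Vert$, i.e. one conjoins the formula $\forall r.(\Sing(r) \to (r\subseteq q_i \leftrightarrow \phi_i^{\diamond}(r)))$ saying ``$q_i$ is exactly the set of points satisfying $\phi_i$'', and then uses the $\MSOLa$ atom $\lambda(p,q_1,\dots,q_n)$, whose semantics ($\sigma(v) \in \lambda_S(V(q_1),\dots,V(q_n))$ for all $v\in V(p)$) specializes, when $V(p)=\{s\}$, to exactly $\sigma(s)\in\lambda_S(\Vert\phi_1\Vert,\dots,\Vert\phi_n\Vert)$, matching the $\muMLLa$ clause for $\lambda$. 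The Boolean dual $\lambda^d$ is handled the same way after noting $\lambda^d$ is itself definable in $\MSOLa$ via negation and complementation of the argument sets (using $\exists$ to name complements), or simply by pushing negations in. Finally for the fixpoints: $(\mu x.\phi)^{\diamond}(p) \isdef \forall q.\big(\mathrm{Closed}_\phi(q) \to p \subseteq q\big)$, where $\mathrm{Closed}_\phi(q)$ is the $\MSOLa$-formula $\forall r.(\Sing(r) \wedge \phi^{\diamond}[x:=q](r) \to r \subseteq q)$ expressing that $q$ is a prefixed point of the operator $Z \mapsto \Vert\phi\Vert_{V[x\mapsto Z]}$; here $\phi^{\diamond}[x:=q]$ means the bound $\mu$-variable $x$, which becomes an ordinary free propositional variable in the translation, is taken to be $q$. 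This directly mirrors the semantic clause $s \in \bigcap\{Z \mid \phi_x(Z)\subseteq Z\}$. The $\nu$-case is dual, using $\exists q$ and postfixed points ($q \subseteq \{r \mid \phi^{\diamond}(r)\}$), i.e. $(\nu x.\phi)^{\diamond}(p) \isdef \exists q.(\forall r.(\Sing(r)\wedge r\subseteq q \to \phi^{\diamond}[x:=q](r)) \wedge p \subseteq q)$.

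The correctness proof is then a routine induction on $\phi$, in each case unfolding the $\MSOLa$ semantics of the translated formula and comparing with the $\muMLLa$ semantics, using the induction hypothesis for the immediate subformulas — crucially in the form with the distinguished variable, so that when the subformula is evaluated ``at a point $t$'' we instantiate the parameter with $\{t\}$. One must be mildly careful about variable capture: the fresh variables $q_i$, $q$, $r$ introduced in the modal and fixpoint clauses must be chosen outside the free variables of $\phi$ and of the already-translated subformulas, and the substitution $x := q$ must be capture-avoiding; since we have an infinite supply $\Var$ this is unproblematic. One should also check the positivity condition: in $\eta x.\phi$ the variable $x$ occurs only positively in $\phi$, but $\MSOLa$ has full negation so there is no syntactic obstruction to forming $\mathrm{Closed}_\phi(q)$ — positivity is only needed to guarantee the $\muMLLa$ semantics is well-defined (monotone operator), which we are \emph{given}, and the translation faithfully records the resulting least/greatest prefixed/postfixed point.

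The only place that requires genuine (though still small) thought — the ``main obstacle,'' such as it is — is the fixpoint clause, specifically getting the quantification over prefixed points right in the one-sorted setting: the predicate $\mathrm{Closed}_\phi(q)$ must say ``$\phi^{\diamond}$ holds at every point of $q$'' and this has to be phrased via $\forall r.(\Sing(r) \wedge r \subseteq q \to \phi^{\diamond}(r))$ rather than with a first-order quantifier, and one has to verify this genuinely captures $\Vert\phi\Vert_{V[x\mapsto V(q)]} \subseteq V(q)$ — which it does, by the induction hypothesis applied pointwise. Everything else is bookkeeping, and since the proposition only asserts existence of such a translation with the stated equivalence, the argument concludes once the induction is carried through. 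This is exactly why the paper calls the proof ``routine'' and omits it.
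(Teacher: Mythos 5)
Your construction is correct and is exactly the routine inductive translation the paper has in mind (the paper omits this proof, calling it routine): carrying the current point as a singleton parameter, naming the truth sets of subformulas by second-order quantification before applying the $\MSOLa$-atom $\lambda(p,q_1,\dots,q_n)$, and expressing $\mu$/$\nu$ via quantification over pre-/postfixed points all check out, including the treatment of $\lambda^d$ via named complements. When writing it up, make explicit the two pieces of bookkeeping you leave implicit: since $\sr$ never occurs in the translated formulas, their truth depends only on the valuation and not on the distinguished point, which is what licenses applying the induction hypothesis at points $t \neq s$ inside the modal and fixpoint clauses; and the formula asserted by the proposition is obtained at the very end by binding the parameter, taking $\exists p.\,(\sr(p) \wedge \phi^{\diamond}(p))$.
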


Considering the lifting $\Box$ for $\mon$, we introduce the name \emph{monotone monadic second-order logic} for the language $\mathtt{MSO}_{\{\Box\}}$, which we will henceforth denote by $\mathtt{MMSO}$. Note that the atomic formula $\Box(p,q)$ encodes a pattern of quantifier alternation of the form $\forall \exists \forall$: it says that \emph{every} state that satisfies $p$ has \emph{some} neighborhood $Z$ such that \emph{every} state in $Z$ satisfies $q$.

\subsection{Automata for coalgebraic $\MSO$}
\label{sec:aut}

In this section we introduce automata for coalgebraic monadic second-order 
logic, and translate formulas of $\mathtt{MSO}_\Lambda$ into automata operating on tree-like coalgebras.

Since the translation of standard $\mathtt{MSO}$ into parity automata is valid only over trees and not over Kripke frames in general, we should expect that coalgebraic monadic second-order logic similarly translates into parity automata over a restricted class of $\fun$-models that are in some sense ``tree-like''. So we need to start by spelling out exactly what this means.

\begin{defi}
Given a set $S$ and $\alpha \in \fun S$, a subset $X \subseteq S$ is said to
be a \textit{support} for $\alpha$ if there is some $\beta \in \fun X$ with 
$\fun \iota_{X,S}(\beta) = \alpha$. A \textit{supporting Kripke frame} for a 
$\fun$-coalgebra $(S,\sigma)$ is a binary relation $R \subseteq S \times S$ 
such that, for all $u \in S$, $R(u) = \{v \mid u R v\}$ is a support for 
$\sigma(u)$. 
\end{defi}

Given that $X \subseteq S$ is a support for $\alpha \in \fun S$, our convention that $\fun$ preserves all injectives guarantees that there is a \emph{unique} $\alpha' \in \fun X$ such that $\fun \iota_{X,S}(\alpha') = \alpha$. We shall  denote this $\alpha'$ by $\alpha\vert_X$. Intuitively, $X$ is a support for $\alpha$ if we can know all there is to know about $\alpha$ just by looking at its restriction $\alpha\vert_X$ to $X$.

\begin{defi}

A $\fun$-\textit{tree model} is a structure $(\mathbb{S},R,u)$ where 
$\mathbb{S} = (S,\sigma,V)$ is a $\fun$-model and $u \in S$, such that $R$ is 
a supporting Kripke frame for the coalgebra $(S,\sigma)$, and furthermore 
$(S,R)$ is a tree rooted at $u$ (i.e. there is a unique $R$-path from $u$ 
to $w$ for each $w \in S$).
\end{defi}

Our goal is to translate formulas in $\MSOT$ to equivalent automata over 
$\fun$-tree models.
We start by introducing a very general type of automaton, originating 
with~\cite{vene:expr14}. The motivation for taking this general perspective is to emphasize that many automata theoretic concepts and basic results apply already in this setting, and we believe the general automaton concept we introduce here   has some independent theoretical interest.

\begin{defi}
Given a finite set $A$, a \textit{generalized predicate lifting} over $A$ 
comprises an assignment of a map
$$\varphi_X : (\cvp X)^A  \rightarrow \cvp \fun X.
$$
to every set $X$.
Concepts like \emph{boolean dual} and \emph{monotonicity} apply to these 
liftings in the obvious way.
\end{defi}

The difference with respect to standard predicate liftings is that the 
components of a generalized predicate lifting do not need to form a natural 
transformation. Throughout the rest of the paper, the term ``predicate lifting'' will be reserved for the subclass of general predicate liftings that satisfy the naturality constraint, i.e. predicate liftings in the usual sense. 
\footnote{
   In the style of abstract logic, it would make sense to require a
   generalized predicate lifting to be natural with respect to certain 
   maps, the natural choice in this case being bijections. 
   For the purpose of this paper such a restriction is not needed, 
   however.
   }

\begin{defi}
\label{oslanguage}
A \textit{one-step language} $\mathlang{L}$ for a functor $\fun$ consists of a collection 
$\mathlang{L}(A)$ of generalized predicate liftings for $\fun$, for every finite set $A$. 
\end{defi} 
As for ordinary predicate liftings, given a generalized predicate lifting $\varphi$ and a one-step model 
$(X,\alpha,V)$ we sometimes write
$
(X,\alpha,V)\Vdash_1 \varphi $ rather than $\alpha \in \varphi_X(V)
$.

Our automata will be indexed by a (finite) set of variables involved, 
corresponding to the set of free variables of the $\MSOT$-formula.

\begin{defi}
Let $P \subseteq \Var$ be a finite set of variables and let $\mathlang{L}$ be a 
one-step language for the functor $\fun$. 
A \textit{($P$-chromatic) $\mathlang{L}$-automaton} is a structure 
$(A,\Delta,\Omega,a_I)$ where
\begin{itemize}
\item $A$ is a finite set, with $a_I \in A$,
\item $\Omega : A \rightarrow \omega$ is a parity map, and
\item $\Delta : A \times \psf(P) \rightarrow \mathlang{L}(A) $ is the 
  transition map of $\bbA$.
\end{itemize}
\end{defi}

There are two main differences between these automata and the modal automata we have considered in Section \ref{coal-aut}; first, we have dropped the naturality constraint on the one-step language. Second, these automata will run on \emph{$\fun$-tree models} rather than $\fun$-models. 
The \textit{acceptance game} of $\mathbb{A}$ with respect to a $\fun$-tree model 
$(T,R,\sigma,V,u)$ is given by Table~\ref{table:accgame}.
We say that the automaton $\mathbb{A}$ accepts the model $(T,R,\sigma,V,u)$ 
if $\exists$ has a winning strategy in this game (initialized at position
$(a_{I},u)$).

\begin{table}[h]
{\normalsize
\centering
\begin{tabular}{|l|c|l|l|}
\hline
Position & Player & Admissible moves & Priority   \\
\hline
    $(a,s) \in A \times T$
  & $\exists$
  & $\{U : A \to \psf(R(s)) \mid (R(s),\sigma(s)\vert_{R(s)}, U) 
               \Vdash_{1} \Delta(a,V^{\dag}(s))  \}$ & $\Omega(a)$

\\
    $U : A \rightarrow \psf(T)$
  & $\forall$
  & $\{(b,t) \mid t \in U(b) \}$ & $0$
 
\\ \hline
   \end{tabular}
   \caption{\small Acceptance game for $\mathtt{L}$-automata.}
\label{table:accgame}
}
\end{table}

\subsection{Closure properties}

The abstract perspective of generalized predicate liftings is useful for establishing some simple closure properties 
of automata, based on properties of the one-step language. 
The first, easy, results establish sufficient conditions for closure under 
union and complementation.

\begin{prop}
\label{closureunion}
If the one-step language $\mathlang{L}$ is closed under disjunction,
then the class of $\mathlang{L}$-automata is closed under union.
\end{prop}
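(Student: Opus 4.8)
The plan is to prove closure under union by the standard disjoint-union-of-automata construction, lifted to this abstract setting. Given $P$-chromatic $\mathlang{L}$-automata $\bbA = (A,\Delta_A,\Omega_A,a_I)$ and $\bbB = (B,\Delta_B,\Omega_B,b_I)$, I want to build an $\mathlang{L}$-automaton $\bbC$ over the same $P$ such that $\bbC$ accepts a $\fun$-tree model $(T,R,\sigma,V,u)$ iff $\bbA$ accepts it or $\bbB$ accepts it. First I would assume, without loss of generality, that $A$ and $B$ are disjoint, and introduce a fresh initial state $c_I \notin A \cup B$, so that the carrier of $\bbC$ is $C = \{c_I\} \cup A \cup B$. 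For the priority map I set $\Omega_C$ to agree with $\Omega_A$ on $A$ and with $\Omega_B$ on $B$, with $\Omega_C(c_I)$ chosen to be any value that cannot affect infinite plays (e.g.\ $\Omega_C(c_I) = 0$); since $c_I$ is visited at most once in any play this choice is immaterial.

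The key point is the transition map. For $a \in A$ and $Q \subseteq P$ I would like to put $\Delta_C(a,Q) = \Delta_A(a,Q)$, but there is a subtlety: $\Delta_A(a,Q)$ is a generalized predicate lifting over $A$, whereas I need one over $C$. Since $A \subseteq C$, any valuation $U : C \to \psf(X)$ restricts to $U{\restriction}_A : A \to \psf(X)$, and I define the reindexed lifting $\widehat{\varphi}$ over $C$ by $\widehat{\varphi}_X(U) := \varphi_X(U{\restriction}_A)$; this is again a generalized predicate lifting (naturality is not needed and in any case is preserved by such reindexing along the inclusion). I set $\Delta_C(a,Q) = \widehat{\Delta_A(a,Q)}$ for $a \in A$, symmetrically $\Delta_C(b,Q) = \widehat{\Delta_B(b,Q)}$ for $b \in B$, and for the fresh state I set $\Delta_C(c_I,Q) = \widehat{\Delta_A(a_I,Q)} \vee \widehat{\Delta_B(b_I,Q)}$, using the hypothesis that $\mathlang{L}$ is closed under disjunction so that this is indeed a member of $\mathlang{L}(C)$. (Here I am using that a disjunction of generalized predicate liftings is interpreted pointwise: $(X,\alpha,U) \Vdash_1 \varphi \vee \psi$ iff $(X,\alpha,U) \Vdash_1 \varphi$ or $(X,\alpha,U) \Vdash_1 \psi$.)

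It then remains to verify correctness of the acceptance game. The crucial observation is that from position $(c_I,u)$, when $\exists$ picks a valuation $U : C \to \psf(R(u))$ witnessing $(R(u),\sigma(u){\restriction}_{R(u)},U) \Vdash_1 \Delta_C(c_I,V^{\dag}(u))$, she is by construction making a one-step model satisfying $\widehat{\Delta_A(a_I,V^{\dag}(u))}$ or $\widehat{\Delta_B(b_I,V^{\dag}(u))}$; in the former case all subsequent basic positions reachable (via states in $A$, since $\Delta_C$ on $A$-states only mentions $A$-valuations and $\forall$'s moves from $U$ land in $\{(b,t) \mid t \in U(b)\}$) stay within $A \times T$, and the play projects exactly onto a play of the acceptance game of $\bbA$ from $(a_I,u)$ with the same parities (as $\Omega_C$ extends $\Omega_A$); symmetrically for $B$. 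Hence a winning strategy for $\exists$ in $\bbC$ from $(c_I,u)$ restricts to a winning strategy in $\bbA$ from $(a_I,u)$ or in $\bbB$ from $(b_I,u)$, and conversely any such winning strategy extends to one for $\bbC$ by having $\exists$ choose at the first round the disjunct corresponding to the automaton she is simulating. I expect the only mildly delicate point to be the bookkeeping about reindexing generalized predicate liftings along the inclusions $A \hookrightarrow C$ and $B \hookrightarrow C$ and checking it respects the one-step satisfaction relation and commutes with disjunction; everything else is the routine "run one of the two automata, decided in the first move" argument, and the disjointness of $A$ and $B$ guarantees the two simulations never interfere.
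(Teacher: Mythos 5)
Your construction is correct and is exactly the argument the paper intends (the paper states this proposition without proof, treating it as routine): add a fresh initial state whose transition is the disjunction of the two reindexed initial one-step formulas, so that after the first move every play projects onto a play of the acceptance game for $\bbA$ or for $\bbB$ with unchanged priorities. The only point worth flagging is that regarding a generalized lifting over $A$ as one over $C=\{c_I\}\cup A\cup B$ by restricting valuations tacitly assumes that $\mathlang{L}$ is closed under such weakening of the variable set; this closure is not literally part of Definition~\ref{oslanguage}, but the paper's intended construction needs it too, and it holds for all the concrete one-step languages (such as $\SOLa$ and $\MLLa$) to which the proposition is applied.
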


\begin{defi}
An $\mathtt{L}$-automaton $\mathbb{A}$ is said to be \textit{monotone} if, for all $a \in A$ and all colors $c \in \psf(P)$, the generalized lifting $\Delta(a,c)$ is monotone in each variable in $A$.
 \end{defi}

\begin{prop}
\label{closurecomplementation}
If the monotone fragment of the one-step language $\mathlang{L}$ is closed under
boolean duals, then the class of monotone  $\mathlang{L}$-automata is closed under 
complementation.
\end{prop}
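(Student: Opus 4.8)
The plan is to complement a monotone $\mathtt{L}$-automaton $\bbA = (A,\Delta,\Omega,a_I)$ in the standard way one complements parity automata: dualize the transition map pointwise and shift the parity map by one. Concretely, I would define $\bbA^{\partial} = (A,\Delta^{\partial},\Omega^{\partial},a_I)$ where $\Omega^{\partial}(a) \isdef \Omega(a)+1$ and $\Delta^{\partial}(a,c) \isdef (\Delta(a,c))^{d}$, the boolean dual of the generalized lifting $\Delta(a,c)$. The hypothesis that the monotone fragment of $\mathlang{L}$ is closed under boolean duals is exactly what guarantees $\Delta^{\partial}(a,c)$ again lies in $\mathlang{L}(A)$ (and is again monotone, since the dual of a monotone lifting is monotone), so that $\bbA^{\partial}$ is a well-defined monotone $\mathtt{L}$-automaton. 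The shift of $\Omega$ by $1$ ensures that on any infinite match the winner of the parity condition flips.

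The heart of the argument is then the claim that for every $\fun$-tree model $(T,R,\sigma,V,u)$, the automaton $\bbA^{\partial}$ accepts $(T,R,\sigma,V,u)$ if and only if $\bbA$ does not. I would prove this by the usual game-duality argument. First note that the acceptance game $\mathcal{A}(\bbA^{\partial}, (T,R,\sigma,V,u))$ is, position for position, the "role-swapped" version of $\mathcal{A}(\bbA, (T,R,\sigma,V,u))$: at a basic position $(a,s)$ the admissible moves for $\exists$ in the first game are valuations $U$ with $(R(s),\sigma(s)\vert_{R(s)},U) \Vdash_1 \Delta^{\partial}(a,V^{\dag}(s))$, i.e. $\alpha\vert_{R(s)} \in (\Delta(a,V^\dag(s)))^d_{R(s)}(U)$, which by definition of the boolean dual means $\alpha\vert_{R(s)} \notin (\Delta(a,V^\dag(s)))_{R(s)}(U^c)$, where $U^c(b) \isdef R(s)\setminus U(b)$. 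The key combinatorial point — this is the step I expect to be the main obstacle, and where monotonicity is genuinely used — is to convert between $\exists$'s moves in one game and $\forall$'s moves in the other. In the dualized game $\exists$ picks a valuation $U$ over $R(s)$; I want to read this as $\forall$ (in the original game) being unable to refute, along every choice of $(b,t)$ with $t \in U(b)$. Since the automata here are not "plain" parity automata but the co-domain of $\Delta$ is an arbitrary monotone one-step language, one does not get a literal bijection of strategies as in the classical case; instead one uses monotonicity of $\Delta(a,c)$ to argue that $\forall$'s move in the original game (choosing $(b,t)$) can always be matched by a shrinking/enlarging of valuations so that the one-step formula's truth is preserved in the appropriate direction. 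I would phrase this cleanly as: a winning strategy for $\forall$ in $\mathcal{A}(\bbA, \mathbb{S})$ from $(a_I,u)$ transforms into a winning strategy for $\exists$ in $\mathcal{A}(\bbA^{\partial}, \mathbb{S})$ from $(a_I,u)$, and vice versa, by swapping the two players' moves at basic positions and at the intermediate positions, using monotonicity to repair admissibility where a literal swap is not available.

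Finally, having set up the strategy transfer, I would invoke determinacy: the acceptance games are parity games (the range of $\Omega$ is finite, hence so is the range of $\Omega^{\partial}$), hence determined by Positional Determinacy. So exactly one of $\exists,\forall$ wins $\mathcal{A}(\bbA,\mathbb{S})$ from $(a_I,u)$; combining this with the strategy transfer above and the observation that shifting parities by $1$ turns an $\exists$-winning infinite match into a $\forall$-winning one and conversely, we conclude $\bbA^{\partial}$ accepts $\mathbb{S}$ iff $\bbA$ rejects $\mathbb{S}$. The only genuinely delicate point, as flagged, is the admissibility bookkeeping at the round level in the presence of a generalized (non-natural) one-step language; monotonicity of the automaton is precisely the hypothesis that makes this bookkeeping go through, which is why the statement restricts to monotone automata and to the monotone fragment of $\mathlang{L}$ being closed under duals.
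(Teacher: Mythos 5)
Your overall plan is the right one, and in fact it is the standard argument that the paper leaves implicit (it states this proposition without proof, as one of the "easy" closure results): dualize the transition map, shift the priorities by one, and appeal to determinacy of the parity acceptance games. The construction of $\bbA^{\partial}$ is correct, the observation that duals of monotone liftings are monotone is correct, and your unfolding of the dual semantics at a basic position ($\alpha\vert_{R(s)} \notin (\Delta(a,V^{\dagger}(s)))_{R(s)}(U^{c})$) is exactly the right starting point. What is missing is the actual content of the step you yourself flag as the obstacle: "swapping moves and repairing admissibility by monotonicity" is a gesture, not an argument, and as phrased it does not tell you how a choice of a pair $(b,t)$ by $\forall$ is to be turned into a choice of a valuation by $\exists$, or conversely.

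The gap is filled by two concrete ingredients. First, a one-step overlap lemma, which is the only place monotonicity is used: if $\varphi$ is monotone and $(X,\alpha,U) \Vdash_1 \varphi$ while $(X,\alpha,U') \Vdash_1 \varphi^{d}$, then $U(b) \cap U'(b) \neq \emptyset$ for some $b \in A$ (otherwise $U(b) \subseteq X \setminus U'(b)$ for all $b$, and monotonicity gives $\alpha \in \varphi_X((U')^{c})$, contradicting $(X,\alpha,U') \Vdash_1 \varphi^{d}$). Playing a winning $\exists$-strategy for $\bbA$ against a winning $\exists$-strategy for $\bbA^{\partial}$ and using this lemma to pick $\forall$'s answers produces an infinite match whose maximal priority occurring infinitely often would have to be even with respect to both $\Omega$ and $\Omega+1$; hence $\bbA$ and $\bbA^{\partial}$ never both accept. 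Second, for the converse direction you cannot simply "swap": from a positional winning strategy $f$ for $\forall$ in the game for $\bbA$ you must \emph{aggregate} its answers, defining $\exists$'s move at $(a,s)$ in the game for $\bbA^{\partial}$ by $U'(b) \isdef \{\, t \in R(s) \mid f(U) = (b,t) \text{ for some admissible } U\,\}$. Admissibility of $U'$ follows because otherwise $(U')^{c}$ would be admissible for $\exists$ in the original game and $f$'s answer to it would land inside $U'$, a contradiction (no monotonicity is needed here), and every answer $(b,t)$ of $\forall$ to $U'$ extends an $f$-guided shadow match of the original game; it is this shadow match, not the weaker invariant that every basic position reached is winning for $\forall$, that guarantees the parity condition on infinite plays after the shift by one. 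With these two pieces in place, determinacy closes the argument exactly as you indicate.
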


The most interesting property concerns closure under existential projection. 
\begin{defi}
Let $\mathbb{A}$ be a $P$-chromatic $\mathtt{L}$-automaton, and let $p \in P$. Then a $P\setminus\{p\}$-chromatic $\mathtt{L}$-automaton $\mathbb{B}$ is said to be an \emph{existential projection} of $\bbA$ if it accepts a $\fun$-tree model $(T,R,\sigma, V,u)$ precisely when there is some $Z \subseteq T$ such that $\bbA$ accepts $(T,R,\sigma, V[p \mapsto Z],u)$.
\end{defi}

The following notion originates with~\cite{jani:expr96}, but instead
of relying on a particular syntactic shape of one-step formulas as in ~\cite{jani:expr96}, we define
the concepts in purely semantic terms.

\begin{defi}
A generalized predicate lifting $\varphi$ over $A$ is said to be \textit{disjunctive} if,
for every one-step model $(X,\alpha,V)$ such that 
$$(X,\alpha,V)\Vdash_1 \varphi$$
there is a valuation $V^* : A \rightarrow \cvp(X)$ such that
\begin{itemize}
\item $V^\ast(a) \subseteq V(a)$ for each $a \in A$,
\item $V^\ast(a) \cap V^\ast(b) = \emptyset$ whenever $a \neq b$, and
\item $(X,\alpha,V^*) \Vdash_1 \varphi$. 
\end{itemize}
The second clause can also be stated as: $(V^*)^\dagger(u)$ is either empty or a singleton, for all $u \in X$.
Call an $\mathlang{L}$-automaton \textit{non-deterministic} if every generalized lifting 
$\Delta(a,c)$ is disjunctive.
\end{defi}

It is easy to see that if the language $\mathlang{L}$ is closed under 
disjunctions, then so is its disjunctive fragment.
From this we obtain the following.

\begin{prop}
\label{existentialclosure}
If the one-step language $\mathlang{L}$ is closed under disjunction, then the
class of non-deterministic $\mathlang{L}$-automata is closed under existential
projection over $\fun$-tree models. 
\end{prop}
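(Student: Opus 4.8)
The plan is to exhibit, for a given non-deterministic $P$-chromatic $\mathlang{L}$-automaton $\bbA = (A,\Delta,\Omega,a_I)$ and a fixed $p \in P$, a $P\setminus\{p\}$-chromatic automaton $\bbB$ that is an existential projection of $\bbA$ and is itself non-deterministic. The natural candidate is the standard ``projection on the alphabet'' construction: take $\bbB$ to have the same carrier $A$, initial state $a_I$ and parity map $\Omega$, and define its transition map by $\Delta_\bbB(a,c) := \Delta_\bbA(a,c) \vee \Delta_\bbA(a,c \cup \{p\})$ for each $a \in A$ and each $c \subseteq P \setminus \{p\}$. Since $\mathlang{L}$ is closed under disjunction, $\Delta_\bbB(a,c) \in \mathlang{L}(A)$, so $\bbB$ is a well-defined $\mathlang{L}$-automaton; and since the disjunctive fragment is closed under disjunction (as noted just before the statement), $\bbB$ is again non-deterministic. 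So the only real content is the equivalence: $\bbB$ accepts $(T,R,\sigma,V,u)$ iff $\bbA$ accepts $(T,R,\sigma,V[p\mapsto Z],u)$ for some $Z \subseteq T$.

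For the direction from right to left, suppose $\bbA$ accepts $(T,R,\sigma,V[p\mapsto Z],u)$ with winning strategy $\chi$ for $\exists$ in the acceptance game. I would argue that the very same strategy (suitably read) is winning for $\exists$ in the $\bbB$-acceptance game over $(T,R,\sigma,V,u)$: at a basic position $(a,s)$, $\exists$'s move in the $\bbA$-game must satisfy $(R(s),\sigma(s)\vert_{R(s)},U) \Vdash_1 \Delta_\bbA(a, V[p\mapsto Z]^\dagger(s))$, and since $V[p\mapsto Z]^\dagger(s)$ equals either $V^\dagger(s)$ (if $s \notin Z$) or $V^\dagger(s) \cup \{p\}$ (if $s \in Z$) — once we harmlessly assume $V$ gives all variables outside $P \setminus \{p\}$ empty value — this $U$ is a legal $\bbB$-move because $\Delta_\bbB(a, V^\dagger(s))$ is the disjunction of the two disjuncts and $U$ witnesses one of them. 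Infinite matches have the same priority sequences since $\Omega$ is unchanged, so $\exists$ still wins. For the harder direction, left to right, suppose $\exists$ has a winning strategy $\chi$ in the $\bbB$-game over $(T,R,\sigma,V,u)$; by positional determinacy I may take $\chi$ positional, a map on basic positions. For each basic position $(a,s)$ reachable in a $\chi$-guided match, $\chi$ picks $U$ witnessing one of the two disjuncts of $\Delta_\bbB(a,V^\dagger(s))$; I want to define $Z$ so that the ``right'' disjunct is always the one that $V[p\mapsto Z]^\dagger$ selects at $s$, i.e.\ $s \in Z$ exactly when $\chi$ at $(a,s)$ uses the $\Delta_\bbA(a, c \cup \{p\})$ disjunct.

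The main obstacle, and the place where tree-likeness of the model is used essentially, is that the same point $s \in T$ could in principle be visited in two different $\chi$-guided matches with two different automaton states $a \neq a'$, and $\chi$ might resolve the disjunction differently at $(a,s)$ and at $(a',s)$ — so there is no consistent way to decide whether $s \in Z$. This is precisely why the projection theorem only holds over tree models: in a tree, $R$ is the transition relation of a genuine tree with root $u$, so each $s$ has a unique $R$-path from $u$; by an induction along that path one shows that, in any $\chi$-guided match, the automaton's state when it first ``arrives'' at $s$ is determined, and in fact (using that $\forall$ moves to $(b,t)$ with $t \in U(b)$ and $t$ a child of $s$) each node $s$ is visited with at most one automaton state across all $\chi$-guided matches. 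Hence the choice $s \in Z :\Leftrightarrow \chi$ at the unique basic position over $s$ uses the $p$-disjunct is well-defined. With $Z$ so defined, $\chi$ transported to the $\bbA$-game over $(T,R,\sigma,V[p\mapsto Z],u)$ is legal at every reachable basic position, hence winning, completing the proof. I would write this well-definedness lemma carefully as it carries the whole argument; the rest is bookkeeping about matching up $\Vdash_1$ with the disjunction clause and the unchanged parity condition.
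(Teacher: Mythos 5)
Your construction of $\exists p.\bbA$ and the easy (right-to-left) direction match the paper exactly. But the hard direction has a genuine gap: the ``well-definedness lemma'' you lean on --- that for a positional winning strategy $\chi$ in the $\bbB$-game, each tree node $s$ is visited with at most one automaton state across all $\chi$-guided matches --- does \emph{not} follow from tree-likeness alone. At a basic position $(a,s)$ the strategy $\chi$ supplies a valuation $U : A \to \psf(R(s))$, and nothing prevents a child $t$ of $s$ from lying in $U(b_1) \cap U(b_2)$ for distinct states $b_1 \neq b_2$; then $\forall$ can continue with either $(b_1,t)$ or $(b_2,t)$, both $\chi$-guided, and $t$ is visited with two different states. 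The uniqueness of the $R$-path from the root to $s$ gives you uniqueness of the sequence of \emph{tree nodes} in a match reaching $s$, but not of the accompanying automaton states, so your induction along the path breaks at the very first node whose parent's valuation overlaps on it. Without that lemma the set $Z$ is not well defined and the argument collapses.

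This is precisely where the non-determinism hypothesis must be used, and it is the step the paper supplies: since each $\Delta_\bbB(a,c)$ is disjunctive, at every winning position $(a,s)$ the chosen valuation $V_{(a,s)}$ can be replaced by a sub-valuation $V^*_{(a,s)}$ with $V^*_{(a,s)}(b) \subseteq V_{(a,s)}(b)$ and pairwise disjoint values, which is still a legal move (by disjunctiveness) and still winning (shrinking the valuation only removes options for $\forall$). Only \emph{after} this refinement does the induction on tree depth yield the separation property, and then your definition of $Z$ (equivalently the paper's $V'$) goes through with the refined strategy $\chi^*$. You do observe that $\bbB$ is non-deterministic, but you never exploit this in the left-to-right direction --- yet it is exactly the content that carries it; as stated, your proof would equally ``work'' for an arbitrary $\mathlang{L}$-automaton, and the proposition is not true at that level of generality.
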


\begin{proof}
Suppose $\mathbb{A} = (A,\Delta,a_I,\Omega)$ is a non-deterministic
$\mathlang{L}$-automaton for the variable set $P$. 
Define the $P\setminus\{p\}$-chromatic automaton 
$\exists p.\mathbb{A} = (A,\Delta^*,a_I,\Omega)$ by setting
$$
\Delta^*(a,c) = \Delta(a,c) \vee \Delta(a,c\cup\{p\}).
$$
Every $\fun$-tree model accepted by $\mathbb{A}$ is 
also accepted by $\exists p. \mathbb{A}$, since any admissible move for $\exists$ at a position $(a,u)$ in the acceptance game for $\mathbb{A}$ is an admissible move at the same position in the acceptance game for $\exists p. \bbA$ too, by definition of the transition map $\Delta^*$.

Conversely, suppose $\exists p.\mathbb{A}$ accepts some $\fun$-tree 
model $(S,R,\sigma,V,s_I)$. For each winning position $(a,s)$ in the 
acceptance game, let $V_{(a,s)}$ be the valuation chosen by $\exists$ 
according to some given winning strategy $\chi$. Note that we can assume that $\chi$ is a \textit{positional} winning strategy, since $\exists p. \mathbb{A}$ is a parity automaton.
Since disjunctive formulas are closed under disjunctions, the 
automaton $\exists p .\mathbb{A}$ is a non-deterministic automaton, 
and so for each winning position $(a,s)$ there is a valuation $V_{(a,s)}^* : 
A \rightarrow \psf(R(s))$, which is an admissible move for $\exists$, such that 
$V_{(a,s)}^*(b) \subseteq V_{(a,s)}(b)$ and such that for all $b_1 \neq b_2 
\in A$ we have $V_{(a,s)}^*(b_1) \cap V_{(a,s)}^*(b_2) = \emptyset$.
Define the strategy $\chi^*$ by letting $\exists$ choose the valuation 
$V_{(a,s)}^*$ at each winning position $(a,s)$  - this is still a winning 
strategy, since the valuations chosen by $\exists$ are smaller and so no new 
choices for $\forall$ are introduced. Furthermore, $\chi^*$  is clearly still a positional winning strategy.

From these facts it follows by a simple induction on the depth of the nodes 
in the supporting tree that the strategy $\chi^*$ is \textit{separating}, i.e. 
that for every $s \in S$ there is at most one automaton state $a$ such that 
$(a,s)$ appears in a $\chi^*$-guided match of the acceptance game. 
So we can define a valuation $V^\prime$ like $V$ except we evaluate $p$ to be
true at all and only the states $s$ such that 
$$
(R(s),\sigma(s), V^*_{(a_s,s)})\Vdash_1 \Delta(a_s,c \cup \{p\}),
$$
where $a_s$ is a necessarily \emph{unique} automaton state such that $(a,s)$
appears in some $\chi^*$-guided match, and $c$ is the color consisting of the
variables true under $V$ at $s$.
Then $\mathbb{A}$ accepts $(S,R,\sigma,V',s_I)$, by the positional strategy mapping each position $(a,s)$ to the valuation $\chi^*(a,s)$. That is, $\exists$ can play the same strategy $\chi^*(a,s)$, now viewed as a strategy in the acceptance game for $\bbA$.
\end{proof}

\subsection{Second-order automata}

We now introduce a concrete one-step language for a given set functor 
$\fun$ and a given set of monotone predicate liftings $\Lambda$, and show that 
$\MSOLa$ can be translated into the corresponding class of automata. By the closure properties established in the previous section, the language  needs to be closed under disjunction and negation, and furthermore we need to establish that every automaton for this language is equivalent to a non-deterministic automaton.
\begin{defi}
Let $\Lambda$ be a set of monotone predicate liftings for $\fun$.
The set  of \textit{second-order one-step formulas} over any
set of variables $A$ and relative to the set of liftings $\Lambda$
is defined by the grammar:
$$\varphi \isbnf a \subseteq b \divbnf \lambda(a_1,...,a_n) \divbnf \neg \varphi 
   \divbnf \varphi \lor \varphi 
   \divbnf \exists a.\varphi,
$$
where $a,b,a_1,...,a_n \in A$ and $\lambda$ is any predicate lifting in 
$\Lambda$. 
\end{defi}
The semantics of a  second-order one-step formula in a one-step model 
$(X,\alpha,V)$ (with $V : A \to \psf(X)$) is defined by the following clauses:
\begin{itemize}
\item $(X,\alpha,V)\Vdash_1 a \subseteq b$ iff $V(a) \subseteq V(b)$,
\item $(X,\alpha,V)\Vdash_1 \lambda (a_1,...,a_n)$ iff $\alpha \in 
    \lambda_X(V(a_1),...,V(a_n))$,
\item standard clauses for the Boolean connectives,
\item $(X,\alpha,V)\Vdash_1 \exists a. \varphi$ iff 
    $(X,\alpha,V[a\mapsto S])\Vdash_1 \varphi$ for some $S \subseteq X$.
\end{itemize}

Fixing an infinite set of ``one-step variables'' $\mathsf{Var}_1$, and given a finite set
$A$, the set of \textit{second-order one-step sentences} over $A$,
denoted $\mathtt{1SO}_\La(A)$, is the set of one-step formulas over $A \cup \mathsf{Var}_1$, 
with all free variables belonging to $A$. 
\begin{rem}
The difference between the one-step language $\mathtt{1SO}_\La(A)$ and the full language $\MSO_\Lambda$ may seem rather subtle at first sight. It is important to note that an $n$-place predicate lifting now corresponds to an $n$-place predicate in the language, \emph{not} an $n+1$-place predicate as in $\MSO_\Lambda$. While the formula $\lambda(q_1,...,q_n)$ of $\mathtt{1SO}_\La(A)$ expresses a proposition about the specific, fixed object $\alpha $ in a given one-step model, the  formula $\lambda(p,q_1,...,q_n)$ in $\MSO_\Lambda$ rather describes the transition map of a coalgebra as a whole: it says that the condition $\lambda(q_1,...,q_n)$ holds for the unfolding of each state $s$ that satisfies $p$.
\end{rem}
Any second-order one-step  $A$-sentence $\phi$ can be regarded as a generalized 
predicate lifting over $A$, with
$$
\varphi_X(V) = \{\alpha \in \fun X \mid (X,\alpha,V) \Vdash_1 \varphi\}.
$$
In particular, general concepts like monotonicity and closure under boolean duals apply to second-order one-step sentences. Also, note that second-order sentences are invariant under a natural notion of isomorphism:

 \begin{defi}
 An \textit{isomorphism} between two one-step models $(X_1,\alpha_1,V_1)$ and 
 $(X_2,\alpha_2,V_2)$ is a bijection $i : X_1 \to X_2$ such that 
  $\fun i(\alpha_1) = \alpha_2$ and $V_1^\dagger(u) = V_2^\dagger(i(u))$ for each 
  $u \in X_1$.
  \end{defi}
  
 \begin{prop}
  Given any set of predicate liftings $\Lambda$ and a set of variables $A$,
  any two isomorphic one-step models satisfy the same  formulas in the one-step 
  language $\mathtt{1SO}_\Lambda(A)$.
 \end{prop}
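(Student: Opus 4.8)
The plan is to prove the statement by a routine structural induction on one-step formulas. The only point requiring care is that $\mathtt{1SO}_\Lambda(A)$ consists of \emph{sentences}, whereas subformulas of a sentence typically have free variables in $\mathsf{Var}_1$; hence the induction hypothesis must be stated for arbitrary one-step formulas, not just sentences. Concretely I would fix, for a given sentence $\varphi \in \mathtt{1SO}_\Lambda(A)$, a finite set $C$ with $A \subseteq C \subseteq A \cup \mathsf{Var}_1$ containing every variable that occurs in $\varphi$, extend the notion of isomorphism in the evident way to one-step models over $C$, and prove: for every one-step formula $\psi$ over $C$ and every isomorphism $i$ from $(X_1,\alpha_1,V_1)$ to $(X_2,\alpha_2,V_2)$, we have $(X_1,\alpha_1,V_1)\Vdash_1\psi$ iff $(X_2,\alpha_2,V_2)\Vdash_1\psi$. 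Before the induction I would record two trivialities: since $i$ is a bijection, the clause $V_1^\dagger(u) = V_2^\dagger(i(u))$ is equivalent to $V_2(b) = i[V_1(b)]$ (equivalently $V_1(b) = i^{-1}[V_2(b)] = \cvp i(V_2(b))$) for all $b \in C$; and if $i$ is such an isomorphism then so is $i^{-1}$ in the reverse direction, so in each inductive step it suffices to establish one implication.

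For the base case of an atom $a \subseteq b$, the fact that $i$ is a bijection immediately gives $V_1(a) \subseteq V_1(b)$ iff $i[V_1(a)] \subseteq i[V_1(b)]$, i.e. iff $V_2(a) \subseteq V_2(b)$. For a modal atom $\lambda(a_1,\ldots,a_n)$ I would appeal directly to the naturality of $\lambda$ recorded earlier in the paper, namely that $(X,\alpha,\cvp f\circ V)\Vdash_1\lambda$ iff $(Y,\fun f(\alpha),V)\Vdash_1\lambda$ for every $f : X \to Y$. Applying this with $f = i : X_1 \to X_2$ and the valuation $k \mapsto V_2(a_k)$, and using that $(\cvp i\circ V_2)(a_k) = i^{-1}[V_2(a_k)] = V_1(a_k)$ together with $\fun i(\alpha_1) = \alpha_2$, we obtain exactly $(X_1,\alpha_1,V_1)\Vdash_1\lambda(a_1,\ldots,a_n)$ iff $(X_2,\alpha_2,V_2)\Vdash_1\lambda(a_1,\ldots,a_n)$. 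This is the only step where the functor and the naturality of predicate liftings enter; monotonicity plays no role.

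The inductive steps are straightforward. The Boolean connectives are handled directly by the induction hypothesis. For $\exists b.\psi$, suppose $(X_1,\alpha_1,V_1)\Vdash_1\exists b.\psi$, witnessed by some $S \subseteq X_1$ with $(X_1,\alpha_1,V_1[b\mapsto S])\Vdash_1\psi$; set $S' := i[S] \subseteq X_2$. Since $i$ is a bijection, it is still an isomorphism from $(X_1,\alpha_1,V_1[b\mapsto S])$ to $(X_2,\alpha_2,V_2[b\mapsto S'])$ (the coloring condition for $b$ holds because $u \in S$ iff $i(u) \in S'$, and the conditions for all other variables are unchanged), so the induction hypothesis yields $(X_2,\alpha_2,V_2[b\mapsto S'])\Vdash_1\psi$ and hence $(X_2,\alpha_2,V_2)\Vdash_1\exists b.\psi$; the reverse implication follows by the same argument applied to $i^{-1}$. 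Finally, specializing the strengthened statement to the sentence $\varphi$ (whose truth value does not depend on the values assigned to variables of $C \setminus A$) gives the proposition.

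There is really no serious obstacle here; the two things one must not overlook are that the induction has to range over open formulas rather than sentences, and that in the modal case one must get the variance right, so that transporting valuations along $i$ proceeds via preimages ($\cvp i$) — which is exactly what the naturality condition of a predicate lifting supplies.
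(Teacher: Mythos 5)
Your proof is correct, and it is precisely the routine argument the paper has in mind: the proposition is stated without proof, being treated as an immediate consequence of naturality (for the modal atoms) and bijectivity (for the inclusion atoms and the transport of existential witnesses). Your care in strengthening the induction hypothesis to open formulas over $A \cup \mathsf{Var}_1$ and in getting the variance right via $\cvp i$ is exactly what makes the argument go through, so there is nothing to add.
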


We can now introduce second-order automata; since every sentence in $\mathtt{1SO}_\La(A)$ corresponds to a generalized predicate lifting, we may view $\mathtt{1SO}_\La$ (the assignment of the one-step second-order $A$-sentences 
$\mathtt{1SO}_\La(A)$ to every set of variables $A$) as a one-step language in the sense of Definition \ref{oslanguage}. Hence the following definition is sound:

\begin{defi}
Let $\Lambda$ be a set of monotone predicate liftings for $\fun$.
A \textit{second-order $\Lambda$-automaton} is an $\mathtt{1SO}_\La$-automaton.
We write $\Aut(\mathtt{1SO}_\La)$ to denote this class, and $\Aut(\SOT)$ in case $\La$ is
the set of \emph{all} monotone predicate liftings for $\fun$.
\end{defi}

Formulas of $\SOT(A)$ are not in general monotone in the variables $A$ since negation is present in the language (unlike the one-step language $\mathtt{1ML}_\Lambda$). However, a useful trick due to Walukiewicz allows us at any time to safely replace all one-step formulas in an automaton by monotone ones:

\begin{prop}
\label{p:mon-aut}
Let $\La$ be any set of monotone predicate liftings.
Then every second-order automaton $\bbA \in \Aut(\SOLa)$ is equivalent to a monotone second-order
automaton $\bbA \in \Aut(\SOLa)$.
\end{prop}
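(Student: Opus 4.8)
The plan is to adapt Walukiewicz's monotonization trick to the abstract setting of one-step languages, recording which automaton states are allowed to occur positively and which negatively. Given a second-order automaton $\bbA = (A,\Delta,\Omega,a_I) \in \Aut(\SOLa)$, I would build a new automaton $\bbA^{m}$ whose carrier is $A' = A \times \{+,-\}$, where a state $(a,+)$ is intended to behave like $a$ and a state $(a,-)$ like the complement of $a$. The parity map is simply $\Omega'(a,s) = \Omega(a)$, and the initial state is $(a_{I},+)$. The key construction is the transition map: for each one-step sentence $\Delta(a,c) \in \SOLa(A)$ I would define a \emph{monotone} sentence $\Delta'((a,+),c) \in \SOLa(A')$ by pushing all negations inward past the Booleans and the quantifiers (which commute appropriately, replacing $\neg\exists$ by $\forall$) until they sit directly in front of atomic subformulas; an occurrence of an atom $b(\cdots)$ or $b \subseteq b'$ that ends up under an even number of negations is rewritten using the positive state $(b,+)$, and one under an odd number of negations is rewritten using $(b,-)$. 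Dually, $\Delta'((a,-),c)$ is obtained from the Boolean dual (equivalently, from the negation normal form of $\neg\Delta(a,c)$) in the same way. Since the only non-monotone feature of $\SOLa(A)$ is explicit negation, and $\Lambda$ consists of monotone predicate liftings, every atom occurs positively in the resulting sentences, so $\Delta'((a,\pm),c)$ is monotone in each variable of $A'$; hence $\bbA^{m} \in \Aut(\SOLa)$ and is a monotone second-order automaton.

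Next I would prove equivalence of $\bbA$ and $\bbA^{m}$ over $\fun$-tree models by relating their acceptance games. The natural bridge is the observation that, over a fixed one-step model $(X,\alpha,V)$, a move $U : A \to \psf(X)$ witnesses $(X,\alpha,V)\Vdash_1 \Delta(a,c)$ \emph{iff} the ``doubled'' valuation $U' : A' \to \psf(X)$ defined by $U'(b,+) = U(b)$ and $U'(b,-) = X \setminus U(b)$ witnesses $(X,\alpha,U')\Vdash_1 \Delta'((a,+),c)$; this is a routine induction on the syntax of the one-step sentence, using the semantic clauses for $\subseteq$, for $\lambda$ (monotonicity is not even needed here, only the fact that predicate liftings interpret a fixed $\alpha$), for the Booleans, and for $\exists$ (where one uses that $\neg\exists a.\psi$ is semantically $\forall a.\neg\psi$). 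Given this, a winning strategy for $\exists$ in the acceptance game for $\bbA$ on a tree model $(T,R,\sigma,V,u)$ is transformed into a winning strategy for $\exists$ in the game for $\bbA^{m}$: at a basic position $(a,+)$ at a node $s$ with $\exists$'s chosen valuation $U$, she plays the doubled valuation $U'$ restricted to $R(s)$; at a basic position $(a,-)$ at $s$ she plays a valuation witnessing $\Delta'((a,-),c)$, which exists by a symmetric argument involving the Boolean dual. One then checks that any $\forall$-move $(b,\pm,t)$ from such a position leads to a position whose truth value is again ``correct'', and that the resulting match projects to a $\chi$-guided match of the original game with the same sequence of priorities (since $\Omega'(a,\pm)=\Omega(a)$), so it is won by $\exists$. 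The converse direction — a winning strategy for $\bbA^{m}$ yielding one for $\bbA$ — is entirely analogous, collapsing a doubled valuation back to its positive part.

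I expect the main obstacle to be bookkeeping in the negation-pushing step and in verifying that the two ``polarities'' of automaton states stay consistent along a match: one must make sure that whenever the game for $\bbA^{m}$ reaches a basic position $(b,-)$ at a node $t$, the node $t$ is genuinely \emph{outside} the value that $\exists$'s strategy in the game for $\bbA$ would assign to $b$, so that $\exists$ can indeed satisfy $\Delta'((b,-),c)$ there. This is where the requirement that $\chi$ (and the transported strategy) be \emph{positional} is used — positional determinacy of parity games lets us speak of ``the valuation $\exists$ plays at a given basic position'' unambiguously, and the polarity tag $(b,\pm)$ then simply tracks whether we are asserting membership in, or in the complement of, that valuation. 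Apart from this, everything is a straightforward structural induction; no functor-specific input beyond monotonicity of $\Lambda$ is needed, and the construction is evidently effective. Finally, I would remark that the statement as printed should read ``equivalent to a monotone second-order automaton $\bbA^{m} \in \Aut(\SOLa)$'', the two occurrences of $\bbA$ being a typo.
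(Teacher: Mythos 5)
Your construction $\bbA^{m}$ is not equivalent to $\bbA$, so the core of the argument fails. The claimed projection of $\bbA^{m}$-matches onto $\chi$-guided matches of the original game breaks down the moment a match enters a negative state: if $\forall$ moves to a position $((b,-),t)$ with $t \in R(s)\setminus U(b)$, then $(b,t)$ is in general not reachable in any $\chi$-guided match of the original game, and the fact that your strategy left $t$ out of $U(b)$ does not mean that the ``complement assertion'' at $(b,t)$ is defensible --- $U(b)$ is just one admissible choice of $\exists$, not the exact set of points accepted from $b$. A concrete counterexample: take $\fun=\psf$, $\La=\{\Diamond\}$, and the one-state automaton with $\Delta(a,c)=\neg\Diamond(a)$, $\Omega(a)=0$. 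It accepts \emph{every} tree model, since $\exists$ may always play $U(a)=\emptyset$ and leave $\forall$ stuck. In $\bbA^{m}$, however, $\Delta'((a,+),c)$ asserts (under your doubling) that every successor lies in $(a,-)$, and $\Delta'((a,-),c)$ is $\Diamond((a,+))$; so on a tree whose root has a leaf as a child, $\forall$ moves to that leaf in state $(a,-)$ and $\exists$ is stuck. Hence ``$\bbA$ accepts $\Rightarrow \bbA^{m}$ accepts'' already fails, independently of priorities (and note that keeping $\Omega'(a,-)=\Omega(a)$ could not be right anyway: an infinite match staying on the negative side would have to be judged by the complemented parity condition). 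There is also a gap in the monotonization itself: atoms of $\SOLa$ are not monotone even before negations are considered --- $b\subseteq b'$ is antitone in $b$ --- and a negated lifting atom $\neg\lambda(b_1,\ldots,b_n)$ cannot be relabelled as $\lambda((b_1,-),\ldots,(b_n,-))$; what you need is the Boolean dual $\lambda^{d}$, which is not part of $\SOLa$, so these atoms would have to be genuinely re-expressed (e.g.\ via quantification over singletons, using monotonicity of $\lambda$), not just re-tagged.

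The paper's proof uses an entirely different and much simpler trick, which exploits the fact that the acceptance game uses $\exists$'s valuation only positively: keep the automaton as it is and replace each $\Delta(a,c)$ by $\exists z_1\ldots\exists z_k.\, z_1\subseteq a_1\wedge\ldots\wedge z_k\subseteq a_k\wedge\Delta(a,c)[(z_i/a_i)_{i}]$, where $A=\{a_1,\ldots,a_k\}$. This formula is monotone in the $a_i$ because they occur only in the conjuncts $z_i\subseteq a_i$, and the automaton is unchanged in power: any admissible $U$ for the old formula is admissible for the new one (take $z_i$ to be $U(a_i)$), and conversely the witnesses $z_i$ give a smaller admissible valuation in the old game, which only removes options for $\forall$. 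Your closing remark about the typo (the second occurrence of $\bbA$ should be a different name, e.g.\ $\bbA^{m}$) is correct.
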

\begin{proof}
Enumerate $A$ as $\{a_1,...,a_k\}$, and just replace each formula $\Delta(a,c)$ 
by
$$
\exists z_1 ... \exists z_k. z_1 \subseteq a_1 \wedge ... \wedge 
  z_k \subseteq a_k \wedge \Delta(a,c)[(z_i / a_i)_{i \in \{1,...,k\}}]
$$
where $\Delta(a,c)[(z_i/a_i)_{i \in \{1,...,k\}}]$ is the result of substituting the variable
$z_i$ for each open variable $a_i$ in $\Delta(a,c)$. 
This new formula is monotone in the variables $A$ and the resulting automaton
is equivalent to $\mathbb{A}$.
\end{proof}

Our aim in this section is to show that every formula of $\mathtt{MSO}_\Lambda$ is equivalent to a monotone second-order $\Lambda$-automaton, and we shall prove this by induction on the complexity of a formula. By the previous proposition, it suffices at each step to prove the existence of a second-order $\Lambda$-automaton equivalent to the given formula, since this automaton is then guaranteed to be equivalent to a monotone one. So the induction proceeds as follows: first, we produce second-order automata equivalent to all atomic formulas. By Proposition \ref{p:mon-aut} this gives us monotone second-order automata for all atomic formulas. Then, for the inductive steps, we assume that we are given monotone second-order automata equivalent to $\varphi$ and $\psi$ respectively. Our aim is to produce second-order automata that are equivalent to $\varphi \vee \psi$, $\neg \varphi$ and $\exists p. \varphi$, and once this is established we can apply Proposition \ref{p:mon-aut} again to finish the induction.

For the atomic formulas, we present only the case of formulas of the form $\lambda(p,q_1,...,q_n)$. First, note that by naturality of $\lambda$, we have for any given $\fun$-tree model $(\mathbb{S},R,s) $ that  $(\mathbb{S},R,s) \Vdash \lambda(p,q_1,...,q_n)$ if, and only if, for each $u \in V(p)$:
$$\sigma(s)\vert_{R(s)} \in \lambda_{R(s)}(V(q_1)\cap R(s),...,V(q_n)\cap R(s))$$
With this in mind, it is not hard to check that the following automaton $\mathbb{A} = (A,\Delta,a_I,\Omega)$ is equivalent to $\lambda(p,q_1,...,q_n)$ over $\fun$-tree models, where:
\begin{itemize}
\item $A = \{a_I,b_1,...,b_n\}$
\item $\Omega(a_I) = \Omega(b_1) = ... = \Omega(b_n) = 0$
\item $\Delta(a_I,c) =  \left \{ \begin{array}{ll} \forall z. z \subseteq a_I & \text{ if } p \notin c \\
 \lambda(b_1,...,b_n) \wedge \forall z. z \subseteq a_I & \text{ if }p \in c 
\end{array}\right \} $
\item  $\Delta(b_i,c) = \left \{ \begin{array}{ll} 
\bot & \text{ if } q_i \notin c \\
 \top & \text{ if } q_i \in  c 
\end{array}\right \}$
\end{itemize}
Intuitively, this automaton goes on indefinitely searching the underlying tree of the model for states that satisfy $p$, and whenever it finds such a state $u$ it checks whether $u \in \lambda_S(V(q_1),...,V(q_n))$.

For the induction, we treat the Boolean cases first: the one-step language is clearly closed under disjunction, so second-order $\Lambda$-automata are closed under union by Proposition \ref{closureunion}. For negation, suppose we have a monotone second-order automaton $\bbA$ equivalent to $\varphi$. It is not hard to show that for any monotone one-step formula we can define its Boolean dual in the one-step language, and furthermore the dual is again monotone. Hence, since $\bbA$ was monotone, we find an automaton $\mathbb{B}$ equivalent to $\neg \varphi$ by Proposition \ref{closurecomplementation}. To complete the translation of $\mathtt{MSO}_\Lambda$ into second-order $\Lambda$-automata we now only need to prove closure under existential projection.

\subsection{A coalgebraic simulation theorem}

In this section we prove that (monotone) second-order $\Lambda$-automata are closed under existential projection. The key to this result is to prove a simulation theorem, showing that every second-order automaton is equivalent to a non-deterministic one. The result then follows from Proposition \ref{existentialclosure}.

The intuition behind the simulation theorem is the same as that behind the 
standard ``powerset construction'' for word automata: the states of the new 
non-deterministic automaton $\mathbb{A}_n$ are ``macro-states'' representing
several possible states of $\mathbb{A}$ at once. 
Formally, the states of $\mathbb{A}_n$ will be binary relations over $A$, and 
given a macro-state $R$, its range gives an exact description of the states in
$\mathbb{A}$ that are currently being visited simultaneously. It is safe to 
think of the macro-states as subsets of $A$, however: the only reason that we
have binary relations over $A$ as states rather than just subsets is to have 
a memory device so that we can keep track of traces in infinite matches. 
For each macro-state $R$ and each colour $c$ we want to be able to say that the 
one-step formulas corresponding to each state in the range of $R$ hold, so we
want to translate the one-step formulas over $A$ into one-step formulas over 
the set of macro-states. In order to translate a formula $\Delta(a,c)$ to a 
new one-step formula with macro-states as variables, we have to replace the 
variable $b$ in $\Delta(a,c)$ with a new variable that acts as a stand-in for
$b$. For this purpose we introduce a new, existentally quantified variable 
$z_b$, together with a formula stating explicitly that $z_b$ is to represent
a ``disjunction'' of all those macro states that contain $b$. 
Furthermore we want all the one-step formulas to be disjunctive, and for
this purpose we simply add a conjunct ``$\mathsf{Disj}$'' to each one-step
formula, stating that the values of any pair of distinct variables appearing
in the formula are to be disjoint.  Finally, in order to turn $\mathbb{A}_n$
into a parity automaton, we use a stream automaton to detect bad traces (see
for instance \cite{vene:lect12} for the details in a more specific case).

\begin{thm}[Simulation]
\label{simulationtheorem}
Let $\La$ be a set of monotone predicate liftings for $\fun$.
For any monotone automaton $\mathbb{A} \in \Aut({\SOLa})$ there exists an equivalent
non-deterministic $\mathbb{A}'\in \Aut({\SOLa})$.
\end{thm}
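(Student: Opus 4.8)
The plan is to carry out a coalgebraic version of the powerset construction described in the paragraph preceding the theorem statement, organized around a two-step argument. First I would fix a monotone $\SOLa$-automaton $\bbA = (A,\Delta,\Omega,a_I)$ and define the macro-state automaton $\bbA_n$ whose carrier consists of binary relations $R \subseteq A \times A$; the range of such an $R$ records the set of $\bbA$-states currently active, while the full relational structure is retained only as a bookkeeping device for traces. For each macro-state $R$ and colour $c$, I would build the new transition formula by taking, for each $a$ in the range of $R$, a copy of $\Delta(a,c)$ in which every state-variable $b \in A$ is replaced by a freshly quantified variable $z_b$, and then conjoining (i) a formula $z_b \equiv \bigvee\{Q : (b,\cdot)\text{-related in }Q\}$ forcing $z_b$ to denote the union of the values of exactly those macro-states that ``contain'' $b$ in the appropriate slot, and (ii) a disjointness conjunct $\mathsf{Disj}$ asserting that the values of distinct macro-state variables occurring in the formula are pairwise disjoint. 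Since $\SOLa$ has full second-order quantification, subset atoms, negation and disjunction available, all of these auxiliary formulas are expressible in $\mathtt{1SO}_\La(\psf(A\times A))$, so $\bbA_n$ is a genuine $\SOLa$-automaton, and by construction each of its transition formulas is disjunctive, i.e. $\bbA_n$ is non-deterministic.

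Second I would handle the parity condition. The naive priority map on macro-states does not track which $\bbA$-traces survive along an infinite branch, so — exactly as in the classical case — I would compose with a deterministic parity stream automaton (obtained via the Simulation of Stream Automata fact) that reads the sequence of macro-states along a branch and detects the presence of a ``bad'' $\bbA$-trace, i.e. an infinite sequence $a_0 a_1 a_2 \dots$ of $\bbA$-states, linked by the successive macro-relations, whose highest infinitely-recurring $\Omega$-value is odd. Taking the synchronous product of the macro-state transition structure with this deterministic trace-checker, and reading the parity off the stream-automaton component, yields the final non-deterministic $\SOLa$-automaton $\bbA'$; one has to check that forming this product does not destroy disjunctiveness of the one-step formulas, which is immediate since the product only refines the state set and the priority map, leaving the $\psf$-valued ``next-state'' part of each one-step formula untouched up to renaming.

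The correctness proof — that $\bbA'$ accepts exactly the $\fun$-tree models accepted by $\bbA$ — is where the real work lies, and I expect it to be the main obstacle. One direction, that a winning $\exists$-strategy in the $\bbA'$-game induces one in the $\bbA$-game, is the easier ``projection'' direction: from a macro-state play one reads off a bundle of $\bbA$-plays, the trace-checker component guarantees every infinite branch of this bundle is good, and König-style / positional-determinacy arguments let $\exists$ pick out a single winning $\bbA$-play. The converse direction, lifting a winning $\exists$-strategy in the $\bbA$-game to the $\bbA'$-game, is subtler: $\exists$ must, at a macro-state position $(R,s)$, produce a \emph{single} valuation over macro-states that simultaneously witnesses $\Delta(a,c)$ for every $a$ in $\mathrm{ran}(R)$, and here is the one genuinely coalgebraic point — the witnessing one-step valuations $U_a : A \to \psf(R(s))$ chosen by $\exists$ for the various $a$ must be \emph{amalgamated} into one macro-valuation, and the disjointness conjunct $\mathsf{Disj}$ must be satisfiable. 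This is precisely where monotonicity of $\bbA$ is used: since each $\Delta(a,c)$ is monotone in the variables $A$ (this is why the hypothesis is ``monotone automaton'' and why Proposition~\ref{p:mon-aut} was proved first), $\exists$ may first \emph{shrink} her witnessing valuations so that the active-state sets along different successors become disjoint, and then glue them, defining the macro-valuation at a successor $t$ by the relation $\{(a,b) : t \in U_a(b)\}$. Checking that this glued valuation does satisfy the translated, $\mathsf{Disj}$-decorated one-step formula, and that the induced macro-state play feeds the trace-checker only good streams (so that the parity condition of $\bbA'$ is met), is the heart of the proof; it should go through by a careful but routine induction on the tree, using that the model is an $\fun$-tree model so that the supports $R(s)$ of distinct nodes are disjoint, which is exactly what makes the separation/gluing step legitimate.
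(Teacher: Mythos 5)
Your construction is essentially the paper's: the same macro-state automaton over $\psf(A\times A)$, with a $\mathsf{Disj}$ conjunct and existentially quantified stand-in variables forced (by a ``representation'' formula) to denote the unions of the values of the appropriate macro-states, followed by composition with a deterministic parity stream automaton that detects bad traces, reading priorities off that component; the equivalence argument you sketch is also the intended one (the paper explicitly leaves it to the reader as routine). The only wobble is your aside that monotonicity lets $\exists$ \emph{shrink} her witnessing valuations---monotonicity preserves satisfaction under enlarging, not shrinking---but no shrinking is in fact needed, since your gluing $t \mapsto \{(a,b) : t \in U_a(b)\}$ already assigns each successor to exactly one macro-state variable, making $\mathsf{Disj}$ automatic while giving each stand-in variable exactly the value $U_a(b)$.
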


Fix an automaton $\bbA = (A,\Delta,a_I,\Omega)$. We consider the set $\psf(A \times A)$ of binary relations over $A$ as a set of variables.
Let
$$\mathsf{Disj} : = \bigwedge_{R \neq R' \subseteq A\times A} \forall x . (x \subseteq R \wedge x\subseteq R') \rightarrow \mathsf{Em}(x)$$
This formula says that any two distinct relations, viewed as variables, are given disjoint values. Note that the formula does \emph{not} express the obviously contradictory condition that any two relations $R \neq R' \subseteq A \times A$  themselves are disjoint! Relations over $A$ are here being treated as formal variables, and the formula is true in any one-step model $(X,\alpha,V)$ such that $V(R) \cap V(R') = \emptyset$ whenever $R \neq R'$.

Pick a fresh variable $z_a$ for each $a \in A$. Given a 1-step formula $\varphi$, let
$$\varphi[(z_a/ a)_{a \in A}]$$ be the result of substituting $z_a$ for each free variable $a \in A$ in $\varphi$.
If we enumerate the elements of $A$ as $a_1,...,a_k$, we now define the formula $\mathsf{Sim}(\varphi,b)$ for $b \in A$ to be
\begin{displaymath}
 \exists z_{a_1}...\exists z_{a_k}. \bigwedge_{1 \leq i \leq k}   \mathsf{Rep}(z_{a_i},\{R^\prime \mid (b,a_i) \in R^\prime \}) \wedge 
 \varphi[(z_a/ a)_{a \in A}] )
\end{displaymath}
where $\mathsf{Rep}(z_{a_i},\{R^\prime \mid (b,a_i) \in R^\prime \})$ is the formula:
$$ \forall x. \mathsf{Sing}(x) \rightarrow (x \subseteq z_{a_i} \leftrightarrow \bigvee \{x \subseteq R' \mid (b,a_i) \in R' \})$$
In informal terms, the formula  $\mathsf{Rep}(z_{a_i},\{R^\prime \mid (b,a_i) \in R^\prime \})$ says that the variable $z_{a_i}$ represents a disjunction of all the macro-states that contain $a_i$. The formula $\mathsf{Sim}(\varphi,b)$ can thus be thought of as reformulating the formula $\varphi$ in terms of macro-states.
 
Let $\mathbb{A} = (A,\Delta,a_I,\Omega)$ be any monadic $\Lambda$-automaton. We can assume w.l.o.g. that $\mathbb{A}$ is monotone. We first construct the automaton $\mathbb{A}_n = (A_n,\Delta_n,a_I^*,F)$ with a non-parity acceptance condition $F \subseteq (A_n)^\omega$ as follows:
\begin{itemize}
\item $A_n = \psf(A \times A)$
\item $\Delta_n(R,c) = \mathsf{Disj} \wedge  \bigwedge_{b \in \pi_2[R]}\mathsf{Sim}(\Delta(b,c),b)$
\item $a_I^\ast = \{(a_I,a_I)\}$
\item $F$ is the set of streams over $\psf(A \times A)$ with no bad traces.
\end{itemize}
Here, $\pi_2$ is the second projection of a relation $R$ so that
$\pi_2[R]$ denotes the range of $R$. A \textit{trace} in a stream
$(R_1,R_2,R_3,...)$ over $\psf(A \times A)$ is a stream
$(a_1,a_2,a_3,...)$ over $A$ with $a_1 \in \pi_2[R_1]$ and
$(a_{j-1},a_j) \in R_j$ for $j > 1$. A trace is \textit{bad} if the
greatest number $n$ with $\Omega(a_i) = n$ for infinitely many $a_i$
is odd.\newpage

\begin{prop}
The automaton $\mathbb{A}_n$ is  disjunctive.
\end{prop}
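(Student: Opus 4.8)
The plan is to show that every one-step formula $\Delta_n(R,c)$ appearing in the transition map of $\mathbb{A}_n$ is disjunctive, in the sense of the semantic definition given earlier: whenever a one-step model $(X,\alpha,V)$ satisfies $\Delta_n(R,c)$, there is a refinement $V^*$ of $V$ with pairwise disjoint values that still satisfies $\Delta_n(R,c)$. Recall $\Delta_n(R,c) = \mathsf{Disj} \wedge \bigwedge_{b \in \pi_2[R]} \mathsf{Sim}(\Delta(b,c),b)$, where the variables are the macro-states $R' \subseteq A \times A$, and each $\mathsf{Sim}(\Delta(b,c),b)$ existentially quantifies fresh variables $z_a$ constrained by $\mathsf{Rep}$ to denote the union of those macro-states containing $a$, with $\Delta(b,c)$ then evaluated with $z_a$ in place of $a$.

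First I would fix a one-step model $(X,\alpha,V)$ with $(X,\alpha,V) \Vdash_1 \Delta_n(R,c)$. The key observation is that the conjunct $\mathsf{Disj}$ already forces $V(R') \cap V(R'') = \emptyset$ for all distinct macro-states $R' \neq R''$; so the values of the macro-state variables are \emph{already} pairwise disjoint, and there is nothing to refine. Thus I would simply take $V^* = V$ (restricted to the variables actually occurring in $\Delta_n(R,c)$, namely the macro-states; any other variable can be sent to $\emptyset$), and the three conditions in the definition of ``disjunctive'' hold trivially: $V^*(R') \subseteq V(R')$, the values are pairwise disjoint by $\mathsf{Disj}$, and $(X,\alpha,V^*) \Vdash_1 \Delta_n(R,c)$ since $V^* = V$ on all relevant variables. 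The one subtlety worth spelling out is that the existentially quantified variables $z_a$ inside each $\mathsf{Sim}$ are \emph{bound}, not free, so they do not count among the variables that must be given disjoint values — the disjointness requirement in the definition of a disjunctive lifting concerns only the free variables (the states of the automaton, here the macro-states). Hence the presence of the $z_a$'s, whose intended values overlap heavily, is harmless.

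The only real content, then, is checking that the semantic definition of ``disjunctive'' is being applied correctly, in particular that it ranges over free variables and that $\mathsf{Disj}$ is a conjunct of $\Delta_n(R,c)$ by construction; both are immediate from the definitions above. I expect no genuine obstacle here — the construction of $\mathbb{A}_n$ was deliberately set up with the conjunct $\mathsf{Disj}$ precisely so that this proposition becomes essentially automatic; the ``hard part'' of the simulation theorem lies elsewhere (in the correctness proof, Theorem~\ref{simulationtheorem}, which must relate bad traces to the parity condition and verify acceptance is preserved), not in this disjunctiveness claim. So the proof is a short unravelling of definitions: note $\mathsf{Disj}$ is a conjunct, conclude the free-variable values are disjoint, take $V^* = V$.
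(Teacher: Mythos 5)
Your proposal is correct and is essentially the paper's own proof: the $\mathsf{Disj}$ conjunct already forces the free (macro-state) variables to have pairwise disjoint values in any satisfying one-step model, so one can take $V^* = V$ and the disjunctiveness condition holds immediately. Your extra remark that the existentially quantified variables $z_a$ are bound and hence irrelevant to the disjointness requirement is a correct clarification that the paper leaves implicit.
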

\begin{proof}
 Since every formula $\Delta(R,c)$ contains the formula $\mathsf{Disj}$ as a conjunct, any one-step model $(X,\alpha,V)$ that satisfies $\Delta(R,c)$ must be such that $V(R) \cap V(R') = \emptyset$ whenever $R \neq R'$. Clearlty, this means that each formula $\Delta(R,c)$ is disjunctive.
\end{proof}
The following lemma is routine to check, so we leave it to the reader:
\begin{lem}
$\mathbb{A}_n$ accepts precisely the same $\fun$-tree models as $\mathbb{A}$.
\end{lem}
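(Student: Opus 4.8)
We show $\mathbb{A}_n$ and $\mathbb{A}$ accept the same $\fun$-tree models by proving the two inclusions separately, each time converting a winning strategy for $\exists$ in one acceptance game into one for the other while maintaining a running correspondence between the matches. Fix a $\fun$-tree model; write $E$ for its supporting frame, $E(s)$ for the set of $E$-successors of a node $s$, and $c_s$ for the colour $V^\dagger(s)$, reserving $R,R'$ for macro-states (elements of $\psf(A\times A)$).

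\emph{From $\mathbb{A}$ to $\mathbb{A}_n$.} Since the acceptance game of $\mathbb{A}$ is a parity game, take a \emph{positional} winning strategy $\chi$ for $\exists$ from $(a_I,u)$. I build a strategy $\chi^*$ for $\exists$ in the $\mathbb{A}_n$-game from $(\{(a_I,a_I)\},u)$, maintaining the invariant that along any $\chi^*$-guided partial match ending in a basic position $(R,s)$, every position $(b,s)$ with $b\in\pi_2[R]$ is winning for $\exists$ in the $\mathbb{A}$-game (the first coordinate of $R$ merely records the automaton state held at the parent of $s$). At such $(R,s)$ put $U_b:=\chi(b,s):A\to\psf(E(s))$ for each $b\in\pi_2[R]$, and let $\exists$ play the valuation $W$ that groups successors according to which $\mathbb{A}$-traces run through them, namely $W(R'):=\{\,t\in E(s)\mid R'=\{(b,a)\mid b\in\pi_2[R],\ t\in U_b(a)\}\,\}$. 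The sets $W(R')$ are pairwise disjoint by construction, so $\mathsf{Disj}$ holds in the one-step model $(E(s),\sigma(s)\vert_{E(s)},W)$; and for each $b\in\pi_2[R]$, witnessing $z_{a_i}$ by $U_b(a_i)$ makes $\mathsf{Rep}$ true (a singleton $\{t\}$ lies in $U_b(a_i)$ precisely when $t$ lies in some $W(R')$ with $(b,a_i)\in R'$) and makes $\Delta(b,c_s)[(z_a/a)_{a\in A}]$ true, the latter because $U_b$ was a legal move for $\exists$ in the $\mathbb{A}$-game; hence $W$ is admissible. After $\forall$ picks any $(R',t)$, the invariant survives because $\chi$ is a winning strategy. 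Finally, any trace $(a_1,a_2,\ldots)$ through the macro-states of a $\chi^*$-guided infinite match unwinds, by the definition of $W$, into a $\chi$-guided infinite $\mathbb{A}$-match $(a_1,s_1)(a_2,s_2)\cdots$ starting at $(a_I,u)$; that match is won by $\exists$, so its greatest recurring priority is even and the trace is not bad. Thus every $\chi^*$-guided infinite match has its macro-state stream in $F$, and $\mathbb{A}_n$ accepts.

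\emph{From $\mathbb{A}_n$ to $\mathbb{A}$.} Take a winning strategy $\chi^*$ for $\exists$ in the $\mathbb{A}_n$-game from $(\{(a_I,a_I)\},u)$ — the game is determined by Borel determinacy — and define a (history-dependent) strategy for $\exists$ in the $\mathbb{A}$-game by running a shadow $\chi^*$-guided $\mathbb{A}_n$-match in parallel, with the invariant that an $\mathbb{A}$-match ending at $(a,s)$ is shadowed by one ending at $(R,s)$ with $a\in\pi_2[R]$. At $(a,s)$, let $W$ be the admissible move prescribed by $\chi^*$ at $(R,s)$; as $a\in\pi_2[R]$, the formula $\mathsf{Sim}(\Delta(a,c_s),a)$ is a conjunct of $\Delta_n(R,c_s)$, so $W\Vdash_1\mathsf{Sim}(\Delta(a,c_s),a)$ and there are witness sets $Z_{a_1},\ldots,Z_{a_k}\subseteq E(s)$ for the $z_{a_i}$ making $\mathsf{Rep}$ and $\Delta(a,c_s)[(Z_{a_i}/a_i)]$ true; hence $\exists$ can legally play $a_i\mapsto Z_{a_i}$ in the $\mathbb{A}$-game. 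If $\forall$ then moves to $(a',t)$ with $t\in Z_{a'}$, evaluating $\mathsf{Rep}(z_{a'},\{R'\mid(a,a')\in R'\})$ at the singleton $\{t\}$ yields some $R^*$ with $(a,a')\in R^*$ and $t\in W(R^*)$; taking $(R^*,t)$ as $\forall$'s shadow move keeps the shadow match $\chi^*$-guided and preserves the invariant. An infinite $\mathbb{A}$-match so produced is shadowed by an infinite $\chi^*$-guided $\mathbb{A}_n$-match whose macro-state stream has the automaton-state sequence of the $\mathbb{A}$-match as a trace (by the choices $R^*$); since that stream lies in $F$, the trace is not bad, so the $\mathbb{A}$-match is won by $\exists$, and the constructed strategy is winning.

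The one genuinely essential ingredient — as opposed to bookkeeping — is the use of \emph{positional} determinacy of the parity game for $\mathbb{A}$ in the first direction: without it the move $U_b=\chi(b,s)$ would depend on an $\mathbb{A}$-history that has been discarded when passing to the macro-state game. The rest, which is why the authors call the lemma routine, consists in spelling out the semantics of $\mathsf{Disj}$, $\mathsf{Sim}$ and especially $\mathsf{Rep}$ under the valuations above; checking that $\exists$ never gets stuck in either game while positions where $\forall$ gets stuck only favour $\exists$, so that both constructed strategies are total; and verifying that the unwinding and shadowing correspondences are correctly initialised at the macro-state $\{(a_I,a_I)\}$ versus the state $a_I$.
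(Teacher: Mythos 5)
Your proof is correct, and since the paper explicitly leaves this lemma to the reader as routine, what you have written is precisely the intended argument: a macro-state simulation in which, going from $\mathbb{A}$ to $\mathbb{A}_n$, a \emph{positional} winning strategy for $\exists$ is bundled into the disjoint valuation $W$ (so that $\mathsf{Disj}$, $\mathsf{Rep}$ and the substituted transition formulas hold, and every trace through the macro-states unwinds into a $\chi$-guided match, hence is not bad), while the converse direction uses the same shadow-match bookkeeping the paper itself employs in its bisimulation-invariance proofs, with $\mathsf{Rep}$ supplying the macro-state $R^*$ that keeps the state sequence of the $\mathbb{A}$-match a trace of the shadow stream. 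Your closing remark correctly isolates positional determinacy of the parity acceptance game for $\mathbb{A}$ as the one non-bookkeeping ingredient, which is indeed why the construction works even though $\mathbb{A}_n$ itself has a non-parity acceptance condition.
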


\begin{proof}[Proof of Theorem \ref{simulationtheorem}]
The only thing left to do at this point is to transform the automaton $\mathbb{A}_n$ into an equivalent one that has its acceptance condition given by a parity map.  We make use of the following proposition (see \cite{vene:lect12}):
\begin{claim}
\label{p:omregular}
The set $F$ of streams over $\psf(A \times A)$ that contain no bad traces w.r.t. the parity map $\Omega$ is an $\omega$-regular stream language over the alphabet $\psf(A \times A)$.
\end{claim}
By Claim \ref{p:omregular}, there exists a deterministic parity stream automaton
$$\mathbb{Z} = (Z,\delta,z_I,\Omega_z)$$
that recognizes the language $F$, with $\delta : Z \times \psf(A \times A) \rightarrow Z$. We now construct the automaton
$$\mathbb{A}_n \odot \mathbb{Z} = (A_n',\Delta_n',a_I',\Omega_n')$$
as follows:
\begin{itemize}
\item $A_n' = A_n \times Z$
\item $a_I' = (a_I^*,z_I)$
\item $\Omega_n ' (R,z) = \Omega(z)$
\item $\Delta_n'((R,z),c) = \Delta_n(R,s)[ ((R',\delta(R,z))/ R')_{R' \in \psf(A \times A)}]$
\end{itemize}
It is not difficult to check that
$\mathbb{A}_n \odot \mathbb{Z}$ is equivalent to $\mathbb{A}_n$.
Since $\mathbb{A}_n\odot \mathbb{Z}$ is clearly still a non-deterministic automaton, this ends the proof of the simulation theorem.
\end{proof}

Combining Proposition \ref{existentialclosure} with 
Theorem~\ref{simulationtheorem},
we  obtain closure under existential projection.

\begin{prop}
\label{p:clos-exist}
Let $\La$ be a set of monotone predicate liftings for a set functor $\fun$.
Over $\fun$-tree models, the class of second-order $\Lambda$-automata is closed
under existential projection.
\end{prop}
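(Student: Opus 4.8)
The plan is to derive this immediately from the two preceding results, with only some routine bookkeeping in between. Let $\bbA$ be a second-order $\Lambda$-automaton for the variable set $P$, and fix $p \in P$. First I would invoke Proposition~\ref{p:mon-aut} to replace $\bbA$ by an equivalent \emph{monotone} second-order $\Lambda$-automaton, so that the hypothesis of the Simulation Theorem is met. Then Theorem~\ref{simulationtheorem} yields a \emph{non-deterministic} automaton $\bbA' \in \Aut(\SOLa)$ equivalent to $\bbA$; in particular $\bbA$ and $\bbA'$ accept exactly the same $\fun$-tree models.

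Next I would observe that the one-step language $\SOLa$ is closed under disjunction, since the connective $\vee$ is built into the grammar of second-order one-step formulas. Hence Proposition~\ref{existentialclosure} applies to the class of non-deterministic $\SOLa$-automata, and applying it to $\bbA'$ produces a non-deterministic $(P\setminus\{p\})$-chromatic automaton $\exists p.\bbA' \in \Aut(\SOLa)$ which, over $\fun$-tree models, accepts $(T,R,\sigma,V,u)$ precisely when $\bbA'$ accepts $(T,R,\sigma,V[p\mapsto Z],u)$ for some $Z \subseteq T$.

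Finally I would note that existential projection is a purely extensional notion, formulated only through acceptance of the models $(T,R,\sigma,V[p\mapsto Z],u)$. Since $\bbA$ and $\bbA'$ accept the same $\fun$-tree models, the condition ``$\bbA'$ accepts $(T,R,\sigma,V[p\mapsto Z],u)$ for some $Z$'' is equivalent to ``$\bbA$ accepts $(T,R,\sigma,V[p\mapsto Z],u)$ for some $Z$''. Therefore $\exists p.\bbA'$ is an existential projection of $\bbA$ and lies in $\Aut(\SOLa)$, which establishes the closure claim; if one wants a \emph{monotone} representative, a further application of Proposition~\ref{p:mon-aut} does the job.

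I do not expect any genuine obstacle here: the real content — simulating an arbitrary monotone second-order automaton by a non-deterministic one — has already been carried out in Theorem~\ref{simulationtheorem}, and the projection construction itself was handled abstractly in Proposition~\ref{existentialclosure}. The only points that require (minimal) attention are the two bookkeeping observations above: verifying the disjunction-closure hypothesis of Proposition~\ref{existentialclosure} for the one-step language $\SOLa$, and checking that passing through the equivalent automaton $\bbA'$ does not disturb the defining property of existential projection.
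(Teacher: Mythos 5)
Your proposal is correct and follows the same route as the paper, which obtains the proposition precisely by combining Theorem~\ref{simulationtheorem} with Proposition~\ref{existentialclosure} (the monotonicity preprocessing via Proposition~\ref{p:mon-aut} being absorbed into the simulation theorem's statement/proof). The bookkeeping you spell out — closure of $\SOLa$ under disjunction and the extensional nature of projection under equivalence over $\fun$-tree models — is exactly what the paper leaves implicit.
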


We can now combine the closure properties we have established for second-order
automata to give the desired translation of $\MSOT$ into second-order automata:

\begin{thm}
\label{p:automatachar}
For every formula $\varphi\in \MSOLa$ with free variables in $P$, there exists 
a monotone $P$-chromatic second-order $\Lambda$-automaton $\mathbb{A}_\varphi \in \Aut(\SOLa)$ which is 
equivalent to $\varphi$ over $\fun$-tree models.
\end{thm}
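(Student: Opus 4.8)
The plan is to prove the statement by induction on the complexity of $\varphi \in \MSOLa$, assembling the closure properties collected above and using Proposition~\ref{p:mon-aut} to re-monotonize after each step. The induction hypothesis is that for every subformula $\psi$ with free variables in $P$ there is a monotone $P$-chromatic automaton in $\Aut(\SOLa)$ equivalent to $\psi$ over $\fun$-tree models. Since Proposition~\ref{p:mon-aut} turns any automaton in $\Aut(\SOLa)$ into an equivalent monotone one, at each step it in fact suffices to produce \emph{some} automaton in $\Aut(\SOLa)$ equivalent to $\psi$, and the construction is effective because each of the transformations below is.

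For the base case one treats the three kinds of atomic formulas. The case $\lambda(p,q_1,\ldots,q_n)$ has already been handled explicitly above. For $p\subseteq q$ I would use a one-state automaton with $a_I$ of even priority whose transition formula on a colour $c$ forces every successor back into $a_I$ (via $\forall z.\, z\subseteq a_I$) when $p\notin c$ or $q\in c$, and equals $\bot$ when $p\in c$ and $q\notin c$; since the underlying frame $R$ is a supporting Kripke frame, $\exists$ is forced to visit every node of the tree, so she survives exactly when no node has $p$ true and $q$ false, i.e.\ when $V(p)\subseteq V(q)$. For $\sr(p)$ I would combine a check that $p$ holds at the root (reject otherwise) with a ``forbidding'' state passed to every successor that verifies $p$ is false everywhere below the root. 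In each case Proposition~\ref{p:mon-aut} then yields the required monotone automaton.

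For the inductive step, disjunction $\varphi\vee\psi$ is immediate, since $\SOLa$ is visibly closed under $\vee$ and so Proposition~\ref{closureunion} applies. For negation $\neg\varphi$, using that the monotone fragment of $\SOLa$ is closed under Boolean duals — every monotone one-step sentence has a Boolean dual definable in the language, which is again monotone — Proposition~\ref{closurecomplementation} applied to the monotone automaton for $\varphi$ produces an automaton for $\neg\varphi$. For $\exists p.\varphi$ I would invoke Proposition~\ref{p:clos-exist}: over $\fun$-tree models the class of second-order $\Lambda$-automata is closed under existential projection, so the monotone automaton for $\varphi$ admits an existential projection accepting exactly the $\fun$-tree models of $\exists p.\varphi$. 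A final application of Proposition~\ref{p:mon-aut} after each step restores the induction hypothesis, completing the induction.

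The genuinely hard work lies outside this argument: closure under existential projection rests on the Simulation Theorem~\ref{simulationtheorem} together with Proposition~\ref{existentialclosure}, and those have already been established. Within the present proof the only points needing care are the atomic base cases — checking that the small automata above really accept exactly the $\fun$-tree models satisfying $\sr(p)$ and $p\subseteq q$, which uses that $R$ is a supporting frame so that the truth of these atoms is read off the colours along the tree — and the observation that Boolean duals of monotone second-order one-step sentences stay inside the monotone fragment of $\SOLa$. Both are routine unfoldings of the definitions, so I expect no substantial obstacle, only bookkeeping.
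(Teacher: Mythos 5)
Your proposal is correct and takes essentially the same route as the paper: induction on formula complexity with re-monotonization via Proposition~\ref{p:mon-aut} after each step, small tree-traversing automata for the atomic formulas, disjunction via Proposition~\ref{closureunion}, negation via Proposition~\ref{closurecomplementation}, and existential quantification via the Simulation Theorem~\ref{simulationtheorem} together with Proposition~\ref{existentialclosure}. The one point you defer as routine—that the monotone fragment of $\SOLa$ is closed under Boolean duals—is indeed true and is the only detail the paper spells out inside this proof, using existentially quantified fresh variables constrained by a complementation formula $\mathsf{Comp}(p_i,a_i)$ built from $\subseteq$ and $\mathsf{Em}$, so your assessment stands.
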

\begin{proof}
It suffices to show that every formula $\varphi$ is equivalent to a $P$-chromatic $\Lambda$-automaton, since the monotonicity constraint can then be taken care of using Proposition \ref{p:mon-aut}. 
We prove the theorem by induction on the complexity of $\MSOLa$-formulas. We already showed that equivalent automata are available for the atomic formulas of $\MSOLa$. The induction step for disjunctions follows immediately from Proposition \ref{closureunion}, since the one-step language $\SOLa$ is clearly closed under disjunctions. 
The induction step for negations similarly follows from Proposition \ref{closurecomplementation} once we check that the monotone fragment of the one-step language $\SOLa$ is closed under boolean duals (here, we are again appealing to Proposition \ref{p:mon-aut}, to ensure that it suffices to show that monotone second-order automata are closed under complementation). To see this, suppose that $\psi \in \SOLa(A) $ is a monotone one-step sentence, and suppose the free variables of $\psi$ are $a_1,...,a_n \in A$. For each $a_i$, add a fresh variable $p_i$, and let $\mathsf{Comp}(p_i,a_i)$ be the formula:
$$\forall x. (x \subseteq p_i  \leftrightarrow \forall y.((y \subseteq x \wedge y \subseteq a_i) \rightarrow \mathsf{Em}(y))$$ 
Informally, this formula says that $p_i$ is the complement of $a_i$. Now, let $\psi^\partial$ be defined as:
$$ \exists p_1....\exists p_n.\mathsf{Comp}(p_1,a_1) \wedge ... \wedge \mathsf{Comp}(p_n,a_n) \wedge \neg \psi[p_1/a_1,...,p_n/a_n]$$
Then $\psi^\partial$ is a monotone formula with free variables $a_1,...,a_n$, and its semantics in a one-step model is indeed that of the boolean dual of $\psi$.

Finally, the induction step for the existential second-order quantifier is handled as follows: suppose that $\mathbb{A}$ is a $P$-chromatic second-order automaton equivalent over $\fun$-tree models to the formula $\varphi$ and let $p \in P$. By Theorem \ref{simulationtheorem}, we may assume without loss of generality that $\mathbb{A}$ is disjunctive.  By Proposition \ref{existentialclosure}, there is a $P\setminus \{p\}$-chromatic second-order automaton $\exists p. \mathbb{A}$ that defines the projection of $\mathbb{A}$ with respect to the variable $p$. Hence, this automaton is equivalent to $\exists p. \varphi$ as required.
\end{proof}

\begin{cor}
\label{c:automatachar}
Suppose $\Lambda$ is an expressively complete set of monotone predicate liftings for $\fun$.
Then for every formula of $\MSOT$, there exists an equivalent 
second-order $\Lambda$-automaton over $\fun$-tree models. 
\end{cor}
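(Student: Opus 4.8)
The plan is to read off the corollary from Theorem~\ref{p:automatachar} once we have moved, at the level of full monadic second-order logic, from the signature consisting of all monotone liftings to the (possibly much smaller) signature $\Lambda$. So I would fix a formula $\varphi \in \MSOT$, say with free variables contained in a finite set $P$. Since $\Lambda$ is one-step expressively complete we have already observed that $\MSOLa \equiv \MSOT$; applying the relevant direction of this equivalence produces a formula $\psi \in \MSOLa$, again with free variables in $P$, that is semantically equivalent to $\varphi$ on every pointed $\fun$-model, hence in particular on every $\fun$-tree model.

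Now I would invoke Theorem~\ref{p:automatachar} for $\psi$: it yields a monotone $P$-chromatic second-order $\Lambda$-automaton $\mathbb{A}_\psi \in \Aut(\SOLa)$ that is equivalent to $\psi$ over $\fun$-tree models. Chaining the two equivalences, $\mathbb{A}_\psi$ is equivalent to $\varphi$ over $\fun$-tree models, and since a monotone second-order $\Lambda$-automaton is in particular a member of $\Aut(\SOLa)$, this is exactly the statement of the corollary. So, granting Theorem~\ref{p:automatachar} and the equivalence $\MSOLa \equiv \MSOT$, the argument is immediate.

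There is in fact no genuine obstacle in the corollary itself; the substance lies entirely in the two facts being invoked, both of which are already available. The only mildly delicate point --- and it belongs to the (omitted) proof that expressive completeness of $\Lambda$ gives $\MSOLa \equiv \MSOT$, not to the corollary --- is the treatment of an atom $\lambda(p,q_1,\dots,q_n)$ of $\MSOT$: using $\Lambda$-definability of the monotone lifting $\lambda$ by a one-step formula $\chi \in \MLLa(\{a_1,\dots,a_n\})$, one would replace $\lambda(p,q_1,\dots,q_n)$ by the $\MSOLa$-formula obtained by ``relativizing'' $\chi$ to the unfolding of each state in $V(p)$. The lattice-formula arguments occurring inside $\chi$, together with its Boolean-dual modalities, can be expressed using $\subseteq$, complements, singletons and existential quantification, all available in $\MSOLa$; the remaining Boolean and $\exists p$ cases are a routine induction comparing the grammars of $\MSOT$ and $\MSOLa$. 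None of this needs to be redone here, since the automata-theoretic content has already been handled once and for all in Theorem~\ref{p:automatachar}.
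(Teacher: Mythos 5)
Your proposal is correct and matches the paper's (implicit) argument: the corollary is exactly the composition of the earlier fact that one-step expressive completeness of $\Lambda$ gives $\MSOLa \equiv \MSOT$ with Theorem~\ref{p:automatachar} applied to the resulting $\MSOLa$-formula. Your added remarks on how the atomic case $\lambda(p,q_1,\dots,q_n)$ is handled belong, as you say, to the proof of that earlier fact and do not affect the corollary.
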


\section{Bisimulation invariance}
\label{sec:bisinv}

\subsection{Uniform translations}
This section continues the program of~\cite{vene:expr14}, making use of the
automata-theoretic translation of $\MSOT$ we have just established. 
The gist of our approach is that, in order to characterize a coalgebraic 
fixpoint logic $\muMLT$ as the bisimulation-invariant fragment of $\MSOT$, it 
suffices to establish a certain type of translation between the corresponding 
one-step languages. 
First we need some definitions.

\begin{defi}
Given sets $X,Y$, a mapping $h : X \rightarrow Y$ and a valuation $V : A 
\rightarrow \cvp (Y)$, we define the valuation $V_{[h]} : A \rightarrow \cvp (X)$ by 
setting $V_{[h]}(b) = h^{-1}(V(b))$ for each $b \in A$. In other words: $$V_{[h]} := \cvp h \circ V$$
\end{defi}

Note that for a pair of composable  maps $f,g$, we have $V_{[f \circ g]} = (V_{[g]})_{[f]}$.
The most important concept that we take from \cite{vene:expr14} is that
of a \textit{uniform translation} (called \textit{uniform correspondence} 
in \cite{vene:expr14}). 

\begin{defi}
A \textit{one-step frame} is a pair $(X,\alpha)$ with $\alpha \in \fun X$. A \textit{morphism} of one-step frames $h : (X',\alpha') \to (X,\alpha)$ is a map $h : X' \to X$ with $\fun h(\alpha') = \alpha$. A one-step frame $(X',\alpha')$ together with a morphism $h : (X',\alpha') \to (X,\alpha)$ is called a \textit{cover} of $(X,\alpha)$.
\end{defi}
We can now define the notions of uniform translations and uniform constructions:
\begin{defi}
Given a functor $\fun$, a \textit{uniform construction} for 
$\fun$ is an assignment  of a cover $ h_\alpha :  (X_*,\alpha_*) \to (X,\alpha)$ to every one-step frame $(X,\alpha)$. 
\end{defi}
We will usually denote the uniform construction consisting of an assignment of covers $ h_\alpha :  (X_*,\alpha_*) \to (X,\alpha)$ simply by $(-)_*$.
\begin{rem}
A good way to think of what uniform constructions will achieve, when we use them to obtain a characterization theorem later, is as methods to neutralize all the powers of distinction that the second-order one-step language has over the modal one-step language $\mathtt{1ML}_\Lambda$. Given two one-step models that cannot be distinguished by the modal one-step language, we want to apply suitable uniform constructions to re-shape these models so that they cannot be distinguished by the second-order one-step language either. For example,  the second-order one-step language for the powerset functor is equivalent to first-order monadic logic with equality \cite{walu:mona96}, which essentially adds to the modal one-step language the capability of \emph{counting} elements of one-step models. It is precisely this capacity for counting that is the reason why formulas in this one-step language do not generally correspond to predicate liftings for the powerset functor, since conditions like ``having at least two elements'' is clearly not preserved by morphisms between one-step models. So we are looking for a uniform construction that prevents making any distinctions between one-step models by counting, and the solution is to create infinitely many copies of each element. This corresponds to the $\omega$-unravelling construction used by \cite{jani:expr96}.
\end{rem}

\begin{defi}
We say that a one-step language $\mathtt{L}$ 
\textit{admits uniform translations} if, for any finite set $A$ and any finite set $\Gamma$ of formulas in $\mathtt{L}(A)$, there 
exists a uniform construction $(-)_*$ and a translation $(-)^* : \Gamma  \to \mathtt{1ML}_\Lambda(A)$ such that for any 
one-step model $(X,\alpha,V)$ and any $\varphi \in \Gamma$, we have 
$$(X,\alpha,V) \Vdash_1 \varphi^* \text{ iff } 
  (X_*,\alpha_*,V_{[h_\alpha]})\Vdash_1 \varphi
$$
where $h_\alpha$ is the cover assigned to $(X,\alpha)$ by the construction.
The pair consisting of the translation $(-)^*$ and the uniform construction $(-)_*$ will be referred to as a \emph{uniform translation} for the set of formulas $\Gamma$.
\end{defi}

As an example, consider the disjunctive formulas introduced by Walukiewicz,

\begin{defi}
Given a $P$-chromatic second-order automaton $\bbA = (A,\Delta,a_I,\Omega)$  and a uniform translation for the finite set $\Delta[A \times \psf(P)]$, 
we get a corresponding modal automaton
$\bbA^{*} = (A,\Delta^{*},a_I,\Omega)$, with $\Delta^{*}$ given by 
$\Delta^{*}(a,c) \isdef (\Delta(a,c))^{*}$. We overload the notation $(\cdot)^*$ to refer both to a uniform one-step translation and the induced translation at the level of automata.
\end{defi}

The proof of the following result closely follows that of the main result 
in \cite{vene:expr14}. The only difference with \cite{vene:expr14} is that here we need an
``unravelling''-like component, in order to turn an arbitrary model into a tree-like one.

\begin{prop}
\label{p:unr}
Assume that $\SOLa$ admits uniform translations, and let 
$\bbA$ be a second-order $\Lambda$-automaton.
Then for each pointed $\fun$-model $(\bbS,s)$ there is a $\fun$-tree model 
$(\bbT,R,t)$, with a $\fun$-model morphism $f$ from $\bbT$ to
$\bbS$, mapping $t$ to $s$, and such that
\[
\bbA \text{ accepts } (\bbT,R,t) \text{ iff }
\bbA^{*} \text{ accepts } (\bbS,s).
\]
\end{prop}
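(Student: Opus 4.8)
The plan is to build the tree model $(\bbT,R,t)$ by a simultaneous unravelling of $(\bbS,s)$ along a chosen supporting Kripke frame, while at each node applying the uniform construction $(-)_*$ from the assumed uniform translation to the local one-step frame. Concretely, first I would fix a supporting Kripke frame $R_0$ for the coalgebra $(S,\sigma)$, so that for each $u\in S$ the set $R_0(u)$ supports $\sigma(u)$, giving $\sigma(u)|_{R_0(u)}\in \fun(R_0(u))$. The carrier $T$ of the tree will consist of finite paths $s=u_0,u_1,\dots,u_n$ through $R_0$ starting at $s$; the root $t$ is the length-one path $(s)$, and the projection $f:T\to S$ sends a path to its last node. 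For the tree structure one normally takes $R$ to be the ``child'' relation on paths, but here one must instead take $R(\pi)$ to be the cover $(R_0(f(\pi)))_*$ supplied by the uniform construction applied to the one-step frame $(R_0(f(\pi)),\sigma(f(\pi))|_{R_0(f(\pi))})$, with the morphism $h_{\sigma(f(\pi))}$ from that cover down to $R_0(f(\pi))$; each element of the cover becomes a new child-node of $\pi$ in $T$, whose last coordinate is its image under $h$ composed with the inclusion. The coalgebra map $\sigma_T$ on $\pi$ is then $\alpha_*$ for the relevant one-step frame, sitting inside $\fun(R(\pi))\subseteq\fun T$, so that $R$ is by construction a supporting Kripke frame and $(T,R,t)$ is a $\fun$-tree model. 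The valuation $V_T$ is the pullback $f^{-1}$ of $V$, which makes $f$ a $\fun$-model morphism by the naturality/commutativity of the cover morphisms together with the defining property of $\sigma_T$.

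Next I would verify that $f$ is indeed a coalgebra morphism: this amounts to checking $\sigma\circ f = \fun f\circ \sigma_T$ node by node, which reduces precisely to the fact that $\fun h_\alpha(\alpha_*) = \alpha$ for the cover assigned to each $(R_0(u),\sigma(u)|_{R_0(u)})$, composed with the inclusion $\iota_{R_0(u),S}$. Combined with the valuation condition this gives behavioural equivalence $(\bbT,R,t)\sim(\bbS,s)$ as a by-product, and in particular $f(t)=s$.

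The heart of the argument — and the step I expect to be the main obstacle — is the equivalence ``$\bbA$ accepts $(\bbT,R,t)$ iff $\bbA^*$ accepts $(\bbS,s)$'', which I would prove by transferring winning strategies in the two acceptance games. The key local fact is the defining property of the uniform translation: for every one-step model $(X,\alpha,V)$ and every $\varphi\in\Delta[A\times\psf(P)]$,
$$(X,\alpha,V)\Vdash_1 \varphi^* \text{ iff } (X_*,\alpha_*,V_{[h_\alpha]})\Vdash_1\varphi.$$
Applied at a basic position $(a,u)$ of the $\bbA^*$-game on $\bbS$, with $X=R_0(u)$ and $\alpha=\sigma(u)|_{R_0(u)}$, this says $\exists$'s admissible moves for $\Delta^*(a,V^\dag(u))$ over $R_0(u)$ correspond bijectively to her admissible moves for $\Delta(a,V^\dag(u))$ over the cover $R(\pi)$ (for any path $\pi$ with $f(\pi)=u$), via $U\mapsto U_{[h_\alpha]}$ in one direction. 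One then matches a positional winning strategy for $\exists$ in one game with a strategy in the other: the underlying $f$-projection sends $\chi$-guided $\bbA$-matches on $\bbT$ to $\bbA^*$-matches on $\bbS$ and conversely (lifting along paths), and since $f$ preserves the automaton-state component and the priorities $\Omega$ are read off the state only, infinite matches have the same parity outcome. The subtlety to get right is that a move $U$ of $\exists$ in the $\bbS$-game over $R_0(u)$ must be lifted to a move over the cover $R(\pi)$ in a way that is consistent across all copies $\pi$ of $u$ in the tree and compatible with $\forall$'s subsequent choices of successor paths; here one uses that distinct copies of $u$ have disjoint sets of descendants in $T$, so the lifted strategy can be defined path-locally without conflict. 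Finally one notes this argument needs no monotonicity or non-determinism hypothesis on $\bbA$ — it is purely the uniform-translation property plus the morphism $f$ — so the proposition holds for an arbitrary second-order $\Lambda$-automaton, exactly as stated.
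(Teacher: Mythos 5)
Your construction of the tree model is essentially the paper's: an unravelling whose nodes are paths, where at each node the uniform construction supplies the cover that becomes the set of children, the coalgebra structure is the transported $\alpha_*$, and the projection to last coordinates is the required $\fun$-model morphism. (The paper applies $(-)_*$ to the one-step frame $(S,\sigma(u))$ on the full carrier rather than to a restriction $(R_0(u),\sigma(u)|_{R_0(u)})$; your variant is harmless, since naturality of the liftings lets you move the modal formula $\Delta^*(a,c)$ between $(S,\sigma(u),U)$ and $(R_0(u),\sigma(u)|_{R_0(u)},U\cap R_0(u))$, but this step should be said explicitly.)

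The gap is in the acceptance-equivalence argument, precisely where you claim that $\exists$'s admissible moves for $\Delta^*(a,c)$ on the base ``correspond bijectively'' to her admissible moves for $\Delta(a,c)$ on the cover, and that ``no monotonicity hypothesis'' is needed. The defining property of a uniform translation only speaks about valuations on the cover of the special form $V_{[h_\alpha]}$, i.e.\ pullbacks of valuations on the base; it says nothing about an arbitrary valuation $U'$ on the cover, and such valuations (e.g.\ ones separating two points with the same $h_\alpha$-image) are perfectly legitimate moves for $\exists$ in the acceptance game on the tree. So the direction ``$\bbA$ accepts $(\bbT,R,t)$ $\Rightarrow$ $\bbA^*$ accepts $(\bbS,s)$'' does not follow from a move bijection: one must push $U'$ forward, setting $U(b):=h_\alpha[U'(b)]$ (or $f[U'(b)]$), and then one only has the inclusion $U'(b)\subseteq U_{[h_\alpha]}(b)$, from which $(X_*,\alpha_*,U_{[h_\alpha]})\Vdash_1\Delta(a,c)$ — and hence admissibility of $U$ via the translation — requires \emph{monotonicity} of the one-step formulas $\Delta(a,c)$ in the variables $A$. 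This is exactly how the paper argues (``so by monotonicity we get \dots''), and it is why the result is used for monotone automata, which suffices by Proposition \ref{p:mon-aut}. As your proof stands, this direction is unproved for a general (non-monotone) second-order automaton, and your explicit disclaimer of monotonicity indicates the missing step rather than a genuinely more general argument. The converse direction (pulling $\exists$'s moves back along the projection) is fine as you describe it.
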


\begin{proof}
Consider any given pointed $\fun$-model $(\mathbb{S}_1,s_1)$ where $\mathbb{S}_1 = (S_1,\sigma_1,V_1)$. We are going to construct a $\fun$-tree model $(\mathbb{S}_2,R,s_2)$, $\mathbb{S}_2 = (S_2,\sigma_2,V_2)$, together with a model morphism from the underlying pointed $\fun$-model $\mathbb{S}_2$  to $\mathbb{S}_1$ mapping $s_2$ to $s_1$, and such that $\mathbb{A}$ accepts the $\fun$-tree model $(\mathbb{S}_2,s_2)$ if and only if $\mathbb{A}^*$ accepts the pointed $\fun$-model $(\mathbb{S}_1,s_1)$. 

We construct this $\fun$-tree model as follows: for each $u \in S_1$, we define an associated pair $(X_u,\alpha_u)$ by setting $(X_u,\alpha_u) = ((S_1)_*,\sigma_1(u)_*)$.
Observe that, by the construction of these one-step models, for each $u \in S_1$, there is a mapping
$\xi_u : X_u \rightarrow S_1$
such that:
\begin{enumerate}
\item $\fun(\xi_u)(\alpha_u) =  \sigma_1(u)$
\item For each valuation $U : A \rightarrow \cvp (S_1)$, every $u \in S_1$ and every one-step formula $\Delta(a,c)$ appearing in $\mathbb{A}$, we have
$$(S_1,\sigma_1(u),U)\Vdash_1 \Delta^*(a,c) \text{ iff } (X_u,\alpha_u,U_{[\xi_u]})\Vdash_1 \Delta(a,c)$$
\end{enumerate}
The map $\xi_u$ is given by $h_{\sigma_1(u)}$. We now construct the $\fun$-tree model $(S_2,R,\sigma_2,s_2,V_2)$ as follows: first, consider the set of all non-empty finite (non-empty) tuples $(v_1,...,v_n)$ of elements in 
$$\{s_1\}\cup\bigcup_{u \in S_1} X_u$$
such that $v_1 = s_1$. We define, by induction, for each natural number $n > 0$ a subset $M_n$ of this set, and a mapping $\gamma_n : M_n \rightarrow S_1$, as follows:
\begin{itemize}
\item Set $M_1 = \{(s_1)\}$, and define $\gamma_1(s_1) = s_1$.
\item Set $M_{n + 1} = \{\vec{v}  w \mid \vec{v} \in M_n, w \in X_{\gamma_n(\vec{v})}\}$. Define $\gamma_{n + 1} (\vec{v}  w) = \xi_{\gamma_n(\vec{v})}(w)$.
\end{itemize}
Here, we write $\vec{v} w$ to denote the tuple $(v_1,...,v_n,w)$ if $\vec{v} = (v_1,...,v_n)$. 
Set $S_2 = \bigcup_{n > 0} M_n$, and define $\gamma = \bigcup_{n> 0} \gamma_n$. Define the relation $R \subseteq S_2 \times S_2$ to be
$$\{(\vec{v},\vec{v}  w) \mid \vec{v} \in S_2, w \in X_{\gamma(\vec{v}})\}$$
Note that there is, for every $\vec{v} \in S_2$, a bijection $i_{\vec{v}} : X_{\gamma(\vec{v})} \to R(\vec{v})$ given by $w \mapsto \vec{v}  w$. Note also that, for each $\vec{v} \in S_2$, we have
$$ \gamma \circ \iota_{R(\vec{v}),S_2} \circ i_{\vec{v}} = \xi_{\gamma(\vec{v})}$$

 With this in mind, we define the coalgebra structure $\sigma_2$ by setting
$$\sigma_2(\vec{v}) = \fun (\iota_{R(\vec{v}),S_2} \circ i_{\vec{v}})(\alpha_{\gamma(\vec{v})})$$
Note that $\sigma_2(\vec{v}) \vert_{R(\vec{v})} = \fun i_{\vec{v}}(\alpha_{\gamma}(\vec{v}))$.
Finally, set $s_2$ to be the unique singleton tuple with sole element $s_1$, and define the valuation $V_2$ by setting $V_2^\dagger(\vec{v}) = V_1^\dagger(\gamma(\vec{v}))$.

Clearly, $(S_2,R,\sigma_2,s_2,V_2)$ is a $\fun$-tree model. Denote the underlying $\fun$-model by $\mathbb{S}_2$.
We can then prove the following two claims:
\begin{claim}
\label{firstclaimunitrans}The map $\gamma$ is a $\fun$-model morphism from $\mathbb{S}_2$ to $\mathbb{S}_1$.
\end{claim}

\begin{proof}[Proof of Claim \ref{firstclaimunitrans}]
The map $\gamma$ clearly respects the truth values of all propositional atoms, and $\gamma(s_2) = s_1$. It suffices to show that $\gamma$ is a coalgebra morphism, i.e. that $\fun \gamma  (\sigma_2(\vec{v})) = \sigma_1 (\gamma(\vec{v}))$ for all $\vec{v}$. Pick any $\vec{v} \in S_2$. We have:
\begin{eqnarray*}
\fun\gamma(\sigma_2(\vec{v})) & = & \fun\gamma \circ \fun \iota_{R(\vec{v}),S_2} \circ \fun(i_{\vec{v}})(\alpha_{\gamma(\vec{v})}) \\
& = & \fun(\gamma \circ \iota_{R(\vec{v}),S_2} \circ i_{\vec{v}})(\alpha_{\gamma(\vec{v})}) \\
& = & \fun(\xi_{\gamma(\vec{v})})(\alpha_{\gamma(\vec{v})}) \\
& = & \sigma_1(\gamma(\vec{v}))
\end{eqnarray*}
as required.
\end{proof}

\begin{claim}
\label{secondclaimunitrans} $\mathbb{A}$ accepts $(S_2,R,\sigma_2,s_2,V_2)$ iff $\mathbb{A}^*$ accepts $\mathbb{S}_1$.
\end{claim}
\begin{proof}[Proof of Claim \ref{secondclaimunitrans}:]
We have two parts that need to be proved here:
\subsubsection*{First part: $\mathbb{A}$ accepts $(S_2,R,\sigma_2,s_{2},V_2)$ implies $\mathbb{A}^*$ accepts $(\mathbb{S}_1, s_{1})$}
Suppose that $\chi$ is a (positional) winning strategy in the acceptance game for $\mathbb{A}$ and $(S_2,R,\sigma_2,s_2,V_2)$. We are going to define a strategy $\chi^{*}$ for $\exists$ in the acceptance game for $\mathbb{A}^*$ and $\mathbb{S}_1$, with the property that for any $\chi^{*}$-guided partial match
 $$\rho = (a_1,s^{1}_{1}),...,(a_n,s^{1}_{n})$$ of length $n$ with $s^{i}_{i}\in S_1$ and $s_1^1 = s_1$, there exists a $\chi$-guided partial match
$$\rho_* =(a_1,s^{2}_{1}),...,(a_n,s^{2}_{n})$$
with $s^{2}_{i}\in S_{2}$, $s_1^2 = s_2$ and $\gamma(s^{2}_{i}) = s^{1}_{i}$ for each index $i$, and chosen in such a way that if a $\chi^*$-guided match $\rho'$ is an extension of $\rho$, then the $\chi$-guided match $\rho'_*$ is an extension of the $\chi$-guided match  $\rho_*$. 

We define the strategy $\chi^*$ by induction on the length of a partial match. For the partial match $\rho$ consisting of the single position $(a_I,s^{1}_{1})$ we let $\rho_*$ consist of the single position $(a_I,s_2)$, and we define the valuation $\chi^{*}(\rho) : A \rightarrow \cvp (S_1)$ by setting, for each $b \in A$,
$$\chi^{*}(\rho)(b) = \gamma[\chi(a_I,s_2)(b)]$$
Similarly, suppose that $\chi^*$ has been defined for all matches of length less than $n$, and let $\rho$ be any match of length $n > 1$. If $\rho$ is not $\chi^*$-guided, then we can define $\chi^{*}(\rho)(b) = \emptyset$ for all $b\in A$. If $\rho$ is $\chi^*$-guided, then write
$$\rho = (a_1,s^{1}_{1}),....,(a_n,s^{1}_{n})$$
Since this match is $\chi^*$-guided, by the induction hypothesis there is a $\chi$-guided partial match
$$ \rho_* = (a_1,s^{2}_{1}),...,(a_n,s^{2}_{n})$$ with $\gamma(s^{2}_{i}) = s^{1}_{i}$ for all $i$. We set, for each $b \in A$, 
$$\chi^*(\rho)(b) = \gamma[\chi(a_n,s^{2}_{n})(b)]$$
Clearly, with this definition, the induction hypothesis continues to  hold for all $\chi^*$-guided matches of length $n + 1$. Furthermore, we note that, whenever $\rho$ is a $\chi^*$-guided match of length $n$, the move $\chi^{*}(\rho)$ is legal: since the move $\chi(a_n,s^{2}_{n})$ must be legal, we have
$$(R(s^{2}_{n}),\sigma_2(s^{2}_{n})\vert_{R(s^2_n)},\chi(a_n,s^{2}_{n}))\Vdash_1 \Delta(a_n,V_2^\dagger(s^{2}_{n}))$$
But  since 
$$\sigma_2(s^{2}_{n})\vert_{R(s^2_n)} = \fun i_{s^2_n}(\alpha_{\gamma(s^2_n)}) = \fun i_{s^2_n}(\alpha_{s^1_n})$$ we see that $i_{s^{2}_{n}}$ is an isomorphism between the one-step models $(R[s^{2}_{n}],\sigma_2(s^{2}_{n})\vert_{R(s^2_n)},\chi(a_n,s^{2}_{n}))$ and $(X_{s^{1}_{n}},\alpha_{s^{1}_{n}},\chi(a_n,s^{2}_{n})_{[i_{s^{2}_{n}}]})$, so we have
 $$ (X_{s^{1}_{n}},\alpha_{s^{1}_{n}},\chi(a_n,s^{2}_{n})_{[i_{s^{2}_{n}}]}) \Vdash_1 \Delta(a_n,V_2^\dagger(s^{2}_{n})) $$
 For all $b \in A$, we have  
\[
\begin{array}{lcl}
\chi(a_n,s^{2}_{n})_{[i_{s^{2}_{n}}]}(b) & \subseteq & \xi_{s^{2}_{n}}^{-1}(\xi_{s^{2}_{n}}[\chi(a_n,s^{2}_{n})_{[i_{s^{2}_{n}}]}(b)])\\
& = & \xi_{s^{2}_{n}}^{-1}(\gamma[\chi(a_n,s^{2}_{n})(b)])\\
& = & \xi_{s^{2}_{n}}^{-1}(\chi^{*}(\rho)(b)) \\
& = & \chi^*(\rho)_{[\xi_{(s^{2}_{n})}]}(b) 
\end{array}
\]
So by monotonicity we get
 $$ (X_{s^{1}_{n}},\alpha_{s^{2}_{n}},\chi^{*}(\rho)_{[\xi_{s^{2}_{n}}]}) \Vdash_1 \Delta(a_n,V_2^\dagger(s^{2}_{n})) $$
Hence, since $\Delta^*(a_n,V_2^\dagger(s^{2}_{n}))$ is the uniform translation of $\Delta(a_n,V_2^\dagger(s^{2}_{n}))$, we get
$$(S_1,\sigma_1(s^{2}_{n}),\chi^{*}(\rho)) \Vdash_1 \Delta^*(a_n,V_2^\dagger(s^{2}_{n})) $$
as required. It is now easy to show that $\exists$ never gets stuck in a $\chi^*$-guided partial match, and we also see that every infinite $\chi^*$-guided  match
$$(a_I,s^{1}_{1}),(a_2,s^{1}_{2}),(a_3,s^{1}_{3}),...$$
corresponds to an infinite $\chi$-guided match
$$(a_I,s^{2}_{1}),(a_2,s^{2}_{2}),(a_3,s^{2}_{3}),...$$
and so since $\exists$ wins every infinite $\chi$-guided match, she wins every infinite $\chi^*$-guided match as well.

\subsubsection*{Second part: $\mathbb{A}^*$ accepts $(\mathbb{S}_1, s_{1})$ implies $\mathbb{A}$ accepts $(S_2,R,\sigma_2,s_{2},V_2)$ }
Let $\chi^*$ be a winning strategy for $\exists$ in the acceptance game for $\mathbb{A}^*$ and $\mathbb{S}_1$. We define the strategy $\chi$ as follows: given a partial match
$$\rho = (a_1,s^{2}_{1}),...,(a_n,s^{2}_{n})$$
set $\chi(\rho) = \chi^{*}(\gamma(\rho))_{[\gamma]}$, where $\gamma(\rho)$ is the match
$$(a_1,\gamma(s^{2}_{1})),...,(a_n,\gamma(s^{2}_{n}))$$ 
It is straightforward to check that, whenever $\rho$ is a $\chi$-guided match, $\gamma(\rho)$ is a $\chi^*$-guided match, and that the move $\chi^{*}(\gamma(\rho))$ is legal if and only if the move $\chi(\rho)$ is legal. It follows that $\exists$ never gets stuck in a $\chi$-guided partial match, and that she wins every infinite $\chi$-guided match, since an infinite $\chi$-guided match
$$(a_I,s^{2}_{1}),(a_2,s^{2}_{2}),(a_3,s^{2}_{3}),...$$
corresponds to an infinite $\chi^{*}$-guided match
$$(a_I,\gamma(s^{2}_{1})),(a_2,\gamma(s^{2}_{2})),(a_3,\gamma(s^{2}_{3})),...$$
This concludes the proof.
\end{proof}
The lemma now follows by combining the two claims \ref{firstclaimunitrans} and \ref{secondclaimunitrans}.
 \end{proof}

From this, we get the following result. Here, we use the fact that every $\mathtt{MSO}_\Lambda$-formula is equivalent to a \textit{monotone} second-order automaton over $\fun$-tree models, so that it suffices to find translations of monotone one-step formulas.

\begin{thm}[Auxiliary Characterization Theorem]
\label{generalcharacterization}
Let $\Lambda$ be an expressively complete set of monotone predicate liftings 
for a set functor $\fun$, and assume that the monotone fragment of the one-step language $\mathtt{1SO}_{\Lambda}$ admits uniform translations. 
Then:
$$\mathtt{MSO}_{\Lambda}{/}{\sim} \equiv \mu \mathtt{ML}_{\Lambda}$$
\end{thm}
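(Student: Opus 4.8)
The plan is to prove the two halves of $\equiv$ separately. The inclusion $\muMLLa \subseteq \MSOLa/{\sim}$ is essentially free: every $\phi \in \muMLLa$ is bisimulation invariant (the routine fact recorded just after the semantics of $\muMLT$) and, by Proposition~\ref{p:mu-to-mso}, semantically equivalent to some $\phi^{\diamond} \in \MSOLa$; hence $\phi$ is equivalent to a formula lying in the bisimulation-invariant fragment of $\MSOLa$. All the work is in the converse inclusion, for which I would use the automata-theoretic machinery of the previous sections.

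So fix a bisimulation-invariant $\varphi \in \MSOLa$ with free variables in a finite set $P$. First I would apply Theorem~\ref{p:automatachar} to get a monotone $P$-chromatic second-order $\Lambda$-automaton $\bbA = \bbA_\varphi \in \Aut(\SOLa)$ that is equivalent to $\varphi$ over $\fun$-tree models. Since $\bbA$ is monotone, every one-step formula in the finite set $\Gamma := \Delta[A \times \psf(P)]$ of transition formulas lies in the monotone fragment of $\mathtt{1SO}_\Lambda(A)$, so the standing hypothesis supplies a uniform construction $(\cdot)_{*}$ and a translation $(\cdot)^{*} : \Gamma \to \mathtt{1ML}_\Lambda(A)$ for $\Gamma$. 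Lifting $(\cdot)^{*}$ to the automaton level gives a modal $\Lambda$-automaton $\bbA^{*} \in \Aut(\mathtt{1ML}_\Lambda)$, and Fact~\ref{coalgebraicmu} then provides a formula $\psi \in \muMLLa$, with free variables in $P$, equivalent to $\bbA^{*}$ over all $\fun$-models. It remains to show $\varphi \equiv \psi$ over arbitrary pointed $\fun$-models, which finishes the proof since $\psi \in \muMLLa$.

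To prove $\varphi \equiv \psi$, fix an arbitrary pointed $\fun$-model $(\bbS,s)$. Proposition~\ref{p:unr} — whose proof uses only the uniform translation of the finitely many transition formulas of $\bbA$, all of which are monotone, so that the monotone-fragment hypothesis is enough — yields a $\fun$-tree model $(\bbT,R,t)$ together with a $\fun$-model morphism $f : \bbT \to \bbS$ with $f(t)=s$, such that $\bbA$ accepts $(\bbT,R,t)$ iff $\bbA^{*}$ accepts $(\bbS,s)$. Now chain four biconditionals: $\bbA^{*}$ accepts $(\bbS,s)$ iff $(\bbS,s)\Vdash\psi$ (choice of $\psi$); $\bbA$ accepts $(\bbT,R,t)$ iff the underlying pointed $\fun$-model $(\bbT,t)$ satisfies $\varphi$ (Theorem~\ref{p:automatachar}, using that $(\bbT,R,t)$ is a $\fun$-tree model); and $(\bbT,t)\sim(\bbS,s)$, since the $\fun$-model morphism $f$ identifies $t$ with $s$ (take $f$ and $\mathrm{id}_{\bbS}$ as the two morphisms into $\bbS$ in the definition of behavioural equivalence), so bisimulation invariance of $\varphi$ gives $(\bbT,t)\Vdash\varphi$ iff $(\bbS,s)\Vdash\varphi$. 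Composing the four equivalences yields $(\bbS,s)\Vdash\varphi$ iff $(\bbS,s)\Vdash\psi$, as desired.

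I do not expect a genuine obstacle here: the substance of the theorem has been front-loaded into Theorem~\ref{p:automatachar} (the automaton translation of $\MSOLa$), Proposition~\ref{p:unr} (the unravelling-plus-uniform-translation construction), and the hypotheses on $\Lambda$. The only point requiring care is the bookkeeping that glues these together — matching the ``over $\fun$-tree models'' equivalence of Theorem~\ref{p:automatachar} to the ``over all $\fun$-models'' equivalence of Fact~\ref{coalgebraicmu}, and observing that a $\fun$-model morphism trivially witnesses behavioural equivalence. If one wants the effective version (that $\varphi$ is constructively transformed into a $\muMLLa$-formula), it suffices to note that each ingredient — Theorem~\ref{p:automatachar}, the lift of $(\cdot)^{*}$, and Fact~\ref{coalgebraicmu} — is itself effective.
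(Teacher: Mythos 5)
Your proposal is correct and follows essentially the same route as the paper: translate the bisimulation-invariant formula into a monotone second-order automaton (Theorem~\ref{p:automatachar}), apply the uniform translation to its finitely many monotone transition formulas to obtain a modal automaton and hence a $\muMLLa$-formula (Fact~\ref{coalgebraicmu}), and then chain exactly the four biconditionals through the tree model and model morphism supplied by Proposition~\ref{p:unr}. The only difference is that you spell out the easy inclusion via Proposition~\ref{p:mu-to-mso}, which the paper leaves implicit.
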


\begin{proof}
Given a bisimulation invariant formula $\varphi$ of $\mathtt{MSO}_\Lambda$, let $\mathbb{A}$ be an equivalent monotone second-order automaton, and suppose there is a uniform translation $(-)^*$ of the monotone fragment of  the one-step language $\mathtt{1SO}_{\Lambda}$. Let $\psi$ be a formula in $\mu \mathtt{ML}_\Lambda$ equivalent to the automaton $\mathbb{A}^*$. Then for any pointed $\fun$-model $\mathbb{S},s$, let $(\mathbb{T},R,t)$ be the  $\fun$-tree model provided by Proposition \ref{p:unr}. Then we get:
\begin{eqnarray*}
\mathbb{S},s \Vdash \varphi & \Leftrightarrow &  \mathbb{T},t \Vdash \varphi \\
& \Leftrightarrow & \mathbb{T},R,t \Vdash \mathbb{A} \\
& \Leftrightarrow & \mathbb{S},s \Vdash \mathbb{A}^* \\
& \Leftrightarrow & \mathbb{S},s \Vdash \psi
\end{eqnarray*}
and the proof is done.
\end{proof}

\subsection{Adequate uniform constructions}

The existence of uniform translations for the one-step language 
\cite{vene:expr14} involves two components: a translation 
on the syntactic side and a uniform construction on the semantic side. 
One of our main new observations in this paper is that we can actually forget about the syntactic translation and focus entirely on the model-theoretic side of the construction. If we can find a suitable 
uniform construction for the one-step models, satisfying a certain model-theoretic condition with respect to the second-order one-step language, the syntactic translation will 
come for free.

\begin{defi}
Let $\varphi$ be any one-step formula in  $\SOLa(A)$. Then a uniform construction $(-)_*$ for $\fun$ 
 is called \textit{adequate} for $\varphi$ if, for any pair of one-step frames $(X,\alpha)$ and $(Y,\beta)$, any one-step frame morphism
 $f : (X,\alpha) \rightarrow (Y,\beta)$ and any valuation $V : A \rightarrow \cvp(Y)$, we 
have the following condition $(\star)$:
$$
(X_*,\alpha_*,V_{[f \circ h_\alpha]})\Vdash_1 \varphi 
  \text{ iff } (Y_*, \beta_*, V_{[h_\beta]}) \Vdash_1 \varphi $$
where $h_\alpha : (X_*,\alpha_*) \to (X,\alpha)$ and $h_\beta : (Y_*,\beta_*) \to (Y,\beta)$ are the covering maps given by the uniform construction.
We say that the construction is adequate for a set of formulas $\Gamma$ if it is adequate for every member of $\Gamma$.
\end{defi} 

The following diagram clarifies condition $(\star)$: 
\begin{equation*}
\begin{xy}
(0,15)*+{ (X_*,\alpha_*,V_{[f \circ h_\alpha]})}="a";
(0,0)*+{(X,\alpha, V_{[f]})}="d"; 
(46,15)*+{(Y_*, \beta_*, V_{[h_\beta]})}="b";(46,0)*+{(Y,\beta,V)}="c";
(25,15)*+{\stackrel{\varphi}{\Longleftrightarrow}}="e"; {\ar_{f}"d";"c"};%
{\ar_{h_\alpha}"a";"d"};%
{\ar^{h_\beta}"b";"c"}
\end{xy}
\end{equation*}

We are now ready to prove our first main theorem, stating that if $\fun$ admits an adequate construction for any finite set of one-step formulas, then $\MSO_\fun/{\sim} \equiv \mu\mathtt{ML}_\fun$.

\begin{proof}[Proof of Theorem \ref{mainone}]
By Theorem \ref{generalcharacterization} it suffices to show that the monotone fragment of $\mathtt{1SO}_\fun$ admits uniform translations. Let $\Gamma$ be a finite set of monotone formulas of $\mathtt{1SO}_\fun(A)$, for some finite set $A$, and suppose the uniform construction $(-)_*$ is adequate for $\Gamma$. Given $\varphi \in \Gamma$ we define a monotone predicate lifting $\lambda$ over $A$  by setting
$$\alpha \in \lambda_X(V)  \text{ iff } (X_*,\alpha_*,V_{[h_\alpha]}) \Vdash_1 \varphi.$$
It is easy to check that this lifting is monotone, and naturality of $\lambda$ is a direct consequence of the condition $(\star)$. If $A$ has $n$ elements and we list these as $(a_1,...,a_n)$, then we can view $\lambda$ as  a formula in $\mathtt{1ML}_\Lambda(A)$ of the form $\lambda'(a_1,...,a_n)$, where $\lambda'$ is the $n$-place predicate lifting corresponding to $\lambda$ with this given ordering of $A$. We now get a uniform translation by mapping $\varphi$ to the formula $\lambda' (a_1,...,a_n)$.
\end{proof}

Note that, if $\Lambda$ is an expressively complete set of liftings for $\fun$, then it suffices to find an adequate uniform construction for every finite set of formulas in $\mathtt{1SO}_\Lambda(A)$, since every formula in $\mathtt{1SO}_\fun(A)$ is semantically equivalent to one in $\mathtt{1SO}_\Lambda(A)$. Hence, by Theorem \ref{generalcharacterization} we get:

\begin{cor}
\label{exp-complete-cor}
Let $\fun$ be any set functor, and let $\Lambda$ be an expressively complete set of predicate liftings for $\fun$. If $\fun$ admits an adequate uniform construction for every finite set of formulas of $\mathtt{1SO}_\Lambda$, then 
$$\mathtt{MSO}_\Lambda/{\sim} \equiv \mu \mathtt{ML}_\Lambda.$$
\end{cor}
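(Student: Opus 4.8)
The plan is to reduce the corollary to the Auxiliary Characterization Theorem~\ref{generalcharacterization}: I would show that, under the stated hypotheses, the monotone fragment of the one-step language $\mathtt{1SO}_\Lambda$ admits uniform translations, after which the claimed equivalence $\mathtt{MSO}_\Lambda/{\sim} \equiv \mu\mathtt{ML}_\Lambda$ follows at once, since $\Lambda$ is expressively complete as that theorem requires. The argument runs along exactly the same lines as the proof of Theorem~\ref{mainone}; the one genuinely new ingredient is that, because $\Lambda$ need no longer contain \emph{all} monotone liftings, the predicate lifting manufactured from an adequate uniform construction has to be re-expressed inside the syntax $\mathtt{1ML}_\Lambda$, and this is precisely where expressive completeness of $\Lambda$ is used.

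Concretely, I would fix a finite set $A$ together with a finite set $\Gamma$ of monotone formulas in $\mathtt{1SO}_\Lambda(A)$, and let $(-)_*$, with covers $h_\alpha : (X_*,\alpha_*)\to(X,\alpha)$, be a uniform construction for $\fun$ that is adequate for $\Gamma$ (supplied by hypothesis). For each $\varphi\in\Gamma$ define a generalized predicate lifting $\lambda_\varphi$ over $A$ by
$$\alpha \in (\lambda_\varphi)_X(V) \quad\text{iff}\quad (X_*,\alpha_*,V_{[h_\alpha]})\Vdash_1 \varphi$$
for every set $X$ and valuation $V : A \to \cvp X$. Exactly as in the proof of Theorem~\ref{mainone}, applying the adequacy condition $(\star)$ to one-step frame morphisms of the form $f : (X,\alpha)\to(Y,\fun f(\alpha))$ — together with the bookkeeping identity $(\cvp f\circ V)_{[h_\alpha]} = V_{[f\circ h_\alpha]}$ — shows that $\lambda_\varphi$ is natural, hence a genuine predicate lifting; and since $\varphi$ is monotone and $V\mapsto V_{[h_\alpha]}$ is monotone, $\lambda_\varphi$ is moreover a monotone predicate lifting.

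Now comes the single new step. The lifting $\lambda_\varphi$ need not lie in $\Lambda$, so it cannot itself serve as the modal one-step translate of $\varphi$. But $\Lambda$ is expressively complete, so $\lambda_\varphi$ is $\Lambda$-definable: there is a formula $\varphi^{*} \in \mathtt{1ML}_\Lambda(A)$ with $(X,\alpha,V)\Vdash_1\varphi^{*}$ iff $\alpha\in(\lambda_\varphi)_X(V)$, for every one-step model $(X,\alpha,V)$. Chaining this with the defining clause of $\lambda_\varphi$ gives
$$(X,\alpha,V)\Vdash_1\varphi^{*} \quad\text{iff}\quad (X_*,\alpha_*,V_{[h_\alpha]})\Vdash_1\varphi,$$
which is precisely the condition demanded of a uniform translation. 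Hence the pair consisting of the map $\varphi\mapsto\varphi^{*}$ and the construction $(-)_*$ is a uniform translation for $\Gamma$. As $\Gamma$ was an arbitrary finite subset of the monotone fragment of $\mathtt{1SO}_\Lambda$, that fragment admits uniform translations, and Theorem~\ref{generalcharacterization} then yields $\mathtt{MSO}_\Lambda/{\sim}\equiv\mu\mathtt{ML}_\Lambda$.

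I do not expect a serious obstacle here: the corollary is essentially a refinement of Theorem~\ref{mainone}. The only point requiring care is the invocation of expressive completeness to land the induced translation inside $\mathtt{1ML}_\Lambda$ rather than merely inside $\mathtt{1ML}_\fun$; without it one could not apply Theorem~\ref{generalcharacterization}, whose hypotheses and conclusion are phrased in terms of $\mathtt{1SO}_\Lambda$ and $\mu\mathtt{ML}_\Lambda$. Everything else — the verification of naturality of $\lambda_\varphi$ from $(\star)$ and the monotonicity check — is a verbatim repetition of the corresponding steps in the proof of Theorem~\ref{mainone}.
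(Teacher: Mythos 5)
Your proof is correct and follows essentially the same route as the paper, which likewise derives the corollary from Theorem~\ref{generalcharacterization} by rerunning the argument of Theorem~\ref{mainone} for the smaller set $\Lambda$. In fact you spell out precisely the step the paper leaves implicit: the monotone lifting induced by the adequate construction need not lie in $\Lambda$, and expressive completeness is exactly what lets you replace it by a formula of $\mathtt{1ML}_\Lambda(A)$ so that the uniform translation (and hence the automaton $\bbA^{*}$) lands in the $\Lambda$-based language.
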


 If $\fun$ preserves weak pullbacks then we can reformulate  $(\star)$ as the following condition: 
\begin{defi}
Suppose $\fun$ preserves weak pullbacks.  
A \textit{one-step $\overline{\fun}$-bisimulation} between one-step models $(X,\alpha,V)$ and $(Y,\beta,U)$ is a relation $R \subseteq X \times Y$ such that:
\begin{itemize}
\item If $x R y$ then $V^\dagger(x) = U^\dagger (y)$.
\item $(\alpha,\beta) \in \overline{\fun}R$.
\end{itemize}
\end{defi}

\begin{prop}
Suppose that $\fun$ preserves weak pullbacks, and let $(-)_*$ be a uniform construction for $\fun$. Then $(-)_*$ is adequate for $\varphi$ if, and only if, for every pair of one-step $\overline{\fun}$-bisimilar one-step models $(X,\alpha,V)$ and $(Y,\beta, U)$ we have:
$$(X_*,\alpha_*, V_{[h_{\alpha}]}) \Vdash_1 \varphi  \Leftrightarrow (Y_*, \beta_*,U_{[h_{\beta}]}) \Vdash_1 \varphi.$$
\end{prop}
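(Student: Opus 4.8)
The plan is to prove the two implications by passing between one-step frame \emph{morphisms} and \emph{spans} of such morphisms, which is exactly the translation between relational bisimulations and spans of structure-preserving maps that weak pullback preservation (via the Barr extension $\overline{\fun}$) makes available. In both directions the argument reduces to two applications of the relevant hypothesis, glued along a common middle object.

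\emph{From adequacy to bisimulation invariance.} Suppose $R \subseteq X \times Y$ is a one-step $\overline{\fun}$-bisimulation between $(X,\alpha,V)$ and $(Y,\beta,U)$, with projections $\pi_1 : R \to X$ and $\pi_2 : R \to Y$. Unfolding the clause $(\alpha,\beta) \in \overline{\fun}R$ yields some $\gamma \in \fun R$ with $\fun\pi_1(\gamma) = \alpha$ and $\fun\pi_2(\gamma) = \beta$, so $\pi_1 : (R,\gamma) \to (X,\alpha)$ and $\pi_2 : (R,\gamma) \to (Y,\beta)$ are one-step frame morphisms. The color clause of the bisimulation says precisely that $V_{[\pi_1]} = U_{[\pi_2]}$; write $W$ for this common valuation on $R$. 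I would then apply the adequacy condition $(\star)$ for $\varphi$ twice: once to the morphism $\pi_1$ with valuation $V$, once to $\pi_2$ with valuation $U$. Using the contravariant composition law $V_{[\pi_1 \circ h_\gamma]} = (V_{[\pi_1]})_{[h_\gamma]}$ (and likewise for $U$), the left-hand side of \emph{both} instances is $(R_*,\gamma_*,W_{[h_\gamma]}) \Vdash_1 \varphi$, so chaining the two equivalences gives $(X_*,\alpha_*,V_{[h_\alpha]})\Vdash_1\varphi \Leftrightarrow (Y_*,\beta_*,U_{[h_\beta]})\Vdash_1\varphi$, as required.

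\emph{From bisimulation invariance to adequacy.} Given a one-step frame morphism $f : (X,\alpha) \to (Y,\beta)$ and a valuation $V : A \to \cvp(Y)$, I would take the graph $G = \{(x,f(x)) \mid x \in X\} \subseteq X \times Y$ with its two projections, and check that $G$ is a one-step $\overline{\fun}$-bisimulation between $(X,\alpha,V_{[f]})$ and $(Y,\beta,V)$: the color clause is immediate from the definition of $V_{[f]}$, and for $(\alpha,\beta) \in \overline{\fun}G$ one uses the graphing bijection $g : X \to G$, $x \mapsto (x,f(x))$, which satisfies $\pi_1 \circ g = \mathrm{id}_X$ and $\pi_2 \circ g = f$, so that $\fun g(\alpha) \in \fun G$ is the required witness (here $\fun\pi_2(\fun g(\alpha)) = \fun f(\alpha) = \beta$ because $f$ is a morphism of one-step frames). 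Applying the hypothesis to this pair, together with $(V_{[f]})_{[h_\alpha]} = V_{[f\circ h_\alpha]}$, yields exactly $(\star)$ for $f$ and $V$.

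I do not foresee a genuine obstacle: the computation is short. The only points that demand care are keeping straight the direction of composition in the derived-valuation law when chasing the covering maps $h_\alpha, h_\beta, h_\gamma$, and the observation that weak pullback preservation enters only to make the notion of one-step $\overline{\fun}$-bisimulation well-defined — both arguments above use nothing beyond the defining property of the Barr extension $\overline{\fun}$. I would also remark on the symmetry: each biconditional is obtained from two instances of the relevant hypothesis, perfectly symmetric in $X$ and $Y$.
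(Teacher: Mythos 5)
Your proof is correct. The paper states this proposition without proof, so there is nothing to diverge from; your argument is the natural one, and both halves check out: the Barr witness $\gamma \in \fun R$ turns the bisimulation into a span of one-step frame morphisms $\pi_1,\pi_2$ along whose common cover $(R_*,\gamma_*,W_{[h_\gamma]})$ two applications of $(\star)$ chain together (using that the uniform construction assigns a cover to \emph{every} one-step frame, in particular to $(R,\gamma)$), while conversely the graph of a frame morphism $f$ is a one-step $\overline{\fun}$-bisimulation between $(X,\alpha,V_{[f]})$ and $(Y,\beta,V)$, with $\fun g(\alpha)$ as witness. Your valuation bookkeeping $V_{[f\circ g]}=(V_{[f]})_{[g]}$ is the correct reading (the paper's displayed identity has the indices in diagrammatic order), and your remark that weak pullback preservation is used only to have the lifting $\overline{\fun}$ available, not in the computation itself, is accurate.
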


\subsection{Some easy applications}
We now start with some straightforward applications of the main characterization theorem.

\begin{exa}
\label{ex:psf}
As a first application, the standard Janin-Walukiewicz characterization of the 
modal $\mu$-calculus can be seen as an instance of the first main characterization result by taking
$\La = \{ \Diamond \}$ and $\fun = \psf$, recalling that $\MSO = \MSO_{\{\Diamond\}}$. 
The  uniform construction for $\psf$, which is adequate with respect to any set of one-step formulas, is given as follows: consider a 
pair $(X,\alpha)$ with $\alpha \in \psf(X)$. 
We take this to $X_* =  \alpha_* = \alpha \times \omega$, and we let $h_\alpha: 
\alpha \times \omega \to X$ be the projection map. Proving adequacy of this construction is an exercise in model theory, and left to the reader. (The proof is a much simpler special case of the argument we will use in Section \ref{adguidedonc}.)
\end{exa}

\begin{exa}
\label{ex:bag}
Consider the finitary multiset (``bags'') functor 
$\mathcal{B}$, which sends a set $X$ to the set of mappings $f : X \rightarrow
\omega$ such that the set $\{u \in X \mid f(u) = 0\}$ is cofinite.
The action on morphisms is given by letting, for $f \in \mathcal{B}X$ and 
$h : X \rightarrow Y$, the multiset $\mathcal{B}h(f) : Y \rightarrow \omega$ 
be defined by $w \mapsto \sum_{h(v) = w} f(v)$. 
Given a pair $X,\alpha$ where $\alpha : X \rightarrow \omega$ has finite support,
we define 
$$X_{*} = \bigcup \{\{u\} \times \alpha(u) \mid u \in X\}.
$$
Here, we follow the standard procedure and identify each finite ordinal $n \in \omega$ with the set of smaller ordinals, so that $0$ is identified with $\emptyset$, $1$ is $\{\emptyset\}$, $2$ is $\{\emptyset, \{\emptyset\}\}$ and so on.  
The mapping $\alpha_{*} : X_{*} \rightarrow \omega$ is defined by setting 
$\alpha_{*}(w) = 1$ for all $w \in X_{*}$. 
The map $h_\alpha :  X_* \rightarrow X$ is defined by $(u,i) \mapsto u$. 
It is easy to check that whenever two one-step frames $(X,\alpha)$ and $(Y,\beta)$ are related by some morphism $f : (X,\alpha) \to (Y,\beta)$, the models $(X_*,\alpha_*,V_{[f \circ h_\alpha]})$ and $(Y_*,\beta_*,V_{[h_\beta]})$ are isomorphic, for any valuation $V : A \to \cvp (Y)$. It follows that the construction is adequate for any set of one-step formulas, hence we get
$\mu \mathtt{ML}_{\mathcal{B}}\equiv\mathtt{MSO}_{\mathcal{B}} {/} {\sim}$.
\end{exa}

\begin{exa}
Consider the set of all \textit{exponential polynomial 
functors}~\cite{jaco:intr12} defined by the ``grammar'':
$$\mathsf{T} \isbnf \mathsf{C} 
   \divbnf \mathsf{Id} \divbnf \mathsf{T} \times \mathsf{T} 
   \divbnf \coprod_{i \in I} \mathsf{T}_i \divbnf \mathsf{T}^\mathsf{C} 
$$
where $\mathsf{C}$ is any constant functor for some set $C$, and $\mathsf{Id}$ 
is the identity functor on $\mathbf{Set}$. 
These functors cover many important applications: streams, binary trees,
deterministic finite automata and deterministic labelled transition systems
are all examples of coalgebras for exponential polynomial functors, as is 
the so called \textit{game functor} whose coalgebras provide semantics for 
Coalition Logic \cite{cirs:moda11}. For this last instance, the ``game functor'' $\mathcal{G}$ for $n$ agents can be written in the form of an exponential polynomial functor as follows:
$$  \coprod_{\langle S_0,...,S_{n-1} \rangle \in (\psf(\omega)\setminus \{\emptyset\})^n} \{\langle S_0,...,S_{n - 1}\rangle \} \times \mathsf{Id}^{(S_0 \times ... \times S_{n-1})} $$
Then, for a given set $X$, an element of $\mathcal{G}X$ will be a pair consisting of a vector $\langle S_0,...,S_{n - 1}\rangle$ of available strategies for each player, together with an ``outcome map'' $f$ assigning an element of $X$ to each strategy profile in $S_0 \times ... \times S_{n-1}$.

Every exponential polynomial functor admits adequate uniform 
constructions for all sets of one-step formulas. The proof proceeds by a straighforward induction, and similarly to the case for the bags functor it provides each exponential polynomial functor $\fun$ with a uniform construction such that for any one-step frame morphism $f : (X,\alpha) \to (Y,\beta)$ and any $V : A \to \cvp (Y)$, the one-step models $(X_*,\alpha_*,V_{[f \circ h_\alpha]})$ and $(Y_*,\beta_*,V_{[h_\beta]})$ are isomorphic. Hence, we get:

\begin{thm}
\label{c:epf}
For every exponential polynomial functor $\mathsf{T}$, we have 
$\mu \mathtt{ML}_{\fun}\equiv\mathtt{MSO}_{\fun} {/} {\sim}$.
\end{thm}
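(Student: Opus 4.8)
The plan is to reduce to Theorem \ref{mainone}: it suffices to show that every exponential polynomial functor $\fun$ admits an adequate uniform construction for every finite set of second-order one-step formulas. Instead of working formula by formula, I would prove the stronger, formula-independent statement that $\fun$ carries a uniform construction $(-)_*$ such that, for every one-step frame morphism $f : (X,\alpha) \to (Y,\beta)$ and every valuation $V : A \to \cvp(Y)$, the two one-step models $(X_*,\alpha_*,V_{[f \circ h_\alpha]})$ and $(Y_*,\beta_*,V_{[h_\beta]})$ are \emph{isomorphic}. Since isomorphic one-step models satisfy the same formulas of $\SOT(A)$, this immediately yields adequacy for every one-step formula, and Theorem \ref{mainone} then applies. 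This is the same mechanism already exploited for the bag functor in Example \ref{ex:bag}.

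The construction would be defined by induction on the grammar generating the exponential polynomial functors. For a constant functor $\mathsf{C}$ I would take $X_* = \emptyset$, $\alpha_* = \alpha \in C$, and $h_\alpha : \emptyset \to X$ the empty map; since $\mathsf{C}$ acts as the identity on arrows, any morphism $f$ forces $\beta = \alpha$, and both one-step models are the (literally equal, hence trivially isomorphic) empty model. For the identity functor I would take $X_* = \{\alpha\}$, $\alpha_* = \alpha$ and $h_\alpha : \{\alpha\} \hookrightarrow X$ the inclusion; a morphism $f$ satisfies $f(\alpha) = \beta$, and the bijection $\alpha \mapsto \beta$ is the required isomorphism. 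For the inductive clauses, given a one-step frame $(X,\alpha)$ for a compound functor I would split $\alpha$ into components lying in the constituent functors, apply the inductively given constructions to each of them to obtain covers $h^{(i)} : (X^{(i)}_*,\alpha^{(i)}_*) \to (X,\alpha^{(i)})$, and glue: for a product $\fun_1 \times \fun_2$ take $X_*$ to be the disjoint union $X^{(1)}_* \sqcup X^{(2)}_*$; for a coproduct $\coprod_{i\in I}\fun_i$, where $\alpha$ lies in a unique summand $\fun_j X$, just take $X_* = X^{(j)}_*$; for an exponent $\fun_1^{\mathsf{C}}$, viewing $\alpha$ as a family $(\alpha(c))_{c\in C}$, take $X_* = \coprod_{c\in C} X^{(c)}_*$. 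In each case $\alpha_*$ is reassembled from the component chosen objects by pushing them forward along the coproduct insertions, and $h_\alpha$ is the co-tuple of the $h^{(i)}$; a short computation with the functor actions (using that $\fun h_\alpha$ post-composed with an insertion equals the corresponding component cover) confirms $\fun h_\alpha(\alpha_*) = \alpha$, so $(-)_*$ is indeed a uniform construction.

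To verify the isomorphism property at the inductive step, I would note that a one-step frame morphism $f : (X,\alpha) \to (Y,\beta)$, through $\fun f(\alpha) = \beta$, restricts on each component to a morphism of one-step frames for the constituent functor (and, for a coproduct, forces $\alpha$ and $\beta$ into the same summand). The induction hypothesis then supplies isomorphisms of one-step models componentwise, and their disjoint union is a bijection $X_* \to Y_*$; one checks that it sends $\alpha_*$ to $\beta_*$ and matches the conjugate colorings of $V_{[f\circ h_\alpha]}$ and $V_{[h_\beta]}$, hence is an isomorphism of one-step models. The only real obstacle is bookkeeping rather than ideas: one must be careful with the functor actions on coproduct insertions and copairings when assembling $\alpha_*$ and when gluing the componentwise isomorphisms, particularly in the exponent case where the index set $C$ may be infinite, and with the degenerate frames over $\emptyset$. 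Once the induction is complete, every exponential polynomial functor admits a uniform construction that is adequate for every finite set of second-order one-step formulas, and Theorem \ref{mainone} delivers $\mu \mathtt{ML}_{\fun} \equiv \mathtt{MSO}_{\fun}{/}{\sim}$.
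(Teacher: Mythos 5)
Your proposal is correct and follows essentially the same route as the paper: an induction on the grammar of exponential polynomial functors producing, clause by clause (empty cover for constants, singleton for the identity, disjoint unions with coproduct insertions for products/exponents, and the relevant summand for coproducts), a uniform construction for which the two one-step models in the adequacy condition are outright isomorphic, so that adequacy for all one-step formulas and hence Theorem \ref{mainone} apply. The paper's own proof sketches exactly these cases (leaving exponentiation and the isomorphism bookkeeping to the reader), so there is nothing substantive to add.
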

\end{exa}
\begin{proof}
We provide a sketch of the inductive construction of an adequate uniform construction. 
\subsubsection*{Constant functor}
For the constant functor $\mathsf{C}$, a one-step frame is a pair $(X,c)$ with $c\in C$. We set $X_* = \emptyset$, $c_* = c$ and $h_c : \emptyset \to X$ to be the unique inclusion of the empty set.
\subsubsection*{Identity functor}
Given a one-step frame $(X,u)$ for the identity functor, which consists of a set $X$ and $u \in X$, we set $X_* = \{u\}$, $u_* = u$ and we set $h_u : \{u\} \to X$ to be the inclusion map sending $u$ to itself.
\subsubsection*{Product}
Suppose that $\fun_1$ and $\fun_2$ have associated adequate uniform constructions with the required property. Consider a one-step $\fun_1 \times \fun_2$-frame $(X,(\alpha,\beta))$ with $\alpha \in \fun_1 X$ and $\beta \in \fun_2 X$. Let $h_1 : (X_1,\alpha_1) \to (X,\alpha)$ be the cover assigned by the uniform construction for $\fun_1$ and let $h_2 : (X_2,\beta_2) \to (X,\beta)$ be the cover assigned by the uniform construction for $\fun_2$. Then we take $X_*$ to be the disjoint union $X_1 + X_2$, and set 
$$(\alpha,\beta)_* = (\fun_1 i_1(\alpha_1),\fun_2 i_2(\beta_2))$$
where $i_1 : X_1 \to X_1 + X_2$ and $i_2 :  X_2 \to X_1 + X_2 $ are the insertion maps for the co-product. Finally, we define the covering map $h_{(\alpha,\beta)} : X_1 + X_2 \to X$ be obtained by simply co-tupling the maps $h_1,h_2$, i.e. $h_{(\alpha,\beta)}$ is the map given by the universal property of the co-product applied to the diagram $X_1 \stackrel{h_1}{\longrightarrow} X \stackrel{h_2}{\longleftarrow} X_2$. We get:
\begin{eqnarray*}
(\fun_1 \times \fun_2)h_{(\alpha,\beta)}((\alpha,\beta)_*) & = & (\fun_1 h_{(\alpha,\beta)}(\fun_1 i_1(\alpha_1)), \fun_2 h_{(\alpha,\beta)}(\fun_2 i_2(\beta_2))) \\
& = & (\fun_1 (h_{(\alpha,\beta)} \circ i_1 )(\alpha_1), \fun_2 (h_{(\alpha,\beta)} \circ i_2) (\beta_2)) \\
& = & (\fun_1  h_1 (\alpha_1), \fun_2 h_2 (\beta_2)) \\
& = & (\alpha,\beta)
\end{eqnarray*}
so $h_{(\alpha,\beta)}$ is indeed a covering map as required.
\subsubsection*{Exponentiation}
The case of a functor $\fun^\mathsf{C}$ for some constant $\mathsf{C}$ is handled analogously with the case of binary products, so we leave it to the reader.
\subsubsection*{Co-product}
This step of the construction is actually the easiest one. Suppose that each functor $\fun_i$ for $i \in I$ is equipped with an adequation uniform construction, in the strong sense demanded by our induction hypothesis. Without loss of generality let's assume that $\fun_i X \cap \fun_j X = \emptyset$ for $i \neq j$. Let $(X,\alpha)$ be a one-step frame for the co-product $\coprod_{i \in I} \mathsf{T}_i$. Then there is a unique $i \in I$ with $\alpha \in \fun_i X$, and so we define the cover $h_\alpha : (X_*,\alpha_*) \to (X,\alpha)$ merely by applying the uniform construction for $\fun_i$. 
\end{proof}
\begin{rem}
These uniform constructions were all designed in a case-by-case fashion, and at the present time we do not know whether there is any general recipe for producing an adequate uniform construction when it exists. What the constructions mentioned so far seem to have in common is that we want to produce as many equivalent (in some sense) copies of each state in a one-step model, but this is not always sufficient. In the next section we will see a somewhat more involved construction for the monotone mu-calculus, which aims to create sufficiently many copies of each state but also, crucially, sufficiently many \emph{pairwise disjoint} copies of all the neighborhoods. In this case we are trying to neutralize not only the  capability of the second-order one-step language to count states in one-step models, but also its capability to express how certain neighborhoods are related to each other, in particular, whether they overlap or not. For example, the second-order one-step language can express the property that any two neighborhoods intersect, or there is a smallest neighborhood contained in all others etc., and the uniform construction we provide needs to trivialize all such statements. In general, applying our main result as it stands may require a bit of creativity, and we regard it as an interesting (possibly quite hard) task for future research to come up with a result that makes the task entirely mechanical. We mention some related questions in our concluding remarks section. 
\end{rem}
\section{Characterizing the monotone $\mu$-calculus}
\label{sec:mon}

This final section of the paper concerns the monotone neighborhood functor 
$\mon$. There are two main difficulties here. First, the uniform construction will be less straightforward than in the previous applications, for the reasons that we explained above. Second, we need to do a bit of ``pre-processing'' of the monotone neighborhood functor to make the technique of uniform constructions applicable. In fact, the characterization theorem cannot possibly apply directly to the monotone $\mu$-calculus: it turns out that there is \textit{no} adequate uniform construction for $\mon$. For these reasons, we view the results of this section as an independent contribution of the paper, rather than just another application to illustrate the main theorem.

\begin{prop}
The functor $\mon$ does not admit an adequate uniform construction for the formula $\varphi$ defined as: $\neg \mathsf{Em}(a)$.
\end{prop}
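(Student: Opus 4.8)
The plan is to argue by contradiction. Suppose $\mon$ admits a uniform construction $(-)_*$, assigning to each one-step frame $(X,\alpha)$ a cover $h_\alpha : (X_*,\alpha_*) \to (X,\alpha)$, and suppose this construction is adequate for $\varphi := \neg\mathsf{Em}(a)$ (over the variable set $A=\{a\}$). The first task is to rewrite condition $(\star)$ for this particular $\varphi$ in purely set-theoretic terms. A one-step model $(X,\gamma,U)$ over $A$ satisfies $\neg\mathsf{Em}(a)$ precisely when $U(a)\neq\emptyset$; and since $V_{[g]}(a) = g^{-1}(V(a))$ is non-empty exactly when $V(a)$ meets the image of $g$, condition $(\star)$ for $\varphi$ applied to a one-step frame morphism $f : (X,\alpha) \to (Y,\beta)$ says that for every $Z\subseteq Y$,
$$Z \cap f\big[\,\mathrm{im}\,h_\alpha\,\big] \neq \emptyset \iff Z \cap \mathrm{im}\,h_\beta \neq \emptyset.$$
Letting $Z$ range over singletons, this is equivalent to the identity $f\big[\,\mathrm{im}\,h_\alpha\,\big] = \mathrm{im}\,h_\beta$. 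I would also record the routine fact that $\mathrm{im}\,h_\alpha$ is always a support of $\alpha$: factoring $h_\alpha$ as $X_* \twoheadrightarrow \mathrm{im}\,h_\alpha \hookrightarrow X$ and using $\mon h_\alpha(\alpha_*) = \alpha$ exhibits an element of $\mon(\mathrm{im}\,h_\alpha)$ restricting to $\alpha$. So an adequate construction for $\neg\mathsf{Em}(a)$ would in effect be a rule assigning to every $\mon$-frame a support that is stable under all one-step frame morphisms.

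It remains to produce a single frame witnessing that no such rule can exist. I would take $Y=\omega$ and let $\beta\in\mon\omega$ be the cofinite (Fréchet) filter $\beta = \{W\subseteq\omega \mid \omega\setminus W \text{ is finite}\}$, which is manifestly upward closed. Two observations: (i) $\emptyset$ is not a support of $\beta$, since $\omega\in\beta$ but $\omega\cap\emptyset = \emptyset\notin\beta$; and (ii) the successor map $s\colon\omega\to\omega$, $s(n)=n+1$, is an endomorphism of the one-step frame $(\omega,\beta)$ --- indeed $\omega\setminus s^{-1}(W)$ is carried bijectively by $s$ onto $(\omega\setminus W)\setminus\{0\}$, so $s^{-1}(W)$ is cofinite iff $W$ is, whence $\mon s(\beta)=\beta$.

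Now apply the reformulation above to the morphism $s\colon(\omega,\beta)\to(\omega,\beta)$. Here the source and target one-step frames coincide, so they receive one and the same cover $h := h_\beta$; writing $S := \mathrm{im}\,h$, adequacy forces $s[S] = S$. But the only subset $S\subseteq\omega$ with $s[S]=S$ is $S=\emptyset$: if $S$ were non-empty, its least element $m$ would lie in $s[S]$, forcing $m=n+1$ for some $n\in S$ with $n<m$, contradicting minimality. On the other hand $S$, being a support of $\beta$, is non-empty by (i). This contradiction completes the proof.

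The only delicate point is the opening reduction --- one must carefully trace the definitions of $V_{[-]}$, of the one-step semantics of the abbreviation $\mathsf{Em}$, and of ``support'' (the last relying on our standing convention that $\mon$ preserves monics). Everything afterwards is elementary. Conceptually, the obstruction is exactly that $\mon$ admits no functorial choice of minimal supports: over $\omega$, the supports of the Fréchet filter are precisely the cofinite sets, a directed family with no least member, and there is no morphism-stable way to select from such a family.
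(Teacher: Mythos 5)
Your proof is correct, and it reaches the conclusion by a genuinely different route than the paper. The paper argues with a co-span of three \emph{finite} one-step frames: a two-point frame with neighborhoods $\{\{w_1,w_2\},\{w_2\}\}$ and $a$ true only at $w_1$, receiving one morphism from a three-point frame $(\{u_1,u_2,u_3\},\{\{u_1,u_2\},\{u_2,u_3\},\{u_1,u_2,u_3\}\})$ in which every support must contain the unique point mapped to $w_1$, and another from a one-point frame on which $a$ is pulled back to be nowhere true; applying $(\star)$ to the two legs forces the covers of the two source frames to agree on $\neg\mathsf{Em}(a)$, which is impossible since the first cover's image must support its object (hence meet the value of $a$) while the second cannot. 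You instead first distil adequacy for $\neg\mathsf{Em}(a)$ into the exact combinatorial condition $f[\mathrm{im}\,h_\alpha]=\mathrm{im}\,h_\beta$ --- so an adequate construction would amount to a choice of supports stable under all one-step frame morphisms --- and then kill it with a single infinite frame, the Fr\'echet filter on $\omega$, and its successor endomorphism: the image of the cover would have to be a non-empty support of $\beta$ equal to its own successor image, which cannot be. Your explicit reduction makes the conceptual obstruction (no morphism-stable choice of supports for $\mon$) fully visible where the paper leaves it implicit, and the endomorphism trick is economical; on the other hand, the paper's counterexample lives entirely among finite frames, so it shows adequacy fails even if one restricts attention to finite one-step frames, which your example --- essentially built on an object admitting no finite support --- does not yield. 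Two points you gesture at and should spell out in a polished write-up: that $\emptyset$ is not a support of the Fr\'echet filter rests on the identity $\mon \iota_{Z,X}(\gamma)=\{W\subseteq X\mid W\cap Z\in\gamma\}$, and the step where both ends of the endomorphism receive the same cover is exactly the fact that a uniform construction is a function on one-step frames.
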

\begin{proof} The formula $\varphi$ just says that the value of $a$ is non-empty. Suppose there existed an adequate uniform construction for this formula. Consider the situation depicted in the diagram below, which shows three one-step frames together with two one-step frame morphisms, one for each of the two  bottom one-step frames. The top frame consists of two points $\{w_1,w_2\}$ and has neighborhoods $\{\{w_1,w_2\},\{w_2\}\}$, the bottom left frame has points $\{u_1,u_2,u_3\}$ and neighborhoods $\{\{u_1,u_2\},\{u_2,u_3\},\{u_1,u_2,u_3\}\}$. Finally the bottom right frame has a single point $\{v\}$ and this singleton as its only neighborhood. 
In other words, the picture shows a co-span in the category of one-step frames and one-step frame morphisms. Furthermore, consider the valuation on the topmost one-step frame which makes $a$ true at exactly the one state $w_1$, i.e. the one not belonging to the singleton neighborhood. This valuation is depicted in the diagram by representing the state where $a$ is true by a blank circle, and the state where it is not true by a filled circle. The induced valuations on the bottom one-step frames via the frame morphisms are depicted in the same manner. With respect to  these valuations, $p$ will be true at $u_1$, but false at $u_2,u_3$
 and $v$.
\begin{center}
\setlength{\unitlength}{0.8cm}
\begin{picture}(16,12)(0,0)
\put(-0.5,-7.7){\includegraphics[width = 13cm]{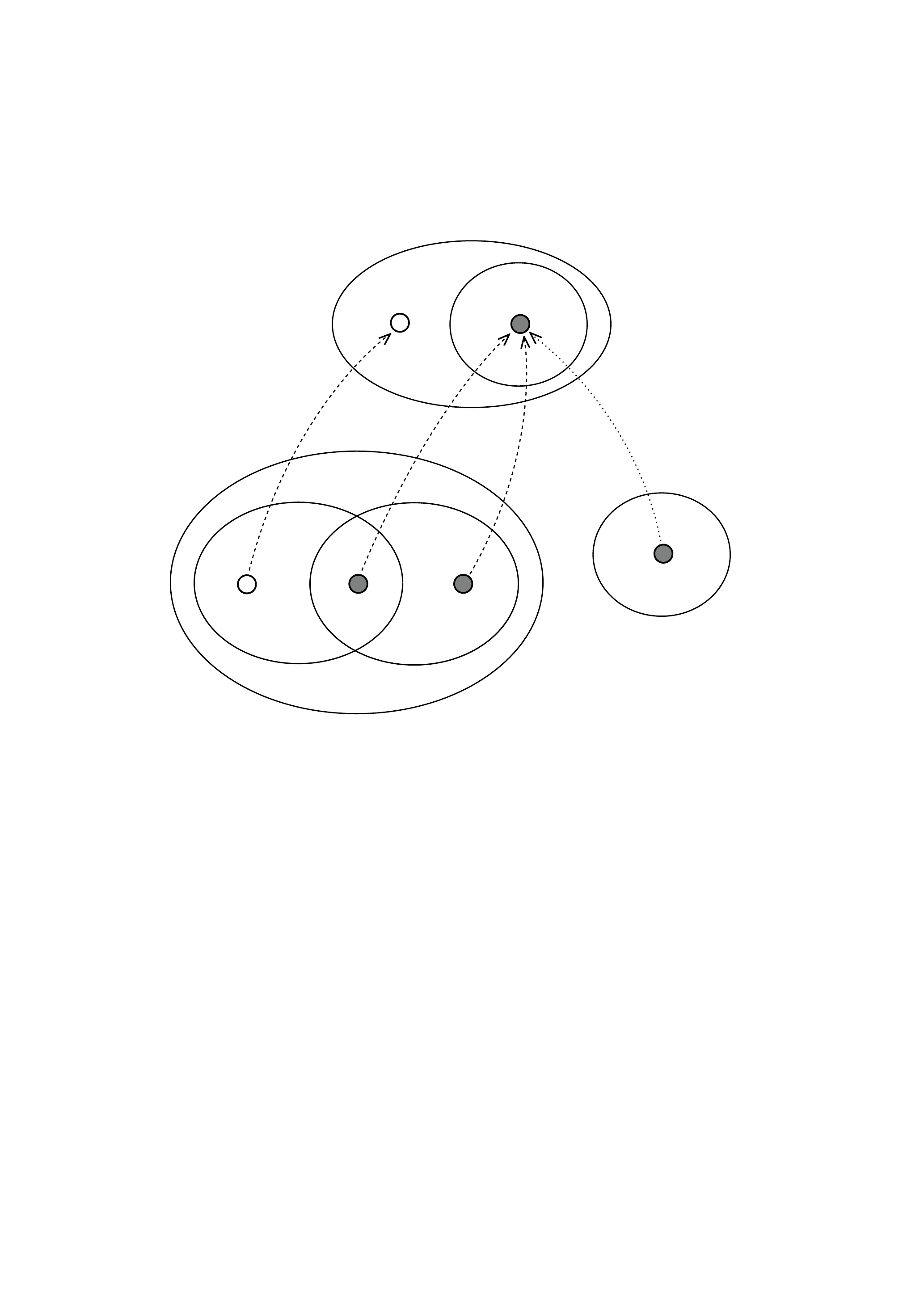}}
\put(2.7,1.4){Every cover satisfies $\varphi$}
\put(10,3.2){No cover satisfies $\varphi$}
\end{picture}
\end{center}
Let us denote the top frame by $(Y,\beta)$, its valuation by $V$, the bottom-left frame as $(X,\alpha)$, the bottom-right one as $(X',\alpha')$ and the corresponding frame morphisms as $f$ and $f'$ respectively. 
Now, the supposed adequate construction will assign a cover to each of the three frames, and we get valuations for each cover from the valuations depicted in the diagram. It follows from the defining condition $(\star)$ for adequacy that $\varphi$ must have the same truth value in each of these covers. But this leads to a contradiction: it is not hard to check that the cover $h_\alpha : (X_*,\alpha_*) \to (X,\alpha)$ must be such that $(X_*,\alpha_*,V_{[f \circ h_\alpha]}) \Vdash \varphi$, because the image of the map $h_\alpha$ must support $\alpha$ and therefore include the one element of $X$ coloured red. On the other hand, we clearly must have $(X'_*,\alpha'_*,V_{[f' \circ h_{\alpha'}]}) \nvDash \varphi$, which contradicts the condition $(\star)$.
\end{proof}
\subsection{The supported companion of a functor}

Our solution to the problem we are facing with the monotone neighborhood functor is to perform a gentle repair, changing the functor into a closely related one that does admit an adequate uniform construction. The construction has some independent interest from a general coalgebraic perspective, so we define it in general terms:
\begin{defi} The \textit{supported  companion} $\funco$ of a set functor $\fun$ is the sub-functor of $\psf \times \fun$ defined by:
$$\tilde{\fun}(X) = \{(Z,\alpha) \mid \alpha \in \fun X \; \& \; Z \text{ supports }  \alpha \}$$
\end{defi}
It is easy to check that this is indeed a well-defined subfunctor of $\psf \times \fun$, since for any map $f : X \to Y$, any $\alpha \in \fun X$ and any set $Z \subseteq X$,  the image $f[Z] = \psf f (Z)$ is a support for $\fun f (\alpha)$ whenever $Z$ is a support for $\alpha$. The reader may also note that what we have called ``$\fun$-tree models'' are actually special instances of $\tilde{\fun}$-models, so in a sense the supported companion functor has already played a role in the proof of the main theorem. We will show that this construction repairs the monotone neighborhood functor, so that the supported companion $\underline{\mon}$ of $\mon$ admits an adequate uniform construction. Interestingly enough, the same construction happens to repair weak pullback preservation:
\begin{prop}
The functor $\underline{\mon}$ preserves weak pullbacks. 
\end{prop}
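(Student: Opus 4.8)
The claim is that $\funco$ for $\fun = \mon$ preserves weak pullbacks. Using the concrete characterization of weak pullback preservation recalled earlier in the excerpt, the plan is the following. We are given maps $g_1 : X_1 \to Y$ and $g_2 : X_2 \to Y$, elements $(Z_1,\alpha_1) \in \underline{\mon}X_1$ and $(Z_2,\alpha_2)\in \underline{\mon}X_2$ with $\underline{\mon}g_1(Z_1,\alpha_1) = \underline{\mon}g_2(Z_2,\alpha_2)$, i.e.\ $g_1[Z_1] = g_2[Z_2]$ and $\mon g_1(\alpha_1) = \mon g_2(\alpha_2)$; writing $R \subseteq X_1 \times X_2$ for the pullback of $g_1,g_2$ with projections $\pi_1,\pi_2$, we must exhibit $(W,\gamma) \in \underline{\mon}R$ with $\underline{\mon}\pi_i(W,\gamma) = (Z_i,\alpha_i)$. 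The key point — and the reason the supported companion repairs matters — is that we are free to \emph{choose the support} $W \subseteq R$; whereas $\mon$ alone fails weak pullback preservation because there may be no single $\gamma \in \mon R$ restricting correctly on the neighborhood level, here the extra bookkeeping of a support gives us enough room.

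First I would handle the support component. Set $Z := g_1[Z_1] = g_2[Z_2] \subseteq Y$. I claim $W := (Z_1 \times Z_2) \cap R$ works: since $g_1[Z_1]=g_2[Z_2]$, for every $u \in Z_1$ there is $v \in Z_2$ with $(u,v)\in R$ and conversely, so $\pi_1[W] = Z_1$ and $\pi_2[W] = Z_2$. Next, the neighborhood component. Because $Z_i$ supports $\alpha_i$, I may pass to the restriction $\alpha_i\vert_{Z_i} \in \mon Z_i$; using the assumption that (monotone set) functors preserve monics one checks that $\mon(g_i\!\restriction_{Z_i})(\alpha_i\vert_{Z_i})$ is the restriction to $Z$ of the common value $\mon g_i(\alpha_i)$, call it $\beta \in \mon Z$. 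So it suffices to produce $\gamma_0 \in \mon W$ with $\mon(\pi_i\!\restriction_W)(\gamma_0) = \alpha_i\vert_{Z_i}$, and then take $\gamma := $ the upward closure in $\cvp\cvp R$ of the image of $\gamma_0$ under the inclusion $W \hookrightarrow R$ (this lands in $\mon R$ by monotonicity of the embedding of $\mon$-structures along injections, and $W$ supports it by construction). Concretely, recalling that $\mon g(\alpha) = \{ T \subseteq Y' \mid g^{-1}(T) \in \alpha \}$ for $g : X' \to Y'$ and $\alpha \in \mon X'$, I would define
\[
\gamma_0 \;:=\; \bigl\{\, N \subseteq W \;\bigm|\; \pi_1[N] \in \alpha_1\vert_{Z_1} \ \text{and}\ \pi_2[N]\in\alpha_2\vert_{Z_2} \,\bigr\}
\]
made upward closed, or — if that does not restrict correctly — the cleaner choice using the $Y$-side: for $T \in \beta$ put $N_T := W \cap (\pi_1 g_1)^{-1}(T)$ and let $\gamma_0$ be the upward closure of $\{ N_T \mid T \in \beta\}$. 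One then verifies $(\pi_1\!\restriction_W)^{-1}(S) \in \gamma_0 \iff S \in \alpha_1\vert_{Z_1}$ for $S \subseteq Z_1$, using that $g_1[S]\in\beta \iff S \in \alpha_1\vert_{Z_1}$ when $Z_1$ supports $\alpha_1$ (monotonicity in the $\Leftarrow$ direction, and the definition of $\mon g_1$ in the other), and symmetrically for $\pi_2$.

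The main obstacle I anticipate is precisely this last verification: making the neighborhood component restrict correctly \emph{on both sides simultaneously}. The danger is that a neighborhood $N \subseteq W$ whose $\pi_1$-image lies in $\alpha_1\vert_{Z_1}$ might have $\pi_2$-image too small to lie in $\alpha_2\vert_{Z_2}$, or vice versa; this is exactly the kind of mismatch that kills weak pullback preservation for $\mon$ itself. The way to defeat it is to route everything through the common quotient $Z \subseteq Y$ and $\beta \in \mon Z$: define $\gamma_0$ from the $\beta$-sets pulled back to $W$, so that by symmetry of the construction both projections automatically land where they should, and then invoke monotonicity of $\mon$ (upward closure) to absorb any slack. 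I expect the remaining bookkeeping — checking $W$ is genuinely a support of $\gamma$ in $R$, and that the pair $(W,\gamma)$ is actually in $\underline{\mon}R$ rather than merely $\psf R \times \mon R$ — to be routine once the definitions are unwound.
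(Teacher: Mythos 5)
Your overall strategy is the right one (the paper in fact leaves this verification to the reader, pointing to the Marti--Venema relation-lifting argument): take $W=(Z_1\times Z_2)\cap R$, build $\gamma\in\mon R$ supported by $W$, and check the two projection conditions. But the verification you actually commit to is broken. The auxiliary equivalence you invoke, $g_1[S]\in\beta \iff S\in\alpha_1\vert_{Z_1}$, fails from left to right: the definition of $\mon g_1$ only yields $g_1^{-1}(g_1[S])\in\alpha_1$, and when $g_1$ identifies points of $Z_1$ this set is strictly larger than $S$ and need not put $S$ itself into $\alpha_1$. Correspondingly, your preferred ``$Y$-side'' construction of $\gamma_0$ from the sets $N_T$ does not restrict correctly: every generator $N_T$ is saturated under the kernel of $g_1$ restricted to $Z_1$, so $\mon\pi_1(\gamma)$ can only contain supersets of such saturated sets and misses any $S\in\alpha_1$ separating points with the same $g_1$-image. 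Concretely, take $X_1=Z_1=\{u,u'\}$, $X_2=Z_2=\{v\}$, $Y=\{y\}$, $g_1,g_2$ constant, $\alpha_1=\{\{u\},\{u,u'\}\}$, $\alpha_2=\{\{v\}\}$; the hypotheses hold since $\mon g_1(\alpha_1)=\mon g_2(\alpha_2)=\{\{y\}\}$ and $g_1[Z_1]=g_2[Z_2]=\{y\}$, and here $W=R=\{(u,v),(u',v)\}$, $\beta=\{\{y\}\}$, $N_{\{y\}}=W$, so your $\gamma$ is $\{R\}$ and $\{u\}\notin\mon\pi_1(\gamma)$ although $\{u\}\in\alpha_1$. (A correct witness does exist there, e.g.\ the upward closure of $\{\{(u,v)\}\}$, so it is your construction, not the proposition, that fails.)

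Ironically, the candidate you wrote down first and then set aside is the one that works. Take $\gamma_0:=\{N\subseteq W \mid \pi_1[N]\in\alpha_1 \text{ and } \pi_2[N]\in\alpha_2\}$ (already upward closed) and $\gamma:=\{M\subseteq R\mid M\cap W\in\gamma_0\}$, so $W$ supports $\gamma$ by construction. For $S\subseteq X_1$ one computes $\pi_1[\pi_1^{-1}(S)\cap W]=S\cap Z_1$ and $\pi_2[\pi_1^{-1}(S)\cap W]=g_2^{-1}(g_1[S\cap Z_1])\cap Z_2$, using $g_1[Z_1]=g_2[Z_2]$ to see that every element of $Z_1$ has a partner in $W$. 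If $S\in\alpha_1$, then $S\cap Z_1\in\alpha_1$ since $Z_1$ supports $\alpha_1$; moreover $g_1^{-1}(g_1[S\cap Z_1])\supseteq S\cap Z_1$ gives $g_1[S\cap Z_1]\in\mon g_1(\alpha_1)=\mon g_2(\alpha_2)$, hence $g_2^{-1}(g_1[S\cap Z_1])\in\alpha_2$ and, intersecting with the support $Z_2$, also $g_2^{-1}(g_1[S\cap Z_1])\cap Z_2\in\alpha_2$; so $\pi_1^{-1}(S)\in\gamma$. Conversely, $\pi_1^{-1}(S)\in\gamma$ gives $S\cap Z_1\in\alpha_1$ and hence $S\in\alpha_1$ by upward closure. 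Thus $\mon\pi_1(\gamma)=\alpha_1$, and the symmetric computation gives $\mon\pi_2(\gamma)=\alpha_2$, so $(W,\gamma)\in\underline{\mon}R$ is as required. Note where both hypotheses enter: the equality of supports is needed for $W$ to project onto the $Z_i$, and the equality $\mon g_1(\alpha_1)=\mon g_2(\alpha_2)$ is what transfers a neighborhood seen on one side to the other; your $Y$-side shortcut collapses exactly the information that this two-sided condition preserves.
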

We leave the verification of this to the reader; the argument is similar to the reasoning  in \cite{MartiVenema12} used to establish the existence of a well-behaved relation lifting for $\mon$. 

Note that we have an obvious natural transformation $\pi : \funco \to \psf$ with components given by the projection maps sending $(Z,\alpha) \in \funco X$ to $Z \in \psf (X)$, and we can compose this with the predicate lifting $\Box : \cvp \to \cvp \circ \psf$ to obtain a predicate lifting $\gbox$ for $\funco$ defined by:
$$\gbox := \cvp \pi \circ \Box$$
We call this lifting the \emph{support modality} for $\funco$, and write its dual as $\gdi$.
Furthermore, any lifting $\lambda$ for $\fun$ corresponds to a lifting for $\funco$ by composing $\lambda$ with the projection from $\funco$ to $\fun$ in the same manner, and we will not take care to distinguish $\lambda$ as a lifting for $\fun$ from the corresponding lifting for $\funco$.
 A little issue that we need to  address, before we can proceed to characterize the monotone $\mu$-calculus, is just how the language $\muML_\Lambda$ is related to $\muML_{\Lambda \cup \{\gbox\}}$ for a given set of liftings $\Lambda$ for $\fun$. The rest of this subsection will provide the answer, and give a characterization theorem for $\muML_\Lambda$ as the $\fun$-bisimulation invariant fragment of $\muML_{\Lambda \cup \{\gbox\}}$, where a formula is said to be $\fun$-bisimulation invariant if it has the same truth value in any two pointed $\funco$-models for which the corresponding $\fun$-models are behaviourally equivalent.  Formally we shall write $(\mathbb{S},s) \sim (\mathbb{S}',s')$, for $\funco$-models $\mathbb{S}$ and $\mathbb{S}'$, to say that the respective underlying pointed $\fun$-models are behaviourally equivalent. To distinguish this from actual behavioural equivalence in the sense of the companion functor $\underline{\fun}$, we write $(\mathbb{S},s) \;\underline{\sim} \; (\mathbb{S}',s')$ to say that these pointed models are behaviourally equivalent as $\underline{\fun}$-models.

We borrow a result from \cite{font:auto10}:
\begin{fact}[\cite{font:auto10}]
\label{fmp}
For any set of liftings $\Lambda$ for any set functor $\fun$, the logic $\mu \mathtt{ML}_\Lambda$ has the finite model property. 
\end{fact}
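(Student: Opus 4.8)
The plan is to route the argument through the automata-theoretic characterization of $\muMLLa$ and to reduce the finite model property to a non-emptiness analysis for $\MLLa$-automata. Since any single formula of $\muMLLa$ mentions only finitely many predicate liftings, we may assume $\Lambda$ finite. By Fact~\ref{coalgebraicmu} every $\varphi \in \muMLLa$ is equivalent to some $\bbA \in \Aut(\MLLa)$, so $\varphi$ is satisfiable iff $\bbA$ accepts some pointed $\fun$-model; it therefore suffices to show that every satisfiable $\MLLa$-automaton accepts a \emph{finite} pointed $\fun$-model.

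The engine of the proof is a \emph{one-step finite model property} for the modal one-step language: if a boolean combination $\delta$ of formulas $\lambda(\vec\psi)$ and $\lambda^d(\vec\psi)$ over a finite variable set $A$ holds in some one-step model $(X,\alpha,V)$, then it holds in a one-step model whose carrier is a subset of $\psf(A)$. This is immediate from naturality of the liftings (enjoyed by $\lambda$ and its dual alike): put $f : X \to \psf(A)$, $f(x) = \{a \mid x \in V(a)\}$, and $V' : A \to \cvp(\psf A)$, $V'(a) = \{Q \mid a \in Q\}$; then $\cvp f \circ V' = V$, so $(X,\alpha,V) \Vdash_1 \delta$ iff $(\psf A, \fun f(\alpha), V') \Vdash_1 \delta$, and one may restrict to the image of $f$ since $\fun$ preserves monics. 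Hence all the one-step information needed to meet the transition conditions of $\bbA$ can be taken to live on the fixed finite set $\psf(A)$.

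Using this I would build a \emph{satisfiability game} for $\bbA$ in the style of \cite{font:auto10}: a parity game on a finite arena whose positions carry a ``macro-state'' over the state set $A$ (for trace-bookkeeping, a binary relation on $A$, as in the simulation construction of Theorem~\ref{simulationtheorem}), a colour $c \in \psf(P)$, and a control state of an auxiliary \emph{deterministic} parity stream automaton recognising the ``no bad trace'' condition — the latter exists by its $\omega$-regularity (Claim~\ref{p:omregular}) and Safra's construction (see \cite{vene:lect12}). At an $\exists$-position $\exists$ must exhibit a one-step model over the macro-states witnessing satisfiability of the conjunction of the transition formulas of the states in the current macro-state, with carrier bounded by the one-step finite model property; $\forall$ then picks a successor macro-state, and the stream-automaton component updates and supplies the priority. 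One proves that $\bbA$ is satisfiable iff $\exists$ wins this game from its initial position: the forward direction projects an $\exists$-winning strategy in the acceptance game on a genuine accepting model down to the satisfiability game, reading off at each coalgebra point the set of automaton states that can be visited there, and matching up traces.

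The converse is where the finite model is produced and is the heart of the matter. If $\exists$ wins the satisfiability game she wins it \emph{positionally} (positional determinacy of parity games); the positions reachable from the initial position under such a strategy form a finite set, which is the carrier of the model. For each reachable $\exists$-position the strategy picks a one-step model over the macro-states, and because their carriers sit inside the macro-state set these amalgamate into a single $\fun$-coalgebra structure on the finite carrier; the valuation is read off from the colours. A trace analysis — the parity condition of the satisfiability game having been engineered precisely so that every infinite play through the constructed model projects to a ``good'' trace — then shows that $\exists$ wins the acceptance game of $\bbA$ on this finite pointed model. The main obstacle, and the step needing the most care, is exactly this engineering of the satisfiability game's winning condition so that it faithfully anticipates the acceptance condition of $\bbA$ on a model not yet built: this is what forces the detour through a deterministic stream automaton, and it is also the point at which one must verify that the bounded one-step models chosen by $\exists$ genuinely glue into one coalgebra over the finite state space.
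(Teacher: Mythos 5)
The paper offers no proof of this Fact---it is imported verbatim from \cite{font:auto10}---so the only meaningful comparison is with that reference, and your outline reconstructs essentially its argument: pass to $\Lambda$-automata via Fact~\ref{coalgebraicmu}, use naturality of the (monotone) liftings to bound one-step carriers by $\psf(A)$, play a macro-state satisfiability game whose winning condition is supplied by a deterministic parity stream automaton for the ``no bad trace'' condition, and extract a finite model from a positional winning strategy, checking traces as you describe. The sketch is sound; the one inaccuracy is calling the arena finite ($\exists$'s moves range over $\fun$-objects on a fixed finite carrier, of which there may be infinitely many), but this is harmless, since positional determinacy of parity games needs no finiteness of the board and the model extraction uses only the finitely many basic positions reachable under the strategy.
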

We assume that we have at our disposal a fixed $\funco$-model $ \mathbb{U} = (U,\gamma,V)$ which is a disjoint union of one isomorphic copy for every \emph{finite} pointed $\funco$-model $(\mathbb{S},s)$.  It is not hard to see that such a model does exist: just take a disjoint union containing each $\funco$-model defined on some finite subset $n = \{0,...,n -1\}$ of $\omega$. Since the class of all these models clearly forms a \emph{set}, their disjoint union is a well defined $\funco$-model.

\begin{defi}
Let $\mathbb{S}$ be any finite $\funco$-model.
We define the model $\mathbb{S} \oplus \mathbb{U} = (S + U,\sigma_+,V_+)$ such that, for each $s \in S$ we have $$\sigma_+(s) = (S + U,\fun(\iota_{S,S+U})(\alpha))$$ where $\sigma(s) = (X,\alpha)$, and for $u \in U$ we have $\sigma_+(u) = \funco \iota_{U,S+U}(\gamma(u))$\footnote{In this definition we have used ``$+$'' as the symbol for binary coproducts of sets.}.
\end{defi}
Note that the underlying $\fun$-model of $\mathbb{S} \oplus \mathbb{U}$ is just the co-product of the underlying $\fun$-models of $\mathbb{S}$ and $\mathbb{U}$ respectively.

The following lemmas are immediate from the definition.
\begin{lem}
\label{usim}
Let $\mathbb{S} $ be any $\funco$-model.  Then for all $u \in U$, we have:
$$  (\mathbb{U},u) \gsim ( \mathbb{S} \oplus \mathbb{U} ,u)  $$
\end{lem}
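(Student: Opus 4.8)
The plan is to exhibit the coproduct insertion $\iota \isdef \iota_{U,S+U} : U \to S+U$ as a $\funco$-model morphism from $\mathbb{U}$ to $\mathbb{S}\oplus\mathbb{U}$, and then read off the behavioural equivalence directly from the definition of $\gsim$. So the whole argument is just a matter of unwinding the definition of $\mathbb{S}\oplus\mathbb{U}$.

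First I would verify the two defining conditions of a $\funco$-model morphism for $\iota$. For the valuation clause: since $\mathbb{S}\oplus\mathbb{U}$ carries the canonical disjoint-union valuation and $S$, $U$ sit disjointly inside $S+U$, we have $V_+^\dagger(\iota(u)) = V^\dagger(u)$ for every $u \in U$, so $\iota$ respects the truth values of all propositional atoms. For the coalgebra clause: the requirement is $\sigma_+ \circ \iota = \funco\iota \circ \gamma$, and this is literally how $\sigma_+$ was defined on the $U$-summand, namely $\sigma_+(u) = \funco\iota_{U,S+U}(\gamma(u))$ for $u \in U$. Hence $\iota$ is a $\funco$-model morphism.

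Next, since the codomain of $\iota$ is $\mathbb{S}\oplus\mathbb{U}$ itself, I would instantiate the definition of behavioural equivalence of $\funco$-models with $\mathbb{T} \isdef \mathbb{S}\oplus\mathbb{U}$, taking the two identifying morphisms to be $\iota : \mathbb{U} \to \mathbb{T}$ and $\mathrm{id}_{\mathbb{T}} : \mathbb{S}\oplus\mathbb{U} \to \mathbb{T}$; these agree on $u$ in the required sense, since $\iota(u) = \mathrm{id}_{\mathbb{T}}(\iota(u))$. This yields $(\mathbb{U},u) \gsim (\mathbb{S}\oplus\mathbb{U}, \iota(u))$, which is exactly the asserted equivalence once we follow the paper's convention of writing the point $\iota_{U,S+U}(u) \in S+U$ simply as $u$.

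I do not expect any real obstacle: the statement is immediate from the construction of $\mathbb{S}\oplus\mathbb{U}$, which was set up precisely so that $\mathbb{U}$ embeds into it as a sub-coalgebra. The only point to be slightly careful about is the harmless abuse of notation identifying $u \in U$ with its image $\iota_{U,S+U}(u)$ in the coproduct.
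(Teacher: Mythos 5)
Your proof is correct and matches the paper's intent exactly: the paper dismisses this lemma as ``immediate from the definition,'' and the content of that immediacy is precisely your observation that $\sigma_+$ was defined on the $U$-summand as $\funco\iota_{U,S+U}\circ\gamma$, so the insertion $\iota_{U,S+U}$ is a $\funco$-model morphism, whence behavioural equivalence follows by cospanning with the identity on $\mathbb{S}\oplus\mathbb{U}$. (This is also exactly the point that fails for the $S$-summand, where the support component is inflated to $S+U$, which is why Lemma \ref{applemma} only asserts $\sim$ and not $\gsim$.)
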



\begin{lem}
\label{applemma}
For any $\fun$-model $\mathbb{S}$ and $s\in S$:
$(\mathbb{S},s) \sim (\mathbb{S} \oplus \mathbb{U},s)$.
\end{lem}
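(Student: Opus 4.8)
The plan is to unwind the definition of $\mathbb{S} \oplus \mathbb{U}$ together with the definition of $\sim$ for $\funco$-models; after that, the claim reduces to the already-noted fact that coproduct insertions are $\fun$-model morphisms.

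First I would recall the observation made directly after the definition of $\mathbb{S} \oplus \mathbb{U}$: the underlying $\fun$-model of $\mathbb{S} \oplus \mathbb{U}$ is the coproduct of the underlying $\fun$-models of $\mathbb{S}$ and $\mathbb{U}$, with coproduct insertions $\iota_{S,S+U}$ and $\iota_{U,S+U}$. Should a fuller justification be wanted, it is obtained by post-composing the two defining clauses of $\sigma_+$ with the projection natural transformation $\funco \Rightarrow \fun$ onto the second coordinate (natural because $\funco$ is a subfunctor of $\psf \times \fun$): for $s \in S$ with $\sigma(s) = (X,\alpha)$ this gives $\mathrm{pr}_2 \circ \sigma_+ \circ \iota_{S,S+U} = \fun \iota_{S,S+U} \circ \mathrm{pr}_2 \circ \sigma$, and for $u \in U$ the analogous identity for $\iota_{U,S+U}$ follows from naturality of $\mathrm{pr}_2$; together with the fact that $V_+$ restricts to the two given valuations on the two summands, this is precisely the defining data of the coproduct $\fun$-model.

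Next I would invoke the fact established earlier that each insertion map of a disjoint union of $\fun$-models is a $\fun$-model morphism, so that in particular $\iota_{S,S+U}$ is a $\fun$-model morphism from the underlying $\fun$-model of $\mathbb{S}$ into the underlying $\fun$-model of $\mathbb{S} \oplus \mathbb{U}$. By definition, $(\mathbb{S},s) \sim (\mathbb{S} \oplus \mathbb{U},s)$ asks precisely that the two underlying pointed $\fun$-models be behaviourally equivalent, and this I would witness by taking the common target to be the underlying $\fun$-model of $\mathbb{S} \oplus \mathbb{U}$ itself, together with the pair of $\fun$-model morphisms $\iota_{S,S+U}$ (out of the underlying $\fun$-model of $\mathbb{S}$) and the identity (out of the underlying $\fun$-model of $\mathbb{S} \oplus \mathbb{U}$); these agree on $s$, since $\iota_{S,S+U}(s) = \mathrm{id}(\iota_{S,S+U}(s))$.

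There is essentially no obstacle here; the only points requiring care are the routine bookkeeping between $\funco$-models and the underlying $\fun$-models on which $\sim$ is defined, and the correct citation of the two general facts used — the coproduct description of $\mathbb{S} \oplus \mathbb{U}$ and the morphism property of coproduct insertions. As with Lemma~\ref{usim}, nothing about $\mathbb{U}$ is used beyond its being a $\funco$-model.
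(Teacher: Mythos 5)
Your proof is correct and is exactly the argument the paper leaves implicit when it calls the lemma ``immediate from the definition'': the underlying $\fun$-model of $\mathbb{S} \oplus \mathbb{U}$ is the coproduct of the underlying $\fun$-models, so the insertion $\iota_{S,S+U}$ together with the identity on $\mathbb{S} \oplus \mathbb{U}$ gives the required cospan witnessing behavioural equivalence at $s$. Your care in working only with the underlying $\fun$-structures (via the projection $\funco \Rightarrow \fun$) is exactly the right bookkeeping, since the redefined support component on the $S$-part is what blocks the analogous claim for $\gsim$.
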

Note that Lemma \ref{applemma} is \emph{not} guaranteed to hold if we replace $\sim$ by the finer equivalence relation $\underline{\sim}$.

It will be convenient in this section to work with a second version of the acceptance game for a modal $\Lambda$-automaton or a modal $\Lambda \cup \{\gbox\}$-automaton, which we will call the \textit{extended} acceptance game for $\mathbb{A}$ with respect to a model $\mathbb{S}$, denoted $\mathcal{E}(\mathbb{A},\mathbb{S})$. Given a modal $\Lambda$-automaton $\mathbb{A} = (A,\Delta,a_I,\Omega)$ and a model $\mathbb{S} = (S,\sigma,V)$ this game has three types of positions: pairs of the form $(\psi,s)$ with $\psi \in \mathtt{1ML}_\Lambda(A)$, pairs of the form $(\psi,s)$ with $\psi \in \mathtt{Latt}(A)$ (lattice formulas over $A$, as defined in Definition \ref{onestepdef}), and maps $f : \mathtt{Latt}(A) \to \psf(S)$. Admissible moves are given in Table \ref{below}:

\begin{table}[h]
    \centering
\begin{tabular}{|l|c|l|l|}
\hline
Position  & Player  &  Admissible moves & Priority
\\ \hline
     $(\psi_1\vee \psi_2,s)$
   & $\exists$  
   & $\{(\psi_1,s),(\psi_2,s) \}$ & $0$
\\ 
     $(\psi_1\wedge \psi_2,s)$  
   & $\forall$  
   & $\{(\psi_1,s),(\psi_2,s) \}$ & $0$
\\ 
     $(a,s) \in A \times S$  
   & $\exists$  
   & $\{(\Delta(a,V^\dagger(s)),s) \}$ & $\Omega(a)$
\\ 
     $(\lambda(\psi_1,...,\psi_n),s)$  
   & $\exists$  
   &
     $\{ f: \mathtt{Latt}(A)\rightarrow{\mathcal{P}}S \mid \sigma(s) \in \lambda(f(\psi_1),...,f(\psi_n))\}$ &  $0$
\\
     $(\top,s)$  
   & $\forall$  
   &
     $\emptyset $ & $0$
\\ 
     $(\bot,s)$  
   & $\exists$  
   &
     $\emptyset$ & $0$
\\ 
     $ f: \mathtt{Latt}(A)\rightarrow{\mathcal{P}}(S) $  
   & $\forall$  
   &
     $\{ (\psi,s) \mid s \in f(\psi)\}$ & $0$ \\
\hline
    \end{tabular}
\caption{ \label{below}
The extended acceptance game 
} 
\end{table}
Here, $\lambda \in \Lambda \cup \{\gbox\}$.
  It is easy to see that the position $(\top,s)$ is always winning for $\exists$, and $(\bot,s)$ is always winning for $\forall$.
The following result is entirely routine to prove:
\begin{thm}
\label{extensive}
Let $\mathbb{A} =  (A,\Delta,a_I,\Omega)$ be a modal $\Lambda$-automaton and $(\mathbb{S},s)$ a pointed $\fun$-model. Then $\mathbb{A}$ accepts $(\mathbb{S},s)$ if, and only if, $(a_I,s)$ is a winning position in the extended acceptance game. 
\end{thm}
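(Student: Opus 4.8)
The plan is to simulate, step by step, each round of the ordinary acceptance game of Table~\ref{accgame} by a block of moves in the extended game $\mathcal{E}(\mathbb{A},\mathbb{S})$, and to use this to translate winning strategies for $\exists$ in both directions. The idea is that a round at a basic position $(a,s)$ — where $\exists$ picks $U : A \to \psf(S)$ with $(S,\sigma(s),U)\Vdash_1\Delta(a,V^\dagger(s))$ and $\forall$ answers with $(b,t)$, $t\in U(b)$ — corresponds in $\mathcal{E}(\mathbb{A},\mathbb{S})$ to: moving to $(\Delta(a,V^\dagger(s)),s)$; descending through the boolean structure of this one-step formula, with $\exists$ resolving disjunctions and $\forall$ conjunctions; at each modal atom $\lambda(\psi_1,\dots,\psi_n)$ reached, $\exists$ playing the map $f$ with $f(\psi)=\|\psi\|_U$; $\forall$ then picking some $(\psi,t)$ with $t\in f(\psi)$; and finally descending through the lattice structure of $\psi$ (again $\exists$ at $\vee$, $\forall$ at $\wedge$, the match decided outright at $\top$ and $\bot$) until a fresh basic position $(b,t)$ is reached.

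The technical core consists of two observations, each proved by induction on a finite syntactic measure. First, for a lattice formula $\psi\in\mathtt{Latt}(A)$, a point $t\in S$ and a valuation $U$, the finite descent sub-game rooted at $(\psi,t)$ — whose leaves are basic positions $(b,t')$, a $\top$-leaf and $\bot$-leaves — is won by $\exists$, when leaf $(b,t')$ is counted a win for $\exists$ exactly if $t'\in U(b)$, if and only if $t\in\|\psi\|_U$. Second, for a one-step formula $\varphi\in\mathtt{1ML}_\Lambda(A)$ and $s\in S$, the finite descent sub-game rooted at $(\varphi,s)$, whose leaves are the map positions $f$, is won by $\exists$ — where a leaf $f$ counts as a win for $\exists$ exactly if $(S,\sigma(s),U)\Vdash_1\varphi$ for $U$ defined by $U(a)=f(a)$, these two notions meshing when one iterates with the first observation — if and only if some $U$ with $(S,\sigma(s),U)\Vdash_1\varphi$ exists. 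Both facts are immediate once one notes that the player assignment of Table~\ref{below} copies the inductive clauses for $\Vdash^0_V$ and $\Vdash_1$: $\exists$ owns $\vee$ and gets stuck at $\bot$, $\forall$ owns $\wedge$ and gets stuck at $\top$, and the map position realises the quantifier over valuations.

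Given these, the left-to-right direction takes a positional winning strategy $\chi$ for $\exists$ in the ordinary game (positional determinacy of parity games) and converts it to a strategy $\chi^{\mathcal{E}}$ in $\mathcal{E}(\mathbb{A},\mathbb{S})$: at a basic position $(a,s)$ with $\chi(a,s)=U$, play the descent block described above with $f(\psi)=\|\psi\|_U$. By the two observations $\chi^{\mathcal{E}}$ is legal, a finite $\forall$-stuck match stays $\forall$-stuck, and an infinite $\chi^{\mathcal{E}}$-match visits an infinite sequence of basic positions $(a_0,s_0)(a_1,s_1)\cdots$ that is a $\chi$-guided match of the ordinary game; since every intermediate position has priority $0$, the highest priority occurring infinitely often is the same in both matches, so $\exists$ wins. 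The right-to-left direction is the mirror image: from a positional winning strategy in $\mathcal{E}(\mathbb{A},\mathbb{S})$, read off at each basic position $(a,s)$ the valuation $U(b)=f(b)$ for the map $f$ that $\exists$ is steered to in the ensuing descent, and invoke the two observations again to see that $U$ is a legal move in the ordinary game and that $\forall$'s ordinary-game answers correspond to his $\mathcal{E}$-choices of map and lattice sub-positions.

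The only point needing a little care is that an infinite $\mathcal{E}$-match cannot get trapped inside a descent block — which is clear, since each descent move strictly decreases a finite formula measure — together with the fact that $\Omega$ is constantly $0$ off the basic positions, so the winning conditions of the two games evaluate literally the same sequence of priorities. With that, everything reduces to the routine inductions above, in accordance with the paper's remark that the proof is entirely routine.
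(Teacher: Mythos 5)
Your overall architecture is the right one (finite descent blocks, a shadow match visiting the same basic positions, and the remark that all non-basic positions carry priority $0$, so both games evaluate the same priority sequence), and your left-to-right direction is essentially correct; since the paper declares the proof routine and omits it, the only question is whether your sketch really goes through. The genuine gap is in the right-to-left direction. The read-off $U(b)=f(b)$ ``for the map $f$ that $\exists$ is steered to in the ensuing descent'' is not well-defined: the descent from $(\Delta(a,V^\dagger(s)),s)$ may reach, along different $\forall$-branches through conjunctions, several modal atoms, each with its own map. Worse, even for a single modal atom this read-off is in general not a legal move in the ordinary game, because the admissibility condition at a position $(\lambda(\psi_1,\dots,\psi_n),s)$ constrains only the values $f(\psi_1),\dots,f(\psi_n)$ at those particular lattice formulas and does not force $f$ to cohere with its values on the atoms of $A$. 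Concretely, take $\La=\{\Diamond\}$, $\fun=\psf$, $\Delta(a,c)=\Diamond(b_1\vee b_2)$ and $\sigma(s)=\{t\}$: the map $f$ with $f(b_1\vee b_2)=\{t\}$ and $f(b_1)=f(b_2)=\emptyset$ is admissible and perfectly compatible with a winning strategy (after $\forall$ moves to $(b_1\vee b_2,t)$, $\exists$ simply picks a disjunct and lands on a basic position), yet your $U$ with $U(b_1)=U(b_2)=\emptyset$ falsifies $\Diamond(b_1\vee b_2)$, so it is not an admissible move for $\exists$ in the ordinary game. Your ``second observation'' buries exactly this issue: the equivalence between winning the descent game, where valuations are chosen locally and independently at each modal atom, and the existence of a single global valuation is the non-trivial content of the theorem, and mere existence of some $U$ is in any case not enough --- you need a $U$ all of whose options $(b,t)$ with $t\in U(b)$ are matched by $\chi$-guided continuations, or the shadow match, and with it your parity argument, breaks down.

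The repair is standard but must be said: fix a positional winning strategy $\chi$ for $\exists$ in $\mathcal{E}(\mathbb{A},\mathbb{S})$ and, at a basic position $(a,s)$, define $U(b)$ as the set of all $t$ such that $(b,t)$ occurs as the endpoint of some $\chi$-guided partial play from $(a,s)$ passing no intermediate basic position (the frontier of the block). An induction on lattice formulas shows $f(\psi_i)\subseteq\Vert\psi_i\Vert_U$ for every map $f$ that $\chi$ plays inside the block, and monotonicity of the liftings in $\La$ (and of their Boolean duals) then upgrades $\sigma(s)\in\lambda_S(f(\psi_1),\dots,f(\psi_n))$ to $\sigma(s)\in\lambda_S(\Vert\psi_1\Vert_U,\dots,\Vert\psi_n\Vert_U)$; a second induction up the Boolean structure of $\Delta(a,V^\dagger(s))$ gives admissibility of $U$, and by construction every answer $(b,t)$ of $\forall$ extends the shadow match in a $\chi$-guided way, so the priority comparison from your other direction applies verbatim. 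Note that monotonicity of the predicate liftings, which you never invoke, is genuinely used at this step.
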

We also have the following observation:
\begin{prop}
\label{invariance-extended}
Let $\psi$ be any one-step formula in $\mathtt{1ML}_\Lambda(A)$ for a set of liftings $\Lambda$ for some given set functor $\fun$, and let $\mathsf{P} \in \{\forall , \exists\}$. If $(\mathbb{S},s) \gsim (\mathbb{S}',s')$ then the position $(\psi,s)$ is winning for $\mathsf{P}$ in $\mathcal{E}(\mathbb{A},\mathbb{S})$  if and only if $(\psi,s') $ is a winning position for $\mathsf{P}$ in $\mathcal{E}(\mathbb{A},\mathbb{S}')$.
\end{prop}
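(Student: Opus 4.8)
The plan is to prove this by mutual induction on the structure of the one-step formula $\psi \in \mathtt{1ML}_\Lambda(A)$, simultaneously with the corresponding statement for lattice formulas $\psi \in \mathtt{Latt}(A)$. The key point to exploit is that $\gsim$ is behavioural equivalence for the \emph{supported companion} functor $\funco$: if $(\mathbb{S},s) \gsim (\mathbb{S}',s')$ then there are $\funco$-model morphisms $g, g'$ into a common $\funco$-model $\mathbb{T}$ with $g(s) = g'(s')$. Such morphisms preserve the propositional colours ($V^\dagger(s) = V^\dagger(g(s))$) and the transition structure ($\sigma_\mathbb{T}(g(s)) = \funco g(\sigma_\mathbb{S}(s))$), and the naturality constraint of the predicate liftings in $\Lambda \cup \{\gbox\}$ is exactly what lets us transfer one-step truth along morphisms. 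So it suffices to handle the case where one of the two models maps onto the other by a single $\funco$-model morphism $g : \mathbb{S} \to \mathbb{S}'$ with $g(s) = s'$; the general two-sided case follows by composing the two transfers through $\mathbb{T}$.

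For the inductive argument I would describe the winning strategies explicitly. For positions $(a,s)$: this move is forced (to $(\Delta(a,V^\dagger(s)),s)$), and since $g$ preserves colours we have $V^\dagger(s) = (V')^\dagger(s')$, so the two forced successors match up and we invoke the induction hypothesis on $\Delta(a,V^\dagger(s)) \in \mathtt{1ML}_\Lambda(A)$. For boolean positions $(\psi_1 \vee \psi_2, s)$ and $(\psi_1 \wedge \psi_2, s)$, the moves are again between positions for strictly smaller formulas, and we copy a strategy across using the two corresponding instances of the induction hypothesis, noting the position has priority $0$ so it contributes nothing to the parity of an infinite match. For the modal positions $(\lambda(\psi_1,\dots,\psi_n),s)$: given a choice $f : \mathtt{Latt}(A) \to \psf(S)$ for $\exists$ in $\mathcal{E}(\mathbb{A},\mathbb{S})$ with $\sigma(s) \in \lambda(f(\psi_1),\dots,f(\psi_n))$, I would push it forward to $f' := \psf g \circ f$; naturality of $\lambda$ (applied to the map $g$) together with $\funco g(\sigma(s)) = \sigma'(s')$ gives $\sigma'(s') \in \lambda(f'(\psi_1),\dots,f'(\psi_n))$, so $f'$ is admissible. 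Conversely, given $f'$ admissible at $s'$, one pulls back by $f := \cvp g \circ f'$, again using naturality. One must check that $\forall$'s subsequent responses $(\psi, t)$ (with $t \in f(\psi)$, resp.\ $t' \in f'(\psi)$) again land in $\gsim$-related positions: for the push-forward direction $t \in f(\psi)$ gives $g(t) \in f'(\psi)$ and $(\mathbb{S},t) \gsim (\mathbb{S}',g(t))$ since $g$ is still a $\funco$-morphism; for the pull-back direction one needs, for each $t' \in f'(\psi)$, some $t \in f(\psi)$ with $t' \gsim t$, which holds because $\funco$-model morphisms into a common model witness $\gsim$ and $f(\psi) = g^{-1}(f'(\psi))$. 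The trivial positions $(\top,s)$, $(\bot,s)$ are handled by their (player-independent) definitions.

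Finally, for infinite matches, since the only positions carrying nonzero priority are the $(a,s)$ positions (priority $\Omega(a)$), and the strategies above preserve the sequence of automaton states $a_0, a_1, \dots$ visited along a match, the parity of any $\chi$-guided infinite match in $\mathcal{E}(\mathbb{A},\mathbb{S})$ equals the parity of the corresponding match in $\mathcal{E}(\mathbb{A},\mathbb{S}')$; hence the transferred strategy is winning whenever the original is. Combining the forward transfer ($\mathbb{S} \to \mathbb{S}'$) with its mirror image, and then using a common cospan $\mathbb{S} \to \mathbb{T} \leftarrow \mathbb{S}'$ to handle arbitrary $\gsim$-related models, gives the equivalence for both $\mathsf{P} = \exists$ and $\mathsf{P} = \forall$; the case $\mathsf{P} = \forall$ follows from the case $\mathsf{P} = \exists$ by determinacy of the extended acceptance game (an easy consequence of Borel determinacy), or can be read off directly by the symmetric argument.

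The main obstacle I anticipate is the modal step, specifically getting the back-and-forth condition on $\forall$'s responses exactly right: one has to be careful that pulling back $f'$ along $g$ (set-theoretically, $f(\psi) = g^{-1}[f'(\psi)]$) genuinely produces, for every successor position $\forall$ could pick on the $\mathbb{S}'$ side, a $\gsim$-equivalent successor on the $\mathbb{S}$ side — and vice versa — so that the induction hypothesis is actually applicable at the next round. This is where the precise formulation of $\gsim$ via $\funco$-morphisms (rather than, say, a relational bisimulation, which need not exist since $\mon$ does not preserve weak pullbacks) is essential. Everything else is bookkeeping with naturality and the definition of the extended game.
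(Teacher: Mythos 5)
The paper records this proposition without proof, as a routine observation, and your argument is the intended one: reduce $\gsim$ to a cospan of $\funco$-model morphisms $\mathbb{S}\to\mathbb{T}\leftarrow\mathbb{S}'$, transfer strategies round by round along a single morphism $g$ using preservation of colours and naturality at modal positions, and settle infinite matches by noting that only positions $(a,t)$ carry nonzero priority and your simulation preserves the formula component of every position, hence the priority sequence. Three points should be tightened. First, the opening framing as a \emph{structural} induction on $\psi$ is not literally available: at a position $(a,t)$ the forced successor is $(\Delta(a,V^\dagger(t)),t)$, and $\Delta(a,c)$ is not a subformula of $a$, so the induction is ill-founded as stated; what actually carries the proof is the invariant you maintain along partial matches (same formula component, points related by $g$) together with your closing parity argument for infinite plays, which you do supply — a framing slip, not a gap. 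Second, admissibility of the pushed-forward move $f' := \psf g \circ f$ does not follow from naturality alone: naturality compares $f'$ with $\cvp g \circ f'$, whose values $g^{-1}[g[f(\psi_i)]]$ merely \emph{contain} $f(\psi_i)$, so you must also invoke monotonicity of the liftings in $\Lambda$ and of their duals (available, since the paper works with monotone liftings throughout); the pull-back $f := \cvp g \circ f'$ is the direction that uses naturality on the nose. Third, your two back-and-forth checks on $\forall$'s responses are attached to the wrong cases: transferring $\exists$'s strategy from $\mathbb{S}$ to $\mathbb{S}'$ requires lifting each $t' \in f'(\psi) = g[f(\psi)]$ to a preimage in $f(\psi)$, while the converse transfer only needs that $t \in f(\psi) = g^{-1}[f'(\psi)]$ gives $g(t) \in f'(\psi)$; both facts are immediate from the definitions, so nothing breaks. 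With these adjustments the argument is complete, and, as you say, the case $\mathsf{P}=\forall$ follows either by the symmetric transfer or from determinacy of the (parity) extended game.
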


This means that for a pair of $\tilde{\fun}$-models $\mathbb{S}$ and $\mathbb{S}'$ such that $(\mathbb{S},s) \gsim (\mathbb{S}',s')$, and a formula $\psi \in \mathtt{ML}_{\Lambda \cup \{\gbox\}}^1(A)$, the position $(\psi,s)$ is winning for $\mathsf{P}$ in $\mathcal{E}(\mathbb{A},\mathbb{S})$  if and only if $(\psi,s') $ is a winning position for $\mathsf{P}$ in $\mathcal{E}(\mathbb{A},\mathbb{S}')$.
\begin{defi}
Let $\bbA = (A,\Delta,a_I,\Om)$ be a $\Lambda \cup \{\gbox\}$-automaton and let $\psi$ be a formula in the range of $\Delta$. We say that $\psi$ is $\bbA$-\emph{valid} if for all (finite) pointed $\fun$-models $(\mathbb{S},s)$, the position $(\psi,s)$ is  winning in $\mathcal{E}(\mathbb{A},\mathbb{S})$. We say that $\psi$ is $\bbA$-\emph{satisfiable} if there is \emph{some} (finite) pointed $\fun$-model $(\mathbb{S},s)$, the position $(\psi,s)$ is  winning in $\mathcal{E}(\mathbb{A},\mathbb{S})$.
\end{defi}

Let $\Lambda$ be any set of predicate liftings for $\fun$ and let $\mathbb{A}$ be any modal $\Lambda \cup \{\gbox\}$-automaton. Making use of our fixed $\tilde{\fun}$-model $\mathbb{U}$ we shall define a translation $t_\mathbb{A} : \mathtt{1ML}_{\Lambda \cup \{\gbox\}}(A) \to \mathtt{1ML}_{\Lambda}(A)$ by induction as follows:
\begin{itemize}
\item For $\lambda \in \Lambda$, we set $t_\mathbb{A}(\lambda( \psi_1,...,\psi_k)) = \lambda(\psi_1,...,\psi_k)$, and similarly for the dual $\lambda^d$ for every lifting $\lambda \in \Lambda$.
\item $t_\mathbb{A}(\top) = \top$ and $t_\mathbb{A}(\bot) = \bot$.
\item  $t_\mathbb{A}(\psi_1 \vee \psi_2) = t_\mathbb{A}(\psi_1) \vee t_\mathbb{A}(\psi_2)$ and $t_\mathbb{A}(\psi_1 \wedge \psi_2) = t_\mathbb{A}(\psi_1) \wedge t_\mathbb{A}(\psi_2)$.
\item $t_\mathbb{A}(\gbox \psi) = \left \{ \begin{array}{ll} \top & \text{ if  $\psi$ is $\bbA$-valid }  \\
\bot & \text{ otherwise } \end{array} \right \} $.
\item $t_\mathbb{A}(\gdi \psi) = \left \{ \begin{array}{ll} \top & \text{ if $\psi$ is $\bbA$-satisfiable}  \\
\bot & \text{ otherwise } \end{array} \right \}$.
\end{itemize}
Note that this translation depends on the whole automaton $\mathbb{A}$, not just the set of variables $A$. So for any given set of variables $A$, we have one translation $t_\mathbb{A} : \mathtt{1ML}_{\Lambda \cup \{\gbox\}}(A) \to \mathtt{1ML}_{\Lambda}(A)$ for each automaton $\mathbb{A}$ with states $A$, and the translations will generally be different for different choices of $\mathbb{A}$. However, we will drop the index $\mathbb{A}$ from now on to simplify notation. Given a modal $\Lambda\cup \{\gbox\}$-automaton $\mathbb{A} = (A,\Delta,a_I,\Omega)$, we will write $t(\mathbb{A})$ for the automaton $(A,\Delta^t,a_I,\Omega)$ where $\Delta^t$ is defined by $\Delta^t(a,c) = t(\Delta(a,c))$, where it is understood that $t$ denotes the translation $t_\mathbb{A}$. Clearly $t(\mathbb{A})$ is a modal $\Lambda$-automaton. 

We shall  view  the translation $t$ as a map defined on the entire domain $\mathtt{1ML}_{\Lambda \cup \{\gbox\}}(A) \cup \mathtt{Latt}(A)$ by  setting $t(\psi) = \psi$ for $\psi \in  \mathtt{Latt}(A)$.

\begin{lem}
\label{mainlemma}
For every finite pointed model $(\mathbb{S},s)$, and for any modal $\Lambda\cup \{\gbox\}$-automaton $\mathbb{A}$, we have
$$(\mathbb{S},s) \Vdash t(\mathbb{A}) \text{ iff } (\mathbb{S} \oplus \mathbb{U},s) \Vdash \mathbb{A}$$
\end{lem}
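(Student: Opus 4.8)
The plan is to prove Lemma~\ref{mainlemma} by relating winning strategies in the extended acceptance games $\mathcal{E}(t(\mathbb{A}),\mathbb{S})$ and $\mathcal{E}(\mathbb{A},\mathbb{S}\oplus\mathbb{U})$, proceeding position-by-position through the translation $t$. By Theorem~\ref{extensive} it is enough to show that $(a_I,s)$ is winning for $\exists$ in $\mathcal{E}(t(\mathbb{A}),\mathbb{S})$ iff it is winning for $\exists$ in $\mathcal{E}(\mathbb{A},\mathbb{S}\oplus\mathbb{U})$. In fact I would prove the stronger statement that for every $s\in S$ and every formula $\psi\in\mathtt{1ML}_{\Lambda\cup\{\gbox\}}(A)\cup\mathtt{Latt}(A)$ and every player $\mathsf{P}\in\{\exists,\forall\}$, the position $(t(\psi),s)$ is winning for $\mathsf{P}$ in $\mathcal{E}(t(\mathbb{A}),\mathbb{S})$ iff $(\psi,s)$ is winning for $\mathsf{P}$ in $\mathcal{E}(\mathbb{A},\mathbb{S}\oplus\mathbb{U})$; the lemma is the instance $\psi = a_I$ (where $t$ leaves a state variable, seen as a lattice formula, unchanged), combined with the opening move $(a_I,s)\mapsto(\Delta(a_I,V^\dagger(s)),s)$ resp.\ $(\Delta^t(a_I,V^\dagger(s)),s)=(t(\Delta(a_I,V^\dagger(s))),s)$. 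Note that the colour $V^\dagger(s)$ is the same in $\mathbb{S}$ and in $\mathbb{S}\oplus\mathbb{U}$ since $s\in S$, so the two games agree on which one-step formula is unpacked at a basic position $(a,s)$.

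The argument is then an induction on the structure of $\psi$. For $\psi\in\mathtt{Latt}(A)$ and for the Boolean connectives $\vee,\wedge,\top,\bot$, as well as for the case $\psi = \lambda(\psi_1,\dots,\psi_n)$ or $\lambda^d(\psi_1,\dots,\psi_n)$ with $\lambda\in\Lambda$, the translation $t$ is homomorphic (or the identity), and the corresponding game positions have literally the same admissible moves in both games: at a modal position $(\lambda(\psi_1,\dots,\psi_n),s)$ the choice of $f:\mathtt{Latt}(A)\to\psf(S)$ with $\sigma(s)\in\lambda(f(\psi_1),\dots,f(\psi_n))$ is available in $\mathcal{E}(\mathbb{A},\mathbb{S}\oplus\mathbb{U})$ precisely because the coalgebra map of $\mathbb{S}\oplus\mathbb{U}$ restricted to $s\in S$ ``is'' $\sigma(s)$ transported along the inclusion, and a lifting $\lambda$ for $\fun$ is natural, so the value set only sees the $S$-part; thus $\exists$'s move $f$ in one game can be copied in the other (extending $f$ by $\emptyset$ on the $U$-side, or restricting it to $S$), and $\forall$'s responses $(\psi_i,s)$ stay at the same $s\in S$, keeping us inside the induction. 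Parities are preserved throughout since $t$ does not change $\Omega$ and both extended games assign priority $0$ to all non-basic positions, so an infinite match in one game is matched by one in the other with the same sequence of states in $A$ and hence the same winner.

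The genuinely new cases are $\psi = \gbox\chi$ and $\psi = \gdi\chi$. Here $t(\gbox\chi)$ is $\top$ if $\chi$ is $\mathbb{A}$-valid and $\bot$ otherwise, and $(\top,s)$ is always winning for $\exists$, $(\bot,s)$ always winning for $\forall$; so I must show: $(\gbox\chi,s)$ is winning for $\exists$ in $\mathcal{E}(\mathbb{A},\mathbb{S}\oplus\mathbb{U})$ iff $\chi$ is $\mathbb{A}$-valid. At the position $(\gbox\chi,s)$ it is $\exists$'s turn, and her admissible moves are maps $f$ with $\sigma_+(s)\in\gbox(f(\chi))$, i.e.\ (unfolding $\gbox=\cvp\pi\circ\Box$) with $Z\subseteq f(\chi)$ where $Z$ is the support-component of $\sigma_+(s)$. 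Crucially, by the definition of $\mathbb{S}\oplus\mathbb{U}$ the support component of $\sigma_+(s)$ for $s\in S$ contains (an isomorphic copy of) every finite pointed $\funco$-model — that is what $\mathbb{U}$ and the coproduct are arranged to deliver, via Lemma~\ref{applemma}/Lemma~\ref{usim} and the fact that $\mathbb{S}$ is finite so its own support sits in $S+U$ but $U$ already contains every finite model. Hence $\forall$ can, after $\exists$ picks $f$, jump to $(\chi,u)$ for any $u$ living in a copy of an arbitrary finite pointed $\funco$-model inside $U$; and by Proposition~\ref{invariance-extended} together with Lemma~\ref{usim} the position $(\chi,u)$ in $\mathcal{E}(\mathbb{A},\mathbb{S}\oplus\mathbb{U})$ is winning for $\exists$ iff $(\chi,u')$ is winning in $\mathcal{E}(\mathbb{A},\mathbb{U})$ iff the corresponding position in $\mathcal{E}(\mathbb{A},(\mathbb{T},t))$ is winning for the relevant finite model $(\mathbb{T},t)$. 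So $\exists$ has a legal-and-surviving move at $(\gbox\chi,s)$ iff $\chi$ holds (is winning) over \emph{every} finite pointed $\fun$-model, which is exactly $\mathbb{A}$-validity of $\chi$ — here is where Fact~\ref{fmp}, the finite model property, is used to pass between ``all models'' and ``all finite models.'' The case $\gdi\chi$ is dual: $\exists$ picks a witness neighborhood that can be taken to be a singleton $\{u\}$ with $u$ in a copy of some finite pointed model witnessing $\mathbb{A}$-satisfiability of $\chi$, and conversely if $\chi$ is $\mathbb{A}$-satisfiable she steers $\forall$ into that copy. I expect the main obstacle to be exactly this step: carefully verifying that the support component of $\sigma_+(s)$ is rich enough to contain a copy of every finite pointed $\funco$-model and making the invariance argument (Proposition~\ref{invariance-extended}, Lemma~\ref{usim}) go through uniformly over $\exists$'s and $\forall$'s strategies, rather than just for a fixed match; the Boolean and $\lambda\in\Lambda$ cases are routine bookkeeping by comparison.
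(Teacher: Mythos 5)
Your ingredients largely coincide with the paper's proof: reduce to the extended game via Theorem \ref{extensive}; copy $\exists$'s valuation moves across the coproduct inclusion (restricting to $S$ in one direction, extending trivially in the other, with naturality of the liftings in $\Lambda$ securing legality); resolve $\gbox$/$\gdi$ positions by exploiting that the support component of $\sigma_+(s)$ for $s\in S$ is all of $S+U$, so that $\forall$ can reach, and $\exists$ can steer into, a copy of any finite pointed $\funco$-model inside $\mathbb{U}$, with Proposition \ref{invariance-extended} and Lemma \ref{usim} (and, for the passage between finite and arbitrary models, Fact \ref{fmp}) tying these positions to $\mathbb{A}$-validity and $\mathbb{A}$-satisfiability; and finally note that corresponding infinite matches pass through the same automaton states and hence the same priorities. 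That is, in substance, the paper's argument.

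What does not stand as written is the announced proof scheme: an ``induction on the structure of $\psi$'' for the position-wise equivalence of winning. The step that fails is precisely the case $\psi = a \in A$, which you file under the routine lattice cases with the claim that the two games have literally the same admissible moves there. They do not: from $(a,s)$ the game for $t(\mathbb{A})$ moves to $(t(\Delta(a,V^\dagger(s))),s)$ while the game for $\mathbb{A}$ moves to $(\Delta(a,V^\dagger(s)),s)$, so the atom case reduces to the claim for the arbitrary one-step formula $\Delta(a,c)$, which is not a subformula of $a$; the game regenerates through the states indefinitely, and the winner of the resulting infinite plays is decided by the parity condition, which no well-founded induction on the shape of $\psi$ can capture. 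The repair is exactly your own closing remark about parities, made official: prove each direction separately by constructing a strategy by induction on the \emph{length of partial matches}, maintaining an invariant that every guided partial match either projects (under $t$, on positions whose point lies in $S$) to a guided match of the other game, or has already reached a position known outright to be winning --- the latter disjunct is how the paper absorbs the aftermath of a $\gbox$/$\gdi$ step, where $\forall$ may jump into $U$. With the argument reorganized in this way, your proof is essentially the paper's.
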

\begin{proof}

For left to right, suppose that $(\mathbb{S},s) \Vdash t(\mathbb{A})$.  By Theorem \ref{extensive} there is a  strategy $\chi$ for $\exists$ in the extended acceptance game for $t(\mathbb{A})$ and $\mathbb{S}$ which is winning at $(a_I,s)$. Without loss of generality we may assume that $\chi$ is positional, and  a winning strategy at \emph{every} winning position in $\mathcal{E}(t(\mathbb{A}),\mathbb{S})$. Our goal is to construct a positional strategy $\chi'$ for $\exists$ in the extended acceptance game for $\mathbb{A}$ and $\mathbb{S} \oplus \mathbb{U}$, which prescribes a move for $\exists$ at every position $(\psi,v)$ belonging to $\exists$ and with $v\in S$, such that:
\begin{enumerate}
\item $\chi'$ assigns a legitimate move to every position belonging to $\exists$ of the form $(\psi,v)$ with $v \in S$ such that $(t(\psi),v)$ is a winning position in $\mathcal{E}(t(\mathbb{A}),\mathbb{S})$,
\item every  $\chi'$-guided partial match $\rho$ starting at $(a_I,s)$ and ending with a position $(\psi,v)$ satisfies one of the following two criteria: 
\begin{description}
\item[a] $v \in S$ and $t[\rho]$ is a $\chi$-guided match in $\mathcal{E}(t(\mathbb{A}),\mathbb{S})$ (hence consists only of winning positions for $\exists$).
\item[b]  $(\psi,v)$ is a winning position in $\mathcal{E}(\mathbb{A},\mathbb{S} \oplus \mathbb{U})$. 
\end{description}
\end{enumerate}
Here, given that $\rho = \pi_1....\pi_n$, we define $t[\rho] = t(\pi_1)....t(\pi_n)$ where $t(\psi,v) = (t(\psi),v)$ if $\psi \in \mathtt{1ML}_\La(A) \cup \mathtt{Latt}(A)$, and $t(f) = f$ for a position $f : \mathtt{Latt}(A) \to \psf(S)$. Clearly, we can build a winning strategy in $\mathcal{E}(\mathbb{A},\mathbb{S} \oplus \mathbb{U})$ from such a strategy $\chi'$.

We define the strategy $\chi'$ by a case distinction, given a position $(\psi,v)$ belonging to $\exists$ with $v \in S$.
If $\psi = \alpha_1 \vee \alpha_2$ then $t(\psi) = t(\alpha_1) \vee t(\alpha_2)$, so we set $\chi'(\psi,v) = (\alpha_i,v)$ where $i$ is chosen so that $\chi(t(\psi),v) = (t(\alpha_i),v)$. If $\psi = \lambda(\varphi_1,...,\varphi_n)$ then $t(\psi) = \lambda(\varphi_1,...,\varphi_n)$ too, so we set $\chi'(\psi,v) = \chi(t(\psi),v)$. A simple naturality argument shows that this is still a legitimate move in $\mathcal{E}(\mathbb{A},\mathbb{S} \oplus \mathbb{U})$. 

Finally, the interesting case is the one involving the support modality: at a position $(\gdi \varphi,v)$, if $(t(\gdi \varphi),v)$ is winning for $\exists$ in $\mathcal{E}(\mathbb{A},\mathbb{S})$ then we must have $t(\gdi \varphi) = \top$, hence $\varphi$ is $\mathbb{A}$-satisfiable. Hence, by the construction of $\mathbb{U}$,  and by Observation \ref{invariance-extended}  there is some $u \in U$ such that $(\varphi,u)$ is winning for $\exists$ in $\mathcal{E}(\mathbb{A},\mathbb{S} \oplus \mathbb{U})$. So we let the strategy $\chi'$ pick the position $(\varphi,u)$. For the case involving the dual $\gbox$, if $t(\gbox \varphi) = \top$ then $\varphi$ is valid, and we can let $\exists$ pick the map sending $\varphi$ to $S + U$.

We also need to check that every $\chi'$-guided partial match satisfies one of the conditions (a) or (b), and we prove this by an induction on the length of a partial match. The only interesting case is for the extension of a partial match $\rho$ ending with a position $(\gbox \varphi,v)$. By the induction hypothesis on $\rho$, we have $v\in S$ and $(t(\gbox \varphi),v)$ is winning for $\exists$, hence we must have $t(\gbox \varphi) = \top$. This means that $\varphi$ is $\bbA$-valid, so \emph{any} move $(\varphi,w)$ by $\forall$ answering the move $S + U$ by $\exists$ will satisfy the condition (b), i.e. $(\varphi,w)$ is winning for $\exists$ in $\mathcal{E}(\bbA,\mathbb{S}\oplus \mathbb{U})$.
\medskip

For right to left, suppose that  $\exists$ has a winning strategy $\chi$ at the position $(a_I,s)$ in the extended acceptance game for $\mathbb{A}$ with respect to $\mathbb{S} \oplus \mathbb{U}$. We shall give $\exists$ a winning strategy $\chi'$ at the same position in the game for $t(\mathbb{A})$ with respect to $\mathbb{S}$. We shall inductively associate with every $\chi'$-guided partial match $\pi$ of length $k$ a $\chi$-guided ``shadow match'' $(\psi_1,v_1),...,(\psi_k,v_k)$ such that $\pi$ is of the form
$$(t(\psi_1),v_1),....,(t(\psi_k),v_k).$$
We shall also make sure that whenever $\pi'$ is an extension of $\pi$, the shadow match associated with $\pi'$ is an extension of the shadow match associated with $\pi$ as well. It will clearly follow that $\exists$ wins every infinite $\chi'$-guided match.

For the singleton match consisting of $(a_I,s)$ we let $(a_I,s)$ itself be the shadow match. (This is acceptable because, by convention, we have set $t(a_I) = a_I$.) For a match $\pi$ of length $k$ we define the move $\chi'(\pi)$ depending on the shape of the last position on the associated shadow match. Again we treat only the interesting cases. 

If the last position on the shadow match is $(\lambda(\psi_1,...,\psi_m),v)$ then $\chi$ provides a map $f : \mathtt{Latt}(A) \to \cvp(S + U)$ such that $$\sigma_+(v) \in \lambda_{S+ U}(f(\psi_1),...,f(\psi_m)).$$ We set $\chi'(\pi) = f'$, where $f' : \mathtt{Latt}(A) \to \cvp (S)$ is defined by $\theta \mapsto f(\theta) \cap S$ for each $\theta \in \mathtt{Latt}(A)$. This move is legal since $v \in S$ and by naturality of $\lambda$. It is easy to see how to extend the shadow match for each response by $\forall$.

If the last position on the shadow match is $(\gbox\psi,v)$ then since this position is winning for $\exists$, it must be the case that every position $(\psi,w)$ for $w \in S + U$ is winning for $\exists$, hence this holds for every $w \in U$. This can only be true if $\psi$ is $\mathbb{A}$-valid and so we have $t(\gbox \psi) = \top$. This means we are done since $(\top,v)$ is a winning position for $\exists$. Similarly, if the last position on the shadow match is $(\gdi \psi,v)$, then there is some $w \in S + U$ such that $(\psi,w)$ is winning for $\exists$, so $\psi$ is $\mathbb{A}$-satisfiable. Hence $t(\gdi \psi) = \top$, and the conclusion follows as in the previous case.
\end{proof}
We can now prove our characterization theorem:

\begin{thm}
\label{removeboxes}
Let $\fun$ be any set functor and let $\Lambda$ be any set of predicate liftings for $\fun$. Then:
$$ \mu \mathtt{ML}_{\Lambda \cup \{\gbox\}}{/}{\sim} \equiv \mu \mathtt{ML}_\Lambda.$$
\end{thm}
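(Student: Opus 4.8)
The plan is to prove the two inclusions of the equivalence separately. For the easy inclusion $\mu\mathtt{ML}_\Lambda\subseteq\mu\mathtt{ML}_{\Lambda\cup\{\gbox\}}/{\sim}$, I would observe that a $\mu\mathtt{ML}_\Lambda$-formula, evaluated on a $\funco$-model, depends only on the underlying $\fun$-model, since each $\lambda\in\Lambda$ is interpreted on $\funco$ by precomposition with the projection $\funco\to\fun$ and the fixpoint operators only refer to this reading. Combined with the (routinely established) invariance of the coalgebraic $\mu$-calculus under behavioural equivalence, this shows that any $\mu\mathtt{ML}_\Lambda$-formula takes the same truth value on $\funco$-models whose underlying $\fun$-models are behaviourally equivalent; hence it already lies in the fragment $\mu\mathtt{ML}_{\Lambda\cup\{\gbox\}}/{\sim}$.

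For the substantial inclusion, take a formula $\varphi\in\mu\mathtt{ML}_{\Lambda\cup\{\gbox\}}$ that is invariant for $\fun$-bisimulation. Using Fact~\ref{coalgebraicmu}, choose a modal $\Lambda\cup\{\gbox\}$-automaton $\mathbb{A}$ equivalent to $\varphi$, form the modal $\Lambda$-automaton $t(\mathbb{A})$ via the translation $t=t_{\mathbb{A}}$ defined above, and use Fact~\ref{coalgebraicmu} again to choose $\psi\in\mu\mathtt{ML}_\Lambda$ equivalent to $t(\mathbb{A})$. The claim is that $\varphi\equiv\psi$ over all $\funco$-models. First I would verify this on every \emph{finite} pointed $\funco$-model $(\mathbb{S},s)$: we have $(\mathbb{S},s)\Vdash\psi$ iff $(\mathbb{S},s)\Vdash t(\mathbb{A})$, iff $(\mathbb{S}\oplus\mathbb{U},s)\Vdash\mathbb{A}$ by Lemma~\ref{mainlemma}, iff $(\mathbb{S}\oplus\mathbb{U},s)\Vdash\varphi$, and finally iff $(\mathbb{S},s)\Vdash\varphi$ by Lemma~\ref{applemma} together with the $\fun$-bisimulation invariance of $\varphi$.

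To pass from finite models to arbitrary ones I would invoke the finite model property. Both $\varphi$ and $\psi$ lie in $\mu\mathtt{ML}_{\Lambda\cup\{\gbox\}}$ (viewing each $\Lambda$-lifting also as a $(\Lambda\cup\{\gbox\})$-lifting), and by Fact~\ref{fmp} this logic has the finite model property over $\funco$-models. Hence if either $\varphi\wedge\neg\psi$ or $\neg\varphi\wedge\psi$ were satisfiable, it would be satisfiable in a finite $\funco$-model, contradicting the agreement on finite models just established. Therefore $\varphi\equiv\psi$, and since $\psi\in\mu\mathtt{ML}_\Lambda$ this gives the required inclusion and completes the proof.

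Given Lemmas~\ref{mainlemma} and \ref{applemma}, I expect the only delicate points to be bookkeeping: keeping the two equivalence relations apart (here $\sim$ denotes behavioural equivalence of the underlying $\fun$-models, so that it is Lemma~\ref{applemma}, and not its $\gsim$-variant, that is available — recall Lemma~\ref{applemma} is \emph{not} valid for $\gsim$), and checking that the finite model property Fact~\ref{fmp} really does apply to the language $\mu\mathtt{ML}_{\Lambda\cup\{\gbox\}}$ interpreted over $\funco$-models. The genuinely hard work has already been carried out in the proof of Lemma~\ref{mainlemma}.
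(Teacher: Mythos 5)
Your proposal is correct and follows essentially the same route as the paper's proof: the same chain of equivalences on finite pointed $\funco$-models via Lemma~\ref{mainlemma}, Lemma~\ref{applemma} and the $\fun$-bisimulation invariance of $\varphi$, followed by the reduction to finite models via Fact~\ref{fmp}. You are merely more explicit about the easy inclusion and about how the finite model property is invoked (which is fine, modulo the standard remark that ``$\neg\psi$'' is to be read as the negation normal form made available by closure of the modalities under Boolean duals).
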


\begin{proof}
Suppose a formula $\varphi$ of $\mu\mathtt{ML}_{\Lambda \cup \{\gbox\}}$ is invariant for $\fun$-bisimilation. By the finite model property for $\mu\mathtt{ML}_{\Lambda \cup \{\gbox\}}$ it suffices to show that $\varphi$ is equivalent to a $\mu\mathtt{ML}_{\Lambda}$-formula over finite models. Let $\mathbb{A}$ be a modal $\Lambda \cup \{\gbox\}$-automaton equivalent to $\varphi$, and let $\psi$ be a formula of $\mu \mathtt{ML}_\Lambda$ equivalent to the automaton $t(\mathbb{A})$.  
Consider an arbitrary  finite  pointed $\tilde{\fun}$-model $(\mathbb{S},s)$. We have:
\begin{displaymath}
\begin{array}{lcll}
 (\mathbb{S},s)\Vdash \varphi & \Leftrightarrow & (\mathbb{S} \oplus \mathbb{U},s) \Vdash \varphi & \text{(Lemma \ref{applemma} + assumption on $\varphi$)}\\
& \Leftrightarrow &  (\mathbb{S} \oplus \mathbb{U},s) \Vdash \mathbb{A} & \\
& \Leftrightarrow & (\mathbb{S},s)\Vdash t(\mathbb{A}) & \text{(Lemma \ref{mainlemma})} \\
& \Leftrightarrow & (\mathbb{S},s) \Vdash \psi & 
\end{array}
\end{displaymath}
as required.
\end{proof}

\subsection{An adequate uniform construction for $\underline{\mon}$}
\label{adguidedonc}
In this section we shall apply Corollary \ref{exp-complete-cor} to derive a characterization result for the monotone $\mu$-calculus. The first fact that we need is the following:
\begin{prop}
The set of liftings $\{\Box,\gbox\}$ is expressively complete with respect to $\underline{\mon}$.
\end{prop}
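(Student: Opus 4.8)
The plan is to invoke Proposition \ref{p:expr-comp-nablas}, which tells us that for any weak-pullback-preserving functor that also preserves finite sets, the set of all Moss liftings is expressively complete. We already know from the earlier discussion that $\underline{\mon}$ preserves weak pullbacks, so the strategy is two-fold: first verify that $\underline{\mon}$ preserves finite sets, so that Proposition \ref{p:expr-comp-nablas} applies and the Moss liftings for $\underline{\mon}$ form an expressively complete set; second, show that every Moss lifting for $\underline{\mon}$ can in turn be defined in the one-step language $\mathtt{1ML}_{\{\Box,\gbox\}}$, which by transitivity of $\Lambda$-definability yields expressive completeness of $\{\Box,\gbox\}$.

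For the first part, I would check directly that $\underline{\mon}X$ is finite whenever $X$ is: an element of $\underline{\mon}X$ is a pair $(Z,N)$ with $Z \subseteq X$ and $N \in \mon X$ an upward-closed family of subsets of $X$ such that $Z$ supports $N$; since there are only finitely many subsets $Z$ and finitely many upward-closed families $N$ when $X$ is finite, $\underline{\mon}X$ is finite. Hence Proposition \ref{p:expr-comp-nablas} gives that the Moss liftings $\langle \alpha \rangle$ for $\alpha \in \underline{\mon}\{1,\dots,n\}$ are expressively complete.

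For the second part — which I expect to be the main obstacle — I would unpack what a Moss lifting $\langle(Z,N)\rangle$ looks like for $\underline{\mon}$. Given $(Z,N) \in \underline{\mon}\{1,\dots,n\}$ and lattice formulas $\psi_1,\dots,\psi_n$, the formula $\langle(Z,N)\rangle(\psi_1,\dots,\psi_n)$ holds at a one-step model $(X,(Z',N'),V)$ precisely when $((Z',N'),(Z,N)) \in \overline{\underline{\mon}}(\Vdash^0_V)$. The task is to express this relation-lifting condition using only the support modality $\gbox$ (together with its dual $\gdi$) and the monotone box $\Box$ and its dual. The key observations I would exploit are: (i) the $Z$-component is handled by the support modality, since $\gbox$ projects to the powerset part and the Egli–Milner condition for $\psf$ between supports is exactly what $\gbox\chi \wedge \gdi\chi$-style formulas express for appropriate disjunctions/conjunctions of the $\psi_i$; and (ii) the monotone-neighborhood part $N$ is handled by finite conjunctions and disjunctions of formulas $\Box(\bigvee_{k \in U}\psi_k)$ ranging over the sets $U$ in (a suitable generating antichain of) $N$, together with dual formulas $\gdi(\bigwedge_{k \notin W}\neg\dots)$ — essentially the ``cover formula'' pattern adapted to monotone neighborhoods. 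Concretely, since each $N$ is determined by its minimal elements, I would write the Moss lifting as a conjunction asserting, for each minimal $U \in N$, that $\Box$ applied to the disjunction $\bigvee_{k \in U}\psi_k$ holds, conjoined with a condition (expressed via $\gbox,\gdi$ on the support) pinning down that $Z$ relates correctly to the witnessing states, plus a dual disjunctive clause ensuring no spurious neighborhoods. The bookkeeping of matching the Barr-extension quantifier alternation $\exists\gamma\dots$ against these finite boolean combinations is the delicate step; I would verify it by a direct semantic argument, checking both inclusions between $\langle(Z,N)\rangle_X(\vec Z)$ and the truth set of the candidate $\{\Box,\gbox\}$-formula, using naturality of the liftings and the fact that supports restrict uniquely (our standing convention that $\fun$ preserves monics). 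Once every Moss lifting is shown $\{\Box,\gbox\}$-definable, expressive completeness of $\{\Box,\gbox\}$ follows from expressive completeness of the Moss liftings together with transitivity of definability, completing the proof.
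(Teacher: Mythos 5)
Your proposal follows essentially the same route as the paper: reduce the claim via Proposition~\ref{p:expr-comp-nablas} to showing that the Moss liftings for $\funco[\mon]$ (equivalently, for $\underline{\mon}$) are definable in $\mathtt{1ML}_{\{\Box,\gbox\}}$, using weak pullback preservation plus the easy check that $\underline{\mon}$ preserves finite sets (which the paper leaves implicit), and then define each Moss lifting as a conjunction of a support-part nabla expressed with $\gbox,\gdi$ and a neighborhood-part formula built from $\Box$. The paper simply cites the Santocanale--Venema nabla formulas for the monotone part and omits the bookkeeping you sketch, so your attempt is the same argument at a comparable (slightly more explicit) level of detail.
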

\begin{proof}
Since $\underline{\mon}$ preserves weak pullbacks, it suffices by Proposition \ref{p:expr-comp-nablas} to show that the one-step language $\mathtt{1ML}_{\{\Box,\gbox\}}$ has the same expressive power as the language based on the  Moss liftings for $\underline{\mon}$. Given a finite set $A$, a subset $\Gamma \subseteq \mathtt{Latt}(A)$ and some $N \in \mon \mathtt{Latt}(A)$ that is supported by $\Gamma$, we want to define a formula $\gamma$ in $\mathtt{1ML}_{\{\Box,\gbox\}}$ such that 
$$X,(X',\alpha),V \Vdash_1 \gamma \text{ iff } ((X',\alpha),(\Gamma,N)) \in \overline{(\underline{\mon})} (\Vdash^0_V)$$
This formula can be obtained as a conjunction $\theta_\Gamma \wedge \theta_N$, where $\theta_\Gamma$ is like the standard nabla-formula for the powerset functor corresponding to $\Gamma$, but expressed in terms of the support modality, and $\theta_N$ is defined as the nabla formulas for the monotone neighborhood functor introduced by Santocanale and Venema in \cite{sant:unif10}. We omit the details.
\end{proof}
The remainder of this section is devoted to defining a uniform construction for the supported companion $\underline{\mon}$ of $\mon$ w.r.t. a given finite set of formulas, and prove its adequacy. We proceed as follows: throughout the section we fix an arbitrary natural number $k$ and a finite set of variables $A$, and define a uniform construction $(-)_*$ which will be shown to be adequate for every formula in $\mathtt{1SO}_{\{\Box,\gbox\}}(A)$ of quantifier depth $\leq k$. It clearly follows  that there is an adequate uniform construction for every finite set  $\Gamma$ of  formulas in $\mathtt{1SO}_{\{\Box,\gbox\}}$, since we can apply the result to the maximum quantifier depth of any formula in $\Gamma$. 

To simplify notation a bit, we note that it suffices to define our uniform construction for what we may call \emph{normal} one-step $\underline{\mon}$-frames $(X,(S,\alpha))$, satisfying the condition that $S = X$, and prove that the adequacy condition holds for all models based on such frames. It is then a straightforward matter to extend this to an adequate uniform construction defined on all $\underline{\mon}$-frames. We shall denote a normal one-step frame $(X,(X,\alpha))$ more simply by just $(X,\alpha)$. Then for any two normal $\underline{\mon}$-models $(X,\alpha,V)$ and $(X',\alpha',V')$ over $A$, i.e. models that are based on normal one-step frames, a relation $R \subseteq X \times X'$ is a one-step $\overline{(\underline{\mon})}$-bisimulation between these models if, and only if:  
\begin{itemize}
\item $u R u'$ implies $u \in V(a)$ iff $u' \in V'(a)$, for all $a\in A$;
\item for all $Z$ in $\alpha$ there is $Z' $ in $\alpha'$ such 
   that for all $t' \in Z'$ there is $t \in Z$ with $t R t'$;
\item for all $Z'$ in $\alpha'$ there is $Z $ in $\alpha$ such
   that for all $t \in Z$ there is $t' \in Z'$ with $t R t'$;
\item for all $u \in X$ there is $u' \in X'$ with $u R u'$;
\item for all $u' \in X'$ there is $u \in X$ with $u R u'$.
\end{itemize}
The last two conditions just say that the relation $R$ is full on $X$ and $X'$. We call such a relation a \emph{global neighborhood bisimulation} between $(X,\alpha,V)$ and $(X',\alpha',V')$.
For any normal one-step frame $(X,\alpha)$, we shall construct a cover $h_\alpha : (X_*,\alpha_*) \to (X,\alpha)$ where $(X_*,\alpha_*)$ is also a normal one-step frame. Since these are abbreviated notations for the one-step $\underline{\mon}$-frames $(X,(X,\alpha))$ and $(X_*,(X_*,\alpha_*))$, we should have $\underline{\mon} h_\alpha (X_*,\alpha_*) = (X,\alpha)$, which means that $h_\alpha$ should be a surjective map from $X_*$ to $X$, such that $\mon h_\alpha (\alpha_*) = \alpha$. The intuition behind the construction is that we want to create infinitely many \emph{disjoint} copies of each neighborhood, and furthermore we want to create sufficiently many copies of each state \emph{within} each copy of a neighborhood.
\begin{defi}
\label{d:F-monstar}
Given a set $X$, and an object $\alpha  \in \mon X$, put
$$
X_* \isdef 
 X \times 2^{k} \times \psf(X) \times \omega
$$
and let $\pi_X$ be the projection map from $X_*$ to $X$.
Define $\alpha_* \in \mon(X_*)$ 
by setting  $Z \in \alpha_*$ for $Z 
\subseteq X_*$ iff $\left\lceil Y,j\right\rceil \subseteq Z$ for some
$Y \in \alpha$ and some $j < \omega$, where
$$\left\lceil Y,j \right\rceil := \{(u,i,Y,j) \mid u \in Y, i < 2^k 
\}.
$$
The sets of the form $\left\lceil Y,j\right\rceil$ for $Y \in \alpha$ will be called the 
\textit{basic members} of $\alpha_*$. The set of all elements in $X_*$ that do not belong to any basic member will be called the \textit{residue} of the frame $(X_*,\alpha_*)$. Note that $X_*$ is partitioned into each of the basic members along with the residue, as an extra partition cell.
\end{defi}
\begin{prop}
For every given normal one-step frame $(X,\alpha)$, the projection map $\pi_X : (X_*,\alpha_*) \to (X,\alpha)$ is a cover.
\end{prop}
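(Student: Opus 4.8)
Looking at this proposition, I need to prove that the projection map $\pi_X : (X_*, \alpha_*) \to (X, \alpha)$ is a cover, meaning $\pi_X$ is surjective and $\underline{\mon}\,\pi_X(X_*, \alpha_*) = (X, \alpha)$, i.e. $\mon\pi_X(\alpha_*) = \alpha$.

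The plan is as follows. First, I would verify surjectivity of $\pi_X : X_* \to X$: since $X_* = X \times 2^k \times \psf(X) \times \omega$ and $\pi_X$ is the first projection, it suffices that the factors $2^k$, $\psf(X)$ and $\omega$ are nonempty, which holds since $k \geq 0$ (so $2^k \geq 1$), $\emptyset \in \psf(X)$, and $0 \in \omega$. (If $X = \emptyset$ then $X_* = \emptyset$ and the statement is trivial.)

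The main content is computing $\mon\pi_X(\alpha_*)$ and showing it equals $\alpha$. Recall that for the monotone neighborhood functor, $\mon f(N) = \{Z' \mid f^{-1}(Z') \in N\}$ for $f : X_* \to X$; equivalently, since $N$ is upward closed, $Z' \in \mon f(N)$ iff there is $Z \in N$ with $f[Z] \subseteq Z'$. So I must show: for $Z' \subseteq X$, we have $\pi_X^{-1}(Z') \in \alpha_*$ if and only if $Z' \in \alpha$. For the "if" direction: if $Z' \in \alpha$, then $\lceil Z', j\rceil \subseteq \pi_X^{-1}(Z')$ for any $j < \omega$ (since every element of $\lceil Z', j\rceil$ has the form $(u, i, Z', j)$ with $u \in Z'$, hence projects into $Z'$), and since $\lceil Z', j\rceil$ is a basic member of $\alpha_*$, we get $\pi_X^{-1}(Z') \in \alpha_*$ by upward closure. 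For the "only if" direction: suppose $\pi_X^{-1}(Z') \in \alpha_*$; then by definition of $\alpha_*$ there exist $Y \in \alpha$ and $j < \omega$ with $\lceil Y, j\rceil \subseteq \pi_X^{-1}(Z')$. I claim this forces $Y \subseteq Z'$: pick any $u \in Y$; then $(u, 0, Y, j) \in \lceil Y, j\rceil \subseteq \pi_X^{-1}(Z')$, so $\pi_X(u,0,Y,j) = u \in Z'$. Hence $Y \subseteq Z'$, and since $Y \in \alpha$ and $\alpha$ is upward closed, $Z' \in \alpha$. This establishes $\mon\pi_X(\alpha_*) = \alpha$, and since $(X_*, \alpha_*)$ is a normal frame (i.e. $(X_*, (X_*, \alpha_*))$), the pair $\pi_X$ is a cover of $(X,(X,\alpha))$.

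I expect no serious obstacle here — this is essentially an unfolding of definitions. The only points requiring minor care are: correctly recalling the action of $\mon$ on morphisms (in particular that the preimage characterization of $\mon f$ coincides, for upward-closed families, with the "image contained in" characterization); handling the degenerate case $X = \emptyset$ (and noting $2^k \neq \emptyset$); and observing that $X_*$ being a valid object of $\underline{\mon}$ at the level of normal frames just means we treat $X_*$ itself as its own support, which is automatic. One could also remark that the residue and the particular indexing by $2^k$ and $\omega$ play no role in this proposition — they are needed only for the later adequacy argument — so the proof only uses that each basic member $\lceil Y, j\rceil$ projects onto $Y$ and that the $\lceil Y, j\rceil$ generate $\alpha_*$ upward.
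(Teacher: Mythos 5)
Your proof is correct and follows essentially the same route as the paper: surjectivity via the elements $(u,0,\emptyset,0)$, and the equivalence $Z'\in\alpha$ iff $\pi_X^{-1}(Z')\in\alpha_*$ argued through the basic members $\left\lceil Y,j\right\rceil$ and upward closure of $\alpha$. The extra remarks (explicit recall of $\mon$ on morphisms, the empty-set case) are harmless refinements of the same argument.
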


\begin{proof}
Clearly the projection map $\pi_X$ is always surjective since $(u,0,\emptyset,0) \mapsto u$ for all $u \in X$. (Note that $0 < 2^n$ for all $n \in \omega$, so we always have $(u,0,\emptyset,0)  \in X_*$.) 

We need to check that $\mon \pi_X(\alpha_*) = \alpha$. In other words, we have to check that for all $Z \subseteq X$, we have $Z \in \alpha$ iff $\pi_X^{-1}(Z) \in \alpha_*$. For left to right, if $Z \in \alpha$ then $\left\lceil Z,0\right\rceil $ is a basic member of $\alpha_*$, and clearly $\left\lceil Z,0\right\rceil \subseteq \pi_X^{-1}(Z)$. Conversely, suppose $\pi_X^{-1}(Z) \in \alpha_*$. Then there is some basic member $\left\lceil Y, j \right \rceil \in \alpha_*$ with $\left\lceil Y, j \right \rceil \subseteq \pi_X^{-1}(Z)$. But then $Y \in \alpha$, and furthermore $Y \subseteq Z$: if $u \in Y$ then $(u,0,Y,j) \in \left\lceil Y, j \right \rceil$, so $(u,0,Y,j)\in \pi_X^{-1}(Z)$ meaning that $\pi_X(u,0,Y,j) = u \in Z$. So $Z \in \alpha$ as required.
\end{proof}

Our goal is to show that, for any pair of normal and globally neighborhood bisimilar models $(X,\alpha,V)$ and $(Y,\beta,U)$, the corresponding models $(X_*,\alpha_*,V_{[\pi_X]})$ and $(Y_*,\beta_*,U_{[\pi_Y]})$ satisfy the same formulas in $\mathtt{1SO}_{\{\Box,\gbox\}}(A)$ of quantifier depth $\leq k$.  It should not be too surprising that we can prove this, but the actual proof is not entirely trivial.  It suffices to check it for formulas in which the operator $\gbox$ and $\gdi$ does not occur, since over normal one-step models this operator is easily definable using the second-order quantifiers. From now on we keep the models $(X,\alpha,V)$ and $(Y,\beta,U)$ fixed, as well as a global neighborhood bisimulation $R$ relating these models. Throughout this section, we use the notation $(X,\alpha,V) \equiv_k (Y,\beta,U)$ to say that two one-step models satisfy the same formulas of $\mathtt{1SO}_{\{\Box,\gbox\}}(A)$ with at most $k$ nested quantifiers. 

\begin{defi}
A propositional $A$-\textit{type} $\tau$ is a subset of $A$. Given a set $X$ and a valuation $V : A \rightarrow \cvp (X)$, the propositional $A$-type of $v \in X$ is defined to be $V^\dagger(v) = \{a \in A \mid v \in V(a)\}$.

Given a subset $Z$  of $X_*$ or $Y_*$, a valuation $W$ such that 
$W : B \rightarrow \cvp(X_*)$ or $W : B \rightarrow \cvp(Y_*)$, and a natural number $m$, the \textit{$m$-signature} of $Z $ over $B$ relative to the valuation $W$ is the mapping $\sigma_Z : \psf(B) \rightarrow \{0,...,m\} $ defined by:
$$\sigma_Z(t) := \mathtt{min}(\vert \{x \in Z \mid W^\dagger(x) = t\}\vert,m)$$
\end{defi}
 
\begin{defi}
 Let $B$ be any set of variables containing $A$, and let $V_1 : B \rightarrow \cvp(X_*)$ and $V_2 : B \rightarrow \cvp(Y_*)$. Then for any natural number $n$ we write  
$$(X_*,\alpha_*,V_1) \approx_{n} (Y_*,\beta_*,V_2) $$
and say that these one-step models \textit{match up to depth $n$}, if:
\begin{enumerate}
\item For every $n$-signature $\sigma$ over variables $B$, the number of basic elements of signature $\sigma$ in $\alpha_*$ and $\beta_*$ respectively are either both finite and equal, or both infinite.
\item The residues of the two one-step models have the same $n$-signature.
\end{enumerate}
\end{defi}

Using the assumption that the models $(X,\alpha,V)$ and $(Y,\beta,U)$ are  globally neighborhood bisimilar, we get the following lemma. 

\begin{lem}
\label{kstep}
 $(X_*,\alpha_*,V_{[ \pi_X]}) \approx_{2^k} (Y_*,\beta_*,U_{ [\pi_Y]})$.
\end{lem}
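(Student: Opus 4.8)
The plan is to unpack the definition of $\approx_{2^k}$ into its two clauses and verify each separately, translating everything back to the global neighborhood bisimulation $R$ between $(X,\alpha,V)$ and $(Y,\beta,U)$. Recall that $X_* = X \times 2^k \times \psf(X) \times \omega$ with $V_{[\pi_X]}(a) = \pi_X^{-1}(V(a))$, so the propositional $A$-type of a point $(u,i,Y',j) \in X_*$ under $V_{[\pi_X]}$ is just $V^\dagger(u)$; the extra coordinates $i,Y',j$ carry no propositional information. The key structural observation is that a basic member $\lceil Y',j\rceil = \{(u,i,Y',j) \mid u \in Y', i < 2^k\}$ of $\alpha_*$ contains, for each propositional type $\tau \subseteq A$, exactly $2^k \cdot |\{u \in Y' \mid V^\dagger(u) = \tau\}|$ points of that type (over the variable set $A$; I will need to say a word about why it suffices to treat signatures over $A$ rather than an arbitrary $B \supseteq A$ — but in this lemma $B = A$ since the models $V_{[\pi_X]}, U_{[\pi_Y]}$ are valuations on $A$). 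In particular the $2^k$-signature of $\lceil Y',j\rceil$ over $A$ is determined entirely by $Y'$ and equals the constant-$j$-independent map $\tau \mapsto \min(2^k \cdot |\{u \in Y' : V^\dagger(u) = \tau\}|, 2^k)$, which is $2^k$ whenever $Y'$ contains at least one point of type $\tau$ and $0$ otherwise — i.e. it is exactly the "set of types realized in $Y'$" repackaged as a signature saturated at $2^k$.

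First I would prove clause (1): I must show that for every $2^k$-signature $\sigma$ over $A$, the number of basic members of $\alpha_*$ with signature $\sigma$ is finite-and-equal, or both infinite, compared with $\beta_*$. By the observation above, a basic member $\lceil Y',j\rceil$ of $\alpha_*$ has signature $\sigma$ iff $Y' \in \alpha$ and $\{\tau : \sigma(\tau) = 2^k\}$ is precisely the set of $V$-types realized in $Y'$ (and $\sigma(\tau) \in \{0, 2^k\}$ for all $\tau$, else no basic member has that signature). Since $j$ ranges over all of $\omega$, for any single $Y' \in \alpha$ there are already infinitely many basic members $\lceil Y',j\rceil$, $j < \omega$, all of the same signature. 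Hence the count of basic members of any given signature $\sigma$ is either $0$ (if no $Y' \in \alpha$ realizes exactly the type-set coded by $\sigma$) or $\aleph_0$. So clause (1) reduces to: for each set $T$ of propositional $A$-types, $\alpha$ contains some $Y'$ realizing exactly the types in $T$ iff $\beta$ does. The global neighborhood bisimulation $R$ gives this: given $Y \in \alpha$, there is $Z \in \beta$ such that every $t' \in Z$ is $R$-related to some $t \in Y$, and (using the full-on-$X$/full-on-$Y$ conditions together with the back clause) we can symmetrically find a $Z$ whose realized type-set is contained in that of $Y$ and conversely; by chaining the forth and back clauses one shows the type-sets realized in members of $\alpha$ and of $\beta$ coincide as families of subsets of $2^A$. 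This is the main content of the argument and where I expect to spend the most care — matching realized-type-sets exactly (not just up to inclusion) requires combining the forth clause, the back clause, and $R$'s type-preservation, and it is the place where a naive argument can give only one inclusion.

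Then clause (2) is comparatively routine: the residue of $(X_*,\alpha_*)$ consists of all points of $X_*$ lying in no basic member. A point $(u,i,Y',j)$ lies in the basic member $\lceil Y',j\rceil$ precisely when $Y' \in \alpha$ and $u \in Y'$; so the residue is $\{(u,i,Y',j) : u \notin Y' \text{ or } Y' \notin \alpha\}$, which for every propositional type $\tau$ realized by some $u \in X$ contains infinitely many points of type $\tau$ (fix any such $u$ and vary $j$, or pick $Y' = \emptyset \notin \alpha$ if $\alpha$ is non-degenerate — one should check the edge cases $\alpha = \varnothing$ or $\emptyset \in \alpha$, but monotonicity and the residue's product structure handle these). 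Likewise for $Y_*$. Since the type-sets realized in $X$ and in $Y$ coincide (again by fullness of $R$: every $u \in X$ is $R$-related to some $v \in Y$ of the same type and vice versa), and in both residues every realized type occurs infinitely often, the two residues have the same $2^k$-signature. Combining clauses (1) and (2) gives $(X_*,\alpha_*,V_{[\pi_X]}) \approx_{2^k} (Y_*,\beta_*,U_{[\pi_Y]})$, as desired. The hard part, to reiterate, is clause (1)'s exact matching of realized-type-families across $\alpha$ and $\beta$ from the neighborhood bisimulation conditions.
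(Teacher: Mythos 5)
Your reduction of clause (1) is sound: over $A$ the $2^k$-signature of a basic member $\lceil Z,j\rceil$ records exactly the set of propositional types realized in $Z$, and since $j$ ranges over $\omega$ the number of basic members of any given signature is $0$ or infinite, so clause (1) does reduce to the claim that the members of $\alpha$ and of $\beta$ realize the same family of type-sets; your clause (2) argument is also fine. But the step you yourself flag as the main content is precisely where the proposal has a genuine gap: you assert that this exact coincidence follows ``by chaining the forth and back clauses'', and it does not. Chaining only yields, for each $Z\in\alpha$, some $Z'\in\beta$ whose realized type-set is \emph{contained} in that of $Z$ (and symmetrically); iterating produces descending chains of type-sets, never the required equality. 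Indeed, forth, back, fullness and type-preservation alone are insufficient: take $A=\{a\}$, $X=\{x_1,x_0\}$, $Y=\{y_1,y_0\}$ with $V(a)=\{x_1\}$, $U(a)=\{y_1\}$, $R$ the type-preserving bijection, $\alpha=\{\{x_1\},\{x_1,x_0\}\}$ and $\beta=\{\{y_1\}\}$; all the conditions you invoke hold, yet $\alpha$ has a member realizing both types while $\beta$ has none. What rescues the lemma is the ingredient your sketch never uses: the \emph{upward closure} of $\alpha$ and $\beta$ as elements of $\mon X$ and $\mon Y$ (note that $\beta$ in the counterexample is not upward closed).

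The paper closes exactly this gap as follows: given $\lceil Z,j\rceil$ with $Z\in\alpha$, take $Z'\in\beta$ from the forth clause (every $v\in Z'$ is $R$-related to some point of $Z$), then, using fullness of $R$ on $X$, choose for each $w\in Z$ some $w'$ with $w R w'$ and set $Z''=Z'\cup\{w'\mid w\in Z\}$. By monotonicity $Z''\in\beta$, and by type-preservation $Z''$ realizes exactly the types realized in $Z$, so $\lceil Z'',j\rceil$ is a basic member of $\beta_*$ with the same $2^k$-signature (the $2^k$ copies of each point guarantee that equal realized-type-sets give equal capped signatures). With this enlargement step supplied your outline coincides with the paper's proof; without it, the central claim of clause (1) remains unproved.
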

\begin{proof}
To see that the residues of the two models have the same $2^k$-signature, for one direction just note that if the residue of $(X_*,\alpha_*,V_{[\pi_X]})$ contains an element $(u,i,Z,j)$ then it contains infinitely many elements of the same propositional type, namely one member $(u,i,Z,p)$ for every $p \in \omega$. But then so will the residue of $(Y_*,\beta_*,U_{[\pi_Y]})$: just pick some $v$ with $u R v$ (again using the fact that $R$ is a global neighborhood bisimulation). Since $v \notin \emptyset$, for every $p \in \omega$ the element $(v,0,\emptyset,p)$ will be  a member of the residue of  $(Y_*,\beta_*,U_{[\pi_Y]})$ of the same propositional type as $(u,i,Z,j)$.

Now for the basic elements. First note that, for any $2^k$-signature $\sigma$, $\alpha_*$ either contains no basic elements of signature $\sigma$, or infinitely many: if there is some basic element $\left\lceil Z,j \right\rceil$ of signature $\sigma$, then for any $i \neq j$, the basic element $\left\lceil Z,i \right\rceil$ has the same $2^k$-signature as $\left\lceil Z,j \right\rceil$ with respect to the valuation $V_{[\pi_X]}$. The same holds for $\beta_*$ with respect to the valuation $U_{[\pi_Y]}$.
Hence, it suffices to show that $\alpha_*$ contains a basic element of signature $\sigma$ w.r.t. $V_{[\pi_X]}$ iff $\beta_*$ contains a basic element of signature $\sigma$ w.r.t $U_{[\pi_Y]}$. 

We consider only one direction: suppose that $\alpha_*$ contains a basic element $\left\lceil Z,j\right\rceil$ of signature $\sigma$, where $Z \in \alpha$. Then there must be $Z' \in \beta$ such that, for all $v \in Z'$, there is $u \in Z$ with $u R v$, since $R$ was a global neighborhood bisimulation. Furthermore, since $R$ is full on $X$, for every $w \in Z$ we can pick some $w' \in Y$ with $w R w'$, and put
$$Z'' = Z'\cup \{ w' \mid w \in Z \}$$
By monotonicity we have $Z'' \in \beta$. Furthermore, $Z$ and $Z''$ are clearly related so that the following back-and-forth conditions hold: for all $u \in Z$ there is $v \in Z''$ with $u R v$, and for all $v \in Z''$ there is $u \in Z$ with $u R v$. Since any two states related by $R$ have the same propositional type, it follows that the same propositional types appear in  $\left\lceil Z,j\right\rceil$ and $\left\lceil Z'',j\right\rceil$. But since both these sets contain at least $2^k$ copies of every propositional type that appears in them, it follows that  $\left\lceil Z,j\right\rceil$ and $\left\lceil Z'',j\right\rceil$ have the same $2^k$-signature, as required.
\end{proof}

We are going to show, by induction on a natural number $m \leq k$, that if two one-step models  of the form $(X_*,\alpha_*,V_1)$ and $(Y_*,\beta_*,V_2)$ match up to depth $2^m$, then they satisfy the same formulas of quantifier depth $m$. Together with the previous lemma, it then follows that the one-step models $(X_*,\alpha_*,V_{[\pi_X]})$  and $(Y_*,\beta_*,U_{[\pi_Y]})$ satisfy the same formulas of quantifier depth $\leq k$. For the basis case of $2^0 = 1$, we need the following result (we recall that the case of atomic formulas $\gbox a$ can be safely ignored):
\begin{lem}
\label{atomicstep}
Let $B$ be a set of variables containing $A$, and let $V_1 : B \rightarrow \cvp(X_*)$ and $V_2 : B \rightarrow \cvp(Y_*)$ be valuations such that
$$(X_*,\alpha_*,V_1) \approx_{1} (Y_*,\beta_*,V_2) $$
Then these two one-step models satisfy the same atomic formulas of the one-step
language $\mathtt{1SO}_{\{\Box\}}(B)$.
\end{lem}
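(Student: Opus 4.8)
The claim is that two covering one-step models that match up to depth $1$ (i.e.\ are $\approx_1$-equivalent) satisfy the same atomic formulas of $\mathtt{1SO}_{\{\Box\}}(B)$. The only atomic formulas to worry about are $a \subseteq b$ (for $a,b \in B$) and $\Box(a_1,\dots,a_n)$ — formulas of the form $\gbox a$ can be ignored, since over normal one-step models $\gbox$ is definable from the second-order quantifiers, as noted in the text. So the proof splits into these two cases, and in each case I would show that the truth value is determined by the $1$-signature data recorded in $\approx_1$.

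\emph{Inclusion atoms.} For $a \subseteq b$, observe that $(X_*,\alpha_*,V_1) \Vdash_1 a \subseteq b$ iff there is no propositional $B$-type $t$ with $a \in t$ and $b \notin t$ such that $V_1$ realizes $t$ somewhere in $X_*$. But whether a type $t$ is realized in $X_*$ is recorded by the $1$-signatures: $t$ is realized iff either some basic element has $1$-signature $\sigma$ with $\sigma(t) = 1$, or the residue has $1$-signature assigning $1$ to $t$. Since $(X_*,\alpha_*,V_1) \approx_1 (Y_*,\beta_*,V_2)$ means the basic elements have pairwise-matching $1$-signatures (finite-and-equal or both-infinite counts) and the residues have the same $1$-signature, the set of realized types coincides in the two models, so $a \subseteq b$ has the same truth value.

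\emph{The modality.} For $\Box(a_1,\dots,a_n)$, by the semantics $(X_*,\alpha_*,V_1)\Vdash_1 \Box(a_1,\dots,a_n)$ iff $\bigcap_i V_1(a_i) \in \alpha_*$, which by the definition of $\alpha_*$ happens iff some basic member $\lceil Y,j\rceil$ is contained in $\bigcap_i V_1(a_i)$, i.e.\ iff there is a basic element whose $1$-signature is supported only on types $t$ with $\{a_1,\dots,a_n\}\subseteq t$. Whether such a basic element exists in $\alpha_*$ versus $\beta_*$ is again visible from the matching of $1$-signatures provided by $\approx_1$ (a basic element of a given $1$-signature exists on one side iff it exists on the other), so this atom too has the same truth value. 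Assembling the two cases completes the proof.

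\emph{Main obstacle.} The step I expect to require the most care is making precise the equivalence ``$\bigcap_i V_1(a_i) \in \alpha_*$ iff some basic member sits inside $\bigcap_i V_1(a_i)$ iff there is a basic element whose $1$-signature is concentrated on the relevant types''; this is where one must unwind Definition~\ref{d:F-monstar} and check that the upward-closure in the definition of $\alpha_*$ (the clause ``$\lceil Y,j\rceil \subseteq Z$ for some $Y\in\alpha$'') interacts correctly with the signature bookkeeping, and in particular that the inclusion $\lceil Y,j\rceil \subseteq \bigcap_i V_1(a_i)$ is a property depending only on which propositional types occur in $\lceil Y,j\rceil$ — which is exactly the information the $1$-signature records. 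Everything else is routine unfolding of definitions.
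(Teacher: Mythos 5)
Your proposal is correct and follows essentially the same route as the paper's proof: the truth of $a \subseteq b$ and of the box atom in these models depends only on which $1$-signatures are realized by the basic members and by the residue, and the $\approx_1$ condition makes exactly that data coincide, so truth values transfer (the paper phrases the inclusion case by contradiction and the box case via a matching basic member of the same $1$-signature, which is what you do). One cosmetic remark: the lifting $\Box$ is unary, so the atomic modal formulas are just $\Box a$; your $n$-ary version with the intersection is superfluous, but it specializes correctly to the $n=1$ case actually needed.
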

\begin{proof}
We only prove one direction for each case. 

Suppose first that
$$(X_*,\alpha_*,V_1)\Vdash_1 p \subseteq q$$
where $p,q \in B$. Suppose for a contradiction that $V_2(p) \nsubseteq V_2(q)$. Then there is some $(u,i,Z,j) \in Y_*$ such that $(u,i,Z,j) \in V_2(p)\setminus V_2(q)$. If $(u,i,Z,j)$ comes from the residue of $(Y_*,\beta_*)$ then since the residue of $(X_*,V_*)$ has the same $1$-signature, it must contain some element $(u',i',Z',j')$ of the same $1$-type, and so we cannot have $V_1(p) \subseteq V_1(p)$. The case where $(u,i,Z,j)$ comes from a basic member is similar. 

Now, suppose that 
$$ (X_*,\alpha_*,V_1)\Vdash_1 \Box p$$
Then $V_1(p) \in \alpha_*$, so there is some basic element $\left\lceil Z,j\right\rceil \in \alpha_*$ with $\left\lceil Z,j\right\rceil \subseteq V_1(p)$. There must be some basic $\left\lceil Z',j'\right\rceil \in \beta_*$ of the same $1$-signature over $B$ as $\left\lceil Z,j\right\rceil$, and clearly it follows that $\left\lceil Z',j'\right\rceil \subseteq V_2(p)$ and so $V_2(p) \in \beta_*$ as required. 
\end{proof}

We now only need the following lemma:
\begin{lem}
\label{mainlemmamonstar}
Let $B$ be a finite set of variables containing $A$, let $0 < m \leq k$ and let $V_1 : B \rightarrow Q(X_*)$ and $V_2 : B \rightarrow \cvp (Y_*)$ be valuations such that
$$(X_*,\alpha_*,V_1) \approx_{2^m} (Y_*,\beta_*,V_2)$$
Let $q$ be any fresh variable. Then for any valuation
$V_1'$ over $B \cup \{q\}$
extending $V_1$ with some value for $q$, there exists a valuation
$V_2' $ over $B \cup \{q\}$
extending $V_2$, such that
$$(X_*,\alpha_*,V_1') \approx_{2^{(m-1)}} (Y_*,\beta_*,V_2')$$
and vice versa. 
\end{lem}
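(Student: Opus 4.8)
The plan is to read this as an Ehrenfeucht--Fra\"iss\'e-style back-and-forth step: introducing one second-order variable $q$ — that is, choosing a subset $Z_q\subseteq X_*$ (or $Z_q'\subseteq Y_*$) to interpret it — can be imitated on the other side, at the price of halving the matching threshold from $2^m$ to $2^{m-1}$. By the evident symmetry of the relation $\approx_{n}$ it suffices to prove the ``forth'' direction, so assume $V_1' = V_1[q\mapsto Z_q]$ is given and we must build $V_2' = V_2[q\mapsto Z_q']$. The subset $Z_q'$ will be constructed piece by piece along the partition of $Y_*$ into its basic members $\left\lceil Z',j'\right\rceil$ together with its residue (Definition~\ref{d:F-monstar}); the given $Z_q$ likewise decomposes along the analogous partition of $X_*$. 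This is legitimate precisely because the basic members are pairwise disjoint and the residue is exactly their complement, so a family of local choices glues to a global subset and the signature conditions can be checked block by block.

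The combinatorial heart is a counting sublemma about a single ``bag'': if two sets carrying $B$-valuations have the same $2^m$-signature over $B$, then any splitting of the first into two parts (its intersection with $Z_q$ and the complement) can be matched by a splitting of the second into two parts whose corresponding halves have the same $2^{m-1}$-signature over $B\cup\{q\}$. The proof is type-by-type arithmetic with $2^m = 2\cdot 2^{m-1}$ in mind: for a propositional type $t$, if the first bag has fewer than $2^m$ elements of type $t$ then we know this count exactly and copy the split verbatim; if it has at least $2^m$ of them, then in the split $c^0_t + c^1_t$ at least one summand is $\ge 2^{m-1}$, and since the second bag also carries $\ge 2^m$ elements of type $t$ we can realize the target pair $\bigl(\min(c^0_t,2^{m-1}),\min(c^1_t,2^{m-1})\bigr)$ there by taking the small part exactly and letting the large part absorb the rest. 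A consequence worth isolating is that the set of $2^{m-1}$-signatures obtainable from a basic member by \emph{some} splitting depends only on that member's $2^m$-signature, so it is literally the same set on the $\alpha_*$- and the $\beta_*$-side.

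With the sublemma in hand the residue is dealt with by a single application to the residues of $(X_*,\alpha_*)$ and $(Y_*,\beta_*)$, which share a $2^m$-signature by clause (2) of $\approx_{2^m}$. For the basic members we group them by their $2^m$-signature $\sigma$: by clause (1) of $\approx_{2^m}$ the $\alpha_*$-members and the $\beta_*$-members of signature $\sigma$ are equinumerous in the weak sense (both finite and equal, or both infinite). The chosen $Z_q$ induces on each $\alpha_*$-member of signature $\sigma$ a refined $2^{m-1}$-signature lying in the common reachable set determined by $\sigma$; we reproduce the resulting distribution over the $\beta_*$-members of signature $\sigma$ — matching each refined signature with a finite fiber exactly, and, when the class is infinite, absorbing the remainder into the refined signatures that occur infinitely often (possible since there are only finitely many signatures over the finite set $B\cup\{q\}$) — and use the sublemma to realize each prescribed refinement by an actual subset of that $\beta_*$-member. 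Letting $Z_q'$ be the union of all these local pieces and reading off the counts gives $(X_*,\alpha_*,V_1')\approx_{2^{m-1}}(Y_*,\beta_*,V_2')$: clause (2) holds by the residue step, and clause (1) follows by summing the per-$\sigma$ matches over the finitely many $2^m$-signatures, which preserves the ``finite-and-equal-or-both-infinite'' property.

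The main obstacle is the bookkeeping in that last step: keeping the thresholds consistent — this is exactly why one is forced to drop from $2^m$ to $2^{m-1}$ at each quantifier, so that the outer induction survives all $k$ quantifiers starting from the base estimate $\approx_{2^k}$ supplied by Lemma~\ref{kstep} — and handling possibly infinite multiplicities of basic members and of types cleanly; the underlying arithmetic and the gluing are then routine.
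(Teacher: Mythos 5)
Your proposal is correct and follows essentially the same route as the paper: a one-quantifier Ehrenfeucht--Fra\"iss\'e step in which $q$ is marked block-by-block over the basic members and the residue, using exactly the capped-count arithmetic with $2^m = 2\cdot 2^{m-1}$ to copy each member's refined $2^{m-1}$-signature, matching finite multiplicities of refined signatures exactly and absorbing the infinite remainder into refinements occurring infinitely often. The only difference is bookkeeping: you package the counting as a one-bag sublemma and group by $2^m$-signature classes, whereas the paper builds a global signature-preserving injection $g$ on the finitely many exceptional members and partitions the leftover $\beta_*$-members into infinite cells, but the underlying argument is the same.
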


\begin{proof}
We only prove one direction since the other direction can be proved by a symmetric argument. 

Let $V_1'$ be given. By the hypothesis, for any $2^m$-signature $\sigma$ over the variables $B$,  the number of basic elements of signature $\sigma$ in $\alpha_*$ and $\beta_*$ relative to $V_1$ and $V_2$ are either both finite and the same, or both infinite. Let $\sigma_1,...,\sigma_k$ be a list of all the distinct $2^m$-signatures over $B$ such that the set of basic elements of $\alpha_*$ and $\beta_*$ of signature $\sigma_i$, with $1 \leq i \leq k$, is non-empty but finite, and let $\sigma_{k+1},...,\sigma_{l}$ be a list of all the $2^m$-signatures such that, for $k + 1 \leq i \leq l$, there are infinitely many basic elements of $\alpha_*$ and of $\beta_*$ of signature $\sigma_i$. Then, for each $i \in \{1,...,l\}$, let $\alpha_*[\sigma_i]$ denote the set of basic elements in $\alpha_*$ of signature $\sigma_i$, and similarly let $\beta_*[\sigma_i]$ denote the set of basic elements of $\beta_*$ of signature $\sigma_i$. Then $\alpha_*[\sigma_1],...,\alpha_*[\sigma_l]$ is a partition of the set of basic elements of $\alpha_*$ into non-empty cells, and similarly $\beta_*[\sigma_1],...,\beta_*[\sigma_l]$ is a partition of the set of basic elements of $\beta_*$.

Given the extended valuation $V_1'$ on $X_*$ defined on variables $B \cup \{q\}$, we similarly let $\tau_1,...,\tau_{k'}$ be a list of all the $2^{m - 1}$-signatures over $B \cup \{q\}$ such that, for $1 \leq i \leq k'$, the set of basic elements of $\alpha_*$ of $2^{m - 1}$-signature $\tau_i$ is non-empty but finite. We let $\tau_{k' + 1},...,\tau_{l'}$ be a list of all the $2^{m - 1}$-signatures over $B \cup \{q\}$ such that, for each $i$ with $k' + 1 \leq i \leq l'$, the set of basic elements of $\alpha_*$ of $2^{m - 1}$-signature $\tau_i$ is infinite. Let $\alpha_*[\tau_i]$ denote the set of basic elements of $\alpha_*$ of $2^{m - 1}$-signature $\tau_i$, so that the collection $\alpha_*[\tau_1],...,\alpha_*[\tau_{l'}]$ constitutes a second partition of the set of basic elements of $\alpha_*$. It will be useful to introduce the abbreviation $D_1$ for the finite set $\alpha_*[\sigma_1] \cup...\cup \alpha_*[\sigma_k]$, and the abbreviation $D_2$ for the finite set $\alpha_*[\tau_1] \cup...\cup \alpha_*[\tau_{k'}]$. 

For each $i$ with $1 \leq i \leq k$, there is a bijection between the set $\alpha_*[\sigma_i]$ and $\beta_*[\sigma_i]$, and we can paste all these bijections together into a bijective map 
$$f : \alpha_*[\sigma_1] \cup...\cup \alpha_*[\sigma_k] \rightarrow \beta_*[\sigma_1] \cup ... \cup \beta_*[\sigma_k]$$ 
Since every basic element of $\alpha_*$ not in $D_1$ belongs to a $2^m$-signature of which there are infinitely many basic elements in $\beta_*$, and since $D_1 \cup D_2$ is finite, it is easy to see that we can extend the map $f$ to a map $g$ which is an injection from the set $D_1 \cup D_2$
into the set of basic elements of $\beta_*$, such that for each basic element $\left\lceil Z,j\right\rceil$ in $D_1 \cup D_2$, $\left\lceil Z,j\right\rceil$ and $g(\left\lceil Z,j\right\rceil)$ have the same $2^m$-signature over $B$, and such that $g\!\upharpoonright\!D_1 = f$. Each basic element of $\beta_*$ not in the image of $g$ must then be of one of the $2^m$-signatures $\sigma_{k + 1},...,\sigma_{l}$, and so we can partition the set of basic elements of $\beta_*$ outside the image of $g$ into the cells
$\beta_*[\sigma_{k + 1}]\setminus g[D_2],...,\beta_*[\sigma_l]\setminus g[D_2]$.
For each $i$ with $k + 1 \leq i \leq l$, let $\gamma^i_{1},...,\gamma^i_{r}$ list all infinite sets of the form 
$\alpha_*[\sigma_i] \cap \alpha_*[\tau_j]$ for $k' + 1 \leq j \leq l'$.
The list $\gamma^i_{1},...,\gamma^i_{r}$ must be non-empty, and so since the set $\beta_*[\sigma_{i}]\setminus g[D_2]$ is also infinite, we may 
partition it into $r$ many infinite cells and list these as $\delta^i_{1},...,\delta^i_{r}$.
Now, for each basic element $\left\lceil Z,j\right\rceil$ of $\beta_*$, we 
define a map 
$W_{\left\lceil Z,j\right\rceil}$ from  $B \cup \{q\}$ to 
  $\psf(\left\lceil Z,j\right\rceil)$
by a case distinction as follows:

\emph{Case 1:} $\left\lceil Z,j\right\rceil = g(\left\lceil Z',j'\right\rceil)$ 
for some $\left\lceil Z',j'\right\rceil \in D_1 \cup D_2$. 
Then $\left\lceil Z,j\right\rceil$ and $\left\lceil Z',j'\right\rceil$ have the
same $2^m$-signature over $B$. Using this fact we define the valuation 
$W_{\left\lceil Z,j\right\rceil}$ so that, for each $p \in B$, we have 
$W_{\left\lceil Z,j\right\rceil}(p) = V_2(p) \cap \left\lceil Z,j\right\rceil$,
and so that $\left\lceil Z',j'\right\rceil$ and $\left\lceil Z,j\right\rceil$ 
have the same $2^{m - 1}$-signature over $B \cup \{q\}$ with respect to the
valuations $V_1'$ and $W_{\left\lceil Z,j\right\rceil}$. 
 We show how to assign the value of the variable $q$: for each propositional type $t$ over $B \cup \{q\}$, there are three different possible cases to consider. If $\left\lceil Z',j'\right\rceil$ has $m' < 2^{m-1}$ elements of type $t \cup \{q\}$ over $B \cup \{q\}$, then pick $m'$ many elements of $\left\lceil Z,j\right\rceil$ of type $t$ and mark them by $q$. This is possible since $m' < 2^{m-1} \leq 2^m$ and $\left\lceil Z',j'\right\rceil$ and $\left\lceil Z,j\right\rceil$ have the same $2^m$-signature. If there are $m' < 2^{m-1}$ elements of $\left\lceil Z',j'\right\rceil$ of type $t$ over $B \cup \{q\}$, then pick $m'$ elements of $\left\lceil Z,j\right\rceil$ of type $t$ over $B$, and mark all the other elements of $\left\lceil Z,j\right\rceil$ of type $t$ by $q$. Finally, if there are at least $2^{m-1}$ elements of $\left\lceil Z',j'\right\rceil$ of type $t \cup \{q\}$ over $B \cup \{q\}$ and at least $2^{m - 1}$ elements of $\left\lceil Z',j'\right\rceil$ of type $t$ over $B \cup \{q\}$, then all in all there must be at least $2^m$ elements of $\left\lceil Z',j'\right\rceil$ of type $t$ over $B$, and so there must be at least $2^m$ elements of $\left\lceil Z,j\right\rceil$ of type $t$ over $B$. Pick $2^{m-1}$ of these and mark them by $q$. Finally, let $W_{\left\lceil Z,j\right\rceil}(q)$ be the set of elements of $\left\lceil Z,j\right\rceil$ marked by $q$.

\emph{Case 2:} $\left\lceil Z,j\right\rceil$ is not in the image of $g$. Then there must be some $i \in \{k' + 1,...,l'\}$ such that $\left\lceil Z,j\right\rceil \in \beta_*[\sigma_{k + 1}]\setminus g[D_2]$, and this set is partitioned into $\delta^i_1,...,\delta^i_r$. Let $\left\lceil Z,j\right\rceil \in \delta^i_j$, and pick some arbitary element $\left\lceil Z',j'\right\rceil$ of the set $\gamma^i_j$. Then $\left\lceil Z',j'\right\rceil$ and $\left\lceil Z,j\right\rceil$ have the same $2^m$-signature over $B$ and we can proceed as in Case 1.

We define the valuation $V_2'$ so that the intersection of $V_2'(q)$ with the union of all the basic members of $\beta_*$ equals the union of the sets $W_{\left\lceil Z,j\right\rceil}(q)$ for $\left\lceil Z,j\right\rceil$ a basic element in $\beta_*$, and so that the residue of $(Y_*,\beta_*)$ has the same $2^{(m - 1)}$-signature as the residue of $(X_*,\alpha_*)$ with respect to the valuations $V_1'$ and $V_2'$. This can be done using the same reasoning as in the two previous cases. We now need to check that
$$(X_*,\alpha_*,V_1') \approx^{2^{(m-1)}} (Y_*,\beta_*,V_2')$$  
First, suppose there are infinitely many basic elements of $\alpha_*$ of some $2^{m-1}$ signature $\tau_j$, meaning that $k' \leq j \leq l'$. Then since the set $\alpha_*[\tau_j]$ is infinite, $D_1$ is finite and 
$$\alpha_*[\tau_j] = (D_1 \cap \alpha_*[\tau_j]) \cup (\alpha_*[\sigma_{k +1}] \cap \alpha_*[\tau_{j}]) \cup ... \cup (\alpha_*[\sigma_l] \cap \alpha_*[\tau_{j}])$$
 there must be some $i \in \{k+1,...,l\}$ such that the set $\alpha_*[\sigma_i] \cap \alpha_*[\tau_j]$ is infinite. This means that $\alpha_*[\sigma_i] \cap \alpha_*[\tau_j]$ appears in the list $\gamma^i_1,...,\gamma^i_r$, and so we see that all elements of some member of the list $\delta^i_1,...,\delta^i_r$ will have the $2^{m-1}$-signature $\tau_j$. Since each member of this list is infinite, we see that there must be infinitely many basic elements of $\beta_*$ of signature $\tau_j$. 

Conversely, suppose there are infinitely many basic elements of $\beta_*$ of $2^{m-1}$-signature $\tau_j$ over $B \cup \{q\}$. Then since the image of $g$ is finite, some of these elements must be outside the image of $g$, which means that for some $i \in \{k + 1,...,l\}$, some member of the list $\delta^i_1,...,\delta^i_r$ will consist of elements of signature $\tau_j$. This means that some member of the list $\gamma_1^i,...,\gamma_r^i$ will consist of elements of signature $\tau_j$, and since each member of this list is infinite we see that $\alpha_*$ has infinitely many basic elements of $2^{m-1}$-signature $\tau_j$ over $B \cup \{q\}$.

Finally, suppose that there are finitely many basic elements of $\alpha_*$ and $\beta_*$ of $2^{m-1}$-signature $\tau_j$. We check that the mapping $g$ restricts to a bijection between the basic elements of $\alpha_*$ and $\beta_*$ of this signature. First, $g$ is injective and maps basic elements of $\alpha_*$ of signature $\tau_j$ to basic elements of $\beta_*$ of signature $\tau_j$. It only remains to show that (the restriction of) $g$ is surjective, i.e. each basic element $\left\lceil Z,r\right\rceil$ of signature $\tau_j$ is equal to $g(\left\lceil Z',r'\right\rceil)$ for some $\left\lceil Z',r'\right\rceil$. But suppose $\left\lceil Z,r\right\rceil$ is not in the image of $g$; then it is in one of the members of the list $\delta_1^i,...,\delta_r^i$ for some $i$, and since each of these members is an infinite set of basic elements of the same signature, we see that there are infinitely many basic elements of $\beta_*$ of signature $\tau_j$, contrary to our assumption. Hence, the proof is done. 
\end{proof}

\begin{thm}
The monotone $\mu$-calculus is the neighborhood bisimulation invariant fragment of monotone monadic second-order logic.  In a formula:
$$\mathtt{MMSO}/{\sim} \equiv \mu \mathtt{MML}$$
\end{thm}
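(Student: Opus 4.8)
The plan is to derive the theorem by chaining two results already established for the supported companion $\underline{\mon}$ of $\mon$: first Corollary~\ref{exp-complete-cor}, applied with the expressively complete set of liftings $\{\Box,\gbox\}$, to characterize $\mu\mathtt{ML}_{\{\Box,\gbox\}}$; then Theorem~\ref{removeboxes}, applied with $\Lambda=\{\Box\}$, to eliminate the support modality $\gbox$. What remains is essentially bookkeeping: translating between $\mon$-models and $\underline{\mon}$-models while keeping track of the three relevant notions of behavioural equivalence, namely neighbourhood bisimilarity (which coincides with behavioural equivalence for $\mon$), behavioural equivalence for $\underline{\mon}$ (the strictly finer relation $\underline{\sim}$), and $\mon$-bisimulation invariance of $\underline{\mon}$-formulas in the sense of Theorem~\ref{removeboxes}.

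I would begin with the elementary observations linking the two functors. Every formula $\varphi$ of $\mathtt{MMSO}$, that is of $\mathtt{MSO}_{\{\Box\}}$, may also be read as a formula of $\mathtt{MSO}_{\{\Box,\gbox\}}$ over $\underline{\mon}$-models; since the lifting $\Box$ for $\underline{\mon}$ ignores the support component, one has $(\mathbb{S},s)\Vdash\varphi$ (over the $\underline{\mon}$-model $\mathbb{S}$) iff $(\mathbb{S}^\flat,s)\Vdash\varphi$ (over its underlying $\mon$-model $\mathbb{S}^\flat$), and the same holds for every $\mu\mathtt{ML}_{\{\Box\}}$-formula. Conversely, every pointed $\mon$-model is $(\mathbb{S}^\flat,s)$ for some pointed $\underline{\mon}$-model $\mathbb{S}$ (take every support to be the whole carrier). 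Finally, the forgetful functor from $\underline{\mon}$-models to $\mon$-models sends coalgebra morphisms to coalgebra morphisms, so $\mathbb{S}\gsim\mathbb{T}$ implies $\mathbb{S}^\flat\sim\mathbb{T}^\flat$; hence a neighbourhood-bisimulation-invariant $\varphi$, read over $\underline{\mon}$, is invariant for behavioural equivalence of $\underline{\mon}$.

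Next I would verify the hypotheses of Corollary~\ref{exp-complete-cor} for $\fun=\underline{\mon}$ and $\Lambda=\{\Box,\gbox\}$. Expressive completeness of $\{\Box,\gbox\}$ for $\underline{\mon}$ has been established. That $\underline{\mon}$ admits an adequate uniform construction for every finite set $\Gamma\subseteq\mathtt{1SO}_{\{\Box,\gbox\}}$ follows by taking $k$ to be the maximal quantifier depth occurring in $\Gamma$ and using the construction $(-)_*$ of Definition~\ref{d:F-monstar} for that $k$: Lemmas~\ref{kstep}, \ref{atomicstep} and~\ref{mainlemmamonstar} together show, by induction on $m\le k$, that globally neighbourhood bisimilar normal one-step models have $(-)_*$-images agreeing on all formulas of $\mathtt{1SO}_{\{\Box,\gbox\}}$ of quantifier depth $\le k$; the routine extension of $(-)_*$ from normal one-step frames to arbitrary ones, together with the reformulation of adequacy in terms of one-step $\overline{(\underline{\mon})}$-bisimulations (available since $\underline{\mon}$ preserves weak pullbacks), then yields adequacy for $\Gamma$. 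Hence Corollary~\ref{exp-complete-cor} gives $\mathtt{MSO}_{\{\Box,\gbox\}}/{\underline{\sim}}\equiv\mu\mathtt{ML}_{\{\Box,\gbox\}}$.

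Finally I would assemble the chain and prove both inclusions. Given a neighbourhood-bisimulation-invariant $\varphi\in\mathtt{MMSO}$, the previous two paragraphs produce $\psi\in\mu\mathtt{ML}_{\{\Box,\gbox\}}$ with $\psi\equiv\varphi$ over all $\underline{\mon}$-models; since $\psi\equiv\varphi$ over $\underline{\mon}$-models, $\varphi$ depends only on $\mathbb{S}^\flat$, and $\varphi$ is neighbourhood-bisimulation invariant, the formula $\psi$ is $\mon$-bisimulation invariant in the sense of Theorem~\ref{removeboxes}, which therefore yields $\chi\in\mu\mathtt{ML}_{\{\Box\}}=\mu\mathtt{MML}$ with $\chi\equiv\psi$ over $\underline{\mon}$-models. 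As $\chi$ does not mention $\gbox$ it is insensitive to the choice of supports, so tracing the equivalences through the lifting of an arbitrary pointed $\mon$-model to a $\underline{\mon}$-model gives $\chi\equiv\varphi$ over $\mon$-models; thus $\mathtt{MMSO}/{\sim}\subseteq\mu\mathtt{MML}$. The reverse inclusion is immediate: every $\mu\mathtt{MML}$-formula is invariant for behavioural equivalence of $\mon$ and, by Proposition~\ref{p:mu-to-mso}, translates into a formula of $\mathtt{MSO}_{\{\Box\}}=\mathtt{MMSO}$. The only delicate point I anticipate is exactly this transfer of invariance across the change of functor — checking that a neighbourhood-bisimulation-invariant formula still qualifies as input to Corollary~\ref{exp-complete-cor} (whose $\sim$ is the finer $\underline{\sim}$) and that the formula it returns still qualifies as input to Theorem~\ref{removeboxes}; the substantive mathematics, the adequate construction for $\underline{\mon}$ and the removal of $\gbox$, has already been carried out in the preceding subsections.
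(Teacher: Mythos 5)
Your proposal is correct and follows essentially the same route as the paper: establish adequacy of the construction $(-)_*$ for all formulas of quantifier depth $\le k$ by combining Lemmas \ref{kstep}, \ref{atomicstep} and \ref{mainlemmamonstar} (an Ehrenfeucht--Fra\"{i}ss\'e-style induction), invoke the main characterization theorem (Corollary \ref{exp-complete-cor}) with the expressively complete set $\{\Box,\gbox\}$ for $\underline{\mon}$, and then strip the support modality via Theorem \ref{removeboxes}. Your explicit bookkeeping of the three equivalence relations ($\sim$, $\underline{\sim}$, and the $\mon$-bisimulation invariance of $\underline{\mon}$-formulas) spells out details the paper leaves implicit, but it is the same argument.
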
 

\begin{proof}
It suffices to prove that $\mathtt{MSO}_{\{\Box,\gbox\}}{/}{\sim}$ is equal to $\mu \mathtt{ML}_{\{\Box,\gbox\}}$, and then apply Theorem \ref{removeboxes}. For this, by Theorem \ref{mainone} it suffices in turn to prove that the construction $(-)_*$ is adequate for all formulas in  $\mathtt{1SO}_{\{\Box,\gbox\}}(A)$ of quantifier depth $\leq k$. We can prove this by combining the last three lemmas, using Ehrenfeucht-Fraïssé games for the one-step language. Lemmas \ref{kstep} and \ref{mainlemmamonstar} provide a recipe for how ``Duplicator'' can survive $k$ steps of the game comparing the models $(X_*,\alpha_*,V_{[ \pi_X]})$ and  $(Y_*,\beta_*,U_{ [\pi_Y]})$, and  \ref{atomicstep} guarantees that the valuations constructed at the end of the game will satisfy the same atomic formulas. Working out the full argument is entirely standard, so we leave the details to the reader.  
\end{proof}

\section{Future work}

We conclude by mentioning some questions for future research:
\begin{enumerate}
\item
Is there a good \emph{categorical} and ``logic free'' characterization of those set functors $\fun$ that
admit an adequate (or weakly adequate) uniform construction, for instance, in terms of $\fun$ 
preserving certain limits or colimits? Trying to answer this question would involve a deeper study of how the model theory of one-step languages is related to categorical properties of the type functor involved. Related to this, an anonymous referee pointed out to us that the machinery of one-step frames, covers and uniform constructions are all taking place in the \emph{category of elements} associated with the set functor $\fun$, and properties of the functor $\fun$ are closely related to properties of the corresponding category of elements \cite{barr:cate90}. This could very well be a fruitful direction to investigate further. 
\item
Can we improve our work on the supported companion functor, to the effect that every
set functor $\fun$ has a companion $\fun '$ that admits an adequate uniform 
construction?
Relating this to the previous question, we would like to understand \emph{why} the supported companion to
$\mon$ admits an adequate uniform construction, but not $\mon$ itself. The construction achieves two things, in general: first, it ensures that every $\alpha \in \underline{\fun} X$ has a unique smallest support (even when $X$ is infinite), often called its \emph{base}. Second, and in our view more importantly, we get that the map $\mathsf{base}_X : \underline{\fun} X \to \psf X$ defined by sending each $\alpha$ to its base is a natural transformation. Conditions for a functor under which this holds has been isolated by Gumm in \cite{gumm:tcoa05}. Are these conditions related to the existence of adequate uniform constructions?
\item
It would be interesting to further explore the relation between $\MSOT$ and the first-order 
logic of Litak \& alii~\cite{lita:coal12} for $\fun$-coalgebras.
For instance, an interesting question would be whether (on $\fun$-tree models)
$\MSOT$ is equivalent to some extension of this first-order language with 
 fixpoint operators.
\item Finally, there is the question of finding sufficient \emph{and necessary} conditions for a Janin-Walukiewicz theorem to hold. Related to this question, we have not been able to produce an example of a functor for which the Janin-Walukiewicz theorem does \emph{not} hold (the question of whether such an example can be found was raised to us by an anonymous referee). Given all that we can say for sure, it could be the case that a Janin-Walukiewicz theorem simply holds for \emph{every} set functor, but we conjecture that at least some conditions on the functor are required.   
\end{enumerate}

 \subsection*{Acknowledgement} We wish to thank the anonymous referees for their very thorough and constructive feedback, which led to many substantial improvements on the original draft of this paper.

\bibliographystyle{IEEEtran}


\end{document}